\theoremstyle{plain} 
\newtheorem{theorem}{Theorem}
\newenvironment{theoremp}[1]{
  
  \theoremalt
}{\endtheoremalt}
\newtheorem{lemma}{Lemma}
\newenvironment{lemmap}[1]{
  
  \lemmaalt
}{\endlemmaalt}
\newtheorem{assumption}{Assumption}
\newenvironment{assumptionp}[1]{
  
  \assumptionalt
}{\endassumptionalt}
\newtheorem{corollary}{Corollary}
\newtheorem{proposition}{Proposition}
\newenvironment{propositionp}[1]{
  
  \propositionalt
}{\endpropositionalt}
\theoremstyle{definition}
\newtheorem{remark}{Remark}
\newtheorem{example}{Example}
\DeclareMathOperator*{\argmin}{arg\,min}
\DeclareMathOperator{\diag}{diag}
\DeclareMathOperator{\Cov}{Cov}
\DeclareMathOperator{\Var}{Var}
\renewcommand{\Pr}{{\mathrm{P}}}
\newcommand{\bE}{\mathbb{E}}
\newcommand{\bG}{\mathbb{G}}
\newcommand{\sumiN}{\frac{1}{m} \sum_{j = 1}^m}
\newcommand{\sumtT}{\frac{1}{n} \sum_{i = 1}^n}
\newcommand{\sumgG}{\frac{1}{G} \sum_{g = 1}^G}
\newcommand\Poff{\RenewEnviron{commentP}{}}
\newcommand\Foff{\RenewEnviron{commentF}{}}
\newcommand{\eps}{\varepsilon}
\newcommand{\indep}{\perp \!\!\! \perp}
\begin{document}

\begin{frontmatter}

\title{Minimum Distance Estimation of Quantile Panel Data Models}
\runtitle{Minimum Distance Quantile Regression}

\begin{aug}
% use \particle for den|der|de|van|von (only lc!)
% [add1,corref]{\fnms{}~\snm{}\ead[label=e?]{}}
%
%% e-mail is mandatory for each author
%
%%% initials in fnms (if any) with spaces
%
\author[add1]{\fnms{Blaise}~\snm{Melly}\ead[label=e1]{blaise.melly@unibe.ch}}
\author[add2]{\fnms{Martina}~\snm{Pons}\ead[label=e2]{martina.pons@unibe.ch}}
%%%%%%%%%%%%%%%%%%%%%%%%%%%%%%%%%%%%%%%%%%%%%%
%% Addresses                                %%
%%%%%%%%%%%%%%%%%%%%%%%%%%%%%%%%%%%%%%%%%%%%%%
\address[add1]{%
\orgdiv{Department of Economics},
\orgname{University of Bern}}

%\address[add11]{%
%\orgdiv{Department of Economics},
%\orgname{University of Bern}}

\address[add2]{%
\orgdiv{Department of Economics},
\orgname{University of Bern}}
\end{aug}

%% Put support info here.  Reminder: do not thank the handling coeditor anonymously or by name
\begin{funding}
We are grateful to Manuel Arellano, Bo Honoré, Aleksey Tetenov, and seminar participants at Bern University,
Geneva University, UC  Irvine, Fribourg University, 
the 27th International Panel Data Conference, the COMPIE 2022 Conference, the 2nd International Econometrics PhD Conference at Erasmus University Rotterdam, the 2023 Ski and Labor Seminar, the 2023 Young Swiss Economists Meeting, the 2023 IAAE Conference, the 2023 Summer Meeting of the Econometric Society, and the third Causal Inference Optimization-Conscious Econometrics Conference at the University of Chicago
for useful comments.
\end{funding}
\begin{abstract}
We propose a minimum distance estimation approach for quantile panel data models where unit effects may be correlated with covariates. This computationally efficient method involves two stages: first, computing quantile regression within each unit, then applying GMM to the first-stage fitted values. Our estimators apply to (i) classical panel data, tracking units over time, and (ii) grouped data, where individual-level data are available, but treatment varies at the group level. Depending on the exogeneity assumptions, this approach provides quantile analogs of classic panel data estimators, including fixed effects, random effects, between, and Hausman-Taylor estimators. In addition, our method offers improved precision for grouped (instrumental) quantile regression compared to existing estimators. We establish asymptotic properties as the number of units and observations per unit jointly diverge to infinity. Additionally, we introduce an inference procedure that automatically adapts to the potentially unknown convergence rate of the estimator. Monte Carlo simulations demonstrate that our estimator and inference procedure perform well in finite samples, even when the number of observations per unit is moderate. In an empirical application, we examine the impact of the food stamp program on birth weights. We find that the program's introduction increased birth weights predominantly at the lower end of the distribution, highlighting the ability of our method to capture heterogeneous effects across the outcome distribution.
\end{abstract}

\begin{keyword}
\kwd{Quantile regression}
\kwd{panel data}
\kwd{grouped data}
\end{keyword}

\end{frontmatter}
%%%%%%%%%%%%%%%%%%%%%%%%%%%%%%%%%%%%%%%%%%%%%%%%%%%%%%%%%%%%%%%%%%%%%%%%%
%%%% Main text entry area:
%%%%%%%%%%%%%%%%%%%%%%%%%%%%%%%%%%%%%%%%%%%%%%%%%%%%%%%%%%%%%%%%%%%%%%%%%

\section{Introduction}\label{introduction}
Quantile regression, introduced by \cite{Koenker1978}, is a powerful tool for analyzing the effect of policies on the distribution of an outcome variable. The quantile treatment effect function provides more information than the average treatment effect, allowing, for instance, evaluation of the treatment's impact on inequality. When panel data are available, new identification and estimation strategies become feasible. Researchers can alleviate endogeneity concerns, for instance, by allowing for correlated group effects. They can obtain more precise estimates using a random-effects estimator or exploit individual-level variables to identify the impact of group-level variables, e.g., with the \cite{Hausman1981} estimator. 

We define panel data as a dataset structure where observations are organized along two dimensions. While classical panel data typically consist of individuals repeatedly measured over time, our results extend to group data, where individual-level observations are available, and treatments often vary at the group level. For example, \cite{Autor2013a} use commuting zones in a given decade as groups, while \cite{Angrist2004} employ schools. In both cases, treatments vary only between groups, but individual data are essential for estimating the conditional distribution of outcomes within each group. This paper adopts a general notation ($i$ and $j$ subscripts) applicable to both classical panel and group data. We primarily use terminology (individuals and groups) that is more common to group data, as our application falls into this category. %We present general results before discussing their specific applications to group data in Section \ref{sec:GIVQR} and classical panel data in Section \ref{sec:FE+BE+RE}.

As a first contribution of the paper, we propose a new class of minimum distance estimators for quantile panel data models. This class of estimators provides quantile analogs of established panel data methods, including fixed effects, random effects, between, and Hausman-Taylor estimators. %Additionally, it improves upon the existing grouped instrumental variable estimator and is computationally fast.
Our estimation approach involves two stages. The first stage consists of group-level quantile regressions using individual-level covariates. In the second stage, we regress the fitted values from the first stage on individual-level and group-level variables. If these variables are potentially endogenous, an instrumental variable regression or, more generally, the generalized method of moments (GMM) estimator can be applied. This approach allows for the straightforward inclusion of external or internal instruments in the second stage. The proposed estimator is simple to implement, flexible, computationally fast, and applicable across various fields.\footnote{We provide general-purpose packages for both R and Stata.} While this two-step procedure may seem unconventional, we demonstrate in Section \ref{subsec:least squares estimators} that it is numerically equivalent to standard estimators when the least squares estimator is used in the first stage along with appropriate instruments.

As a nonlinear estimator, first-stage quantile regression is subject to a finite-sample bias that diminishes as the number of observations per group increases. Thus, our inference procedures are justified within an asymptotic framework where the number of observations per group $n$ and the number of groups $m$ diverge to infinity.\footnote{Large $n$ (often called large $T$) asymptotics have been widely applied in the nonlinear and dynamic panel data literature. For seminal contributions, see \cite{Phillips1999}, \cite{Hahn2002}, and \cite{ Alvarez2003}.} The asymptotic variance of the sample moments has two components: one arising from the first-stage quantile regression and another from the second-stage GMM regression. Since the regressors vary within groups in the first stage, the relevant number of observations is $mn$, making the variance proportional to $1/(mn)$. The second-stage estimation error, on the other hand, stems from the randomness of the group effects, with the corresponding variance proportional to $1/m$. However, this second component is zero if no group effects exist or the instrument exploits only within variation. Thus, the asymptotic distribution of the estimator is dominated by the component with the slower rate of convergence, which is the second stage error unless there is no group heterogeneity or the instruments are uncorrelated with group membership. As a result, the asymptotic distribution of the estimator is non-standard, as the convergence rate of a coefficient depends on the presence of group heterogeneity and the variation used to identify that coefficient.

%Recently, \cite{Galvao2020} have weakened the requirements on the relative rate of divergence of $m$ and $n$ for asymptotic unbiasedness of fixed effects quantile estimators. Building on their results, we show that our estimators are asymptotically unbiased and normally distributed under the condition that ${ m (\log  n ) ^2 / n} \rightarrow 0$. The growth rate requirement for $n$ relative to $m$ can be further relaxed if only the coefficients on group-level regressors is of interest. In this case, the milder condition that ${ \sqrt{m} (\log  n ) / n} \rightarrow 0$ suffices to ensure an unbiased asymptotic distribution.
%The asymptotic distribution of the estimator is non-standard because the speed of convergence of a coefficient depends on the presence or not of group heterogeneity and on the variation used to identify this coefficient. In the presence of group effects, only the coefficients that are identified by the within variation can be estimated at the $\sqrt{mn}$ rate. The coefficients on the variables that require the between-group variation for identification will only be estimated at the $\sqrt m$ rate. The reason is that the leading term in the asymptotic distribution is a function of the group effects. And additional groups are required 

In Section \ref{subsec:pointwise}, we consider three specific scenarios before suggesting adaptive estimation and inference procedures in Section \ref{subsec:adaptive}. In the first case, we assume the presence of group heterogeneity and demonstrate that only the coefficients identified through within-group variation can be estimated at the $\sqrt{mn}$ rate. For example, this applies to the fixed effects estimator. In contrast, coefficients associated with variables that are constant within groups rely on between-group variation for identification and are only estimable at the slower $\sqrt m$ rate. Only the second-stage error shows up in the first-order asymptotic distribution for these coefficients. Consequently, in this scenario, some coefficients converge faster than others. Case 2 assumes no group heterogeneity, eliminating second-stage variance and yielding a uniform $\sqrt{mn}$ convergence rate for all coefficients. Finally, we consider the intermediate case when group-level heterogeneity is present but vanishes precisely at the correct rate such that both components of the variance matter asymptotically. 

These asymptotic results provide valuable insights into the mechanics of our estimator, but applying them requires knowing whether group-level heterogeneity is present or not, leading to non-adaptive inference. As the second main contribution of the paper, Section \ref{subsec:adaptive} introduces adaptive estimation and inference procedures that address this issue. The proposed methodology is robust to different degrees of heterogeneity, allowing the variance of group effects to be zero, bounded, or diminishing at arbitrary rates. We first show that the leading variance term can be adaptively estimated using a traditional cluster-robust variance estimator. This result offers a practical procedure that is simple to implement and circumvents the need to estimate challenging components like the variance of first-stage coefficients. By inverting this estimated variance, we obtain a GMM estimator that is uniformly efficient in the unknown relative convergence rates of the moments. This result is non-standard because of the potentially different convergence rates of the moments; the efficient weighting matrix may be asymptotically singular. Finally, we suggest an overidentification test, which provides, for instance, the quantile equivalent of the Hausman test for the exogeneity of the between variation.

%These asymptotic results provide valuable insights into the mechanics of our estimator, but they also reveal several limitations. First, the asymptotic distribution and rate of convergence of the slow coefficients vary depending on the presence of group-level heterogeneity, leading to non-adaptive inference. Second, the first-order asymptotic distribution of the slow coefficients is unaffected by the first-stage estimation error. Third, when a coefficient is identified by both within- and between-group variations, the between-group variation becomes first-order asymptotically irrelevant, rendering the random effects estimator equivalent to the fixed effects estimator. To address these issues, we propose an efficient estimator and inference procedure that adapts to the degree of group-level heterogeneity. This approach enables us to introduce a new quantile random effects estimator that is more precise than the fixed effects estimator in finite samples. Additionally, we enhance the accuracy of the estimated standard errors by accounting for the first-stage estimation error. Notably, we find that clustering the standard errors in the second stage inherently adjusts for the first-stage error, providing an adaptive inference procedure that is uniformly valid across the rate of convergence of the moment conditions and the estimator, including intermediate cases.

In the context of group data, the most closely related work is the IV quantile regression estimator proposed by \cite{Chetverikov2016}. They focus on the effect of variables that vary only between groups and implicitly assume the presence of group-level heterogeneity.\footnote{Their asymptotic distribution is degenerate in the absence of group effects, suggesting that the rate of convergence is faster in such cases.} While both their approach and ours share the same first-stage estimation, the second stage differs: we regress the fitted values on all variables, whereas they regress the estimated intercept on the group-level regressors. As a result, their estimator is not invariant to reparametrizations of the individual-level regressors. In Table \ref{tab:tab:grouped_bias}, simulations using the same data-generating process as \cite{Chetverikov2016} show that our minimum distance (MD) estimator exhibits substantially lower variance and mean squared error (MSE) across all sample sizes considered—reducing the MSE by a factor of up to 20. In Section \ref{sec:GIVQR}, we demonstrate and explain why our estimator is more precise than theirs. Furthermore, we contribute to this literature by providing efficient adaptive estimation and inference procedures that remain valid regardless of the degree of group-level heterogeneity, deriving the limiting distribution of the estimator for the coefficients on the individual-level variables, and relaxing the growth condition of $n$ relative to $m$.

Our class of estimators includes the MD estimators of \cite{Chamberlain1994} as a special case. We extend his framework by incorporating individual-level regressors and accommodating endogenous regressors and group effects but requiring the number of groups to approach infinity.\footnote{\cite{Chamberlain1994} uses different terminology because he focuses on cross-sectional regressions. He analyzes a quantile regression model with a finite number of combinations of regressor values, where the number of cells (groups in our terminology) is finite, and the regressors are constant within each cell.} Interestingly, in \cite{Chamberlain1994}, all the variance arises from the first-stage estimation, consistent with classical MD estimation, while in \cite{Chetverikov2016}, the variance originates entirely from the second stage. In our framework, the variance can stem from either the first or second stage, depending on the presence or absence of group effects, with the estimated standard errors capturing the relevant leading component.

Our paper also contributes to the literature on quantile panel data models.\footnote {The main text focuses on the large $n$ (often called large $T$) literature. In short panels, \cite{Chernozhukov2013a} derive bounds for quantile effects. \cite{Arellano2016} introduce a class of correlated random effects quantile regression estimators that are consistent in finite $n$. They apply this approach to study earnings and consumption dynamics in \cite{Arellano2017}.} \cite{Koenker2004} introduced a penalized quantile fixed effects estimator that treats individual heterogeneity as a pure location shift. \cite{Kato2012} extend this approach by allowing group effects to depend on the quantile of interest and by contributing to the asymptotic theory of the estimator. \cite{Galvao2015} propose a two-step minimum distance (MD) estimator as a computationally efficient method for estimating fixed effects quantile models. Our framework nests this estimator. However, their focus is solely on the effects of individual-level covariates without exploiting variation between individuals.\footnote{\cite{Galvao2015} consider a traditional panel data setting; thus, in their terminology, they focus on estimating the effects of time-varying regressors.} In contrast, we aim to estimate the effects of individual- and group-level regressors while allowing for internal and external instruments. \cite{Galvao2019} suggest using the usual pooled quantile regression estimator in the presence of random effects. Our random effects estimator differs in that it targets the conditional quantile function given the group effects (see Remark \ref{remark:conditional} for a discussion on conditional effects). In other words, we estimate a different parameter for which quantile regression is inconsistent, even when the random effects are uncorrelated with the covariates.

\cite{Chernozhukov2013} introduced distribution regression as an alternative to quantile regression for estimating the entire conditional distribution of outcomes given covariates. \cite{Fernandez-Val2022a} extend this approach by developing a dynamic distribution regression panel data model with heterogeneous coefficients across groups. In their framework, the first stage involves regressing the outcome on individual-level covariates using distribution regression, followed by a second stage where the coefficients are projected onto group-level instruments. We build on their proof strategy to demonstrate that our inference procedure remains uniformly valid with respect to the degree of heterogeneity. However, our approach differs in several key aspects: we use quantile regression instead of distribution regression, project the fitted values rather than the coefficients, consider both individual-level and group-level instruments, and optimally combine these instruments using GMM. Although our method is limited to continuous outcomes, we find that quantile regression coefficients are easier and more intuitive to interpret.

As a third main contribution, we show the practical relevance of our approach in an empirical application. 
Specifically, we extend the work of \cite{Almond2011} by estimating the distributional effect of the Food Stamp Program on birth weight. Following the enactment of the Food Stamp Act, the number of counties implementing the program increased substantially in the late 1960s and early 1970s. To apply our minimum distance estimator, we define groups as county-trimester cells. The subscript $j$ indexes a county-trimester cell, while the subscript $i$ defines an individual within this cell. We estimate the model separately for black and white mothers and find that the Food Stamp Program has a positive impact on the lower tail of the birth weight distribution, particularly among black mothers.

The remainder of the paper is structured as follows. Section \ref{sec:model} introduces the model and the estimator and shows that traditional least-squares estimators can be implemented as MD estimators. Section \ref{sec:asym. theory} develops the asymptotic theory. Section \ref{sec:GIVQR} extends the discussion to grouped data and compares our estimator with the grouped IV quantile regression approach of \cite{Chetverikov2016}. Section \ref{sec:FE+BE+RE} applies our framework to traditional panel data, proposing quantile analogs of the within, between, and random effects estimators and the Hausman test. Both Sections \ref{sec:GIVQR} and \ref{sec:FE+BE+RE} include Monte Carlo simulations to evaluate finite sample performance. In Section \ref{sec:food stamp}, we present the empirical application, and Section \ref{sec:conclusion} concludes.

%%%%%%%%%%%%%% MODEL AND ESTIMATOR %%%%%%%%%%%%%%%%
\section{Model and Minimum Distance Estimator}\label{sec:model}

\subsection{Quantile Model}\label{subsec:quantile model}
We want to learn the effects of the individual-level variables $x_{1ij}$ and the group-level variables $x_{2j}$ on the distribution of an outcome $y_{ij}$. We observe these variables for the groups $j=1,\dots,m$ and individuals $i=1,\dots, n$.\footnote{We assume a balanced panel for notational simplicity. However, the results generalize to unbalanced datasets.} %We use the traditional terminology of panel data, but the $i$ index can define groups of any sort, and the $t$ index can represent any order within the groups. This will be the case in section \ref{sec:GIVQR} when we consider group-level treatments.
For some quantile index $0<\tau<1$, we assume that 
\begin{equation}\label{eq:model}
Q(\tau,y_{ij}|x_{1ij},  x_{2j},v_j) =   x'_{1ij} \beta(\tau) +  x'_{2j} \gamma (\tau) + \alpha(\tau,v_j),
\end{equation}
where $Q(\tau, y_{ij}|x_{1ij},  x_{2j},v_j)$ is the $\tau$th conditional quantile function of the response variable $y_{ij}$ for individual $i$ belonging to group $j$ given the $K_1$-vector of individual-level regressors ${x}_{1ij}$, the $K_2$-vector of group-level variables $x_{2j}$, and an unobserved random vector $v_j$ of unrestricted and unknown dimension. In total, there are $K_1 + K_2 = K$ parameters to estimate. The parameters $\beta(\tau)$, $\gamma(\tau)$ and the unobserved group heterogeneity $\alpha(\tau,v_j)$ can depend on the quantile index $\tau$. Depending on the setting, $\beta(\tau)$ or $\gamma(\tau)$ (or both) might be the parameters of interest. We normalize $\bE[\alpha(\tau,v_{j})]=0$, which is not restrictive because $x_{2j}$ includes a constant. 

\begin{remark}[\bfseries Conditional versus unconditional effects]\label{remark:conditional}
In contrast to the average effect, the definition of a quantile treatment effect depends on the conditioning variables. In this paper, we model the distribution of $y_{ij}$ conditionally on the covariates and the group effect $\alpha(\tau,v_j)$. Thus, even if the group effects are independent of the regressors, we identify different parameters than those identified by quantile regression as introduced by \cite{Koenker1978} or by instrumental variable quantile regression as introduced by \cite{Chernozhukov2005}. The following example illustrates the difference between these parameters. Consider an application where each group $j$ corresponds to a region and each unit $i$ to an individual within this region. We do not have any $x_{1ij}$ variable. We are interested in the effect of a binary treatment $x_{2j}$, which has been randomized and is, therefore, independent from $\alpha(\tau,v_{j})$. $\gamma(\tau)$ is the effect of this treatment for individuals that rank at the $\tau$ quantile of $y_{ij}$ in \textit{their} region. On the other hand, the quantile regression of $y_{ij}$ on $x_{2j}$ identifies the effect for individuals that rank at the $\tau$ quantile in the whole country (given the treatment status).
These are different parameters except if $\alpha(\tau,v_j) =0$ for all $j$ or if the treatment effect is homogeneous such that $\gamma(\tau)=\gamma$ for all $\tau$. Whether conditional or unconditional quantile treatment effects are of interest depends on the question. Conditional quantile treatment effects are particularly useful for studying within-group inequalities when groups might be regions or industries. For example, \cite{Autor2016a} and \cite{Engelhardt2021} study the effect of the minimum wage on within-state inequality, while \cite{Autor2021} study the effect of trade shock on wage inequality within local labor markets. 
If the unconditional effect is of interest, one can naturally obtain the unconditional distribution functions by integrating out the group effects (and possibly the other variables) and then inverting the resulting distribution functions to obtain the unconditional quantile functions, see \cite{Chernozhukov2013}.%\footnote{We refer to \cite{Frolich2013a} for a discussion about conditional and unconditional treatment effects.}
\end{remark}

When model (\ref{eq:model}) holds, the $\tau$ quantile regression of ${y_{ij}}$ on $x_{1ij}$ and a constant using only observations for group $j$ identifies the slope $\beta(\tau)$ and the intercept $x_{2j}'\gamma(\tau)+\alpha(\tau,v_j)$. We need to consider variation across groups to identify the coefficient on the group-level variables. Note that model (\ref{eq:model}) implies
\begin{equation*}\bE\left[Q(\tau,y_{ij}|x_{1ij},  x_{2j},v_j)|x_{1ij},x_{2j}\right]=x'_{1ij} \beta(\tau) +  x'_{2j} \gamma (\tau) + \bE\left[\alpha(\tau,v_j)|x_{1ij},x_{2j}\right].\end{equation*}
If $\alpha(\tau,v_j)$ is exogenous with respect to $x_{1ij}$ and $x_{2j}$ and the linear model is correctly specified, $\bE[\alpha(\tau,v_j)|x_{1ij},x_{2j}]=0$ and a linear regression identifies the parameters of interest.\footnote{Uncorrelation between $\alpha(\tau,v_j)$ and $x_{1ij}$ and $x_{2j}$ is sufficient to identify the linear projection.} The last representation suggests a two-step estimation strategy: (i) group-level quantile regression of $y_{ij}$ on $x_{1ij}$, (ii) OLS regression of the fitted values from the first stage on $x_{1ij}$ and $x_{2j}$. 

When the group effects $\alpha(\tau,v_{j})$ are endogenous (possibly correlated with  $x_{1ij}$ and $ x_{2j}$), we assume that there is a $L$-dimensional vector ($L \geq K$) of valid instruments $z_{ij}$ satisfying
\begin{equation}\label{eq:uncorrelation}\bE[z_{ij} \alpha(\tau,v_{j})] =\bE\left[z_{ij}\left(Q(\tau,y_{ij}|x_{1ij},x_{2j},v_j)-x_{1ij}'\beta(\tau)-x_{2j}'\gamma(\tau)\right)\right] =0.\end{equation}Note that $\beta(\tau)$ is identified in model (\ref{eq:model}) as long as there is some variation in $x_{1ij}$ within some groups. For instance, we can include the demeaned regressors, $\dot x_{1ij}=x_{1ij}-\bar x_{1j}$ with $\bar x_{1j}=n^{-1}\sum_{i = 1}^nx_{1ij}$, in the vector of instruments $z_{ij}$ because this variable will satisfy condition (\ref{eq:uncorrelation}) under strict exogeneity.\footnote{In the special case of traditional panel data, the demeaned regressors correspond to the within transformation.} On the other hand, we need additional instruments to identify $\gamma(\tau)$. Equation (\ref{eq:uncorrelation}) suggests a similar estimation strategy as in the exogenous case but with the instrumental variable estimator (or more generally the GMM estimator) in the second stage: (i) group-level quantile regression of $y_{ij}$ on $x_{1ij}$, (ii) GMM regression of the fitted values from the first stage on $x_{1ij}$ and $x_{2j}$ using $z_{ij}$ as instrument.

\begin{remark}[\bfseries Skorohod representation]
The following Skorohod representation implies the model defined in equation (\ref{eq:model}):
\begin{align*}y_{ij}&=x_{1ij}\beta(u_{ij})+x_{2j}\gamma(u_{ij})+\alpha(u_{ij},v_{j})\\
&=q(x_{1ij},x_{2j},u_{ij},v_j),\end{align*}
where $q(x_{1ij},x_{2j},u_{ij},v_j)$ is strictly increasing in the third argument (while fixing the other arguments).\footnote{This is the same model as in \cite{Chetverikov2016}, where a similar Skorohod representation is derived in their footnote 7.} We normalize $u_{ij}|x_{1ij},x_{2j},v_j\sim U(0,1)$ such that $q(x_{1ij},x_{2j},\tau,v_{j})$ is the $\tau$ conditional quantile function. $u_{ij}$ ranks the individuals within a group and $v_j$ captures the group heterogeneity. In this model, a sufficient condition for equation (\ref{eq:uncorrelation}) is $(u_{ij},v_j)\indep z_{ij}$. If the instrument does not vary within groups, only $v_j\indep z_j$ is sufficient. \end{remark}

\begin{remark}[\bfseries Heterogeneous coefficients]\label{remark:heterogeneous}
Our model allows only the intercept to differ between groups. Now consider a more general model where we also allow the slopes to differ between groups:
\begin{equation}\label{eq:het_sko}y_{ij}=x_{1ij}'\beta(u_{ij},v_j)+x_{2j}'\gamma(u_{ij},v_j)+\alpha(u_{ij},v_j).\end{equation}
If we maintain the conditional strict monotonicity assumption with respect to $u_{ij}$, this model implies that
\begin{equation}\label{eq:heterogeneous}Q(\tau,y_{ij}|x_{1ij},  x_{2j},v_j) =   x'_{1ij} \beta(\tau,v_j) +  x'_{2j} \gamma (\tau,v_j) + \alpha(\tau,v_j).\end{equation}
In the exogenous case where $(x_{1ij},x_{2j})\indep v_j$, it follows that  \begin{align*}\bE\left[Q\left(\tau,y_{ij}|x_{1ij},  x_{2j},v_{j}\right)|x_{1ij},  x_{2j}\right] &=   x'_{1ij} \int\beta(\tau,v)dF_V(v) +  x'_{2j} \int\gamma (\tau,v)dF_V(v) + \int\alpha(\tau,v)dF_V(v)\\
&=x_{1ij}'\bar\beta(\tau)+x_{2j}'\bar\gamma(\tau)\end{align*}because we have normalized $\bE[\alpha(\tau,v_j)]=0$. This implies that the linear projection of $Q(\tau,y_{ij}|x_{1ij},  x_{2j},v_{j})$ on $x_{1ij}$ and $x_{2j}$ identifies the coefficients $\beta(\tau)$ and $\gamma(\tau)$ when the homogenous model (\ref{eq:model}) holds and the average effect over all groups at the $\tau$ quantile of their conditional distribution when the heterogenous model (\ref{eq:heterogeneous}) holds.\footnote{In the endogenous case, we obtain the instrumental variable projection instead of the standard linear projection. For instance, if $x_{2j}$ is an endogenous binary variable and $z_{ij}$ is a binary instrument, we identify the average treatment effects for the compliers at the $\tau$ quantile of their conditional distribution.} Naturally, it is also possible to model the heterogeneity between groups by estimating more flexible linear projections of $Q(\tau,y_{ij}|x_{1ij},  x_{2j},v_{j})$. For instance, we can interact $x_{1ij}$ 
with observable characteristics $x_{2j}$.\footnote{Starting with model (\ref{eq:het_sko}), one can simultaneously analyze both within-group and inter-group heterogeneity by constructing a quantile function with two quantile indices: one for the heterogeneity across groups and one for the heterogeneity within groups. These heterogeneous coefficients are identified through a two-step quantile regression: (i) a group-by-group quantile regression of \( y_{ij} \) on \( x_{1ij} \), followed by (ii) a quantile regression of the fitted values from the first stage on \( x_{1ij} \) and \( x_{2j} \). \cite{Pons2024} explores a version of this model that defines different parameters and falls outside the scope of this paper.}
\end{remark}

\subsection{Quantile Minimum Distance Estimators}

Motivated by the representation in equation (\ref{eq:uncorrelation}), we suggest the following the two-step procedure. In the first step, for each group $j$ and quantile $\tau$, we regress $y_{ij}$ on individual-level variables $x_{1ij}$ and a constant using quantile regression. The intercept of the first stage regression captures both the group effect $\alpha(\tau,v_j)$ and the term $x_{2j}' \gamma(\tau)$ as these vary only between groups. 
In a second step, we regress the fitted values of the first stage on $x_{1ij}$ and $ x_{2j}$, using GMM with instruments $z_{ij}$.

Formally, the first stage quantile regression solves the following minimization problem for each group and quantile separately: 
\begin{equation}\label{eq:first stage estimator}
\hat\beta_j(\tau)= \left(  \hat \beta_{0,j} , \hat \beta_{1,j}' \right)' = \argmin_{(b_0, b_1) \in \mathbb R^{K_1 + 1}} \frac{1}{n} \sum_{i = 1}^n\rho_\tau (y_{ij} - b_0 - x_{1ij}' b_1 ),
\end{equation}
where $\rho_\tau (x) = (\tau - 1 \{x < 0\})x$ for $x \in \mathbb{R}$ is the check function. The true vector of coefficients for group $j$ is given by $\beta_j(\tau) = (\alpha(\tau,v_j)+x_{2j}'\gamma(\tau), \beta(\tau)')'$. When the model does not contain any $x_{1ij}$ variables, quantile regression computes the sample quantiles in each group. 

\textbf{Notation}. Throughout the paper, we will use the following notation. Let $\tilde x_{ij} = (1, x_{1ij}') '$ and $x_{ij} = (x_{1ij}', x_{2j}')'$. For each group $j$ we define the following matrices. The $n \times  K_1$ matrix of individual-level regressors $ X_{1j} = (   x_{11j} ,  x_{12j}, \dots ,  x_{1nj}  )'$, the $n \times  K$ matrix containing all regressors $X_j=(x_{1j}, x_{2j},\dots, x_{nj})'$ and the $n \times  L$ matrix of instruments $Z_j=(z_{1j},z_{2j},\dots,z_{nj})'$.
Further, we define two matrices for all observations. The $mn\times K$ matrix of regressors for all groups $X =$  $(X_1', \dots , X_m')'$ and the $mn\times L$ matrix of instruments for all groups as $Z = (Z_1', \dots, Z'_m )'$. 
We let $Y$ be the response variable's $mn \times 1$ vector. 
The fitted value for individual $i$ in group $j$  at quantile $\tau$ is $\hat y_{ij}(\tau)=\hat\beta_{0,j}(\tau)+x_{1ij}'\hat\beta_{1,j}(\tau)$. We denote the $n \times  1$ column vector of fitted values for group $j$ by $\hat Y_j(\tau)=(\hat y_{1j}(\tau), \dots, y_{nj}(\tau))'$, and the $mn\times 1$ vector of fitted values by $\hat Y (\tau)= (\hat Y_1'(\tau), \dots, \hat Y_m'(\tau))'$.

\begin{remark}[\bfseries Alternative first-stage estimators]
The quantile regression estimator proposed by \cite{Koenker1978} is not necessarily efficient. \cite{Newey1990a} suggest a semiparametrically efficient weighted estimator of $\beta_j(\tau)$. However, we opt for the unweighted quantile regression estimator due to the challenges associated with estimating the weights and the complicated interpretation of the estimates in cases of misspecification. In our model (\ref{eq:model}), the variation within groups is assumed to be exogenous. If this assumption were violated, one could apply an instrumental variable (IV) quantile regression (see, e.g., \citealp{Chernozhukov2006}) in the first stage, followed by the second-stage GMM regression described below.\footnote{An IV extension of the MD estimator by \cite{Galvao2015} is suggested in \cite{Dai2021}.} We do not pursue this (computationally intensive) extension in this paper.\end{remark}

%The second stage consists in a GMM regression using $\bE [ g_{j}(\delta, \tau)  ] = 0$  as a moment condition, where $g_{j}(\delta, \tau) =  Z_{j}' (  \tilde X_{j} \hat \beta_j(\tau) -  X_{j} \delta(\tau) )$ and  $\delta(\tau) = (\beta(\tau)', \gamma(\tau)')'$ is the $K$-dimensional vector of coefficients. This moment restriction depends on the first stage and is the sample counterpart of $\bE[Z_j' \alpha_j(\tau, v_j)]$. 
The second stage consists of the linear GMM regression of $\hat Y(\tau)$ on $X$ using $Z$ as an instrument.
%using $\bE [ g_{ij}(\delta(\tau), \tau)  ] = 0$  as a moment condition, where $g_{ij}(\delta, \tau) =  z_{ij} (  \tilde x_{ij}' \hat \beta_j(\tau) -  x_{ij}' \delta(\tau) )$ and $\delta(\tau) = (\beta(\tau)', \gamma(\tau)')'$ denote the true $K$-dimensional vector of coefficients. This moment restriction depends on the first stage and is the sample counterpart of $\bE[z_{ij} \alpha_j(\tau, v_j)]$. 
The estimator has the following closed-form expression:
\begin{equation}\label{eq:second stage estimator}
\hat \delta (\hat W, \tau) = \left( X' Z \hat W(\tau) Z'  X \right) ^{-1}  X' Z \hat W(\tau) Z' \hat Y(\tau) ,
\end{equation}
where $\hat W(\tau)$ is a $L \times L$ symmetric weighting matrix.  When $L = K$, the second step estimator in equation (\ref{eq:second stage estimator}) simplifies to the IV estimator using $Z$ as an instrument, and we can drop the dependence on $\hat W(\tau)$. 

Our two-step estimator is extremely simple to implement; it requires only routines performing quantile regression and GMM estimation, which are already available in many software applications. Quantile regression, which is computationally more demanding due to the absence of a closed-form solution, is used only in the first stage, where there are fewer observations and a limited number of parameters to estimate. The first stage is also embarrassingly parallelizable, increasing the computational speed. For this reason, our estimator remains computationally attractive in large datasets with numerous groups. The second stage is a straightforward GMM estimator, which includes OLS and two-stage least squares as special cases. Traditional panel data methods can also be used in the second stage. For instance, in our application, we observe individuals born in a given trimester in a given county. The subscript $j$ defines a county-trimester cell, while the subscript $i$ defines an individual within this cell. In the second stage, we include trimester, county, and state $\times$ year fixed effects to estimate the effect of food stamps on the birth weight distribution.  

\begin{remark}[\bfseries Interpretation as a minimum distance estimator]\label{remark:mdestimator}
Our estimator can be written as an MD estimator, where the second stage imposes restrictions on the first-stage coefficients. For simplicity, in this remark, we consider the case where all the regressors are exogenous and $Z =  X$. Define
\begin{equation}\label{eq:restriction}
    \underset{\scriptscriptstyle (K_1+1) \times K}{R_j} = \begin{pmatrix}  0 & x_{2j}' \\ I_{K_1} & 0
    \end{pmatrix}
\end{equation}
such that $\tilde X_j R_j = X_j$. It follows that our MD estimator minimizes
\begin{align}\label{eq:MD representation}
\hat \delta(\tau) &= \argmin_{\delta} \sum_{j = 1}^m (\tilde X_j\hat\beta_j(\tau)-X_j\delta)'(\tilde X_j\hat\beta_j(\tau)-X_j\delta)\nonumber \\
&= \argmin_{\delta} \sum_{j = 1}^m (\tilde X_j\hat\beta_j(\tau)-\tilde X_j R_j \delta)'(\tilde X_j\hat\beta_j(\tau)-\tilde X_j R_j \delta)\nonumber \\
&= \argmin_{\delta} \sum_{j = 1}^m (\hat\beta_j(\tau)-R_j \delta)'\tilde X_j'\tilde X_j(\hat\beta_j(\tau)-R_j \delta),
\end{align}
which corresponds to the definition of a weighted minimum distance estimator that imposes the linear restrictions $\beta_j(\tau)=R_j \delta(\tau)$ with weights $\tilde X_j'\tilde X_j$.

Thus, our estimator is an MD estimator. However, it does not correspond to the textbook definition of a ``classical minimum distance'' estimator.\footnote{See section 14.6 in \cite{Wooldridge2010}.} In the classical MD setup, all the sampling variance arises in the first stage: if we know the first stage coefficients, we know the final coefficients. It follows that the efficient weighting matrix $\tilde W(\tau)$ is the inverse of the first-stage variance. In our case, the second stage also contributes to the variance due to the presence of the group effects $\alpha(\tau,v_j)$. Even if we know $\beta_j(\tau)$ (for a finite number of groups), we cannot exactly pinpoint $\gamma(\tau)$. The group effects play a role similar to misspecification in the classical MD, but with our estimator, the resulting bias disappears asymptotically as the number of groups increases. This is the second important difference: the dimension of our first stage estimates increases with the sample size while it is fixed for classical MD estimators.

%The estimators suggested by \cite{Chamberlain1994} and \cite{Galvao2015} are special cases of our estimator in which only the first stage estimation error matters, and the efficient weighting matrix is the inverse of the first stage variance.\footnote{The efficient MD estimator of \cite{Galvao2015} is numerically identical to our estimator using an IV second stage with instrument $(Z_j^*(\tau), \xi_j)$, with $Z_j^*(\tau) = \tilde X_j  (\tilde X_j'\tilde X_j V_j(\tau) \tilde X_j'\tilde X_j)^{-1} \tilde X_j' Z_j = (\tilde X_j V_j(\tau) \tilde X_j' )^+ Z_j$, where $Z_j$ contains a constant, $X_{1j}$, and $\xi_j$ is a group fixed effect. The symbol $^+$ denotes the Moore-Penrose inverse.} In section \ref{sec:asym. theory}, we derive the asymptotic distribution when the estimation error of both stages matters.  
\end{remark} 

% Least squares MD  
\subsection{Least Squares Minimum Distance Estimators}\label{subsec:least squares estimators}

This paper proposes a two-step estimator in which the first stage involves performing quantile regressions within each group. Although this approach might seem unusual and specific to quantile models, we demonstrate in this subsection that this method yields numerically identical results to traditional least squares panel estimators, provided that OLS is applied in the first stage. A more detailed discussion, including formal statements, is provided in Appendix \ref{sec:formal ls}, with proofs available in Appendix \ref{app:proofs linear models}.

%Since in our model we are interested in the within-group heterogeneity, we divide the estimation into two steps where the first step consists of the group by group regressions. It turns out that in the special case of a least squares first stage, splitting the problem into two steps does not affect the results. This section discusses the analogy between our minimum distance estimator and traditional least squares (panel data) estimators in the special case that least squares regression is used in the first stage. We present some equivalent ways to compute various common estimators and show that they can be estimated using our two-stage minimum distance approach. A more detailed discussion, including formal statements, can be found in Appendix \ref{sec:formal ls} with proofs in Appendix \ref{app:proofs linear models}. 

Consider first a model with group effects and individual-level regressors
\begin{equation}\label{FELS}
y_{ij}=x_{1ij}'\beta+\alpha_{j}+\varepsilon_{ij}.
\end{equation}
%The least squares fixed effects estimator can be computed by subtracting from each variable its group average and applying the ordinary least squares estimator.
Typically, when estimating this model with fixed effects, most researchers apply the within transformation, which is widely recognized as being equivalent to a dummy variable regression.\footnote{In the context of traditional panel data, we would refer to the $j$ units as ``individuals'' and the $i$ units as ``time periods''. Thus, this transformation corresponds to the time-demeaning applied in the traditional panel data literature, eliminating time-invariant individual effects.}
Yet, a third equivalent method exists for computing the least squares fixed effects estimator, which involves exploiting the exogenous within-group variation using instrumental variables. Setting $\dot x_{1ij}$ as an instrument for $x_{1ij}$ in an instrumental variable regression is numerically identical to the least squares fixed effects.

Corollary \ref{cor:FE} in Appendix \ref{sec:formal ls} presents a fourth way to compute the least squares fixed effects estimator, which aligns with our paper's approach. This minimum distance method involves two steps: first, regressing with OLS the dependent variable on $x_{1ij}$ within each group, then using IV to regress the first-stage fitted values on $x_{1ij}$, with $\dot x_{1ij}$ as the instrument. This approach offers the most computationally efficient alternative for quantile estimation, as it divides the problem into two convex optimization steps rather than a single nonconvex IV quantile regression, and it avoids the challenges of high-dimensional quantile regression.

%All these estimation procedures can be divided into two stages without changing the results. The first stage consists of group-level regressions for each $j$. The intercept of each regression will absorb the unobserved heterogeneity $\alpha_j$. The second stage aggregates the individual results by regressing the fitted values from the first stage on the desired regressors and using the appropriate instrument. As shown below, using the first-stage fitted values as the dependent variable does not affect the results in the least squares model. This procedure can be easily extended to quantile models, where it substantially reduces the computational burden of quantile fixed effects estimation and allows for a large variety of models.
%\footnote{There is a fourth possibility to compute the least squares fixed estimator, which is similar to the third one, but, not numerically identical to the within estimator. The first stage also consists of individual-level regressions for each $j$. The second stage aggregates the slope coefficients directly by taking the average of the individual slopes with weights proportional to the variance of the regressors within individuals. \cite{Galvao2015} suggests a similar estimator for the fixed effects quantile regression model. However, this averaging method does not allow for the presence of group-level regressors and, more generally, does not exploit the between-individuals variation.} 

The two-step procedure is not specific to fixed effects but applies to a wide range of estimators. Proposition \ref{prop:md = gmm} in Appendix \ref{app:linear models} shows that the MD least squares estimator is algebraically identical to the one-step GMM regression of $y_{ij}$ on $x_{ij}$ under the condition that for each group $j$, the matrix of instruments lies in the column space of the matrix of first stage regressors.\footnote{The intuition is as follows. The fitted values of the first-stage least squares regression can be written as $P_{X_j} Y_j$ where $P_{X_j}$ is the first-stage least squares projection matrix of group $j$. If the instrument matrix, $Z_j$, is in the column space of $\tilde X_j$, it follows that $P_{X_j} Z_j = Z_j$. Therefore, $Z' \hat Y = Z'Y$ and the two GMM regressions are numerically identical.} For example, $\dot x_{1ij}$, $\bar x_{1j}$, and $x_{2j}$ satisfy the condition.

We extend the model by incorporating group-level regressors, $x_{2j}$:
\begin{equation}\label{FELS2}
y_{ij}=x_{1ij}'\beta+x_{2j}'\gamma+\alpha_{j}+\varepsilon_{ij}.
\end{equation}
%and consider our minimum distance estimator with a least squares first stage instead of quantile regression. That is, the first stage consists of group-level least squares regressions, including only the individual-level variables. The second stage is a linear GMM regression of the first-stage fitted values on both individual-level and group-level variables.  This two-step estimator is algebraically identical to the one-step linear GMM regression of $y_{ij}$ on $x_{ij}$ under the mild condition that for each group $j$, the matrix of instruments lies in the column space of the matrix of first stage regressors (see Proposition \ref{prop:md = gmm} in Appendix \ref{app:linear models}). \footnote{The intuition is as follows. The fitted values of the first-stage least squares regression can be written as $P_{X_j} Y_j$ where $P_{X_j}$ is the first-stage least squares projection matrix of group $j$. If the instrument matrix, $Z_j$, is in the column space of $\tilde X_j$, it follows that $P_{X_j} Z_j = Z_j$. Therefore, $Z' \hat Y = Z'Y$ and the two GMM regressions are numerically identical.} The matrix of instruments lies in the column space of the first-stage regressors matrix if the instruments are contained included in the first stage or are a linear combination of the columns of the matrix of first-stage regressors. For example, demeaned individual-level regressors as well as group-level variables fall into the last category. Since OLS and 2SLS are special cases of GMM, the same result follows directly.
By selecting different instrumental variables for the second-step GMM regression, we can numerically obtain the most common least squares panel data estimators. 
%An advantage of this approach is that it allows us to implement and analyze all the estimators that we consider in this paper within the same GMM framework simply by selecting the instruments. In essence, the instrument determines which variation we exploit in the second stage. 
For example, using the group-averaged variables, $\bar{x}_{1j}$ and $x_{2j}$, as instruments yields the between estimator. Instrumental variable approaches are also available for random effects estimation. Although FGLS is the most common estimator for the random effects model, \cite{Im1999} show that the overidentified 3SLS estimator, with instruments $\dot x_{1ij}$, $\bar x_{1j}$, and $x_{2j}$, is numerically identical to the random effects estimator. Since 3SLS is a special case of a GMM estimator, using the first-stage fitted values as dependent variables does not change the estimates. Alternatively, the random effects estimator can be implemented using the theory of optimal instrument with a just identified 2SLS regression (see \citealp{Im1999, Hansen2021}). Additionally, the \cite{Hausman1981} estimator can be implemented by selecting the following instruments: $\dot x_{1ij}$ and the group average of the exogenous regressors. External instruments might also be included. %Interestingly, in all cases, clustering standard errors at the level of the group (or at a higher level) is sufficient to capture the first stage estimation error, see Proposition \ref{prop:clustered se OLS} in Appendix \ref{app:proofs linear models}. These clustered standard errors are numerically identical to those obtained after using the one-step GMM estimator with $y_{ij}$ as the dependent variable.

%Notably, clustering standard errors at the group level (or higher) effectively captures the first-stage estimation error, as shown in Proposition \ref{prop:clustered se OLS} in Appendix \ref{app:linear models}. These clustered standard errors are numerically identical to those obtained from a one-step GMM estimator with $y_{ij}$ as the dependent variable.

%%%%%%%%%% ASYMPTOTICS %%%%%%%%%%%%%
\section{Asymptotic Theory}\label{sec:asym. theory}

\subsection{Preliminaries: Assumptions, Consistency, and Sample Moments}

In this section, we state the assumptions and present the asymptotic results. All the proofs are included in Appendix \ref{app: proofs asy. results}. For simplicity of notation, in the following, we write $\alpha_j(\tau)$ instead of $\alpha(\tau, v_j)$. We prove weak uniform consistency and weak convergence of the whole quantile regression process for $\tau\in\mathcal{T}$, where $\mathcal{T}$ is a compact set included in $(0,1)$. The symbol $\ell^\infty ( \mathcal{T} )$ denotes the set of component-wise bounded vector valued function of $\mathcal{T}$, $\rightsquigarrow$ denotes weak convergence, and for a random variable $h_{ij}$, $\bE_{i|j}[h_{ij}]$ indicates the expectation over $i$ in group $j$.

We start by writing the sampling error of $\hat\delta(\hat W, \tau)$ as a sum of a component arising from the first stage estimation error of $\beta_j(\tau)$ and a component arising from the second stage noise $\alpha_j(\tau)$:
\begin{lemma}[\textbf{Sampling error}]\label{l:sampling error}
Assume that the model in equation (\ref{eq:model}) holds, then
$$\hat\delta(\hat W,\tau)-\delta(\tau)=\hat G(\tau) \frac{1}{mn} \sum_{j = 1}^m\sum_{i = 1}^n z_{ij} \left(\tilde x_{ij}'(\hat \beta_j(\tau)-\beta_j(\tau))+\alpha_j(\tau)\right),$$
where $ \hat G(\tau) = \left(S_{ZX}'\hat W(\tau)S_{ZX}\right)^{-1}S_{ZX}'\hat W(\tau)$ and  $S_{ZX}=\frac{1}{mn}\sum_{j=1}^m\sum_{i = 1}^n z_{ij}x_{ij}'$.
\end{lemma}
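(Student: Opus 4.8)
The plan is to substitute the closed-form expression \eqref{eq:second stage estimator} into $\hat\delta(\hat W,\tau)-\delta(\tau)$ and exploit the block structure of the stacked fitted values. First I would recall that by definition $\hat y_{ij}(\tau)=\hat\beta_{0,j}(\tau)+x_{1ij}'\hat\beta_{1,j}(\tau)=\tilde x_{ij}'\hat\beta_j(\tau)$, so the group-$j$ block of $\hat Y(\tau)$ is exactly $\tilde X_j\hat\beta_j(\tau)$. Writing $\delta(\tau)=(\beta(\tau)',\gamma(\tau)')'$ and using the identity $\tilde X_jR_j=X_j$ from Remark \ref{remark:mdestimator}, together with $\beta_j(\tau)=(\alpha_j(\tau)+x_{2j}'\gamma(\tau),\beta(\tau)')'=R_j\delta(\tau)+(\alpha_j(\tau),0_{K_1}')'$, I obtain row-by-row that $\tilde x_{ij}'\beta_j(\tau)=x_{ij}'\delta(\tau)+\alpha_j(\tau)$, hence
$$\tilde x_{ij}'\hat\beta_j(\tau)=x_{ij}'\delta(\tau)+\alpha_j(\tau)+\tilde x_{ij}'\bigl(\hat\beta_j(\tau)-\beta_j(\tau)\bigr).$$
Stacking over $i$ and $j$ gives $\hat Y(\tau)=X\delta(\tau)+u(\tau)$, where the group-$j$ block of $u(\tau)$ is $u_j(\tau)=\alpha_j(\tau)\mathbf 1_n+\tilde X_j(\hat\beta_j(\tau)-\beta_j(\tau))$.

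Next I would plug this decomposition into \eqref{eq:second stage estimator}. Because $(X'Z\hat W(\tau)Z'X)^{-1}X'Z\hat W(\tau)Z'X=I_K$, the term $X\delta(\tau)$ reproduces $\delta(\tau)$ exactly, so that
$$\hat\delta(\hat W,\tau)-\delta(\tau)=\bigl(X'Z\hat W(\tau)Z'X\bigr)^{-1}X'Z\hat W(\tau)Z'u(\tau).$$
Finally I would rescale by $mn$: since $Z'X=mn\,S_{ZX}$ and $X'Z=mn\,S_{ZX}'$, the normalization constants cancel down to a single factor $1/(mn)$, and using $Z'u(\tau)=\sum_{j}Z_j'u_j(\tau)=\sum_{j}\sum_{i}z_{ij}\bigl(\alpha_j(\tau)+\tilde x_{ij}'(\hat\beta_j(\tau)-\beta_j(\tau))\bigr)$ I arrive at
$$\hat\delta(\hat W,\tau)-\delta(\tau)=\bigl(S_{ZX}'\hat W(\tau)S_{ZX}\bigr)^{-1}S_{ZX}'\hat W(\tau)\cdot\frac{1}{mn}\sum_{j=1}^m\sum_{i=1}^n z_{ij}\Bigl(\tilde x_{ij}'(\hat\beta_j(\tau)-\beta_j(\tau))+\alpha_j(\tau)\Bigr),$$
which is the claimed identity with $\hat G(\tau)=(S_{ZX}'\hat W(\tau)S_{ZX})^{-1}S_{ZX}'\hat W(\tau)$.

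This argument is purely algebraic, so there is no genuine analytical obstacle; the only points needing care are bookkeeping ones — correctly identifying the block structure of $\hat Y(\tau)$ and $u(\tau)$, applying $\tilde X_jR_j=X_j$ at the level of individual rows, and tracking the $mn$ factors relating $Z'X$ to $S_{ZX}$. Invertibility of $X'Z\hat W(\tau)Z'X$ (equivalently of $S_{ZX}'\hat W(\tau)S_{ZX}$) is already implicit in the definition of the estimator in \eqref{eq:second stage estimator}, so nothing extra is required. One could instead derive the same expression from the minimum-distance representation \eqref{eq:MD representation} by taking first-order conditions, but direct substitution into the closed form is the shortest route.
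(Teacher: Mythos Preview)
Your proposal is correct and follows essentially the same algebraic route as the paper: both start from the closed form \eqref{eq:second stage estimator}, use $\hat y_{ij}(\tau)=\tilde x_{ij}'\hat\beta_j(\tau)$ together with the model identity $\tilde x_{ij}'\beta_j(\tau)=x_{ij}'\delta(\tau)+\alpha_j(\tau)$, and then cancel the $mn$ factors to obtain $\hat G(\tau)$ in terms of $S_{ZX}$. The only cosmetic difference is that you first package the decomposition as $\hat Y(\tau)=X\delta(\tau)+u(\tau)$ before substituting, whereas the paper carries the sum through line by line.
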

To keep the notation light, we suppress the dependency of $\hat G(\tau)$ on $\hat W(\tau)$.
We now state assumptions that ensure that both components are well-behaved. For the analysis of the first stage estimator, we rely on results derived in \cite{Galvao2020} and make the assumptions required in their Theorem 2:

\begin{assumption}[\textbf{Sampling}] \label{a:sampling} (i) The processes $\{(y_{ij}, x_{ij}, z_{ij}):i = 1, \dots, n\}$ are independent across $j$. (ii) For each $j$, the observations $(y_{ij}, x_{ij},z_{ij})_{i=1,\dots, n}$ are i.i.d. across $i$.
\end{assumption}

\begin{assumption}[\textbf{Covariates}]  \label{a:covariates} (i) For all $j=1,\dots,m$ and all $i=1,\dots, n$, $\lVert x_{ij}\rVert \leq C$ almost surely. (ii) The eigenvalues of $\bE_{i|j}[ \tilde x_{ij} \tilde x_{ij}']$ are bounded away from zero and infinity uniformly across $j$.  
\end{assumption}

\begin{assumption}[\textbf{Conditional distribution}]  \label{a:conditional distribution}
The conditional distribution $F_{y_{ij} | x_{1ij}, v_j} (y|x,v)$ is twice differentiable w.r.t. y, with the corresponding derivatives $f_{y_{ij} | x_{1ij}, v_j} (y|x,v)$ and $f'_{y_{ij} | x_{1ij}, v_j} (y|x,v)$. Further, assume that
\begin{align*}
f_{max} = \sup_j \sup_{y \in \mathbb{R}, x \in \mathcal{X}} |f_{y_{ij} | x_{1ij}, v_j} (y|x, v)| < \infty
\end{align*}
and
\begin{align*}
\bar f' = \sup_j \sup_{y \in \mathbb{R}, x \in \mathcal{X}} |f'_{y_{ij} | x_{1ij}, v_j} (y|x,v)| < \infty.
\end{align*}
where $\mathcal{X}$ is the support of $x_{1ij}$
\end{assumption}

\begin{assumption}[\textbf{Bounded density}] \label{a:bounded density}
There exists a constant $f_{min} < f_{max}$ such that 
\begin{align*}
0 < f_{min} \leq \inf_j \inf_{\tau \in \mathcal{T}} \inf_{x \in \mathcal{X}} f_{y_{ij} | x_{1ij}, v_j} (Q(\tau,y_{ij}|x, v)|x,v).
\end{align*}
\end{assumption}
These are quite standard assumptions in the quantile regression literature. In Assumption \ref{a:sampling}, we assume that the processes are independent across $j$; this assumption can also be relaxed by allowing for clustering between groups. We also assume that the observations are i.i.d. within groups, but this can be relaxed at the cost of a more complex notation by applying Theorem 4 in \cite{Galvao2020}, which requires only stationarity and $\beta$-mixing. The estimator of the asymptotic variance that we suggest below is consistent in both cases. Assumption \ref{a:covariates} requires that the regressors are bounded and that $\bE_{i|j}[\tilde x_{ij}\tilde x_{ij}']$ is invertible. Assumptions \ref{a:conditional distribution} and \ref{a:bounded density} impose smoothness and boundedness of the conditional distribution, the density, and its derivatives.

For the second stage GMM regression we impose the following assumptions:
\begin{assumption}[\textbf{Instruments}]\label{a:instruments}
(i) For all $j=1,\dots,m$ and all $i=1,\dots, n$, $|| z_{ij} || \leq C$ a.s. (ii) For all $j=1,\dots,m$ and all $i=1,\dots, n$,  $\bE[z_{ij}\alpha_j(\tau)]=0$. (iii) For all $j=1,\dots,m$ and all $i=1,\dots, n$,  $y_{ij}$ is independent of $z_{ij}$ conditional on $(x_{ij},v_j)$. (iv) As $m\rightarrow \infty$, $m^{-1}\sum_{j = 1}^m \bE_{i|j}[z_{ij}x_{ij}']\rightarrow\Sigma_{ZX}$ where the singular values of $\Sigma_{ZX}$ are bounded from below and from above.
\end{assumption}
\begin{assumption}[\textbf{Group effects}]\label{a:group effects}
\mbox{}
\newline
(i) For all $j=1,\dots,m$, $\bE\left[\sup_{\tau \in \mathcal{T}}|\alpha_j(\tau)|^{4+\varepsilon_C}\right]$ $\leq C$ for $\varepsilon_C>0$. 
(ii) For some (matrix-valued) function $\Omega_2 : \mathcal{T} \times \mathcal{T} \rightarrow  \mathbb{R}^{L \times L}$, $m^{-1}\sum_{j = 1}^m \bE_{i|j}[\alpha_j(\tau_1) \alpha_j(\tau_2) z_{ij}z_{ij}']\underset{p}{\rightarrow} \Omega_{2}(\tau_1, \tau_2)$ uniformly over $\tau_1, \tau_2 \in \mathcal{T}$. 
(iii) For all $\tau_1, \tau_2 \in \mathcal{T}$, $|\alpha_j(\tau_2) - \alpha_j(\tau_1)| \leq C | \tau_2 - \tau_1 |$.
\end{assumption}

These assumptions are the same as in \cite{Chetverikov2016}. For the instrumental variables, we assume that (i) they are bounded, (ii) they are not correlated with the group effect (exclusion restriction), (iii) they do not affect the first stage estimation (this is often satisfied by construction, e.g. when the instruments do not vary within individuals or are a linear transformation of the first stage regressors), and (iv) they satisfy the relevance conditions. For the group effects, we assume that they have a finite fourth moment, and the average variance of $z_{ij}\alpha_j(\tau)$ converges to a well-defined matrix. 

Since the unobserved heterogeneity $\alpha_j(\tau)$ is group-specific, we require that the number of groups $m$ diverges to infinity. The first stage quantile regression estimator is a nonlinear estimator that is potentially biased in finite samples. Hence, the number of observations per group, $n$, must also diverge to infinity for consistency. \cite{Galvao2020} show that the bias is approximately of order $1/n$. For unbiased asymptotic normality, we need the bias to shrink faster than the standard deviation of the estimator. We will see that some elements of $\hat \delta(\hat W, \tau)$ converge at the $\sqrt m$ rate such that we need that $n$ goes to infinity more quickly than $\sqrt{m}$. On the other hand, other elements converge at the $\sqrt{mn}$ rate so that $n$ must go to infinity more quickly than $m$. We state these three different relative growth rates in the following assumption:
\begin{assumption}[\textbf{Growth rates}] \label{a:growth condition}
As $m\rightarrow \infty$, we have
\mbox{}
\begin{enumerate}[label=(\alph*)]
    \item $\frac{\log m}{n}\rightarrow 0$,
    \item $ \frac{ \sqrt{m} \log  n }{n}\rightarrow 0$,
    \item $ \frac{ m \left(\log  n\right)^2 }{n}\rightarrow 0$.
\end{enumerate}
\end{assumption}

Finally, we assume that the estimated weighting matrix uniformly converges to a strictly positive definite matrix that is continuous in the quantile index.%\footnote{Later, we will see that a full rank weighting matrix is not optimal, and we will relax this assumption.}
\begin{assumption}[\textbf{Full-rank weighting matrix}] \label{a:weighting}
Uniformly in $\tau\in\mathcal T$, $\hat W(\tau)\underset{p}{\rightarrow} W(\tau)$ where $W(\tau)$ is strictly positive definite and, for all $\tau_1, \tau_2 \in \mathcal{T}$, $|| W (\tau_2) - W(\tau_1) || \leq C | \tau_2- \tau_1 |$.
\end{assumption}

Our first result establishes the uniform consistency of our estimator under the weakest growth rate condition:
\begin{theorem}[\textbf{Uniform consistency}]\label{t:consistency}
Let the model in equation (\ref{eq:model}), Assumptions \ref{a:sampling}-\ref{a:group effects}, \ref{a:growth condition}(a), and \ref{a:weighting} hold. Then,
$$\underset{\tau\in\mathcal T}{\sup}\lVert \hat \delta(\tau)-\delta(\tau)\rVert = o_p(1).$$
\end{theorem}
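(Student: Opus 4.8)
The plan is to work from the decomposition in Lemma \ref{l:sampling error}, which already splits the sampling error as $\hat\delta(\tau)-\delta(\tau)=\hat G(\tau)\,\frac{1}{mn}\sum_{j,i}z_{ij}\big(\tilde x_{ij}'(\hat\beta_j(\tau)-\beta_j(\tau))+\alpha_j(\tau)\big)$. Since consistency only requires an $o_p(1)$ bound uniformly in $\tau$, I would not need sharp rates, just that each factor is controlled. First I would show that $\hat G(\tau)$ is uniformly (in $\tau$) bounded in probability: by Assumption \ref{a:instruments}(i),(iv) and Assumption \ref{a:covariates}(i), $S_{ZX}\to\Sigma_{ZX}$ uniformly with singular values bounded away from $0$ and $\infty$ (a law of large numbers over $j$, using boundedness), and by Assumption \ref{a:weighting} $\hat W(\tau)\to W(\tau)$ uniformly with $W(\tau)$ positive definite and Lipschitz; hence $\hat G(\tau)=(S_{ZX}'\hat W S_{ZX})^{-1}S_{ZX}'\hat W$ converges uniformly to a bounded continuous limit, so $\sup_\tau\lVert\hat G(\tau)\rVert=O_p(1)$.

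Next I would handle the two pieces of the averaged moment separately. For the group-effects piece $\frac{1}{m}\sum_j \big(\frac1n\sum_i z_{ij}\big)\alpha_j(\tau)$: the summands are independent across $j$ (Assumption \ref{a:sampling}(i)), mean zero by Assumption \ref{a:instruments}(ii), and have a uniform-in-$\tau$ $(4+\eps_C)$-th moment bound from Assumptions \ref{a:group effects}(i) and \ref{a:instruments}(i); combined with the Lipschitz-in-$\tau$ property of $\alpha_j$ (Assumption \ref{a:group effects}(iii)) and boundedness of $z_{ij}$, a uniform law of large numbers (e.g. a maximal inequality / bracketing argument over the compact $\mathcal T$, or simply a chaining bound exploiting the Lipschitz modulus) gives $\sup_\tau\big\lVert\frac1m\sum_j\big(\frac1n\sum_i z_{ij}\big)\alpha_j(\tau)\big\rVert=o_p(1)$. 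For the first-stage piece $\frac{1}{mn}\sum_{j,i}z_{ij}\tilde x_{ij}'(\hat\beta_j(\tau)-\beta_j(\tau))$, I would bound it in norm by $\big(\frac{1}{mn}\sum_{j,i}\lVert z_{ij}\rVert\lVert\tilde x_{ij}\rVert\big)\cdot\sup_j\sup_\tau\lVert\hat\beta_j(\tau)-\beta_j(\tau)\rVert$; the average prefactor is $O_p(1)$ by boundedness, and the uniform-over-groups-and-quantiles consistency of the first-stage quantile regression estimator is exactly the content of Theorem 2 in \cite{Galvao2020}, whose hypotheses are guaranteed by Assumptions \ref{a:sampling}--\ref{a:bounded density} together with the growth condition \ref{a:growth condition}(a), $\log m / n\to 0$. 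This delivers $\sup_j\sup_\tau\lVert\hat\beta_j(\tau)-\beta_j(\tau)\rVert=o_p(1)$, hence the whole piece is $o_p(1)$.

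Putting the two bounds together with $\sup_\tau\lVert\hat G(\tau)\rVert=O_p(1)$ gives $\sup_\tau\lVert\hat\delta(\tau)-\delta(\tau)\rVert=o_p(1)$, which is the claim. The main obstacle I anticipate is the uniformity in $\tau$ over the compact set $\mathcal T$ — in particular, converting the fixed-$\tau$ law of large numbers for the group-effects term into a uniform statement and invoking the Galvão et al. uniform-in-$(j,\tau)$ rate correctly. Both are manageable: the former via the Lipschitz continuity of $\alpha_j(\cdot)$ (covering $\mathcal T$ by finitely many points and controlling oscillations between them), and the latter by directly citing the result already quoted in the excerpt, checking that the stated assumptions match. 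The boundedness assumptions (\ref{a:covariates}(i), \ref{a:instruments}(i)) do most of the heavy lifting to make all the averages well-behaved, so no delicate moment conditions beyond what is assumed are needed.
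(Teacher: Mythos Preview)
Your proposal is correct and follows essentially the same route as the paper: decompose via Lemma~\ref{l:sampling error}, show $\hat G(\tau)\to G(\tau)$ uniformly (the paper packages this as a separate Lemma~\ref{l:consistency of G}), then control the two moment pieces exactly as you describe---the first-stage piece via uniform-in-$(j,\tau)$ consistency of $\hat\beta_j(\tau)$, and the $\alpha_j$ piece via mean zero, bounded moments, and Lipschitz-in-$\tau$. One small caveat on your citation: the uniform-in-both-$j$-and-$\tau$ consistency of the first stage is not quite off-the-shelf from \cite{Galvao2020}; the paper proves it as its own Lemma~\ref{l:uniform coefficients}, noting that \cite{Angrist2006a} give uniformity in $\tau$ and \cite{Galvao2015} give uniformity in $j$, and then combining the two arguments via stochastic equicontinuity and a Hoeffding bound---precisely the obstacle you flagged.
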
 

We now study the asymptotic distribution of our estimator. In Lemma \ref{l:sampling error}, we see that the sample moment condition is the sum of two terms. It is useful to consider them separately:
\begin{align}
\bar g^{(1)}_{mn}(\hat \delta, \tau)&= \frac{1}{mn}\sum_{j = 1}^m\sum_{i = 1}^n z_{ij} \tilde x_{ij}'\left(\hat \beta_j(\tau)-\beta_j(\tau)\right)\\
\bar g^{(2)}_{mn}(\hat \delta, \tau)&= \frac{1}{mn}\sum_{j = 1}^m\sum_{i = 1}^n z_{ij} \alpha_j(\tau)\end{align}
such that total moment condition is the sum of both components: $\bar g_{mn}(\hat\delta,\tau)=\bar g^{(1)}_{mn}(\hat\delta,\tau)+\bar g^{(2)}_{mn}(\hat \delta,\tau)=\frac{1}{mn}\sum_{j = 1}^m\sum_{i = 1}^ng_{ij}(\hat \delta,\tau)$. Lemma \ref{l:asymptotic moments} establishes joint asymptotic normality for the entire moment condition processes.
\begin{lemma}[\textbf{Asymptotic distribution of the sample moments}]\label{l:asymptotic moments}
Let the model in equation (\ref{eq:model}), and Assumptions \ref{a:sampling}-\ref{a:group effects} hold.
\begin{enumerate}[label=(\roman*), wide, labelindent=0pt]
\item  Under Assumption \ref{a:growth condition}(c), as $m\rightarrow \infty$,
\begin{equation}\label{eq:asymptotic moment 1}
    \sqrt{mn}\bar g^{(1)}_{mn}(\hat\delta, \cdot) \rightsquigarrow \mathbb Z_1(\cdot)\text{, in $\ell^\infty(\mathcal T)$,}
\end{equation}
where $\mathbb Z_1(\cdot)$ is a mean-zero Gaussian process with uniformly continuous sample paths and covariance function $\Omega_{1}(\tau,\tau')=\bE\left[\Sigma_{ZXj}V_j(\tau,\tau')\Sigma_{ZXj}'\right]$ with $\Sigma_{ZXj}=\bE_{i|j}[z_{ij} \tilde x_{ij}']$ and $V_j(\tau,\tau')$ is the asymptotic variance-covariance matrix of $\hat\beta_j(\tau)$ and $\hat\beta_j(\tau')$:
\begin{align*}V_j(\tau,\tau') = \bE_{i|j}[f_{y|x}(Q_{y|x, \nu_j}(\tau|\tilde x_{ij} )|\tilde x_{ij})\tilde x_{ij} \tilde x_{ij}']^{-1} (\min(\tau,\tau')-\tau\tau') \bE_{i|j}[\tilde x_{ij}\tilde x_{ij}']\\ \times \bE_{i|j}[f_{y|x}(Q_{y|x, \nu_j}(\tau'|\tilde x_{ij} )| \tilde x_{ij})\tilde x_{ij} \tilde x_{ij}']^{-1}\end{align*}

\item Under Assumption \ref{a:growth condition}(b), As $m\rightarrow \infty$,
\begin{equation}\label{eq:asymptotic moment 1b}
    \sqrt{m}\bar g^{(2)}_{mn}(\hat\delta, \cdot) \rightsquigarrow \mathbb Z_2(\cdot)\text{, in $\ell^\infty(\mathcal T)$,}
\end{equation}
where $\mathbb Z_2(\cdot)$ is a mean-zero Gaussian process with uniformly continuous sample paths and covariance function $\Omega_2(\tau,\tau')$, which is defined in Assumption \ref{a:group effects}(ii). 
\item Under Assumption \ref{a:growth condition}(c), as $m\rightarrow \infty$, $\underset{\tau,\tau'\in\mathcal T}{\sup} \left\lVert \Cov \left (\bar g^{(1)}_{mn}(\hat\delta, \tau), \bar g^{(2)}_{mn}(\hat\delta, \tau')\right )\right\rVert=o_p\left(\frac{1}{\sqrt{mn}}\right)$.
\end{enumerate}
\end{lemma}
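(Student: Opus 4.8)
The plan is to reduce all three statements to two ingredients: (a) a uniform‑in‑$\tau$ Bahadur representation for the within‑group quantile regression estimators, borrowed from \cite{Galvao2020} and applied group by group with constants that are uniform in $j$ thanks to Assumptions \ref{a:covariates}(ii), \ref{a:conditional distribution} and \ref{a:bounded density}; and (b) a functional CLT for row‑independent triangular arrays of $\mathcal T$‑indexed processes. Write $\psi_\tau(y,q)=\tau-1\{y\le q\}$ and let $J_j(\tau)$ denote the Jacobian that appears in the definition of $V_j(\tau,\tau')$, so that
$$\hat\beta_j(\tau)-\beta_j(\tau)=J_j(\tau)^{-1}\frac1n\sum_{i=1}^n\tilde x_{ij}\,\psi_\tau\!\left(y_{ij},\tilde x_{ij}'\beta_j(\tau)\right)+R_{nj}(\tau),$$
where $R_{nj}(\tau)$ collects the $O(1/n)$ finite‑sample bias of \cite{Galvao2020} and a strictly smaller stochastic remainder, with $\sup_{\tau\in\mathcal T}\bE\|R_{nj}(\tau)\|^2=o(1/n)$ uniformly in $j$. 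Substituting into $\bar g^{(1)}_{mn}(\hat\delta,\tau)$ and using $\sum_i z_{ij}\tilde x_{ij}'=n\hat\Sigma_{ZXj}$ with $\hat\Sigma_{ZXj}=n^{-1}\sum_i z_{ij}\tilde x_{ij}'$ gives
$$\sqrt{mn}\,\bar g^{(1)}_{mn}(\hat\delta,\tau)=\frac1{\sqrt{mn}}\sum_{j=1}^m\sum_{i=1}^n\hat\Sigma_{ZXj}J_j(\tau)^{-1}\tilde x_{ij}\,\psi_\tau\!\left(y_{ij},\tilde x_{ij}'\beta_j(\tau)\right)+\sqrt{n/m}\,\sum_{j=1}^m\hat\Sigma_{ZXj}R_{nj}(\tau).$$

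For part (i) I would first show both perturbations are asymptotically negligible, uniformly in $\tau$. Replacing $\hat\Sigma_{ZXj}$ by $\Sigma_{ZXj}=\bE_{i|j}[z_{ij}\tilde x_{ij}']$ in the first sum produces an error with conditional mean zero (the cross‑moment with the quantile‑regression score vanishes because $\bE_{i|j}[\tilde x_{ij}\psi_\tau]=0$) and second moment $O(1/n)$, hence $o_p(1)$; the remainder sum is $o_p(1)$ under Assumption \ref{a:growth condition}(c), because the $v_j$‑measurable $O(1/n)$ bias averages over $j$ into a term of order $\sqrt{m/n}\to0$, while the smaller stochastic part is controlled through its uniform $L^2$ bound. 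What remains is the empirical process $\tau\mapsto (mn)^{-1/2}\sum_{j,i}\eta_{ij}(\tau)$ with $\eta_{ij}(\tau)=\Sigma_{ZXj}J_j(\tau)^{-1}\tilde x_{ij}\psi_\tau(y_{ij},\tilde x_{ij}'\beta_j(\tau))$: the summands are independent across $j$, conditionally i.i.d.\ and conditionally mean zero across $i$, and uniformly bounded, so finite‑dimensional convergence follows by Lyapunov, and stochastic equicontinuity over $\tau$ follows from $\bE\|\eta_{ij}(\tau)-\eta_{ij}(\tau')\|^2\le C|\tau-\tau'|$ — the indicator increment has conditional second moment $|F_{y|x,v}(\tilde x'\beta_j(\tau))-F_{y|x,v}(\tilde x'\beta_j(\tau'))|=|\tau-\tau'|$, and $\tau\mapsto J_j(\tau)^{-1}$ is Lipschitz by Assumption \ref{a:conditional distribution} — together with the VC property of the half‑space class $\{1\{y\le\tilde x'\beta_j(\tau)\}\}$ and a standard maximal inequality. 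Finally, computing the limiting covariance $\frac1m\sum_j\bE\big[\bE_{i|j}[\eta_{ij}(\tau)\eta_{ij}(\tau')']\big]$ with the identity $\bE_{i|j}[\psi_\tau(y,q_\tau)\psi_{\tau'}(y,q_{\tau'})\mid\tilde x]=\min(\tau,\tau')-\tau\tau'$ reproduces $\Sigma_{ZXj}V_j(\tau,\tau')\Sigma_{ZXj}'$ and hence $\Omega_1(\tau,\tau')$.

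Part (ii) is simpler, and the key observation is that $\bar g^{(2)}_{mn}$ carries no first‑stage estimation error: since $\alpha_j(\tau)$ is constant within group, $\sqrt m\,\bar g^{(2)}_{mn}(\hat\delta,\tau)=m^{-1/2}\sum_j\alpha_j(\tau)\bar z_j$ with $\bar z_j=n^{-1}\sum_i z_{ij}$, the summands are independent across $j$, mean zero by Assumption \ref{a:instruments}(ii), and have uniformly bounded moments of order $4+\varepsilon_C$ by Assumptions \ref{a:instruments}(i) and \ref{a:group effects}(i), giving the finite‑dimensional CLT, while Lipschitz continuity of $\tau\mapsto\alpha_j(\tau)$ (Assumption \ref{a:group effects}(iii)) makes the index class Donsker and the process tight; the covariance $m^{-1}\sum_j\bE[\alpha_j(\tau)\alpha_j(\tau')\bar z_j\bar z_j']$ converges to $\Omega_2(\tau,\tau')$ by Assumption \ref{a:group effects}(ii), the within‑group deviation $\bar z_j-\bE_{i|j}[z_{ij}]$ being of order $n^{-1/2}$ and contributing only $O(1/n)$. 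For part (iii) the cross‑covariance is, in fact, \emph{zero at leading order}: decompose $\bar g^{(1)}_{mn}(\hat\delta,\tau)=m^{-1}\sum_j\hat\Sigma_{ZXj}J_j(\tau)^{-1}\bar\psi_j(\tau)+m^{-1}\sum_j\hat\Sigma_{ZXj}R_{nj}(\tau)$ with $\bar\psi_j(\tau)=n^{-1}\sum_i\tilde x_{ij}\psi_\tau(y_{ij},\tilde x_{ij}'\beta_j(\tau))$, and $\bar g^{(2)}_{mn}(\hat\delta,\tau')=m^{-1}\sum_k\alpha_k(\tau')\bar z_k$; independence across groups kills the $j\ne k$ terms (using $\bE[\alpha_j(\tau')\bar z_j]=0$), and conditioning on $v_j$ and on all covariates and instruments of group $j$ gives $\bE[\psi_\tau(y_{ij},\tilde x_{ij}'\beta_j(\tau))\mid v_j,\{x_{kj},z_{kj}\}_k]=0$ by the conditional‑quantile restriction and Assumption \ref{a:instruments}(iii), so the linear‑part cross‑covariance vanishes identically. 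Hence $\Cov(\bar g^{(1)}_{mn}(\hat\delta,\tau),\bar g^{(2)}_{mn}(\hat\delta,\tau'))=m^{-2}\sum_j\bE[\hat\Sigma_{ZXj}R_{nj}(\tau)\,\alpha_j(\tau')\bar z_j']$, which by Cauchy--Schwarz with $\bE\|R_{nj}(\tau)\|^2=o(1/n)$ and $\bE[\alpha_j(\tau')^2]=O(1)$ is $o(m^{-1}n^{-1/2})=o((mn)^{-1/2})$, uniformly over $\tau,\tau'\in\mathcal T$.

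The main obstacle is the negligibility of the aggregated first‑stage remainder in part (i): scaling by $\sqrt{mn}$ amplifies the within‑group $O(1/n)$ bias into a term of order $\sqrt{m/n}$, and only Assumption \ref{a:growth condition}(c) forces it to vanish, so the argument needs the uniform‑in‑$(\tau,j)$ version of the \cite{Galvao2020} expansion, a clean separation of the $v_j$‑measurable bias (which averages over $j$) from the strictly smaller stochastic remainder, and the matching replacement of $\hat\Sigma_{ZXj}$ by its within‑group mean — each controlled uniformly over $\tau\in\mathcal T$ so that the conclusion survives the supremum. By contrast, parts (ii) and (iii) are essentially bookkeeping once the exogeneity and conditional‑mean‑zero structure of the quantile‑regression score has been laid out.
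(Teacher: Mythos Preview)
Your proposal is correct and follows essentially the same strategy as the paper's proof: for part (i), the Bahadur representation from \cite{Galvao2020} is applied groupwise, the sample matrix $\hat\Sigma_{ZXj}$ is replaced by $\Sigma_{ZXj}$, the aggregated remainder is shown to vanish under Assumption~\ref{a:growth condition}(c), and weak convergence follows from VC/Donsker arguments; part (ii) is the functional CLT that the paper simply cites from \cite{Chetverikov2016}; and part (iii) uses exactly the conditional-mean-zero property of the quantile score that you identify. The only minor imprecision is in the remainder orders: the paper invokes the \cite{Galvao2020} decomposition into \emph{two} pieces, $R_{nj}^{(1)}$ (with bias $O(\log n/n)$ and variance $O((\log n/n)^{3/2})$) and $R_{nj}^{(2)}$ (with uniform sup-bound $O_p(\log n/n)$), so the aggregated bias is of order $\sqrt{m/n}\,\log n$ rather than $\sqrt{m/n}$---but this is precisely what Assumption~\ref{a:growth condition}(c) is calibrated to kill, so your argument goes through unchanged once the $\log n$ factors are reinstated.
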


$\bar g^{(1)}_{mn}(\hat\delta, \cdot)$ reflects the estimation error that arises in the first-stage quantile regression estimation. Since the first-stage regressors vary within groups, the relevant number of observations is $mn$, and correspondingly, the variance is proportional to $1/(mn)$. On the other hand, since the expected bias of the first-stage quantile regression is of order $1/n$, for asymptotic unbiasedness, we must require that $n$ goes to infinity slightly faster than $m$. In the proof, we build on results derived in \cite{Volgushev2019} and in \cite{Galvao2020}. $\bar g^{(2)}_{mn}(\hat\delta, \cdot)$ reflects the estimation error due to the randomness in $\alpha_j(\tau)$. This moment can also be interpreted as the moment that would be relevant if we knew $\beta_j(\tau)$. Since $\alpha_j(\tau)$ varies only between groups, the relevant number of observations here is $m$ and, accordingly, the variance of this moment converges at the slower rate of $1/m$. For asymptotic unbiasedness, we need only the weaker condition \ref{a:growth condition}(b), which requires that $n$ goes to infinity slightly faster than $\sqrt m$.

\subsection{Asymptotic Distribution when the Degree of Heterogeneity is Known}\label{subsec:pointwise}

The sample moment condition $\bar g_{mn}(\hat\delta,\tau)$ is thus the sum of two components that converge to zero at different rates. The asymptotic distribution of the estimator is dominated by the component with the slower rate of convergence, $\bar g_{mn}^{(2)}(\hat\delta,\tau)$, except its variance is zero, which is the case if $\Var(\alpha_j(\tau))=0$ or if $\bar z_j = \frac{1}{n}\sum_{i = 1}^n z_{ij} = 0$ for $j=1,\dots,n$. Since the degree of group-level heterogeneity affects this variance, it is useful to consider three cases: strong, no, and weak heterogeneity. In this subsection, we derive the asymptotic distribution of our estimator when the degree of heterogeneity is known. We suggest adaptive estimation and inference procedures in the following subsection.

\textbf{Case 1: Strong group-level heterogeneity.} We start with the case of strong heterogeneity that we define to be $\Var(\alpha_j(\tau))>\varepsilon>0$ uniformly in $\tau$.
%\footnote{We will allow for different levels of heterogeneity at different quantiles later but keep the exposition simple at the moment.} 
The variance of an element of the vector $\bar g^{(2)}_{mn}(\hat\delta, \tau)=\frac{1}{mn}\sum_{j=1}^m \bar z_j\alpha_j(\tau)$ equals to zero when the corresponding instrument satisfies $\bar z_{j}=0$ for all $j$. For this reason, we distinguish between two sorts of instruments: $L_{1}$ instruments in $z_{1ij}$ satisfy $\bar z_{1j}=n^{-1}\sum_{i=1}^nz_{1ij}=0$ for all $j$, while $L_{2}$ instruments in $z_{2ij}$ satisfy $\bar z_{2j}\neq 0$ at least for some groups $j$.\footnote{Note that all the instruments that vary only within groups can be normalized to have mean zero. For instance, we can identify the effect of the individual-level variable $x_{1ij}$ by using the instrument $\dot x_{1ij}$, which has a zero mean in all groups.} We order the instruments such that $z_{ij}=(z_{1ij}',z_{2ij}')'$. It follows that
\begin{equation*}
    \bar g_{mn}(\hat\delta,\tau)=\begin{pmatrix}
        \bar g_{mn,1}(\hat\delta,\tau)\\
        \bar g_{mn,2}(\hat\delta,\tau)
    \end{pmatrix}=\begin{pmatrix}
        \bar g_{mn,1}^{(1)}(\hat\delta,\tau)\\
        \bar g_{mn,2}^{(1)}(\hat\delta,\tau)+
        \bar g_{mn,2}^{(2)}(\hat\delta,\tau)
    \end{pmatrix}
\end{equation*}
where $g_{mn,1}(\hat\delta,\tau)$ is a $L_1\times 1$ vector and $g_{mn,2}(\hat\delta,\tau)$ is a $L_2\times 1$ vector. Thus, in the case of strong heterogeneity, some moments converge at the fast rate $\sqrt{mn}$ while others converge at the slow rate $\sqrt{m}$.  

% For instruments of the first type, $z_{1ij}$, the associated elements of $\bar g^{(2)}_{mn}(\hat\delta, \cdot) = 0$ such that the corresponding elements of $\bar g^{}_{mn}(\hat\delta, \cdot)$ converge at the $\sqrt{mn}$ rate. Thus, we have $L_1$ `fast' moments that converge at the $\sqrt{mn}$ rate and $L_2$ `slow' moments that converge at the $\sqrt{m}$ rate.  The rate of convergence of each element of $\hat\delta(W, \cdot)$ is determined by the rate of convergence of the moments that are used asymptotically to estimate this parameter. 

We order the coefficients in $\delta(\tau)$ (and the corresponding regressors in $x_{ij}$) such that the first $M_1$ elements are identified using only the $L_1$ fast moments while the remaining $M_2$ elements require the $L_2$ slow moments for identification. We denote by $\delta_1(\tau)$ the former and by $\delta_2(\tau)$ the latter coefficients. Formally, we partition $\Sigma_{ZX}$ such that it is block lower triangular:
\begin{align}\label{eq:lower diagonal}
\Sigma_{ZX}=\begin{pmatrix}\Sigma_{11} & \Sigma_{12} \\
\Sigma_{21} & \Sigma_{22}
\end{pmatrix}=\begin{pmatrix}\Sigma_{11} & 0 \\
\Sigma_{21} & \Sigma_{22}
\end{pmatrix}\end{align}
where $\Sigma_{11}$ is a full column rank $L_1\times M_1$ matrix, $\Sigma_{12}$ is $L_1\times M_2$, $\Sigma_{21}$ is $L_2\times M_1$ and $\Sigma_{22}$ is $L_2\times M_2$. Assumption \ref{a:instruments}(iv) implies that $\Sigma_{22}$ also has full column rank.
%Since the first $L_1$ elements of $\bar g_{mn}$ converge at the $\sqrt{mn}$ rate, it follows that the first $M_1$ elements of $\hat\delta$ are determined only by the `fast' moments while the remaining $M_2$ elements are also determined by the `slow' moments. 
Note that $L_1$ and $M_1$ can be equal to zero such that equation (\ref{eq:lower diagonal}) is a definition and not an assumption. Accordingly, the adaptive estimation and testing procedures suggested in Section \ref{subsec:adaptive} do not require the user to classify the instruments or the regressors.
%\footnote{In this subsection, we discuss the asymptotic distribution of $\hat\delta(\tau)$ when we know the degree of heterogeneity and the types of the instruments and regressors. However, our estimation and testing procedures are shown to be adaptive in the next subsection.} 

In the exactly identified case, our estimator simplifies to the instrumental variable estimator such that 
\begin{equation*}
    \hat G(\tau) = S_{ZX}^{-1}\underset{p}{\rightarrow}\Sigma_{ZX}^{-1}=\begin{pmatrix}
                \Sigma_{11}^{-1} & 0\\
        -\Sigma_{22}^{-1}\Sigma_{21}\Sigma_{11}^{-1} & \Sigma_{22}^{-1}
    \end{pmatrix}
\end{equation*}
and the first-order asymptotic distributions can be written as
\begin{align*}
        \sqrt{mn}(\hat\delta_1(\tau)-\delta_1(\tau))\underset{d}{\rightarrow}\Sigma_{11}^{-1}\sqrt{mn}\bar g_{mn,1}^{(1)}(\hat\delta,\tau)\\
        \sqrt{m}(\hat\delta_2(\tau)-\delta_2(\tau))
   \underset{d}{\rightarrow}\Sigma_{22}^{-1}\sqrt{m}\bar g_{mn,2}^{(2)}(\hat\delta,\tau)
\end{align*}
$\hat\delta_1(\tau)$ is only a function of $\bar g_{mn}^{(1)}$ and converges thus at the $\sqrt{mn}$ rate. $\hat\delta_2(\tau)$ depends on both $\bar g_{mn}^{(1)}$ and $\bar g_{mn}^{(2)}$ but its first-order asymptotic distribution is dominated by the slower $\bar g_{mn}^{(2)}$ term.

%Since the notation is getting heavy, it is worth summarizing it. There are in total $L=L_1+L_2$ instruments: $L_1$ `fast' instruments vary only within groups while $L_2$ `slow' instruments vary across groups. There are $K=K_1+K_2=M_1+M_2$ regressors. $K_1$ is the number of individual-level regressors and $K_2$ the number of group-level regressors. Finally, $M_1$ coefficients can be estimated at the `fast' rate of convergence while $M_2$ coefficients are only estimated at the `slow' rate of convergence. 

\begin{example}\label{ex:IV} We can illustrate this notation with a simple example. Consider the case of one individual-level variable $x_{1ij}$, one group-level variable $\tilde x_{2j}$, and a constant. As instrumental variables, we use $(\dot x_{1ij}, \tilde x_{2j},1)$, which corresponds to applying the fixed effects estimator for the coefficient on $x_{1ij}$ and the between estimator for the other two coefficients. In this case, only the first instrument has mean zero in all groups such that $L_1=1$ and $L_2=2$. By construction, $\dot x_{1ij}$ is uncorrelated with $\tilde x_{2j}$ and with the constant such that $\Sigma_{ZX}$ is block lower diagonal as defined in equation (\ref{eq:lower diagonal}) with $M_1=1$ and $M_2=2$. The coefficient on $x_{1ij}$ converges at the fast rate because it is not affected by the group-level effects $\alpha_j(\tau)$. On the other hand, the intercept and the coefficient on the group-level variable $\tilde x_{2j}$ converge only at the slow rate of convergence $\sqrt m$ because they are affected by the group effects. In a many application, we see that $M_1=K_1$, but this is not always true. For instance, in the previous example, all coefficients would converge at the slow rate if we used $(x_{1ij},\tilde x_{2j},1)$ as instrumental variables such that $L_1$ and $M_1$ would be equal to zero.
\end{example}

In an overidentified model, using a full-rank weighting matrix \( W(\tau) \) as in Assumption \ref{a:weighting} can lead to contamination of the entire parameter vector $\hat\delta(\tau)$ by the slower-converging moments. This occurs because $\hat G(\tau)$ will not retain a block-lower-triangular structure, even if $S_{ZX}$ does. In Example \ref{ex:re} below, this issue arises with the 2SLS estimator of $\delta_1(\tau)$, which converges at the slow $\sqrt{m}$ rate. In contrast, both the exactly identified IV estimator and the efficient GMM estimator achieve the faster $\sqrt{mn}$ rate.

%When the model is overidentified, a full-rank weighting matrix $W(\tau)$ will induce the contamination of the whole vector $\hat\delta(\tau)$ by the slow moments. Indeed, in such a case, $\hat G(\tau)$ will not be block lower triangular even if $\Sigma_{ZX}$ possesses this characteristic. This happens, for instance, for the 2SLS estimator, which converges at the $\sqrt{m}$, while the efficient GMM estimator converges at the $\sqrt{mn}$ rate.  the slow moments will contaminate the asymptotic distribution of all the coefficients if the weighting matrix $\hat W(\tau)$ converges to a full-rank matrix, which is the case, for instance, for 2SLS.
To avoid contamination, we must give asymptotically infinitely higher weights to fast moments than slow ones. The following assumption imposes this critical condition on the weighting matrix.\footnote{The intuition behind the sequence $a_n(\tau)$ will be elucidated in the following subsection, where we delve into a comprehensive discussion of our efficient GMM estimator.}
\begin{assumptionp}{\ref*{a:weighting}$'$}[\textbf{Heterogeneous weighting matrix}] \label{a:weighting 2} Uniformly in $\tau\in\mathcal T$,
\begin{equation*}
    \hat W(\tau) = \underbrace{\begin{pmatrix}
        W_{11}(\tau) & a_n(\tau) W_{12}(\tau) \\ a_n(\tau) W_{21}(\tau) & a_n(\tau) W_{22}(\tau)
    \end{pmatrix}}_{W_{mn}(\tau)} + \begin{pmatrix}
        o_p(1) & o_p \left (\sqrt{a_n(\tau)}\right ) \\ o_p \left (\sqrt{a_n(\tau)} \right ) & o_p({a_n}(\tau))
    \end{pmatrix}
\end{equation*}
where $a_n(\tau)=\frac{1}{1+\Var(\alpha_j(\tau))n}$. $W_{11}(\tau)$ and $W_{22}(\tau)$ are respectively a $L_1\times L_1$ and a $L_2\times L_2$ full rank matrix. For all $l_1,l_2\in\{1,2\}$ and $\tau_1, \tau_2 \in \mathcal{T}$, $|| W_{l_1l_2} (\tau_2) - W_{l_1l_2}(\tau_1) || \leq C | \tau_2- \tau_1 |$. 
\end{assumptionp}

%Proposition \ref{p:weight matrix adaptive} shows that the efficient GMM weighting matrix satisfies assumption \ref{a:weighting 2}. 
\begin{example}\label{ex:re} Consider an extension of Example \ref{ex:IV} where the vector of instrumental variable is now $(\dot x_{1ij}, \bar x_{1j}, x_{2j},1)$. This model is overidentified with $L_1=1$, $L_2=3$, and $K=3$. When we impose Assumption \ref{a:weighting}, $W(\tau)$ is a full rank weighting matrix, and the estimator converges only at the $\sqrt m$ rate. We can obtain a faster-converging estimator by giving asymptotically infinitely more weight to $\dot x_{ij}$ than to the other instruments. When we impose Assumption \ref{a:weighting 2}, the coefficient on $x_{1ij}$ is estimated at the $\sqrt{mn}$ rate. We formalize these results in Theorem \ref{t:first-order} below. 
\end{example}

\textbf{Case 2: no group-level heterogeneity.} We now consider the case where there is no heterogeneity across groups, i.e., when $\alpha_j(\tau)=0$ uniformly in $j$ and $\tau$. When this occurs, $\bar g^{(2)}_{mn}(\hat\delta,\cdot)=0$ such that all the variance arises from the first stage moment, which can be estimated at the fast rate $\sqrt{mn}$. This case corresponds to the textbook definition of a classical minimum distance estimator. All the coefficients converge at the fast $\sqrt{mn}$ rate, and the asymptotic distribution of $\hat\delta(\tau)$ can be derived straightforwardly. By comparing Case 1 and 2, we notice that the limiting distribution and even the rate of convergence of the $M_2$ coefficients $\hat\delta_2(\tau)$ differ. Their asymptotic distribution is discontinuous at $0$ in the variance of $\alpha_j(\tau)$, and in many applications, we do not know whether there is group-level heterogeneity. For this reason, we discuss adaptive estimation and inference in the next subsection.

\begin{remark}[Related literature]
\cite{Chamberlain1994} considers a setting with a finite number of design points (groups in our terminology), exogenous group-level variables, and no individual-level variables. His correctly specified case corresponds to our Case 2 (no heterogeneity). Accordingly, our asymptotic distribution corresponds to his in this special case.\footnote{In the absence of group-level heterogeneity, the asymptotic distribution of $\hat\delta(\tau)$ stated in part (ii) of Theorem \ref{t:first-order} is also valid when the number of groups is fixed.} \cite{Chamberlain1994} also considers a misspecified case. We can interpret his misspecification errors as our group effects. However, in this case, he considers pseudo-true parameters that absorb a non-vanishing bias while we allow the number of groups to go to infinity to avoid bias. For this reason, our results fundamentally differ from his results in Case 1.

%When there is no heterogeneity, no individual-level variable, and the number of groups is finite, our setup is the same as in \cite{Chamberlain1994}.\footnote{In the absence of group-level heterogeneity, the asymptotic distribution of $\hat\delta(\tau)$ stated in part (ii) of Theorem \ref{t:first-order} is also valid when the number of groups is fixed.} Accordingly, we obtain the same asymptotic distribution in this case. Interestingly, \cite{Chamberlain1994} also considers the misspecified case. If we interpret his misspecification errors as group effects, then we also obtain the same asymptotic variance. However, he considers pseudo-true values that absorb the bias while we allow the number of groups to go to infinity to kill the bias. 

\cite{Chetverikov2016} do not consider Case 2 (no heterogeneity) in their theoretical results even if, interestingly, it corresponds to one of their data-generating processes in their simulations. In Case 2, their matrix-valued function $J(u_1,u_2)$ defined in their Assumption 6(ii) is uniformly equal to zero. This implies that their asymptotic covariance function $\mathcal{C}(u_1,u_2)$ defined in their Theorem 1 is also uniformly equal to 0. This degenerate asymptotic distribution indicates that the convergence rate is faster than $\sqrt m$ in this case.
\end{remark}

\textbf{Case 3: weak group-level heterogeneity.} In Case 1, the second-stage variance dominates the asymptotic distribution of $\hat\delta_2(\tau)$, while in Case 2, the first-stage variance dominates. It is interesting to consider the intermediate case when group-level heterogeneity is present but vanishes exactly at the right rate such that both components of the variance matter asymptotically. This should provide a good approximation for the applications where the first-stage and the second-stage variances are similar. Formally, we assume that $\Omega_2(\tau_1,\tau_2)$, the covariance function of $\bar g^{(2)}_{mn}(\hat\delta,\tau)$ defined in Assumption \ref{a:group effects}(ii), converges to zero at the $n$ rate:
$$\Omega_2(\tau_1,\tau_2)=n^{-1}\bar\Omega_2(\tau_1,\tau_2).$$ Under this assumption, $\bar g^{(1)}_{mn}(\hat\delta,\cdot)$ and $\bar g^{(2)}_{mn}(\hat\delta,\cdot)$ converge at the same rate. Lemma \ref{l:asymptotic moments} implies then
\begin{equation}\label{eq:asymptotic moment}
    \sqrt{mn}\bar g_{mn}(\hat\delta, \cdot) \rightsquigarrow \mathbb Z(\cdot)\text{, in $\ell^\infty(\mathcal T)$,}
\end{equation}
where $\mathbb Z(\cdot)$ is a mean-zero Gaussian process with uniformly continuous sample paths and covariance function $\Omega_{1}(\tau,\tau')+\bar\Omega_2(\tau,\tau')$.

Theorem \ref{t:first-order} formally states the results for these three cases. Parts (i) provides the asymptotic distribution for the fast and slow coefficients when there is strong heterogeneity, part (ii) in the absence of heterogeneity, and part (iii) when there is weak heterogeneity. 

\begin{theorem}[\textbf{Asymptotic distribution when the degree of heterogeneity is known}]\label{t:first-order}
%Case 1: $\Var(\alpha_j(\tau)=0$\\
Let Assumptions \ref{a:sampling}-\ref{a:group effects} hold.

\begin{enumerate}[label=(\roman*), wide, labelindent=0pt]
    \item Case 1 (strong heterogeneity): $\Var(\alpha_j(\tau))>\varepsilon>0$ uniformly in $\tau$ and Assumption \ref{a:weighting 2} holds.
    \begin{enumerate}
        \item In addition, let Assumption \ref{a:growth condition}(c) hold. Then,
        \begin{equation}
        \sqrt{mn}(\hat\delta_1(\hat W(\cdot), \cdot)-\delta_1(\cdot))\rightsquigarrow G_{11}(\cdot) \mathbb Z_{11}(\cdot)\text{, in $\ell^\infty(\mathcal T)$,}\end{equation}
        where $G_{11}(\tau) = \left( \Sigma_{11}' W_{11}(\tau) \Sigma_{11}\right)^{-1} \Sigma_{11}' W_{11}(\tau)$ and $\mathbb Z_{11}(\cdot)$ is the Gaussian process consisting of the first $L_1$ elements of $\mathbb Z_{1}(\cdot)$ defined in Lemma \ref{l:asymptotic moments}(i).
    
        \item In addition, let Assumption \ref{a:growth condition}(b) hold. Then, 
        \begin{equation}
        \sqrt{m}(\hat\delta_2(\hat W(\cdot), \cdot)-\delta_2(\cdot))\rightsquigarrow G_{22}(\cdot) \mathbb Z_{22}(\cdot)\text{, in $\ell^\infty(\mathcal T)$,}\end{equation}
        where $G_{22}(\tau) = \left( \Sigma_{22}' W_{22}(\tau) \Sigma_{22}\right)^{-1} \Sigma_{22}' W_{22}(\tau)$ and $\mathbb Z_{22}(\cdot)$ is the Gaussian process consisting of the last $L_2$ elements of $\mathbb Z_{2}(\cdot)$ defined in Lemma \ref{l:asymptotic moments}(ii).
    \end{enumerate}
    
    \item Case 2 (no heterogeneity): $\alpha_j(\tau)=0$ uniformly in $j$ and $\tau$. In addition, Assumption \ref{a:growth condition}(c) and \ref{a:weighting} hold. Then,
    \begin{equation}
    \sqrt{mn}(\hat\delta(\hat W(\cdot), \cdot)-\delta(\cdot))\rightsquigarrow G(\cdot) \mathbb Z_1(\cdot)\text{, in $\ell^\infty(\mathcal T)$,}\end{equation}
    where $G(\tau) = \left( \Sigma_{ZX}' W(\tau) \Sigma_{ZX}\right)^{-1} \Sigma_{ZX}' W(\tau)$ and $\mathbb Z_1(\cdot)$ is defined in Lemma \ref{l:asymptotic moments}(i).
    
    \item Case 3 (weak heterogeneity): $\Omega_2(\tau,\tau') = n^{-1} \bar\Omega_2(\tau,\tau')$. Assumption \ref{a:weighting} and \ref{a:growth condition}(c) hold. Then, 
    \begin{equation}
    \sqrt{mn}(\hat\delta(\hat W(\cdot), \cdot)-\delta(\cdot))\rightsquigarrow G(\cdot) \mathbb Z(\cdot) \text{, in $\ell^\infty(\mathcal T)$,}\end{equation}
    where $G(\tau) = \left( \Sigma_{ZX}' W(\tau) \Sigma_{ZX}\right)^{-1} \Sigma_{ZX}' W(\tau)$ and $\mathbb Z(\cdot)$ is a mean-zero Gaussian process with uniformly continuous sample paths and covariance function $\Omega_{1}(\tau,\tau')+\bar\Omega_2(\tau,\tau')$.
    \end{enumerate}
\end{theorem}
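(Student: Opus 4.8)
The three cases all start from the closed-form expansion $\hat\delta(\hat W,\tau)-\delta(\tau)=\hat G(\tau)\,\bar g_{mn}(\hat\delta,\tau)$ of Lemma~\ref{l:sampling error}, with $\hat G(\tau)=(S_{ZX}'\hat W(\tau)S_{ZX})^{-1}S_{ZX}'\hat W(\tau)$, together with the decomposition $\bar g_{mn}=\bar g^{(1)}_{mn}+\bar g^{(2)}_{mn}$ and the process-level limits of Lemma~\ref{l:asymptotic moments}. The plan is to determine, uniformly in $\tau$, the limit of the relevant blocks of $\hat G(\tau)$, to pair each block with the moment component of matching order, and to conclude by the continuous-mapping/Slutsky theorem for weak convergence in $\ell^\infty(\mathcal T)$; the limiting processes inherit uniformly continuous sample paths from $\mathbb Z_1,\mathbb Z_2,\mathbb Z$ and from the $\tau$-Lipschitz continuity of the weighting matrices (note that $S_{ZX}$ does not depend on $\tau$, and by the ordering with $\bar z_{1j}=0$ it is block lower triangular as in~\eqref{eq:lower diagonal}, with $S_{ZX}\to\Sigma_{ZX}$).

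For Cases~2 and~3 this is routine. Under Assumption~\ref{a:weighting}, $\hat W(\tau)\to W(\tau)$ uniformly to a positive definite limit, so Assumption~\ref{a:instruments}(iv) and the continuous mapping theorem give $\hat G(\tau)\to G(\tau)=(\Sigma_{ZX}'W(\tau)\Sigma_{ZX})^{-1}\Sigma_{ZX}'W(\tau)$ uniformly. In Case~2, $\alpha_j(\tau)\equiv0$ kills $\bar g^{(2)}_{mn}$, so $\sqrt{mn}(\hat\delta(\hat W,\cdot)-\delta(\cdot))=\hat G(\cdot)\sqrt{mn}\bar g^{(1)}_{mn}(\cdot)\rightsquigarrow G(\cdot)\mathbb Z_1(\cdot)$ by Lemma~\ref{l:asymptotic moments}(i) (hence Assumption~\ref{a:growth condition}(c)) and Slutsky. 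In Case~3, the normalisation $\Omega_2=n^{-1}\bar\Omega_2$ equalises the two rates, so I would invoke the stated consequence $\sqrt{mn}\bar g_{mn}(\hat\delta,\cdot)\rightsquigarrow\mathbb Z(\cdot)$ with covariance $\Omega_1+\bar\Omega_2$ (equation~\eqref{eq:asymptotic moment}) and conclude $\sqrt{mn}(\hat\delta(\hat W,\cdot)-\delta(\cdot))\rightsquigarrow G(\cdot)\mathbb Z(\cdot)$.

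Case~1 is where the work is. The heterogeneous weighting matrix of Assumption~\ref{a:weighting 2} assigns weight of order $1$ to the $L_1$ fast moments and of order $a_n(\tau)\asymp n^{-1}$ (as $\Var(\alpha_j(\tau))>\varepsilon$ uniformly) to the $L_2$ slow ones. I would introduce the rescaling $D_n(\tau)=\diag(I_{L_1},a_n(\tau)^{-1/2}I_{L_2})$ on the $L$-dimensional moment space, under which $D_n\hat W D_n\to\diag(W_{11},W_{22})$ uniformly in $\tau$ — a \emph{block-diagonal} positive definite matrix — write $\hat G=(S_{ZX}'D_n^{-1}[D_n\hat W D_n]D_n^{-1}S_{ZX})^{-1}S_{ZX}'D_n^{-1}[D_n\hat W D_n]D_n^{-1}$, and invert $S_{ZX}'\hat W S_{ZX}$ block-wise using the companion rescaling $\diag(I_{M_1},a_n^{-1/2}I_{M_2})$ on the parameter space together with the block-lower-triangular structure of $S_{ZX}$. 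Bookkeeping the orders then yields, uniformly in $\tau$, $[\hat G]_{11}(\tau)\to G_{11}(\tau)$, $[\hat G]_{22}(\tau)\to G_{22}(\tau)$, $[\hat G]_{12}(\tau)=o_p(n^{-1/2})$, and $[\hat G]_{21}(\tau)=o_p(n^{1/2})$. Plugging these into $\hat\delta-\delta=\hat G\bar g_{mn}$, using $\bar g_{mn,1}=\bar g^{(1)}_{mn,1}$ (the block-$1$ part of $\bar g^{(2)}_{mn}$ vanishes because $\bar z_{1j}=0$), $\bar g^{(1)}_{mn}=O_p((mn)^{-1/2})$ and $\bar g^{(2)}_{mn}=O_p(m^{-1/2})$, one checks that every cross-block term is of strictly smaller order than the $\sqrt{mn}$ scaling relevant to $\hat\delta_1$ and the $\sqrt m$ scaling relevant to $\hat\delta_2$; combining with Lemma~\ref{l:asymptotic moments}(i) gives $\sqrt{mn}(\hat\delta_1-\delta_1)\rightsquigarrow G_{11}\mathbb Z_{11}$ (needing Assumption~\ref{a:growth condition}(c)), and with Lemma~\ref{l:asymptotic moments}(ii) gives $\sqrt m(\hat\delta_2-\delta_2)\rightsquigarrow G_{22}\mathbb Z_{22}$ — the latter only under Assumption~\ref{a:growth condition}(b), because in $\hat\delta_2$ the first-stage contribution appears only through the lower-order terms $\sqrt m[\hat G]_{21}\bar g^{(1)}_{mn,1}$ and $\sqrt m[\hat G]_{22}\bar g^{(1)}_{mn,2}$, which are handled by the $O(1/n)$ bias and remainder bounds of the Bahadur expansion in \cite{Galvao2020}.

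The main obstacle is entirely within Case~1: establishing the \emph{block-wise} rates of $\hat G(\tau)$ — in particular that $[\hat G]_{12}$ shrinks like $n^{-1/2}$ and $[\hat G]_{21}$ grows no faster than $n^{1/2}$, uniformly in $\tau$ — since this is precisely what makes the fast coefficients genuinely $\sqrt{mn}$-consistent (with no contamination from the slow moments) and the slow coefficients $\sqrt m$-consistent. The difficulty is carrying two non-commuting rescalings (one on the moment side to regularise the degenerating $\hat W$, one on the parameter side to regularise $S_{ZX}'\hat W S_{ZX}$) through a matrix inversion while tracking the $o_p(\sqrt{a_n})$ and $o_p(a_n)$ error blocks of $\hat W$; once this is done everything else reduces to the continuous-mapping theorem and the two moment-process limits already provided by Lemma~\ref{l:asymptotic moments}.
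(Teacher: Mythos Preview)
Your proposal is correct and follows essentially the same route as the paper. For Cases~2 and~3 the arguments coincide exactly: $\hat G(\tau)\to G(\tau)$ uniformly under Assumption~\ref{a:weighting}, then apply Lemma~\ref{l:asymptotic moments}(i) or equation~\eqref{eq:asymptotic moment} and Slutsky.

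For Case~1, the paper packages the block analysis of $\hat G$ as a separate lemma (Lemma~\ref{l:consistency G mat uniform}), proved by directly expanding the partitioned inverse of $S_{ZX}'\hat W S_{ZX}$ and tracking the $a_n$ orders through auxiliary matrices $\hat A_{11},\hat B_{ij},\hat C,\hat D$. Your double-rescaling argument --- regularising $\hat W$ by $D_n$ on the moment side and $S_{ZX}'\hat W S_{ZX}$ by a companion $E_n=\diag(I_{M_1},a_n^{-1/2}I_{M_2})$ on the parameter side, so that $E_n^{-1}\hat G D_n\to\diag(G_{11},G_{22})$ by continuity of the inverse --- is the same computation in a cleaner dress: it reduces the bookkeeping to a single continuous-mapping step rather than tracking rates block by block, at the cost of not extracting the finer expression for $G_{mn,21}$ that the paper's lemma records (and which is used later in Theorem~\ref{t:adaptive}). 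Both routes deliver the block rates $\hat G_{11}\to G_{11}$, $\hat G_{22}\to G_{22}$, $\hat G_{12}=o_p(n^{-1/2})$, $\hat G_{21}=o_p(n^{1/2})$ that the proof needs, and the remaining order-counting to show that cross terms vanish is then identical to the paper's.
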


%Interpretation:
These asymptotic results are useful for understanding the mechanics behind our estimator, but they have several weaknesses. First, the asymptotic distribution of $\hat\delta_1(W, \tau)$ in part (i)-(a) is only a function of the fast instruments. The same asymptotic distribution can be obtained by ignoring the slow instruments. Consider Example \ref{ex:re}, including $\bar x_{1j}$ as an instrument does not reduce the asymptotic variance of $\hat\delta_1(W, \tau)$ even if this instrument is valid and the between-group variation in $x_{1ij}$ is non-negligible. In other words, the random effects estimator is asymptotically equivalent to the fixed effects estimator.\footnote{This is not specific to quantile models and also affects least squares models with large $n$ (see \citealp{Ahn2014}).} In some applications, the random effects estimator is appreciably more precise such that we would like to exploit the between-group variation efficiently. 
Second, the asymptotic distribution and the convergence rate of the slow coefficients $\hat \delta_2(W, \tau)$ are different, depending on whether there is group-level heterogeneity or not. Thus, performing inference based on Theorem \ref{t:first-order} requires knowing which case is relevant for the specific application. In other words, inference based directly on this result is not adaptive.  Third, in Case 1, the variance coming from the first stage estimation does not appear in the asymptotic distribution of $\hat\delta_2(W, \tau)$ because it converges to zero at a quicker rate. Consequently, inference may have poor properties. We solve these issues in the following subsection by suggesting an efficient estimator and adaptive inference that are both valid in all three cases and more generally uniformly valid in the variance of $\alpha_j(\tau)$.

% The efficient estimator exploits both the fast and the slow moments with weights and weight them optimally. The estimator of the asymptotic variance of $\hat\delta(\tau)$ captures both the variance coming from the first and the second stage. The resulting procedure is adaptive.

%different rate of convergence, different growth condition, we need the growth condition on part (ii) to make sure that the first-stage bias converges faster than sqrt N. First-order efficient if $W_1(\tau)=\Omega_1(\tau)^{-1}$ and $W_2(\tau)=\Omega_2(\tau)^{-1}$.

%These results have several weaknesses:
%1) not uniform in $\Var(\alpha_j(\tau))$: the rate of convergence changes discontinuously.
%2) For $\hat\delta_1$: we do not use the information contained in the slow moments. Example random effects
%3) For the $\hat\delta_2$, we do not take the variance coming from the first stage into account. Inference may be bad.

%Solution for all three issues: keep both moments together: 1) adaptive inference 2) use the slow moments but with weights that decline at the 1/T rate 3) inference that take first stage error into account.

\subsection{Adaptive Estimation and Inference}
\label{subsec:adaptive}

To establish asymptotic results that are uniformly valid in the variance of \( \alpha_j(\tau) \), we allow \( \Var(\alpha_j(\tau)) \) (and consequently \( \Omega_2(\tau,\tau) \)) to be exactly zero, bounded away from zero, or a sequence converging to zero at an arbitrary rate. This approach nests all three pointwise cases described in the previous section. The sequence $a_n(\tau)=\frac{1}{1 + \Var(\alpha_j(\tau)) n}$ defined in Assumption \ref{a:weighting 2} plays a central role because it is proportional to the rate of convergence of the slow moments relative to the fast moments $\parallel \Var(\bar g_{mn,2}(\delta,\tau)) \parallel^{-1} \parallel  \Var(\bar g_{mn,1}(\delta,\tau) ) \parallel  = O_p(a_n(\tau))$. It is equal to $1$ without heterogeneity, converges to $0$ at the $n^{-1}$ rate with strong heterogeneity, and is always bounded between $0$ and $1$. 
%Depending on  $\Var(\alpha_j(\tau))$, some submatrices of $\hat \Omega(\tau, \tau'), \hat W(\tau)$ and $\hat G(\tau)$ may converge to stochastic sequences. The rate at which these elements converge to zero is proportional to the relative rate of convergence of the fast and slow moments $a_n(\tau) = o_p  \left(\parallel \Omega_{22}(\tau) \parallel^{-1} \parallel\Omega_{11} (\tau) \parallel \right)  = O_p \left ( \frac{1}{1 + \Var(\alpha(\tau)) n}\right)$. 
All matrices that are functions of $a_n(\tau)$ may also depend on the sample size. When necessary, we make the notation explicit with the subscript $mn$. For example, let
\begin{equation}\label{eq:Gmn matrix}
    G_{mn}(\tau) = \left ( \Sigma_{ZX}' W_{mn} (\tau) \Sigma_{ZX}  \right) ^{-1} \Sigma_{ZX}' W_{mn}(\tau) ,
\end{equation}
where $W_{mn}(\tau)$ is defined in Assumption \ref{a:weighting 2}. 

Similarly to \cite{Fernandez-Val2022a}, we define the covariance kernel of the limiting process of $\hat\delta(\cdot)-\delta(\cdot)$. For a given integer $T>0$, let $\mathcal{T}_T=(\tau_1,\dots,\tau_T)$ be an arbitrary $T$-dimensional vector on $\otimes_{t=1}^T\mathcal{T}$. Let
\begin{equation*}
    \Sigma_{mn}(\tau,\tau')=G_{mn}(\tau)\left(\frac{\Omega_1(\tau,\tau')}{mn}+\frac{\Omega_2(\tau,\tau')}{m}\right)G_{mn}(\tau')'
\end{equation*}
and $\Sigma_{mn}(\tau)=\Sigma_{mn}(\tau,\tau)$.
The covariance kernel is now given by the limit of the elements of the following $(KT)\times (KT)$ matrix
\begin{equation*}
    H_{mn}=\left(H_{mn}(\tau,\tau')\right)_{T\times T},
\end{equation*}
where
\begin{equation*}
    H_{mn}(\tau,\tau')=\diag(\Sigma_{mn}(\tau))^{-1/2}\Sigma_{mn}(\tau,\tau')\diag(\Sigma_{mn}(\tau'))^{-1/2}.
\end{equation*}

\begin{assumption}[\textbf{Covariance kernel}]\label{a:kernel}
For any integer $T>0$ and any $T$-dimensional vector $\mathcal{T}_T=(\tau_1,\dots,\tau_T)$ on $\otimes_{t=1}^T\mathcal{T}$, there is a $(KT)\times(KT)$ matrix $H$, such that, almost surely,
\begin{equation*}
\underset{m,n\rightarrow\infty}{\lim}H_{mn}=H.
\end{equation*}
In addition, there is $c_{\mathcal{T}_T}>0$ such that for the smallest eigenvalue we have
\begin{equation*}
    \lambda_{\min}(H)>c_{\mathcal{T}_T}.
\end{equation*}
\end{assumption}

We can now state the asymptotic distribution of our estimator that is uniformly valid in $\Var(\alpha_j(\tau))$.

\begin{comment}
\begin{theorem}[\textbf{Adaptive asymptotic distribution}]\label{t:adaptive}
    Assumptions \ref{a:sampling}-\ref{a:growth condition}(c) and \ref{a:kernel} hold. In addition, $\left\lVert\hat G(\tau)-G_{mn}(\tau)\right\rVert=o_p(\sqrt{G_{mn}(\tau)})$ uniformly in $\tau\in\mathcal T$. It follows that
    \begin{equation*}
        \diag(\Sigma_{mn}(\cdot))^{-1/2}(\hat\delta(\cdot)-\delta(\cdot))\rightsquigarrow \mathbb Z(\cdot)
    \end{equation*}
  where $\mathbb Z$ is a centered Gaussian process with a covariance function $H(\tau_k,\tau_l)$ as the $(k,l)$ submatrix of $H$.
\end{theorem}
\end{comment}

\begin{theorem}[\textbf{Adaptive asymptotic distribution}]\label{t:adaptive}
Assumptions \ref{a:sampling}-\ref{a:growth condition}(c) and \ref{a:kernel} hold. In addition, either Assumption \ref{a:weighting} or Assumption \ref{a:weighting 2} holds. Then, 
    \begin{equation*}
        \diag(\Sigma_{mn}(\cdot))^{-1/2}(\hat\delta(\cdot)-\delta(\cdot))\rightsquigarrow \mathbb G(\cdot)
    \end{equation*}
  where $\mathbb G$ is a centered Gaussian process with a covariance function $H(\tau_k,\tau_l)$.
\end{theorem}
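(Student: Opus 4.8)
The plan is to start from the exact sampling–error identity of Lemma \ref{l:sampling error}, $\hat\delta(\tau)-\delta(\tau)=\hat G(\tau)\,\bar g_{mn}(\hat\delta,\tau)$ with $\bar g_{mn}=\bar g^{(1)}_{mn}+\bar g^{(2)}_{mn}$, and to show that after the adaptive normalization $\diag(\Sigma_{mn}(\tau))^{-1/2}$ this object is asymptotically equivalent, uniformly in $\tau\in\mathcal T$, to $\diag(\Sigma_{mn}(\tau))^{-1/2}G_{mn}(\tau)\,\bar g_{mn}(\hat\delta,\tau)$, with $G_{mn}(\tau)$ as in \eqref{eq:Gmn matrix}. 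The first step is therefore to replace $\hat G(\tau)$ by the (sample-size dependent) deterministic matrix $G_{mn}(\tau)$: using $S_{ZX}\to_p\Sigma_{ZX}$ (Assumption \ref{a:instruments}(iv)) together with the convergence of $\hat W(\tau)$ provided by either Assumption \ref{a:weighting} or Assumption \ref{a:weighting 2}, one shows $\diag(\Sigma_{mn}(\tau))^{-1/2}\big(\hat G(\tau)-G_{mn}(\tau)\big)\bar g_{mn}(\hat\delta,\tau)=o_p(1)$ uniformly in $\tau$. Under Assumption \ref{a:weighting} this is routine; under Assumption \ref{a:weighting 2} it is the delicate point, since $W_{mn}(\tau)$ mixes blocks of order $1$ and order $a_n(\tau)$, and the error rates in that assumption ($o_p(1)$, $o_p(\sqrt{a_n})$, $o_p(a_n)$) are calibrated precisely so that the induced perturbation of $G_{mn}$ does not contaminate the fast coordinates of $\hat\delta$ by more than $o_p$ of their (faster) standard deviation.

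Second, I would analyze the leading term $\diag(\Sigma_{mn}(\tau))^{-1/2}G_{mn}(\tau)\big(\bar g^{(1)}_{mn}(\hat\delta,\tau)+\bar g^{(2)}_{mn}(\hat\delta,\tau)\big)$. By Lemma \ref{l:asymptotic moments}(i)--(ii), $\sqrt{mn}\,\bar g^{(1)}_{mn}\rightsquigarrow\mathbb Z_1$ and $\sqrt m\,\bar g^{(2)}_{mn}\rightsquigarrow\mathbb Z_2$ in $\ell^\infty(\mathcal T)$, while Lemma \ref{l:asymptotic moments}(iii) makes the cross–covariance of the two scaled moment processes $o_p(1/\sqrt{mn})$, so the two contributions are asymptotically uncorrelated, hence (being jointly Gaussian in the limit) asymptotically independent. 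Consequently the covariance kernel of $G_{mn}(\tau)\,\bar g_{mn}(\hat\delta,\tau)$ is $\Sigma_{mn}(\tau,\tau')=G_{mn}(\tau)\big(\Omega_1(\tau,\tau')/(mn)+\Omega_2(\tau,\tau')/m\big)G_{mn}(\tau')'$, and after the two-sided $\diag(\Sigma_{mn})^{-1/2}$ rescaling this becomes exactly $H_{mn}(\tau,\tau')$.

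Third, I would establish weak convergence of the normalized process in $\ell^\infty(\mathcal T)$. For the finite–dimensional laws, fix $\mathcal{T}_T=(\tau_1,\dots,\tau_T)$; the stacked vector $\big(\diag(\Sigma_{mn}(\tau_t))^{-1/2}(\hat\delta(\tau_t)-\delta(\tau_t))\big)_{t}$ is, by Steps 1--2, asymptotically equivalent to a linear image of the jointly asymptotically Gaussian vector built from $\bar g^{(1)}_{mn}$ and $\bar g^{(2)}_{mn}$; a Cramér–Wold argument combined with a triangular–array Lindeberg CLT (the boundedness and moment conditions of Assumptions \ref{a:covariates}, \ref{a:instruments}, \ref{a:group effects} supply the Lindeberg condition uniformly in the varying weights) gives convergence to $N(0,H_{mn})$, and since $H_{mn}\to H$ with $\lambda_{\min}(H)>c_{\mathcal T_T}>0$ by Assumption \ref{a:kernel}, the limit is the nondegenerate $N(0,H)$. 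For tightness I would transport the asymptotic equicontinuity already available for $\mathbb Z_1$ and $\mathbb Z_2$ through the map $\tau\mapsto\diag(\Sigma_{mn}(\tau))^{-1/2}G_{mn}(\tau)$, which is uniformly Lipschitz in $\tau$ by the Lipschitz conditions on $W$ (Assumptions \ref{a:weighting}/\ref{a:weighting 2}), on $a_n(\cdot)$, and on $\alpha_j(\cdot)$ (Assumption \ref{a:group effects}(iii)), together with continuity of $\Omega_1,\Omega_2$; the uniform lower bound on $\lambda_{\min}(H_{mn})$ prevents the normalization from exploding. Finite–dimensional convergence plus asymptotic tightness yields the claimed weak convergence, and the limit covariance is $H(\tau_k,\tau_l)$ by construction.

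The main obstacle is Step 1 under Assumption \ref{a:weighting 2}: one must show that the block structure of $\hat W(\tau)$, with its precisely tuned $a_n(\tau)$ and $\sqrt{a_n(\tau)}$ error rates, renders $(\hat G(\tau)-G_{mn}(\tau))\bar g_{mn}(\hat\delta,\tau)$ negligible after the adaptive normalization, uniformly in $\tau$, even though different coordinates of $\hat\delta$ converge at different and possibly $\tau$–dependent rates. This is where the argument genuinely departs from classical GMM asymptotics, and where I expect to follow the rotation/partitioning strategy of \cite{Fernandez-Val2022a}, carefully tracking how the fast and slow blocks interact through $\Sigma_{ZX}$ and $W_{mn}(\tau)$.
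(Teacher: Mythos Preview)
Your proposal is correct and follows essentially the same route as the paper's proof: the paper likewise starts from Lemma~\ref{l:sampling error}, replaces $\hat G(\tau)$ by $G_{mn}(\tau)$, linearizes via Lemma~\ref{l:asymptotic moments}, and then combines finite-dimensional convergence (through $H_{mn}\to H$) with the tightness already established for $\mathbb Z_1,\mathbb Z_2$. The ``delicate'' Step~1 under Assumption~\ref{a:weighting 2}, which you correctly flag as the crux, is made concrete in the paper by a dedicated lemma (Lemma~\ref{l:consistency G mat uniform}) giving explicit block-wise rates $\hat G-G_{mn}=\bigl(\begin{smallmatrix}o_p(1)&o_p(\sqrt{a_n})\\ o_p(1/\sqrt{a_n})&o_p(1)\end{smallmatrix}\bigr)$, followed by a coordinate-by-coordinate check---separating fast coefficients ($k\le M_1$) from slow ones ($k>M_1$)---that the resulting remainder rate $\zeta(k,\tau)$ satisfies $\zeta(k,\tau)\Sigma_{mn,k}(\tau)^{-1/2}=O_p(1)$ using the algebraic identity $a_n(\tau)=(1+\Var(\alpha_j(\tau))n)^{-1}$.
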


Theorem \ref{t:adaptive} allows for both types of weighting matrices: the asymptotically full-ranked weighting matrix of Assumption \ref{a:weighting} or the heterogeneous weighting matrix of Assumption \ref{a:weighting 2}. The first case covers the exactly identified case and 2SLS. As we show in Proposition \ref{p:weight matrix adaptive} below, our estimated efficient weighting matrix satisfies the second condition.

In order to use these results for inference, we must provide a consistent estimator of the asymptotic variance $\Sigma_{mn}(\tau,\tau')$. The crucial ingredient is the asymptotic variance of the sample moments that we denote by
\begin{equation*}
\Omega_{mn}(\tau,\tau')=\Omega_1(\tau,\tau')/n+\Omega_2(\tau,\tau')
\end{equation*}
Remember that $\Omega_1(\tau,\tau')$ arises from the first stage estimation and $\Omega_2(\tau,\tau')$ from the presence of the group-level heterogeneity $\alpha_j(\cdot)$. The difficulty resides in that the leading term may arise from the first-stage or second-stage estimation depending on the variance of $\alpha_j(\cdot)$ and the type of instrument ($L_1$ or $L_2$). This expression may suggest that we must estimate these two components separately. However, and perhaps surprisingly, we find that we can estimate the leading term of $\Omega_{mn}(\tau,\tau')$ adaptively with a traditional cluster-robust estimator of the variance:
\begin{equation}\label{eq:estimator omega}
  \hat \Omega(\tau,\tau')=\frac{1}{m} \sum_{j = 1}^m \left\{\left(\frac{1}{n}\sum_{i=1}^n z_{ij} \hat u_{ij}(\tau)\right)  \left(\frac{1}{n}\sum_{i=1}^n z_{ij}\hat u_{ij}(\tau')\right)'\right\}
\end{equation}
where $\hat u_{ij}(\tau)$ are the second-stage residuals.

\begin{proposition}[\textbf{Properties of $\hat \Omega(\tau, \tau')$}]\label{p:omegamatrix}
Let assumptions \ref{a:sampling}-\ref{a:group effects} and \ref{a:growth condition}(c) hold. Further, assume that as $m\rightarrow\infty$, for each $l,l'\in \{1,\dots,L\}$ and uniformly in $\tau,\tau'\in\mathcal{T}^2$,
\begin{equation}\label{a:variance}
    m^{-1}\sum_{j=1}^m \bE \left[\left(\bar z_{jl} \bar z_{jl'}\alpha_j(\tau)\alpha_j(\tau')-\Omega_{mn,2ll'}(\tau,\tau')\right)^2\right]\rightarrow C_{l,l'}(\tau,\tau')<\infty
\end{equation}
where $C_{l,l'}(\tau,\tau')$ is continuous in $\tau$ and $\tau'$.
The estimator used to compute $\hat u_{ij}(\tau)$ satisfies (i) $\hat\beta(\tau)-\beta(\tau)=O_p\left (1/\sqrt{mn}\right)$, (ii) $\hat\gamma(\tau)-\gamma(\tau) = O_p(1/\sqrt{mn}) 
+ O_p(\sqrt{\Var(\alpha_j(\tau))}/\sqrt m)$ uniformly in $\tau$. Then, for any $ll'$ entry of the $\hat \Omega(\tau, \tau')$ matrix with $l,l' \in \{1, \dots , L \}$ we have uniformly in $\tau, \tau' \in \mathcal{T}^2$,
\begin{align*}
        \hat \Omega_{ll'}(\tau,\tau')= \Omega_{mn, ll'}(\tau,\tau') +  o_p \left ( 
 \sqrt{ \Omega_{mn,ll} (\tau) \Omega_{mn,l'l'}(\tau') }   \right ).
\end{align*}
\end{proposition}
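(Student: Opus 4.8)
The plan is to show that the cluster-robust estimator $\hat\Omega(\tau,\tau')$ estimates the leading term of $\Omega_{mn}(\tau,\tau')$ by decomposing the second-stage residuals $\hat u_{ij}(\tau)$ and tracking how each piece contributes to the double average over $j$. Write $\hat u_{ij}(\tau) = u_{ij}(\tau) - \tilde x_{ij}'(\hat\beta_j(\tau)-\beta_j(\tau)) - x_{ij}'(\hat\delta(\tau)-\delta(\tau))$, where $u_{ij}(\tau) = \tilde x_{ij}'(\hat\beta_j(\tau)-\beta_j(\tau)) + \alpha_j(\tau)$ is the "true" moment residual — actually it is cleaner to write the group-level sample moment $\bar m_j(\tau) := n^{-1}\sum_i z_{ij}\hat u_{ij}(\tau)$ and expand it as $\bar m_j(\tau) = \bar g_j^{(1)}(\tau) + \bar z_j \alpha_j(\tau) - (n^{-1}\sum_i z_{ij}x_{ij}')(\hat\delta(\tau)-\delta(\tau))$, where $\bar g_j^{(1)}(\tau) = n^{-1}\sum_i z_{ij}\tilde x_{ij}'(\hat\beta_j(\tau)-\beta_j(\tau))$. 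Then $\hat\Omega(\tau,\tau') = m^{-1}\sum_j \bar m_j(\tau)\bar m_j(\tau')'$ expands into a sum of products of these three terms. First I would argue that the cross-products involving $\hat\delta(\tau)-\delta(\tau)$ are negligible relative to $\sqrt{\Omega_{mn,ll}(\tau)\Omega_{mn,l'l'}(\tau')}$: using the rate assumptions (i)–(ii) on $\hat\delta$, the consistency of $S_{ZX}$, and that $m^{-1}\sum_j \bar g_j^{(1)}(\tau)$ and $m^{-1}\sum_j \bar z_j\alpha_j(\tau)$ are themselves $O_p$ of the relevant moment standard deviations, these terms drop out.

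The heart of the argument is then to handle the three "diagonal" pieces: $m^{-1}\sum_j \bar g_j^{(1)}(\tau)\bar g_j^{(1)}(\tau')'$, $m^{-1}\sum_j \bar z_j\bar z_j'\alpha_j(\tau)\alpha_j(\tau')$, and the cross term $m^{-1}\sum_j \bar g_j^{(1)}(\tau)\bar z_j'\alpha_j(\tau')$ (plus its transpose). For the second piece, assumption (\ref{a:variance}) is precisely an $L^2$ law-of-large-numbers hypothesis guaranteeing $m^{-1}\sum_j \bar z_{jl}\bar z_{jl'}\alpha_j(\tau)\alpha_j(\tau') \to \Omega_{2,ll'}(\tau,\tau')$ in probability, uniformly in $(\tau,\tau')$ — uniformity coming from the Lipschitz continuity of $\alpha_j(\cdot)$ (Assumption \ref{a:group effects}(iii)) together with a chaining/stochastic-equicontinuity argument over the compact $\mathcal T\times\mathcal T$. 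For the first piece, I would show $m^{-1}\sum_j \bar g_j^{(1)}(\tau)\bar g_j^{(1)}(\tau')' = \Omega_1(\tau,\tau')/n + o_p(1/n)$: each $\bar g_j^{(1)}(\tau)$ is $O_p(1/\sqrt n)$ with the Bahadur-type expansion from \cite{Galvao2020}/\cite{Volgushev2019}, so $\bar g_j^{(1)}(\tau)\bar g_j^{(1)}(\tau')'$ has mean $\Sigma_{ZXj}V_j(\tau,\tau')\Sigma_{ZXj}'/n$ up to the $1/n$ bias/remainder terms, and averaging over the independent groups $j$ gives the claim; the growth condition \ref{a:growth condition}(c) controls the remainder uniformly. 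For the cross term, independence of the first-stage estimation error from $\alpha_j$ within a group (roughly, $\bar g_j^{(1)}$ has conditional mean zero given $v_j$ up to the $1/n$ bias) makes it $o_p$ of the geometric mean of the two diagonal scales, as in Lemma \ref{l:asymptotic moments}(iii).

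Assembling: $\hat\Omega_{ll'}(\tau,\tau') = \Omega_{1,ll'}(\tau,\tau')/n + \Omega_{2,mn,ll'}(\tau,\tau') + (\text{remainders})$, and since $\Omega_{mn,ll'}(\tau,\tau') = \Omega_{1,ll'}(\tau,\tau')/n + \Omega_{2,ll'}(\tau,\tau')$ this is exactly $\Omega_{mn,ll'}(\tau,\tau')$ plus an error that must be shown to be $o_p(\sqrt{\Omega_{mn,ll}(\tau)\Omega_{mn,l'l'}(\tau')})$. The delicate point — and what I expect to be the main obstacle — is that this normalization is \emph{adaptive}: the scale $\Omega_{mn,ll}(\tau)$ is of order $1/n$ for an $L_1$-type index $l$ and of order $\Var(\alpha_j(\tau))$ (possibly bounded away from zero, possibly shrinking) for an $L_2$-type index, and these two regimes must be handled simultaneously and uniformly in the unknown $\Var(\alpha_j(\tau))$. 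One has to check, case by case on whether $l$ (resp. $l'$) is "fast" or "slow", that every remainder term — the $\hat\delta$-cross terms, the first-stage $1/n$ bias, the cross term between $\bar g^{(1)}$ and $\bar z\alpha$ — is genuinely small relative to $\sqrt{\Omega_{mn,ll}(\tau)\,\Omega_{mn,l'l'}(\tau')}$ in every regime; the worst case is a "fast–slow" pair, where the normalizer is $\sqrt{(1/n)\cdot\Var(\alpha_j(\tau'))}$ and one must exploit the within-group conditional-mean-zero structure (and the $\sqrt{a_n}$ bookkeeping implicit in Assumption \ref{a:weighting 2}) to get the cross terms below this geometric-mean threshold. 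Uniformity in $(\tau,\tau')$ throughout is obtained from the Lipschitz bounds in Assumptions \ref{a:group effects}(iii) and \ref{a:conditional distribution} via a covering argument on $\mathcal T$.
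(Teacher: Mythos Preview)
Your overall architecture is the same as the paper's: expand $\hat u_{ij}(\tau)=\tilde x_{ij}'(\hat\beta_j(\tau)-\beta_j(\tau))+\alpha_j(\tau)-x_{ij}'(\hat\delta(\tau)-\delta(\tau))$, write $\hat\Omega$ as the nine cross-products, and control each piece at the adaptive scale $\sqrt{\Omega_{mn,ll}(\tau)\Omega_{mn,l'l'}(\tau')}$. Two of your three ``diagonal'' arguments are essentially right (for the $\alpha\alpha$ term the paper does a bit more than a LLN---it uses a CLT to get the $1/\sqrt m$ rate and then bounds the limiting variance $C_{l,l'}$ by $\Var(\alpha_j(\tau))\Var(\alpha_j(\tau'))$ via a bounded-kurtosis step, which is what makes the error $o_p(\sqrt{\Omega_{2,ll}\Omega_{2,l'l'}})$ adaptively; but your sketch is close).

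There is, however, a genuine gap in how you dispose of the $\hat\delta$ terms. You write that ``using the rate assumptions (i)--(ii) on $\hat\delta$ [and] the consistency of $S_{ZX}$ \ldots these terms drop out.'' That is not enough in the fast--fast case. Take $l,l'\in\{1,\dots,L_1\}$, so the target scale is $\sqrt{\Omega_{mn,ll}\Omega_{mn,l'l'}}=O(1/n)$. A naive bound on the quadratic term gives
\[
\frac{1}{m}\sum_{j}\Bigl(\tfrac{1}{n}\sum_i z_{ijl}x_{ij}'\Bigr)(\hat\delta-\delta)(\hat\delta-\delta)'\Bigl(\tfrac{1}{n}\sum_i x_{ij}z_{ijl'}\Bigr)
= O_p\bigl(\|\hat\delta-\delta\|^2\bigr)
= O_p\!\left(\frac{1}{mn}+\frac{\Var(\alpha_j(\tau))}{m}\right),
\]
and under Assumption~\ref{a:growth condition}(c) we have $m=o(n)$, so $\Var(\alpha_j)/m$ is \emph{not} $o_p(1/n)$ when heterogeneity is strong. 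The rates alone cannot save you. The paper's fix---and this is exactly why hypotheses (i)--(ii) separate $\hat\beta$ from $\hat\gamma$ rather than stating a single rate for $\hat\delta$---is the algebraic observation that the group-level part of $\tfrac{1}{n}\sum_i z_{ijl}x_{ij}'(\hat\delta-\delta)$ equals $\bar z_{jl}\,x_{2j}'(\hat\gamma-\gamma)$, which is \emph{identically zero} whenever $l\in\{1,\dots,L_1\}$. Only the individual-level coefficients $\hat\beta$, at the fast rate $O_p(1/\sqrt{mn})$, survive in that row, and then the term is $O_p(1/(mn))=o_p(1/n)$. The same structural zero handles the mixed $\hat\delta\times\alpha$ and $\hat\delta\times\bar g^{(1)}$ terms. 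Your later remark about ``within-group conditional-mean-zero structure'' gestures toward something like this, but you apply it to the $\bar g^{(1)}\times\bar z\alpha$ cross term rather than to the $\hat\delta$ terms where it is actually needed; make that mechanism explicit and the argument closes.
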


Beyond the numbered assumptions, two additional conditions are necessary. First, equation (\ref{a:variance}) requires the average variance of $\bar z_j\alpha_j(\tau)$ to converge to a constant. This holds automatically when we sample identically distributed groups but is required because we allow for heterogeneity across groups. Second, we require that the coefficients on the individual-level variables are estimated at the $1/\sqrt{mn}$ rate and the coefficients on the group-level variable at the $1/\sqrt{mn} + \sqrt{\Var(\alpha_j(\tau))}/\sqrt{m}$ rate. This corresponds to assuming that $M_1 = K_1$ so that $\beta = \delta_1$. This assumption ensures that estimation errors in slow coefficients do not contaminate the variance estimate of fast ones. This condition can always be achieved, for instance, by using $\dot x_{1ij}$ as an instrument for $x_{1ij}$ and a weighting matrix that satisfies Assumption \ref{a:weighting 2}. We show below that our efficient weighting matrix satisfies this restriction.\footnote{The only exception in this paper is the between estimator from Section \ref{sec:FE+BE+RE}. In that case, all the coefficients are estimated at the slow rate $1/\sqrt{mn} + \sqrt{\Var(\alpha_j(\tau))}/\sqrt{m}$. Proposition \ref{p:omegamatrix2} in the Appendix demonstrates the consistency of $\hat\Omega(\tau,\tau')$ in this case as well.}

The suggested estimator of the variance is straightforward to implement because it does not require estimating directly the variance of the first-stage quantile regression coefficients. This is a noteworthy advantage because the variance of these coefficients depends on the conditional density of the outcome given the covariates, which can be difficult to estimate. To understand this surprising result, consider the case where $\alpha_j(\tau)=0$ for all groups. This implies that the entire variance comes from the first-stage estimation. The clustered estimator of the variance can be interpreted as a subsampling estimator where the groups represent the subsamples. The variance of the coefficients across groups provides information about the precision of the first-stage estimates. Similar results for the cross-sectional bootstrap are provided in \cite{Liao2018}, \cite{Lu2022}, and \cite{Fernandez-Val2022a}.

Proposition \ref{p:omegamatrix} shows we can consistently estimate the diagonal elements of $\Omega_{mn}(\tau)$ in the sense that the error that we make vanishes more quickly than the true variance. On the other hand, if two sample moments vanish at different rates, we might not be able to estimate their covariance consistently. However, this does not preclude testing hypotheses about combinations of coefficients because the error that we make when we estimate the covariance vanishes faster than the variance of the slowest coefficient. We formalize this result in Proposition \ref{p:cov matrix} below for the case of a single linear null hypothesis.\footnote{We can similarly extend this proposition to multiple non-linear hypotheses.}

Define the $TK\times 1$ vector $\delta(\tau_1,\dots, \tau_T) = \left( \delta(\tau_1)', \dots, \delta(\tau_T)' \right) '$ whose covariance matrix is given by the elements of the following $(KT)\times (KT)$ matrix:
\begin{equation*}
    \Sigma_{mn}=\left(\Sigma_{mn}(\tau,\tau')\right)_{T\times T}.
\end{equation*}
For each $\tau, \tau' \in \mathcal T \times \mathcal T$, $\Sigma_{mn}(\tau, \tau')$ is estimated by
\begin{align*}
    \hat \Sigma(\tau, \tau') = \hat G(\tau) \frac{\hat \Omega(\tau, \tau')}{m} \hat G(\tau') 
\end{align*}
where $\hat \Omega(\tau, \tau')$ is defined in equation (\ref{eq:estimator omega}). The following proposition shows that a straightforward z-test provides valid inference that is adaptive in the degree of heterogeneity.

\begin{proposition}[\textbf{Asymptotic normality of the test statistic}]\label{p:cov matrix}
%Let assumptions \ref{a:sampling}-\ref{a:coefficients} and \ref{a:growth condition}(c) hold. In addition, we assume that, as $m\rightarrow\infty$, for each $l,l'\in \{1,\dots,L\}$ and for any $\tau, \tau' \in \mathcal{T}^2$
%\begin{equation}\label{a:variance}
 %   m^{-1}\sum_{j=1}^m \bE \left[\left(\bar z_{jl} \bar z_{jl'}\alpha_j(\tau)\alpha_j(\tau')-\Omega_{2ll'}(\tau,\tau')\right)^2\right]\rightarrow C_{l,l'}(\tau,\tau')<\infty
%\end{equation}
Let the conditions for Theorem \ref{t:adaptive} and Proposition \ref{p:omegamatrix} hold. In addition, let $\eta \in \mathbb{R}^{T\times K}$ with $||\eta|| > \eps > 0$. 
Then,
\begin{equation*}
    \frac{\eta' \left (\hat \delta(\tau_1,\dots, \tau_T) -\delta(\tau_1,\dots, \tau_T) \right)  }{ \sqrt{\eta' \hat \Sigma \eta }} \xrightarrow{d} N(0,1).
\end{equation*}
\end{proposition}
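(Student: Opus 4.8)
The plan is to reduce the statement to an application of Theorem~\ref{t:adaptive} combined with the consistency result for the variance estimator in Proposition~\ref{p:omegamatrix}, following the standard ``numerator converges to a Gaussian, denominator converges to its true scale'' argument, but carried out carefully because the convergence rate of the denominator is itself unknown. First I would fix the vector $\mathcal{T}_T=(\tau_1,\dots,\tau_T)$ and stack the normalized errors. Write $D_{mn}=\diag(\Sigma_{mn}(\tau_1),\dots,\Sigma_{mn}(\tau_T))$ for the $KT\times KT$ block-diagonal matrix of pointwise variances. By Theorem~\ref{t:adaptive}, $D_{mn}^{-1/2}\bigl(\hat\delta(\mathcal{T}_T)-\delta(\mathcal{T}_T)\bigr)\rightsquigarrow \mathbb{G}$, a centered Gaussian vector with covariance matrix $H$ (the $(k,l)$ block being $H(\tau_k,\tau_l)$), and $\lambda_{\min}(H)>c_{\mathcal{T}_T}>0$ by Assumption~\ref{a:kernel}. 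Hence for any fixed $\eta$ with $\|\eta\|>\eps$, the scalar $\eta' D_{mn}^{1/2}\cdot D_{mn}^{-1/2}(\hat\delta-\delta)$ — after renormalizing by $\sqrt{\eta' D_{mn}^{1/2} H D_{mn}^{1/2}\eta}$ — converges to $N(0,1)$, provided the renormalization is non-degenerate, which follows because $\eta' D_{mn}^{1/2} H D_{mn}^{1/2}\eta \ge c_{\mathcal{T}_T}\,\eta' D_{mn}\eta$ and each diagonal entry of $D_{mn}$ is strictly positive. Note $\eta' \Sigma_{mn}\eta = \eta' D_{mn}^{1/2} H_{mn} D_{mn}^{1/2}\eta$ up to the $o(1)$ gap between $H_{mn}$ and $H$, so the true Studentizing factor is asymptotically $\sqrt{\eta'\Sigma_{mn}\eta}$.

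The second step is to show $\hat\Sigma$ estimates $\Sigma_{mn}$ well enough in the relevant direction $\eta$, i.e.\ $\eta'\hat\Sigma\eta / \eta'\Sigma_{mn}\eta \xrightarrow{p} 1$. Here I would use Proposition~\ref{p:omegamatrix}: each entry satisfies $\hat\Omega_{ll'}(\tau,\tau')=\Omega_{mn,ll'}(\tau,\tau')+o_p\bigl(\sqrt{\Omega_{mn,ll}(\tau)\Omega_{mn,l'l'}(\tau')}\bigr)$, and by Slutsky together with the convergence $\hat G(\tau)\to G_{mn}(\tau)$ assumed in Theorem~\ref{t:adaptive} (either via Assumption~\ref{a:weighting} or \ref{a:weighting 2}), the same ``relative-error-vanishing'' property transfers to $\hat\Sigma(\tau,\tau')$ versus $\Sigma_{mn}(\tau,\tau')$, entrywise: $\hat\Sigma_{kk'}(\tau,\tau')=\Sigma_{mn,kk'}(\tau,\tau')+o_p\bigl(\sqrt{\Sigma_{mn,kk}(\tau)\Sigma_{mn,k'k'}(\tau')}\bigr)$. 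Writing the quadratic form $\eta'\hat\Sigma\eta = \sum_{k,k'}\eta_k\eta_{k'}\hat\Sigma_{kk'}$ and bounding $|\eta_k\eta_{k'}|\sqrt{\Sigma_{mn,kk}\Sigma_{mn,k'k'}} \le \tfrac12\bigl(\eta_k^2\Sigma_{mn,kk}+\eta_{k'}^2\Sigma_{mn,k'k'}\bigr)$, the total error is $o_p\bigl(\sum_k \eta_k^2 \Sigma_{mn,kk}\bigr)$, which is $o_p(\eta'\Sigma_{mn}\eta)$ because $\eta'\Sigma_{mn}\eta \ge \lambda_{\min}(H_{mn})\sum_k\eta_k^2\Sigma_{mn,kk} \ge (c_{\mathcal{T}_T}/2)\sum_k\eta_k^2\Sigma_{mn,kk}$ for large $m,n$. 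Hence $\eta'\hat\Sigma\eta = \eta'\Sigma_{mn}\eta\,(1+o_p(1))$.

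Finally I would combine the two pieces. Write
\begin{equation*}
\frac{\eta'(\hat\delta-\delta)}{\sqrt{\eta'\hat\Sigma\eta}} = \frac{\eta'(\hat\delta-\delta)}{\sqrt{\eta'\Sigma_{mn}\eta}}\cdot\sqrt{\frac{\eta'\Sigma_{mn}\eta}{\eta'\hat\Sigma\eta}}.
\end{equation*}
The first factor converges in distribution to $N(0,1)$ by step one (the $o(1)$ discrepancy between $H_{mn}$ and $H$ in the Studentizing scale is absorbed harmlessly since $\lambda_{\min}(H)>0$ keeps everything bounded away from degeneracy), and the second factor converges in probability to $1$ by step two; Slutsky's theorem then gives the claim. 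One subtlety worth spelling out is that $\Sigma_{mn}$, $D_{mn}$, $H_{mn}$ all depend on $m,n$ and may be drifting (e.g.\ some diagonal entries shrink at rate $1/(mn)$, others at $1/m$, and mixed rates under weak heterogeneity), so ``convergence in distribution of the first factor'' must be read as: for every subsequence there is a further subsequence along which $D_{mn}$-normalization stabilizes and the CLT of Theorem~\ref{t:adaptive} applies; since the limit $N(0,1)$ is the same along every such sub-subsequence, the whole sequence converges. The main obstacle is precisely this bookkeeping — ensuring that the unknown and possibly heterogeneous rates across the $K$ coordinates never cause the Studentized statistic to degenerate or to pick up a non-negligible contribution from a faster-vanishing coordinate — and this is exactly what Assumption~\ref{a:kernel}'s eigenvalue bound and the ``relative error'' form of Proposition~\ref{p:omegamatrix} are designed to control.
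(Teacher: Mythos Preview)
Your overall architecture matches the paper's: establish $\eta'\hat\Sigma\eta=\eta'\Sigma_{mn}\eta(1+o_p(1))$ via the entrywise relation $\hat\Sigma_{kk'}=\Sigma_{mn,kk'}+o_p(\sqrt{\Sigma_{mn,kk}\Sigma_{mn,k'k'}})$, then combine with Theorem~\ref{t:adaptive} through the product-and-Slutsky decomposition. Your AM--GM step and the use of $\lambda_{\min}(H)>c_{\mathcal T_T}$ to control the quadratic form are exactly what the paper needs at the end.

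The gap is in how you get the entrywise relation for $\hat\Sigma$. You write that ``by Slutsky together with the convergence $\hat G(\tau)\to G_{mn}(\tau)$ \ldots\ the same relative-error-vanishing property transfers to $\hat\Sigma$.'' Under Assumption~\ref{a:weighting 2} this premise is false: Lemma~\ref{l:consistency G mat uniform} shows
\[
\hat G(\tau)-G_{mn}(\tau)=\begin{pmatrix} o_p(1) & o_p(\sqrt{a_n(\tau)})\\ o_p(1/\sqrt{a_n(\tau)}) & o_p(1)\end{pmatrix},
\]
so the $(2,1)$ block of $\hat G-G_{mn}$ can \emph{diverge} when $a_n(\tau)\to 0$, and a plain Slutsky/continuous-mapping argument does not apply. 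The paper therefore does not invoke any global convergence of $\hat G$; instead it partitions both $\hat G$ and $\hat\Omega$ into $2\times 2$ blocks (fast vs.\ slow instruments) and expands each of the four submatrices of $\hat G\hat\Omega\hat G'$ into four products, sixteen terms in total. The point is that whenever the diverging error $o_p(1/\sqrt{a_n})$ from $\hat G_{21}$ appears, it is always multiplied by an $\Omega$-block of order $O_p(1/n)$ or $o_p(\sqrt{\Omega_{mn,11}\Omega_{mn,22}})$, and the algebra shows the result is still $o_p(\Omega_{mn,22})$ or $o_p(\sqrt{\Omega_{mn,11}\Omega_{mn,22}})$ as needed. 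This compensation is the technical heart of the proof and is not automatic; without tracking these block rates your transfer step would fail precisely in the strong-heterogeneity regime where $a_n(\tau)\asymp 1/n$. Filling in that block-by-block verification (or at least citing Lemma~\ref{l:consistency G mat uniform} and carrying the rates through the product) is what is missing.
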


The asymptotic variance derived in Theorem \ref{t:adaptive} depends on the weighting matrix. Following standard GMM arguments, the efficient weighting matrix is given by 
\begin{equation} W(\tau)^*=\Omega_{mn}(\tau)^{-1}=\left(\Omega_1(\tau) / n + \Omega_2(\tau)\right)^{-1}.
\end{equation}
This weighting matrix automatically takes into account the different rates of convergence of the different moments. If some moments converge faster than others, then this matrix, asymptotically, gives infinitely more weight to the fast moments than the slow moments, so the parameters identified by the fast moments will converge at the faster rate.

Usually, we would simply plug in a consistent estimator of $W^*(\tau)$ and obtain a feasible efficient GMM estimator, but Proposition \ref{p:omegamatrix} shows that the off-diagonal elements of $\Omega_{mn}(\tau)$ might not be consistently estimated when the rates of convergence of the corresponding coefficients are different. Proposition \ref{p:weight matrix adaptive} shows that the estimated efficient weighting matrix nevertheless satisfies Assumption \ref{a:weighting 2} such that Theorem \ref{t:adaptive} and Proposition \ref{p:cov matrix} apply to the feasible efficient GM estimator. In addition, Theorem \ref{t:adaptive} reveals that the (first-order) asymptotic distribution of the estimator that uses $W^*(\tau)$ as the weighting matrix is the same as that of the estimator that uses $\hat W^*(\tau)$ as the weighting matrix.
%and Lemma \ref{l:consistency G mat uniform} 
%in the appendix shows that this inconsistency also affects the estimator of the weight matrix and the $G$ matrix; nevertheless, Assumption \ref{a:weighting 2} is satisfied.
\begin{proposition}[\textbf{Adaptive Efficiency of the GMM Estimator}]    
\label{p:weight matrix adaptive}
Assume that the conditions for Proposition \ref{p:omegamatrix} hold. Then, the estimated efficient weighting matrix $\hat W^* (\tau) =\hat\Omega(\tau)^{-1}$ satisfies Assumption \ref{a:weighting 2}.
\end{proposition}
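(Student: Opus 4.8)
The plan is to read off the block structure of the infeasible efficient weighting matrix $\Omega_{mn}(\tau)^{-1}$ from the decomposition $\Omega_{mn}(\tau)=\Omega_1(\tau)/n+\Omega_2(\tau)$ and then to transfer that structure to $\hat W^*(\tau)=\hat\Omega(\tau)^{-1}$ using the relative‑consistency bound of Proposition \ref{p:omegamatrix}. Throughout I use the ordering $z_{ij}=(z_{1ij}',z_{2ij}')'$ of Case 1 ($L_1$ fast instruments with $\bar z_{1j}=0$, $L_2$ slow instruments) and partition every $L\times L$ matrix into the corresponding $2\times2$ blocks of sizes $L_1,L_2$. An overall factor of $n$ will appear in $\hat\Omega(\tau)^{-1}$; it is immaterial because $\hat\delta(\hat W,\tau)$ is invariant to rescaling $\hat W(\tau)$ by a positive scalar, so I verify Assumption \ref{a:weighting 2} for $n^{-1}\hat\Omega(\tau)^{-1}$. (Note $\hat\Omega(\tau)$ is invertible with probability approaching one, being an average of $m\to\infty>L$ rank‑one terms, non‑degenerate under the relevance conditions.)

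\textbf{Step 1 (block magnitudes).} The key observation is that the group effect $\alpha_j(\tau)$ enters the second‑stage moment only through the group average $\bar z_j$, and $\bar z_{1j}=0$; hence $\Omega_2(\tau)$ contributes only to the $(2,2)$ block, and uniformly in $\tau\in\mathcal T$,
\begin{equation*}
\Omega_{mn,11}(\tau)\asymp \tfrac1n,\qquad \Omega_{mn,12}(\tau)=O\!\big(\tfrac1n\big),\qquad \Omega_{mn,22}(\tau)\asymp \tfrac1n+\Var(\alpha_j(\tau))=\tfrac{1}{n\,a_n(\tau)},
\end{equation*}
where $\asymp$ means the eigenvalues are pinned between positive constants times the stated scalar. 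The upper bounds use boundedness of $z_{ij}$ (Assumption \ref{a:instruments}(i)) and Assumption \ref{a:group effects}; the two‑sided bounds on $\Omega_{1,11}$ use Assumption \ref{a:covariates}(ii) (so $\Var_{i|j}(x_{1ij})\succ0$ uniformly) and the bounded‑density Assumption \ref{a:bounded density} (so $V_j(\tau,\tau)$ is pinned from both sides); the lower bound on $\Omega_{mn,22}$ additionally needs relevance of the slow instruments for $\alpha_j(\cdot)$, i.e. $\Omega_{2,22}(\tau)\succeq c\,\Var(\alpha_j(\tau))I$ uniformly. Writing $P_n(\tau)=\diag(I_{L_1},a_n(\tau)^{-1/2}I_{L_2})$, this says $\Omega_{mn}(\tau)=n^{-1}P_n(\tau)\tilde R_{mn}(\tau)P_n(\tau)$ with
\begin{equation*}
\tilde R_{mn}(\tau)=\begin{pmatrix}\Omega_{1,11}(\tau) & a_n(\tau)^{1/2}\Omega_{1,12}(\tau)\\[2pt] a_n(\tau)^{1/2}\Omega_{1,21}(\tau) & a_n(\tau)\big(\Omega_{1,22}(\tau)+n\,\Omega_{2,22}(\tau)\big)\end{pmatrix},
\end{equation*}
a matrix that is uniformly bounded, uniformly Lipschitz in $\tau$ (because $a_n(\tau)$, $\Omega_1(\tau)$, $\Omega_2(\tau)$ are, via Assumptions \ref{a:conditional distribution}, \ref{a:bounded density}, \ref{a:group effects}(iii)), and — crucially — uniformly positive definite, $\lambda_{\min}(\tilde R_{mn}(\tau))\ge c>0$ uniformly in $\tau$ and $mn$; in the strong‑heterogeneity regime $a_n(\tau)\to0$ the off‑diagonal block vanishes and this reduces to uniform positive definiteness of $\Omega_{1,11}(\tau)$ and of $\Omega_{2,22}(\tau)/\Var(\alpha_j(\tau))$.

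\textbf{Step 2 (transfer the error and invert).} Proposition \ref{p:omegamatrix} gives $\hat\Omega_{ll'}(\tau)=\Omega_{mn,ll'}(\tau)+o_p\big(\sqrt{\Omega_{mn,ll}(\tau)\Omega_{mn,l'l'}(\tau)}\big)$ uniformly in $\tau$; since $\sqrt{\Omega_{mn,ll}(\tau)}\asymp n^{-1/2}(P_n(\tau))_{ll}$ for every coordinate, conjugating by $P_n(\tau)^{-1}$ and multiplying by $n$ yields $n\,P_n(\tau)^{-1}\hat\Omega(\tau)P_n(\tau)^{-1}=\tilde R_{mn}(\tau)+o_p(1)$ uniformly. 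Uniform positive definiteness of $\tilde R_{mn}(\tau)$ then lets me invert: $\big(\tilde R_{mn}(\tau)+o_p(1)\big)^{-1}=\tilde R_{mn}(\tau)^{-1}+o_p(1)$ uniformly, hence $\hat W^*(\tau)=\hat\Omega(\tau)^{-1}=n\,P_n(\tau)^{-1}\big(\tilde R_{mn}(\tau)^{-1}+o_p(1)\big)P_n(\tau)^{-1}$. Because the $(1,2)$ block of $\tilde R_{mn}(\tau)$ carries a factor $a_n(\tau)^{1/2}$, block inversion gives $(\tilde R_{mn}^{-1})_{11}\asymp1$, $(\tilde R_{mn}^{-1})_{22}\asymp1$ (both uniformly positive definite and Lipschitz, by Step 1) and $(\tilde R_{mn}^{-1})_{12}=a_n(\tau)^{1/2}\times(\text{uniformly bounded, Lipschitz})$. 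Pre‑ and post‑multiplying by $P_n(\tau)^{-1}=\diag(I_{L_1},a_n(\tau)^{1/2}I_{L_2})$ inserts factors $1$, $a_n(\tau)^{1/2}$, $a_n(\tau)$ into the three blocks, so that (dropping the scalar $n$)
\begin{equation*}
\hat W^*(\tau)=\begin{pmatrix}W_{11,mn}(\tau) & a_n(\tau)W_{12,mn}(\tau)\\ a_n(\tau)W_{21,mn}(\tau) & a_n(\tau)W_{22,mn}(\tau)\end{pmatrix}+\begin{pmatrix}o_p(1) & o_p\big(\sqrt{a_n(\tau)}\big)\\ o_p\big(\sqrt{a_n(\tau)}\big) & o_p\big(a_n(\tau)\big)\end{pmatrix},
\end{equation*}
with $W_{11,mn}=(\tilde R_{mn}^{-1})_{11}$, $W_{22,mn}=(\tilde R_{mn}^{-1})_{22}$ uniformly full rank and $W_{12,mn}=a_n^{-1/2}(\tilde R_{mn}^{-1})_{12}$, all uniformly Lipschitz — which is exactly Assumption \ref{a:weighting 2}.

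The main obstacle is Step 1, specifically the uniform positive definiteness of $\tilde R_{mn}(\tau)$ across all heterogeneity regimes: one must show that both the first‑stage variance $\Omega_1(\tau)$ and the rescaled group‑effect variance $\Omega_{2,22}(\tau)/\Var(\alpha_j(\tau))$ are uniformly well‑conditioned, and that the off‑diagonal block $a_n(\tau)^{1/2}\Omega_{1,12}(\tau)$ cannot spoil invertibility — it cannot, precisely because it vanishes in the regime in which the matrix degenerates to block‑diagonal form. A secondary, more mechanical obstacle is the bookkeeping in Step 2: the per‑entry bound of Proposition \ref{p:omegamatrix} controls $\hat\Omega_{ll'}$ only to the precision of the \emph{slower} of coordinates $l,l'$, and one must check that after the $P_n(\tau)$‑conjugation this becomes a genuine $o_p(1)$ perturbation of the well‑conditioned $\tilde R_{mn}(\tau)$ — this is exactly why the off‑diagonal error in Assumption \ref{a:weighting 2} is $o_p(\sqrt{a_n})$ rather than $o_p(a_n)$ — and that everything is uniform in $\tau$, including the Lipschitz constants, which depend on $\lambda_{\min}(\tilde R_{mn})^{-1}$.
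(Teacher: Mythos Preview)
Your proposal is correct and follows essentially the same approach as the paper. Both arguments exploit the block structure induced by the fast/slow instrument partition, use Proposition~\ref{p:omegamatrix} to control each block of $\hat\Omega$, and then invert via the partitioned-inverse formula with appropriate rescaling to handle the possibly vanishing $(2,2)$ block.

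The only difference is packaging: the paper carries out the partitioned inversion directly (defining $\hat\Psi=(n\hat\Omega_{11}-n\hat\Omega_{12}\hat\Omega_{22}^{-1}\hat\Omega_{21})^{-1}$ and normalizing $\hat\Omega_{12}\hat\Omega_{22}^{-1}$ by $\lVert\Omega_{mn,22}\rVert^{-1}$ block by block), whereas you factor the scaling out up front via the conjugation $\Omega_{mn}=n^{-1}P_n\tilde R_{mn}P_n$ and reduce everything to an $o_p(1)$ perturbation of a uniformly well-conditioned matrix. Your route is somewhat cleaner bookkeeping-wise and makes the $o_p(\sqrt{a_n})$ off-diagonal error rate drop out automatically from $P_n^{-1}\,o_p(1)\,P_n^{-1}$; the paper's route is more explicit about each block. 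Both require the same implicit non-degeneracy condition on the slow instruments (that $\Omega_{2,22}(\tau)/\Var(\alpha_j(\tau))$, equivalently $\lVert\Omega_{mn,22}\rVert^{-1}\Omega_{mn,22}$, be uniformly well-conditioned), which you flag and the paper uses without comment.
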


% Overidentification Test
If there are more moment conditions than parameters to estimate ($L > K$), it is possible to implement overidentification tests in the spirit of \cite{Sargan1958} and \cite{Hansen1982}. More precisely, we can test the validity of the instrumental variables while maintaining the other assumptions.
%simultaneously at $T$ different quantiles: $\mathbb{H}_0: \bE[I_T \otimes z_{ij}' ( \alpha_j (\tau_1), \dots \alpha_j(\tau_T) )' ] = 0$. 
The test statistic is the GMM objective function evaluated at the efficient GMM estimator:
\begin{equation}
%    J(\hat\delta(\tau_1, \dots, \tau_T)  ) = m \bar g_{mn}(\hat \delta(\tau_1, \dots,  \tau_T))' \hat \Omega(\tau_1, \dots, \tau_T)^{-1} \bar g_{mn}(\hat \delta( \tau_1 , \dots, \tau_T)) 
J\left(\hat\delta\left(\hat W^*(\tau),\tau\right),\tau\right)=m \bar g_{mn}\left(\hat\delta\left(\hat W^*(\tau),\tau\right),\tau\right)'\hat\Omega(\tau)^{-1}\bar g_{mn}\left(\hat\delta\left(\hat W^*(\tau),\tau\right),\tau\right).
\end{equation}
%where $\bar g_{mn}(\hat$ and $\hat\Omega(\tau_1,\dots,\tau_T)$ is the $TL\times TL$ matrix consisting of the $\hat\Omega(\tau,\tau')$ submatrices. 
Proposition \ref{prop:hausmant test} demonstrates the adaptive validity of the $J$-test implemented with the clustered covariance matrix.
\begin{proposition}[\textbf{Overidentification Test}]\label{prop:hausmant test}
Under the $\mathbb{H}_0: \ \bE[z_{ij}\alpha_i(\tau)] = 0$ for all $\tau\in \mathcal T$ and the assumptions required for Proposition \ref{p:omegamatrix}, %\ref{a:sampling}-\ref{a:group effects} and \ref{a:growth condition}(c) 
as $m \rightarrow \infty$, $J\left(\hat\delta\left(\hat W^*(\tau),\tau\right),\tau\right)  \xrightarrow{d} \chi_{L-K}^2$.
\end{proposition}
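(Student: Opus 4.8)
The plan is to run the classical argument that a GMM overidentification statistic evaluated at the efficient estimator is asymptotically $\chi^2_{L-K}$, but to renormalize every object so that the possibly ill-conditioned efficient weighting matrix $\hat W^*(\tau)=\hat\Omega(\tau)^{-1}$ does no harm. Under $\mathbb{H}_0$ the exclusion restriction Assumption \ref{a:instruments}(ii) holds, so Lemma \ref{l:sampling error}, Lemma \ref{l:asymptotic moments}, and Proposition \ref{p:omegamatrix} are all in force. Fix $\tau\in\mathcal T$ and write $\hat\delta=\hat\delta(\hat W^*(\tau),\tau)$, $S=S_{ZX}$, $\hat G=(S'\hat\Omega(\tau)^{-1}S)^{-1}S'\hat\Omega(\tau)^{-1}$. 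Since the second stage is linear, Lemma \ref{l:sampling error} gives $\bar g_{mn}(\hat\delta,\tau)=(I_L-S\hat G)\,\bar g_{mn}(\delta(\tau),\tau)$ with $\bar g_{mn}(\delta(\tau),\tau)=\bar g^{(1)}_{mn}(\hat\delta,\tau)+\bar g^{(2)}_{mn}(\hat\delta,\tau)$, so purely algebraically
\[
J(\tau)=m\,\big[(I_L-S\hat G)\,\bar g_{mn}(\delta(\tau),\tau)\big]'\,\hat\Omega(\tau)^{-1}\,\big[(I_L-S\hat G)\,\bar g_{mn}(\delta(\tau),\tau)\big].
\]
It is crucial that the weight in $J$ equals the inverse of the weight defining $\hat\delta$, so that $S\hat G$ is the $\hat\Omega^{-1}$-orthogonal projection onto $\mathrm{col}(S)$; this is where the efficiency of $\hat\delta$ is used.

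Next, renormalize. The statistic $J$ is invariant under a nonsingular linear reparametrization of the instruments $z_{ij}\mapsto A z_{ij}$ (which sends $\bar g_{mn}\mapsto A\bar g_{mn}$, $\hat\Omega\mapsto A\hat\Omega A'$, $S\mapsto AS$, and leaves $\hat\delta$ unchanged). Apply this with $A=D^{-1}$, $D=D_{mn}(\tau):=\diag(\Omega_{mn}(\tau,\tau))^{1/2}$, and set $\bar\Omega_{mn}:=D^{-1}\Omega_{mn}D^{-1}$ (the limiting correlation matrix of the sample moments), $\widehat{\bar\Omega}:=D^{-1}\hat\Omega D^{-1}$, $T:=D^{-1}S$, $\xi_{mn}:=D^{-1}\sqrt m\,\bar g_{mn}(\delta(\tau),\tau)$, and $\hat P:=T(T'\widehat{\bar\Omega}^{-1}T)^{-1}T'\widehat{\bar\Omega}^{-1}$. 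Then
\[
J(\tau)=\big[(I_L-\hat P)\,\xi_{mn}\big]'\,\widehat{\bar\Omega}^{-1}\,\big[(I_L-\hat P)\,\xi_{mn}\big].
\]

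The conclusion follows from three limits. First, Proposition \ref{p:omegamatrix} states precisely that $\widehat{\bar\Omega}=\bar\Omega_{mn}+o_p(1)$ entrywise; using the block structure $\Omega_{mn}=\Omega_1/n+\Omega_2$ together with $\bar g^{(2)}_{mn,1}\equiv 0$ (because $\bar z_{1j}=0$), one gets $\bar\Omega_{mn}\to\bar\Omega$ for a positive-definite limiting correlation matrix $\bar\Omega$ (passing to a subsequence if the component rates do not individually converge; positive definiteness holds in each of the regimes of Section \ref{subsec:pointwise}), hence $\widehat{\bar\Omega}^{-1}\underset{p}{\rightarrow}\bar\Omega^{-1}$. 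Second, by Lemma \ref{l:asymptotic moments}(i)--(iii) with Assumption \ref{a:growth condition}(c) (both maintained through the hypotheses of Proposition \ref{p:omegamatrix}), the pair $(\sqrt{mn}\,\bar g^{(1)}_{mn},\sqrt m\,\bar g^{(2)}_{mn})$ is jointly asymptotically Gaussian with independent blocks; the $D^{-1}$ rescaling inflates the $L_1$ rows of $\sqrt m\,\bar g_{mn}$ by a factor of order $\sqrt n$ and the $L_2$ rows by a factor of order $(\Omega_{1,22}/n+\Omega_{2,22})^{-1/2}$, which cancels the differing rates, so that $\xi_{mn}\rightsquigarrow\mathbb G_\tau\sim N(0,\bar\Omega)$. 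Third, $T=D^{-1}S$ has columns diverging at possibly different rates, but $\hat P$ is invariant under right multiplication of $T$ by a nonsingular matrix; rescaling the columns to unit order, $T$ converges in probability to an $L\times K$ matrix $S_\infty$ of full column rank --- its limiting column space is spanned by the block-lower-triangular matrix $S_{ZX}$ (under the hypotheses of Proposition \ref{p:omegamatrix} one has $\delta_1=\beta$, $\delta_2=\gamma$, and then the $(1,2)$ block $\tfrac1m\sum_j\bar z_{1j}x_{2j}'$ vanishes since $\bar z_{1j}=0$ and $x_{2j}$ is constant within $j$), whose limit has diagonal blocks $\Sigma_{11},\Sigma_{22}$ of full column rank by construction and Assumption \ref{a:instruments}(iv) --- so $\hat P\underset{p}{\rightarrow}P_\infty:=S_\infty(S_\infty'\bar\Omega^{-1}S_\infty)^{-1}S_\infty'\bar\Omega^{-1}$, a rank-$K$, $\bar\Omega^{-1}$-orthogonal projection. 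Combining, $(I_L-\hat P)'\widehat{\bar\Omega}^{-1}(I_L-\hat P)\underset{p}{\rightarrow}(I_L-P_\infty)'\bar\Omega^{-1}(I_L-P_\infty)=\bar\Omega^{-1}(I_L-P_\infty)$, and the continuous mapping theorem with Slutsky's lemma yield $J(\tau)\rightsquigarrow\mathbb G_\tau'\bar\Omega^{-1}(I_L-P_\infty)\mathbb G_\tau$. Writing $\mathbb G_\tau=\bar\Omega^{1/2}Z$, $Z\sim N(0,I_L)$, this equals $Z'QZ$ with $Q:=\bar\Omega^{-1/2}(I_L-P_\infty)\bar\Omega^{1/2}$ symmetric, idempotent, and of rank $L-K$, so $J(\tau)\xrightarrow{d}\chi^2_{L-K}$. (Joint convergence over finitely many quantile indices follows identically from the process statements of Lemma \ref{l:asymptotic moments} and Proposition \ref{p:omegamatrix}.)

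The \textbf{main obstacle} is the third limit: the efficient weight $\hat\Omega^{-1}$ is asymptotically singular whenever the two groups of moments converge at different rates, so the renormalized cross-product $T=D^{-1}S_{ZX}$ blows up and cannot be handled by an ordinary law of large numbers. The way around is (a) that the oblique projection built from $T$ and $\widehat{\bar\Omega}^{-1}$ is unchanged by nonsingular column operations on $T$, so it stabilizes after the columns are rescaled, and (b) that the limiting projection still has rank exactly $K$ --- i.e., the $o_p$ perturbations in $\hat\Omega$, which Proposition \ref{p:omegamatrix} controls only up to the relative order $\sqrt{\Omega_{mn,ll}\Omega_{mn,l'l'}}$ rather than absolutely, do not collapse the limiting column space --- and this is precisely where the exact block-lower-triangular form of $S_{ZX}$ and the full-column-rank conditions on $\Sigma_{11}$ and $\Sigma_{22}$ are indispensable. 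A secondary point is verifying in the second limit that $\xi_{mn}$ has a non-degenerate Gaussian limit uniformly across the heterogeneity regimes, which relies on the vanishing of $\bar g^{(2)}_{mn,1}$ and on the asymptotic orthogonality of the first- and second-stage moment components established in Lemma \ref{l:asymptotic moments}(iii).
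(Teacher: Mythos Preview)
Your argument is correct and is essentially the paper's: both write $\bar g_{mn}(\hat\delta)=(I-S\hat G)\,\bar g_{mn}(\delta)$, renormalize by $D=\diag(\Omega_{mn})^{1/2}$ to tame the possibly singular weight, and reduce $J$ to an idempotent quadratic form of rank $L-K$ applied to an asymptotically standard-normal vector. The only difference is cosmetic---the paper Cholesky-factors first (working with the orthogonal projection $\hat M=I-\hat A(\hat A'\hat A)^{-1}\hat A'$, $\hat A=\hat Q\,d_{\Omega,mn}^{-1/2}S_{ZX}$) while you keep the oblique projection $\hat P$ and transform at the end---and your column-rescaling argument for the convergence of $\hat P$ actually makes explicit a step the paper's proof elides with ``it immediately follows.''
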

%We prove the result for a single quantile. The proof easily extends to multiple quantiles where $x_{ij}$ and $z_{ij}$ should be replaced with block diagonal matrices $I_T \otimes x_{ij}$ and $I_T \otimes z_{ij}$ with dimension $TK \times T$ and $TL \times T$, respectively, and an application of the functional central limit theorem. 
We prove the result for a single quantile, but the proof easily extends to multiple quantiles. In Section \ref{s:hausman}, we use this overidentification test to suggest a quantile analog of the Hausman test for the exogeneity of
the between variation. 

\section{Grouped (IV) Quantile Regression Model}\label{sec:GIVQR}

%%%%%%%%%%%%%%%%%%%%%%%%%% CLP %%%%%%%%%%%%%%%%%%%%%%%%%
\subsection{Chetverikov et al. (2016)}

\citet[][hereafter CLP]{Chetverikov2016} introduce an IV quantile regression estimator for group-level treatments and propose an alternative two-step estimation procedure for model (\ref{eq:model}). In the first stage, they perform quantile regression separately for each group $j$ and quantile $\tau$, regressing $y_{ij}$ on $x_{1ij}$ and a constant. In the second stage, they regress the \textit{intercept} from the first stage on $x_{2j}$. The key distinction between their approach and ours is that CLP use the estimated intercept from the first stage as the dependent variable in the second stage, whereas we use the fitted values.

The intercept represents the fitted value for an observation with $x_{1ij} = 0$. Consequently, their estimator is not invariant to linear reparametrizations of the individual-level regressors. In finite samples, results can differ depending on how we code the variables. For instance, the estimated coefficients may change if age is recorded as years since birth versus years since 16 or if we measure temperature in Celsius versus Fahrenheit. Our estimator, by contrast, does not suffer from this issue. Beyond the undesirable dependence of the results on an arbitrary linear transformation of the individual-level variables, this property also affects the precision of the estimates and introduces potential misspecification bias.  

Figure \ref{fig:firststages} illustrates these concerns using artificial data. Panel (a) shows how the variance of the fitted values increases as we move away from the mean of the regressor. The solid line represents the median regression estimate, while the shaded area depicts the 95\% confidence interval for the median fitted values. The intercept corresponds to the fitted value for an $x$ value that lies far outside the observed support of the variable. As a result, its variance is relatively large and would increase even more if we shifted the location of the individual-level variable.

Panel (b) highlights the misspecification bias using a similar artificial dataset. The dashed orange line represents the true median regression function, while the solid line shows the estimated linear model, which is slightly misspecified. Over the support of the covariates, the misspecification bias remains small because quantile regression minimizes the weighted mean squared error (see \citealp{Angrist2006a}). However, the intercept is more strongly biased, as it lies outside the covariate’s support, and this bias monotonically worsens as we increase the mean of the individual-level covariate.

\begin{figure}
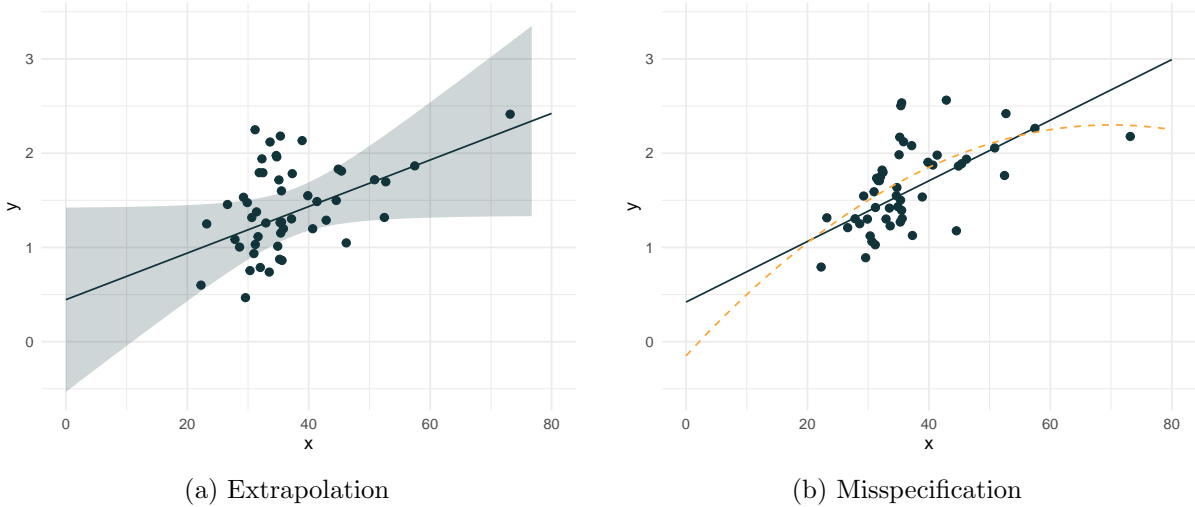

\centering
\begin{subfigure}[b]{0.49\textwidth}
   \centering
    \resizebox{1\linewidth}{!}{
   \input{Figures/extrapolation2}}
      \caption{ Extrapolation }
\label{fig:extrapolation}
\end{subfigure}
\hfill
\begin{subfigure}[b]{0.49\textwidth}
\centering 
   \resizebox{1\linewidth}{!}{
  \input{Figures/misspecification2}}
               \caption{ Misspecification }
                 \label{fig:misspecification}  
\end{subfigure}
 \caption{First Stage Regressions}
  \label{fig:firststages}  
  \floatfoot{Both panels show generated data for one group and a first-stage fit. Panel (a) uses a correctly specified model. The solid line shows the regression line estimated by median regression, and the shaded area shows the 95\% confidence interval. Panel (b) uses a true quadratic model. The solid line is estimated by linear median regression. The dashed orange line is the true regression line.}
\end{figure}

%%%%%%%%%%%%%%%%%% SIMULATIONS CLP %%%%%%%%%%%%%%%%%%%%%%%%%
\subsection{Simulations}\label{sec:simulationsGIVQR}

In this subsection, we compare the MD and CLP estimators of $\gamma(\tau)$ using Monte Carlo simulations. In the following subsection, we formally compare their asymptotic distributions to explain the simulation results. We use the same three data-generating processes (DGP) as in \cite{Chetverikov2016} and generate the outcome as follows: 
\begin{equation} \label{DGP1}
    y_{ij} = \beta_{0}(u_{ij}) + x_{1ij} \beta(u_{ij}) + x_{2j} \gamma(u_{ij}) + \alpha_j(u_{ij}) ,
\end{equation}
where $x_{1ij}$ and $x_{2j}$ are $\textnormal{exp}(0.25 \cdot N[0,1])$ and individual heterogeneity is introduced via the rank variable $u_{ij}\sim U[0,1]$. The quantile coefficient functions are $\gamma(\tau)=\beta (\tau)=\sqrt{\tau}$ and $\beta_1(\tau) = \frac{\tau}{2}$ for $\tau\in [0,1]$. In the first DGP, we set $\alpha_j(u_{ij})=0$, so there is neither group heterogeneity nor endogeneity. In the second DGP, we introduce group-level heterogeneity as
\begin{equation}\label{DGP3}
    \alpha_j(u_{ij}) = u_{ij} \eta_j - \frac{u_{ij}}{2},
\end{equation}
where $\eta_{j}\sim U(0,1)$. There is no endogeneity because the group effects are uncorrelated with the regressors. As group-level heterogeneity is multiplied with the rank variable $u_{ij}$, there is only weak group heterogeneity in the lower tail of the distribution and strong heterogeneity in the upper tail. Finally, in the third DGP, we introduce endogeneity as
\begin{equation} \label{DGP2}
    x_{2j} =  z_j + \eta_j + \nu_j ,
\end{equation}
where $z_j$ and $\nu_j$ are each distributed $\textnormal{exp}(0.25 \cdot N[0,1])$. $z_j$ is a valid instrumental variable for the endogenous $x_{2j}$. We use the same sample sizes as CLP, that is, $(m,n) = \{(25,25),\allowbreak (200,25),\allowbreak (25,200),\allowbreak (200,200) \}$. We perform 10,000 Monte Carlo replications for the set of quantiles $ \tau \in \{0.1, 0.5, 0.9\}$. Since the CLP estimator does not directly estimate $\beta(\tau)$, we present only results for $\gamma(\tau)$.

%The data generating process implies that $\bE[\alpha (u_{ij})| x_{2j}] =  \bE[u_{ij} \eta_j - \frac{u_{ij}}{2}| x_{2j}] = \bE[ \frac{u_{ij}}{2} - \frac{u_{ij}}{2}| x_{2j}] = 0$.
 %Consequently, $\gamma(u_{ij}) = \beta (u_{ij})=\sqrt{u_{ij}}$ and $\beta_0(u_{ij})= \frac{u_{ij}}{2}$. \\
\begin{table}[ht]
\centering\centering
\caption{\label{tab:tab:grouped_bias}Bias, Standard Deviation and MSE of $\hat \gamma(\tau)$}
\centering
\fontsize{9}{11}\selectfont
\begin{threeparttable}
\begin{tabular}[t]{llllllllll}
\toprule
\multicolumn{1}{c}{ } & \multicolumn{3}{c}{No group-level heterogeneity} & \multicolumn{3}{c}{Exogenous} & \multicolumn{3}{c}{Endogenous} \\
\cmidrule(l{3pt}r{3pt}){2-4} \cmidrule(l{3pt}r{3pt}){5-7} \cmidrule(l{3pt}r{3pt}){8-10}
Quantile & MD & CLP & Rel. MSE & MD & CLP & Rel. MSE & MD & CLP & Rel. MSE\\
\midrule
\addlinespace[.5em]
\hline
\multicolumn{10}{c}{(m, n) = (25, 25)}\\
\hspace{1em}0.1 & 0.022 & -0.011 & 0.051 & 0.022 & -0.010 & 0.052 & 0.036 & 0.001 & 0.160\\
\hspace{1em} & (0.192) & (0.858) &  & (0.195) & (0.860) &  & (2.025) & (5.062) & \\
\hspace{1em}0.5 & -0.010 & -0.001 & 0.061 & -0.011 & 0.000 & 0.088 & -0.029 & 0.039 & 0.132\\
\hspace{1em} & (0.166) & (0.673) &  & (0.204) & (0.691) &  & (1.992) & (5.491) & \\
\hspace{1em}0.9 & -0.019 & -0.003 & 0.049 & -0.020 & -0.004 & 0.216 & -0.063 & -0.011 & 0.176\\
\hspace{1em} & (0.094) & (0.435) &  & (0.227) & (0.490) &  & (2.123) & (5.065) & \\
\addlinespace[.5em]
\hline
\multicolumn{10}{c}{(m, n) = (200, 25)}\\
\hspace{1em}0.1 & 0.024 & 0.003 & 0.060 & 0.024 & 0.004 & 0.063 & 0.023 & 0.006 & 0.057\\
\hspace{1em} & (0.066) & (0.284) &  & (0.067) & (0.285) &  & (0.106) & (0.456) & \\
\hspace{1em}0.5 & -0.006 & -0.001 & 0.059 & -0.006 & 0.000 & 0.086 & -0.009 & -0.003 & 0.071\\
\hspace{1em} & (0.056) & (0.232) &  & (0.069) & (0.238) &  & (0.097) & (0.366) & \\
\hspace{1em}0.9 & -0.017 & -0.004 & 0.060 & -0.017 & -0.003 & 0.223 & -0.022 & -0.009 & 0.142\\
\hspace{1em} & (0.031) & (0.145) &  & (0.075) & (0.164) &  & (0.086) & (0.234) & \\
\addlinespace[.5em]
\hline
\multicolumn{10}{c}{(m, n) = (25, 200)}\\
\hspace{1em}0.1 & 0.003 & -0.002 & 0.059 & 0.003 & -0.001 & 0.066 & -0.028 & -0.076 & 0.141\\
\hspace{1em} & (0.070) & (0.289) &  & (0.074) & (0.291) &  & (2.107) & (5.618) & \\
\hspace{1em}0.5 & -0.001 & -0.002 & 0.060 & -0.001 & -0.001 & 0.233 & -0.083 & -0.094 & 0.628\\
\hspace{1em} & (0.060) & (0.247) &  & (0.134) & (0.278) &  & (3.627) & (4.575) & \\
\hspace{1em}0.9 & -0.002 & 0.000 & 0.061 & -0.001 & 0.001 & 0.769 & -0.120 & -0.114 & 1.218\\
\hspace{1em} & (0.030) & (0.121) &  & (0.217) & (0.247) &  & (3.930) & (3.561) & \\
\addlinespace[.5em]
\hline
\multicolumn{10}{c}{(m, n) = (200, 200)}\\
\hspace{1em}0.1 & 0.003 & -0.003 & 0.057 & 0.003 & -0.003 & 0.062 & 0.002 & -0.004 & 0.058\\
\hspace{1em} & (0.024) & (0.100) &  & (0.025) & (0.101) &  & (0.039) & (0.162) & \\
\hspace{1em}0.5 & -0.001 & 0.000 & 0.059 & -0.001 & -0.001 & 0.222 & -0.004 & -0.004 & 0.141\\
\hspace{1em} & (0.020) & (0.084) &  & (0.044) & (0.093) &  & (0.051) & (0.136) & \\
\hspace{1em}0.9 & -0.002 & 0.000 & 0.067 & -0.003 & -0.001 & 0.762 & -0.009 & -0.007 & 0.617\\
\hspace{1em} & (0.010) & (0.040) &  & (0.071) & (0.082) &  & (0.074) & (0.095) & \\
\bottomrule
\end{tabular}
\begin{tablenotes}
\item \textit{Note: } 
\item The table reports mean bias, standard deviation and relative MSE from the simulations for $\gamma(\tau)$ from 10,000 Monte Carlo simulations using the MD estimator and the CLP estimator. The relative MSE gives the MSE of the MD estimator relative to that of the CLP estimator.
\end{tablenotes}
\end{threeparttable}
\end{table}

\begin{table}[ht]
\centering\centering
\caption{\label{tab:tab:cilength}Coverage Probability of the $95\%$ Confidence Intervals}
\centering
\fontsize{9}{11}\selectfont
\begin{threeparttable}
\begin{tabular}[t]{llllllllll}
\toprule
\multicolumn{1}{c}{ } & \multicolumn{3}{c}{No group-level heterogeneity} & \multicolumn{3}{c}{Exogenous} & \multicolumn{3}{c}{Endogenous} \\
\cmidrule(l{3pt}r{3pt}){2-4} \cmidrule(l{3pt}r{3pt}){5-7} \cmidrule(l{3pt}r{3pt}){8-10}
\multicolumn{1}{c}{ } & \multicolumn{1}{c}{Rel. length} & \multicolumn{2}{c}{Coverage Rate} & \multicolumn{1}{c}{Rel. length} & \multicolumn{2}{c}{Coverage Rate} & \multicolumn{1}{c}{Rel. length} & \multicolumn{2}{c}{Coverage Rate} \\
\cmidrule(l{3pt}r{3pt}){2-2} \cmidrule(l{3pt}r{3pt}){3-4} \cmidrule(l{3pt}r{3pt}){5-5} \cmidrule(l{3pt}r{3pt}){6-7} \cmidrule(l{3pt}r{3pt}){8-8} \cmidrule(l{3pt}r{3pt}){9-10}
Quantile & MD/CLP & MD & CLP & MD/CLP & MD & CLP & MD/CLP & MD & CLP\\
\midrule
\addlinespace[.5em]
\hline
\multicolumn{10}{c}{(m, n) = (25, 25)}\\
\hspace{1em}0.1 & 0.232 & 0.941 & 0.938 & 0.235 & 0.939 & 0.938 & 0.227 & 0.966 & 0.972\\
\hspace{1em}0.5 & 0.244 & 0.940 & 0.942 & 0.301 & 0.943 & 0.945 & 0.262 & 0.963 & 0.972\\
\hspace{1em}0.9 & 0.223 & 0.941 & 0.949 & 0.501 & 0.940 & 0.946 & 0.373 & 0.956 & 0.972\\
\addlinespace[.5em]
\hline
\multicolumn{10}{c}{(m, n) = (200, 25)}\\
\hspace{1em}0.1 & 0.230 & 0.932 & 0.947 & 0.233 & 0.932 & 0.948 & 0.231 & 0.942 & 0.953\\
\hspace{1em}0.5 & 0.245 & 0.946 & 0.944 & 0.296 & 0.945 & 0.946 & 0.267 & 0.952 & 0.949\\
\hspace{1em}0.9 & 0.220 & 0.926 & 0.947 & 0.475 & 0.941 & 0.945 & 0.368 & 0.953 & 0.952\\
\addlinespace[.5em]
\hline
\multicolumn{10}{c}{(m, n) = (25, 200)}\\
\hspace{1em}0.1 & 0.241 & 0.943 & 0.940 & 0.256 & 0.943 & 0.943 & 0.240 & 0.968 & 0.974\\
\hspace{1em}0.5 & 0.242 & 0.937 & 0.944 & 0.495 & 0.938 & 0.944 & 0.370 & 0.949 & 0.971\\
\hspace{1em}0.9 & 0.248 & 0.948 & 0.941 & 0.884 & 0.933 & 0.945 & 0.771 & 0.939 & 0.955\\
\addlinespace[.5em]
\hline
\multicolumn{10}{c}{(m, n) = (200, 200)}\\
\hspace{1em}0.1 & 0.241 & 0.944 & 0.944 & 0.254 & 0.947 & 0.945 & 0.246 & 0.951 & 0.950\\
\hspace{1em}0.5 & 0.244 & 0.946 & 0.945 & 0.483 & 0.952 & 0.948 & 0.377 & 0.957 & 0.951\\
\hspace{1em}0.9 & 0.246 & 0.942 & 0.953 & 0.872 & 0.950 & 0.950 & 0.772 & 0.954 & 0.955\\
\bottomrule
\end{tabular}
\begin{tablenotes}
\item \textit{Note: } 
\item Results based on 10,000 Monte Carlo simulations. The table provides the coverage rate and median length of the confidence intervals of $\gamma(\tau)$. The relative length provides the length of the confidence interval of the MD estimator relative to that of the CLP estimator. Robust standard errors are used for the CLP estimator, and clustered standard errors at the group level are used for the MD estimator.
\end{tablenotes}
\end{threeparttable}
\end{table}

The first stage, group-by-group quantile regressions, are the same for both estimators. In the second stage, CLP regress the estimated intercepts on $x_{2j}$ with OLS (DGP 1 and 2) or using $z_j$ as an instrument (DGP 3). We implement the MD estimator with the IV estimator instrumenting $(x_{1ij},x_{2j})$ with $(\dot x_{1ij},x_{2j})$ (DGP 1 and 2) or $(\dot x_{1ij},z_j)$ (DGP 3). With this choice of instruments, we do not exploit the between variation of $x_{1ij}$.\footnote{Using $x_{1ij}$ instead of $\dot x_{1ij}$ as an instrument has virtually no effect on the results since there is no variation in $x_{1ij}$ across groups.} Since the data generating process of \cite{Chetverikov2016} has a weak instrument when $m$ is small, one should pay attention when looking at the simulation results for the endogenous case.\footnote{With $m = 25$ in over 40\% of the draws, the F-statistics of the first stage of the IV estimations is below 10. The issue disappears when $m = 200.$} It would be straightforward to compute weak instrument robust inference, for instance, using Anderson/Rubin confidence intervals in the second stage. However, we consider this to be outside the scope of this paper.

%The simulations consider three cases. In the first one, $\alpha_j(\tau) = 0$ for all $j$ and all $\tau$. In this case, as there are no group effects, conditioning on the group does not affect the quantile function, and quantile regression is consistent for the same parameter. Further, both estimators are $\sqrt{mn}$-consistent. In the second case, there are group-specific effects ($\alpha_j(\tau) \neq 0$), which are uncorrelated with the regressors. As individual heterogeneity is multiplied with the rank variable $u_{ij}$, there is only weak group heterogeneity in the lower tail of the distribution, and we see faster convergence there. In the third case, $\alpha_{j}(\tau)$ is correlated with the regressor of interest, such that $x_{2j}$ is endogenous. For the implementation of the MD estimator, we use $\dot x_{1ij}$ as an instrument for $x_{1ij}$ so that, as with the CLP estimator, we do not exploit the between variation of the regressor.\footnote{Using the entire variation of $x_{1ij}$ does not affect the results since there is no between variation in $x_{1ij}$.} In the third case, as $x_{2j}$ is endogenous, we instrument $x_{2j}$ with $z_j$. 

Table \ref{tab:tab:grouped_bias} presents the bias, standard deviation, and the relative MSE of the MD estimator, defined as the MSE of the MD estimator divided by that of the CLP estimator. No clear pattern emerges regarding the bias of the estimators. Consistent with asymptotic theory, the bias diminishes as the number of observations increases. We observe more pronounced differences in the variance of the estimators. In the homogeneous case, the standard deviation of the MD estimator is four times smaller than that of the CLP estimator. This difference remains stable as the number of groups or individuals increases. The advantage of the MD estimator remains similar at the bottom of the distribution in the exogenous case, but the difference in variance narrows at the upper end. We explain this with the presence of weak heterogeneity at the bottom of the distribution and strong heterogeneity at the top. At the upper end of the distribution, the variance of the estimators converges as $n$ becomes large. Finally, in the endogenous case, the results resemble those of the exogenous case once we exclude findings affected by the weak instrument issue, precisely when $m=25$. The differences in precision explain the large discrepancies in MSE. The MSE of the CLP estimator is up to twenty times larger than that of the MD estimator when $\alpha_j(\tau) = 0$ and remains substantially larger in all scenarios considered except one with a weak instrument.\footnote{Although not shown here, we also computed the traditional quantile regression estimator. When there is heterogeneity, quantile regression is inconsistent for the parameter of interest; however, when $\alpha_j(\tau) = 0$, our estimator and quantile regression are practically indistinguishable in terms of bias and variance while the CLP estimator has a much larger variance.}

Table \ref{tab:tab:cilength} shows the performance of the $95\%$ confidence intervals suggested with our inference procedure. The table reports the coverage rate and the median length of the intervals of our estimator relative to that of the CLP estimator.\footnote{For the CLP estimator, we use heteroskedasticity robust standard errors as suggested in \cite{Chetverikov2016}. Recall that the CLP estimator uses only one observation per group in the second stage. Hence, we would attain the same standard errors if we kept all observations and clustered the standard errors at the group level.} Our suggested inference procedure has coverage close to $95\%$ in all cases. Compared to the CLP estimator, our confidence bands are substantially shorter. In most cases, our estimator yields confidence bands that are less than half the length of those for the CLP estimator. This difference becomes even more pronounced in our empirical application in Section \ref{sec:food stamp}, where the CLP estimator produces confidence bands that are, on average, 14 times wider than those of the MD estimator.

\subsection{Comparison of the asymptotic distributions of CLP and MD estimators}\label{sec:comparison CLP}

CLP focus exclusively on $\gamma(\tau)$, the coefficients on the group-level variables. They assume strong group-level heterogeneity, imposing that $\Var(\alpha_j(\tau)) > 0$ uniformly in $\tau$.\footnote{Although they do not state this assumption explicitly, their asymptotic distribution would become degenerate without it, as they discuss in footnote 9.} Thus, their asymptotic distribution corresponds to case (i)-b of our Theorem \ref{t:first-order}, where the variance from the first stage diminishes more rapidly than that from the second stage. In this subsection, we compare the variance of the CLP and MD estimators, allowing for weak or no heterogeneity.

%, and do not directly estimate $\beta(\tau)$.  As a result, their theoretical findings are not adaptive with respect to $\Var(\alpha_j(\tau))$. In other words, their asymptotic distribution corresponds to case (i)-b in our Theorem \ref{t:first-order}, where the variance from the first stage diminishes more rapidly than that from the second stage. Consequently, their asymptotic distribution is the same as if the true first-stage coefficients $\beta_j(\tau)$ were known.
%On a more technical note, when we focus on the coefficients of the group-level variables and assume group-level heterogeneity, we are able to relax the growth rate condition imposed by CLP, building on recent results from \cite{Volgushev2019}.

To simplify notation, we consider the exogenous case where $x_{2j}$ can serve as its own instrumental variable, though the results also extend to the endogenous case. As discussed in Section \ref{sec:asym. theory}, the asymptotic variance consists of two components: one accounting for first-stage error and the other for second-stage noise. We will examine each component separately. First, we consider the scenario where the true first-stage coefficients are known to isolate the variance arising in the second stage. In this scenario, the CLP point estimates can be obtained numerically within our MD framework by regressing the true first-stage fitted values on $x_{1ij}$ and $x_{2j}$, using $\dot x_{1ij}$ and $x_{2j}$ as instruments. Thus, the second-stage variance resulting from the randomness of $\alpha_j(\tau)$ is identical for both estimators. 
%Our framework allows exploiting, in addition, the between-group variation in $\bar x_{1j}$ using the efficient GMM estimator, which will (weakly) reduce the variance of the estimator. Other than this advantage, our approach is neither better nor worse than the CLP estimator concerning second-stage variance.

We can isolate the first-stage error by setting $\Var(\alpha_j(\tau))=0$. Under this condition, both estimators are classical MD estimators. We can express the CLP estimator as
\begin{equation}
    \hat\delta_{CLP}(\tau)=\underset{\delta\in \mathbb R^{K_1\cdot m +K_2}}{\arg\min} \sumiN \left (\hat \beta_j (\tau) - \tilde R_j \delta \right )' \left (\hat \beta_j (\tau) - \tilde R_j \delta \right ),
\end{equation}
where 
\begin{equation*}
    \underset{\scriptscriptstyle (K_1+1) \times ( K_1 \cdot m + K_2)}{ \tilde R_j} = \begin{pmatrix} 0 & x_{2j}'  \\ l_j'  \otimes I_{K_1} & 0
    \end{pmatrix},
\end{equation*}
and $l_j$ is a $m$-dimensional vector of zeros with a $1$ in the $j$ position. The restriction matrix $\tilde R_j$ differs from the restriction matrix of our estimator defined in equation (\ref{eq:restriction}), as it does not impose equality of the first stage coefficients implied by the model. Thus, our estimator imposes $K_1\cdot (m-1)$ additional correct restrictions. A second difference is that CLP use an identity weighting matrix while we use $\tilde X_j'\tilde X_j$. When there is no group heterogeneity, we are in the classical MD framework, where the efficient weighting matrix is the inverse of the first-stage variance. It follows that weighting by $\tilde X_j'\tilde X_j$ is efficient when the first-stage error is separable and the density of $y$ given $x$ at the $\tau$ quantile is the same across groups. When, in addition, $\tilde X_j'\tilde X_j$ is constant across groups (balanced panel, identical distribution of $x_{1ij}$), then equal weighting is efficient.

Adding valid constraints within an efficient MD framework reduces variance. Therefore, if $\tilde{X}_j'\tilde{X}_j$ is the efficient weighting matrix, our estimator will necessarily have lower first-stage variance than the CLP estimator. However, with an inefficient weighting matrix, adding valid constraints might increase the variance in some relatively pathological cases.\footnote{See the discussion in Section 8 of \cite{Hansen2021}.} While it would be possible to estimate the efficient weighing matrix, we prefer to avoid estimating the first-stage variance and instead opt for a more interpretable estimator in cases of misspecification.

To summarize the comparison, the MD estimator using $\dot{x}_{1ij}$ and $x_{2j}$ as instrumental variables exhibits the same variance due to the randomness of $\alpha_j(\tau)$ but a lower variance due to estimation of $\hat\beta_j(\tau)$ compared to the CLP estimator—this holds formally when the efficient weighting matrix is used. In light of these results, we can understand the simulation results. In the first DGP, all the variance arises from the first stage such that our estimator is more precise, even asymptotically. The relative MSE does not change as $n$ increases. In the second and third DGP, the variances of both estimators will converge as $n\rightarrow \infty$ because the second-stage variance will asymptotically dominate them. This convergence appears, however, to be relatively slow, especially at the bottom of the distribution, where heterogeneity is weak.

%Note that CLP assume strong group heterogeneity so that only the variance arising in the second stage matters asymptotically. Consequently, it’s not surprising that they did not optimize their estimator to minimize the variance arising in the first stage. 

Note that CLP also consider a generalization of model (\ref{eq:model}) in which they assume
\begin{align}
    Q(\tau, y_{ij}| x_{1ij} , x_{2j}, v_j) =& \tilde {x}_{1ij}' \beta_j (\tau) \label{eq:clpmodel1} \\ 
    \beta_{j,1}(\tau) =& x_{2j}' \gamma(\tau) + \alpha(\tau, v_j).  \label{eq:clpmodel2}
\end{align}
By default, $\beta_{j,1}(\tau)$ is the first element of the vector $\beta_{j}(\tau)$, but it could represent any element of this vector. In contrast to model (\ref{eq:model}), this approach allows the coefficient on $x_{1ij}$ to vary across groups. It also enables researchers to estimate interaction effects between group-level treatments and individual-level covariates. However, if the effect of a group-level variable varies with individual-level variables, the effect of $x_{2j}$ on the intercept no longer represents an average effect. Instead, it reflects the effect evaluated at $x_{1ij}=0$, which may not be meaningful or of interest. 
%In addition, the estimator of the effect on the intercept suffers from all the issues discussed above concerning the variance of the CLP estimator. 
To address this issue, researchers would need to estimate all relevant interaction effects and combine them appropriately. This approach has not been discussed in CLP, nor has it been implemented in their simulations or in any applications of their estimator.

 \section{Traditional Quantile Panel Data Estimators}\label{sec:FE+BE+RE}
 \subsection{Fixed Effects, Random Effects and Between Estimators}\label{subsec:FE+BE+RE}

In this subsection, we apply our results to derive quantile analogs of the fixed effects, between, random effects, and Hausman-Taylor estimators. As discussed in Section \ref{subsec:least squares estimators}, we can obtain these estimators by selecting appropriate instrumental variables in the second stage. For fixed effects estimation, model (\ref{eq:model}) implies that $\dot{x}_{1ij}$ is a valid instrument since it varies only within groups and is uncorrelated with the group effects. This instrument automatically satisfies Assumption \ref{a:instruments}.
%\footnote{Model (\ref{eq:model}) implies actually that any transformation of $x_{1ij}$ with group-invariant averages is a valid instrument. The exploration of other such instruments is left for future research.} 
In this special case, the approach corresponds to the traditional MD estimator, where all variance originates in the first stage. %Additionally, because this instrument has a mean of zero across all groups, the number of groups $m$ does not need to be large; a single group suffices to estimate $\beta(\tau)$ consistently. However, it is important to note that this instrument does not allow for the estimation of the effects of group-level variables.

In the first stage, $\beta(\tau)$ is estimated separately for each group. The second stage then averages these group-level coefficients using weights proportional to $\tilde X_j'\tilde X_j$ (see equation \ref{eq:MD representation}). \cite{Galvao2015} propose an alternative approach, suggesting efficient weights proportional to the inverse of the variance of the first-stage estimators. Their weights are equivalent to ours when \begin{equation}\label{eq:efficient fe}
f_{y_{ij}|x_{1ij},v_j}(Q(\tau,y_{ij}|x_{1ij},v)|x,v) = f_{y_{ij}|x_{1ij},v_j}(Q(\tau,y_{ij}|x_{1ij},v')|x,v')
\end{equation}
for any $v$ and $v'$, i.e., when the conditional distribution of the group effects is the same across groups. Outside this specific case, our estimator may be less efficient but avoids the need to estimate the first-stage variance, which depends on the conditional densities in equation (\ref{eq:efficient fe}).  A third approach is to take the unweighted average of $\hat\beta_j(\tau)$, which, while not efficient when $\beta_j(\tau)$ are homogeneous, remains straightforward to interpret even if the model is misspecified.

We can implement a quantile between estimator using $\bar{x}_{1j}$ as an instrument to exploit only the variation across groups. On the other hand, combining within and between variations is more complex for quantile models than for least squares models. Applying quantile regression to the whole population without controlling for groups identifies parameters that differ from our intended parameters (see Remark \ref{remark:conditional}). Using our MD estimator with $x_{1ij}$ as an instrument, which corresponds to using OLS in the second stage, consistently estimates $\beta(\tau)$ but only at the slow $\sqrt{m}$ rate because $x_{1ij}$ also varies between groups. The same occurs if both $\dot x_{1ij}$ and $\bar{x}_{1j}$ are combined with 2SLS, as the weights attributed to $\bar{x}_{1j}$ do not vanish asymptotically. Instead, we propose two efficient random effects estimators: an efficient GMM estimator and one using optimal instruments. It is worth highlighting that these estimators not only reduce the asymptotic variance of the estimator but also increase the rate of convergence from $\sqrt m$ to $\sqrt{mn}$ compared to 2SLS.

Given the first-stage estimation, we have the following moment condition:
\begin{equation}\label{eq:moment eq} 
\bE[g_{j}(\delta(\tau), \tau)] = \bE[ Z_{j}' ( \tilde X_{j} \hat \beta_j(\tau) - X_{j} \delta(\tau) ] = 0. 
\end{equation}
When the instrument includes both the within-group variation $\dot x_{1ij}$ and the between-group average $\bar x_{1j}$, the efficient GMM estimator will optimally combine these two sources of variation. The weighting matrix is computed as shown in equation (\ref{eq:estimator omega}). According to Propositions \ref{p:weight matrix adaptive} and Theorem \ref{t:first-order}, this estimator is guaranteed to be both $\sqrt{mn}$ consistent and efficient. This random effects estimator has the same first-order asymptotic distribution as the fixed effects estimator that uses only the within-group variation $\dot x_{1ij}$. However, the random effects estimator is expected to have a lower variance in finite samples because it also incorporates the between-group variation. As the number of observations $n$ increases, the influence of the between-group variation diminishes, causing the random effects estimator to converge to the fixed effects estimator. This behavior is similar to what is observed in least squares models, as discussed by \cite{Baltagi2021} and \cite{Ahn2014}.
 
If we impose the stronger assumption that the moment restriction in equation (\ref{eq:moment eq}) holds conditionally on $Z_j$, we can use the theory of optimal instruments to derive a more efficient random effects estimator. Optimal instruments are relevant when a researcher has a conditional moment restriction of the form $\bE[g_{j}(\delta, \tau) | Z_j] = 0$. When a moment condition holds conditional on $Z_j$, an infinite set of valid moments exist, and one could use additional moments to increase efficiency.
The goal is to select the instrument that minimizes the asymptotic variance, which takes the form $Z^*_j = \bE [g_j(\delta, \tau)g_j(\delta, \tau)' | Z_j] ^{-1} R_j(\delta, \tau)$, with $R_j(\delta, \tau) = \bE [ \frac{\partial}{\partial \delta} g_j(\delta, \tau) | Z_j] $ (see, e.g., \citealp{Chamberlain1987} and \citealp{Newey1993}).
To implement the random effect estimator with optimal instruments, we set $Z_j = X_j$. Under the additional assumption that $\bE[\alpha_j^2(\tau) | X_j] = \sigma_\alpha^2(\tau)$,\footnote{We assume homoskedasticity of $\alpha_j(\tau)$ to obtain a simple estimator, similar to the classical least squares random effects estimator. If we were to drop this assumption, we would need to estimate $\bE[\alpha_j(\tau)|X_j]$. Note that we do not assume homoscedasticity in the group-level model such that this assumption does not constrain the heterogeneity of $\beta(\tau)$ across different values of $\tau$.} the optimal instrument simplifies to 
\begin{equation}\label{eq:optimalinstrument_re}
    Z_j^*(\tau) =  \left ( \tilde X_j  \frac{V_j(\tau)}{n} \tilde X_j' + \mathbf l_n'\mathbf l_n  \sigma_\alpha^2(\tau) \right )^+  X_j,
\end{equation}
where $ V_j(\tau) $ is the asymptotic variance from the first stage for a group $j$, $\mathbf l_n$ is a $n$-dimenstional vector of ones, and $^+$ denotes the Moore-Penrose inverse.\footnote{Since the matrix $( \tilde X_j  \frac{V_j(\tau)}{n}\tilde X_j' + \mathbf l_n'\mathbf l_n \sigma_\alpha^2(\tau) )$ is singular, we use the Moore-Penrose inverse.} 
%If $\bE[\alpha_j^2(\tau) | X_j] = \sigma_\alpha^2(\tau)$, the random effect estimator based on optimal instruments is efficient. 

A few remarks about the optimal instruments follow. First, under standard random effects assumptions, the optimal instrument applied to least squares models is numerically identical to the FGLS estimator. 
Second, the optimal instrument depends on $n$ analogously to the efficient weighting matrix of the GMM estimator. As $n$ increases, the first stage variance converges to zero, and the generalized inverse will give infinitely more weights to the within variation and asymptotically converge to the fixed effects estimator.
Third, if $\sigma_{\alpha}(\tau) = 0$, then all the variance arises in the first stage, and this estimator is identical to the efficient MD estimator (see Proposition \ref{prop:emd = giv} in Appendix \ref{app:opt inst MD}).

\subsection{Hausman and Taylor Model}
The Hausman-Taylor model provides a method for finding instrumental variables within the model itself. It represents a middle ground between the random effects model, which assumes orthogonality between $\alpha_j(\tau)$ and $x_{ij}$, and the fixed effects model, which only identifies the effect of individual-level variables. To estimate the effect of group-level variables, \cite{Hausman1981} assume that some elements of $x_{1ij}$ are uncorrelated with $\alpha_j(\tau)$. We consider model (\ref{eq:model}) but partition the regressors into four types of variables, $x_{ij} = [x_{1ij}^{ex} \ x_{1ij}^{en} \ x_{2j}^{ex} \ x_{2j}^{en} ]$, where the superscript $ex$ indicates exogenous variables and the superscript $en$ indicates potentially endogenous variables. Thus,
\begin{align*}
   & \bE[x_{1ij}^{ex} \alpha_j(\tau) ] = 0,\\
   & \bE[x_{2j}^{ex} \alpha_j(\tau) ] = 0.
\end{align*}

The assumptions imply that we can estimate $\delta(\tau)$ using the instrument $z_{ij} = (\dot x_{1ij}^{ex}, \dot x_{1ij}^{en},$ $ \bar x_{1ij}^{ex}, x_{2j}^{ex})$. While $x_{2j}^{en}$ is potentially endogenous, the within variation is uncorrelated with $\alpha_j(\tau)$ as it varies only within $j$. Identification requires at least as many instruments as parameters to estimate, hence $dim(x_{1ij}^{ex}) \geq dim(x_{2j}^{en})$.  In overidentified models, efficient GMM can be implemented, and if conditional moment restrictions are available, optimal instruments can be used. However, implementing optimal instruments is not straightforward, as it requires estimating  
%The optimal instrument is then $Z_j^*(\tau) = \bE  \left [ \left ( \tilde X_j (\hat \beta_j(\tau) - \beta(\tau)) + \alpha_j(\tau) \right )  \left ( \tilde X_j (\hat \beta_j(\tau) - \beta(\tau)) + \alpha_j(\tau) \right )' | Z_j \right ]^{-1} \bE [X_j |Z_j]$. 
$\bE [x_{ij} |z_{ij}]$, typically done nonparametrically (see \citealp{Newey1993}). We do not contribute to this aspect in this paper. %In the special case where there is no $x_{1ij}^{en}$, meaning all individual-level regressors are exogenous, the optimal instrument approach can be more easily implemented as the first stage includes only exogenous variables.

\subsection{Hausman Test}\label{s:hausman}

The random effects estimator's consistency relies on stronger orthogonality conditions compared to the fixed effects estimator. Under these stronger assumptions, both estimators are consistent, but the fixed effects estimator is inefficient. \citet{Hausman1978} proposed a test for the null hypothesis of random effects against the alternative of fixed effects. Various generalizations of the Hausman test have been suggested in the literature (e.g., \citet{Chamberlain1982, Mundlak1978, Wooldridge2019}). \citet{Arellano1993} considers a heteroskedasticity and autocorrelation robust generalization based on a Wald test. \citet{Ahn1996} propose a GMM test based on a 3SLS regression as an equivalent method for the Hausman test.

This subsection explains how we can use the overidentification test presented in Section \ref{sec:asym. theory} as a quantile version of the Hausman test for our two-step estimator. The assumption of correct specification of the first stage is maintained under both the null and alternative hypotheses. Compared to the fixed effects estimator, consistency of the random effects estimator additionally requires that $x_{1ij}$ is uncorrelated with $\alpha_j(\tau)$, so that $\mathbb{E}[\dot{x}_{1ij}' \alpha_j(\tau)] = 0$ and $\mathbb{E}[\bar{x}_{1j}' \alpha_j(\tau)] = 0$ are valid moment conditions. By contrast, the fixed effects estimator relies only on the moment condition $\mathbb{E}[\dot{x}_{1ij}' \alpha_j(\tau)] = 0$. Consequently, the overidentification test suggested in Proposition \ref{prop:hausmant test} can be used as a test of the random effects orthogonality conditions. Compared to the traditional Hausman test, our test does not rely on the assumption of conditional homoskedasticity of the errors and is robust to clustering.

\citet{Galvao2019} also note that a quantile regression with all observations and fixed effects quantile regression identify different parameters. They propose a Hausman test based on an auxiliary quantile regression model that incorporates both $x_{1ij}$ and $\bar{x}_{1j}$ as regressors, and tests for the significance of $\bar{x}_{1j}$. However, this test differs from our approach in several ways: it starts from a different model (not conditional on the groups), relies on the correct specification of the auxiliary regression, and does not directly compare random effects and fixed effects estimators.

%One might consider proposing a test comparing coefficients estimated through quantile regression applied to the entire sample (without group indicator variables) against those estimated with fixed effects. However, the null hypothesis for such a test would not be the exogeneity of between-group variation, but rather $\alpha_j(\tau)=0$ for $j=1,\dots,m$. Such a test could be developed and would serve as a quantile analog to the \cite{Breusch1980} test for the presence of group effects. However, this null hypothesis appears to be rarely of interest in empirical applications. Consequently, we do not pursue this approach in our current study.

\subsection{Simulations}\label{sec:simulationsFE+BE+RE}
\begin{table}
\centering
\caption{\label{tab:panel_bias}Bias and Standard Deviation of $\hat \beta(\tau)$}
\centering
\fontsize{9}{11}\selectfont
\begin{threeparttable}
\begin{tabular}[t]{llllll}
\toprule
Quantile & Pooled & BE & FE & RE-GMM & RE-OI\\
\midrule
\addlinespace[.5em]
\hline
\multicolumn{6}{c}{(m, n) = (25, 10)}\\
\hspace{1em}0.1 & 0.009 & 0.002 & 0.037 & 0.019 & 0.044\\
\hspace{1em} & (0.193) & (0.235) & (0.261) & (0.183) & (0.177)\\
\hspace{1em}0.5 & 0.000 & 0.000 & -0.001 & -0.001 & 0.000\\
\hspace{1em} & (0.182) & (0.224) & (0.172) & (0.142) & (0.168)\\
\hspace{1em}0.9 & -0.010 & -0.003 & -0.039 & -0.021 & -0.045\\
\hspace{1em} & (0.195) & (0.235) & (0.259) & (0.184) & (0.181)\\
\addlinespace[.5em]
\hline
\multicolumn{6}{c}{(m, n) = (200, 10)}\\
\hspace{1em}0.1 & 0.011 & 0.005 & 0.040 & 0.021 & 0.046\\
\hspace{1em} & (0.068) & (0.080) & (0.092) & (0.061) & (0.067)\\
\hspace{1em}0.5 & 0.001 & 0.001 & 0.001 & 0.001 & 0.001\\
\hspace{1em} & (0.063) & (0.076) & (0.059) & (0.047) & (0.063)\\
\hspace{1em}0.9 & -0.010 & -0.003 & -0.040 & -0.019 & -0.045\\
\hspace{1em} & (0.067) & (0.080) & (0.091) & (0.060) & (0.068)\\
\addlinespace[.5em]
\hline
\multicolumn{6}{c}{(m, n) = (25, 25)}\\
\hspace{1em}0.1 & 0.003 & 0.000 & 0.015 & 0.011 & 0.016\\
\hspace{1em} & (0.175) & (0.222) & (0.141) & (0.122) & (0.120)\\
\hspace{1em}0.5 & -0.003 & -0.004 & 0.000 & -0.001 & -0.002\\
\hspace{1em} & (0.171) & (0.218) & (0.102) & (0.094) & (0.106)\\
\hspace{1em}0.9 & -0.009 & -0.007 & -0.017 & -0.014 & -0.018\\
\hspace{1em} & (0.177) & (0.223) & (0.138) & (0.121) & (0.120)\\
\addlinespace[.5em]
\hline
\multicolumn{6}{c}{(m, n) = (200, 25)}\\
\hspace{1em}0.1 & 0.006 & 0.004 & 0.015 & 0.012 & 0.017\\
\hspace{1em} & (0.061) & (0.075) & (0.049) & (0.041) & \vphantom{1} (0.042)\\
\hspace{1em}0.5 & 0.000 & 0.000 & 0.000 & 0.000 & \vphantom{1} 0.000\\
\hspace{1em} & (0.059) & (0.073) & (0.036) & (0.032) & (0.036)\\
\hspace{1em}0.9 & -0.006 & -0.004 & -0.015 & -0.012 & -0.017\\
\hspace{1em} & (0.061) & (0.075) & (0.049) & (0.041) & (0.042)\\
\addlinespace[.5em]
\hline
\multicolumn{6}{c}{(m, n) = (25, 200)}\\
\hspace{1em}0.1 & 0.001 & 0.002 & 0.002 & 0.002 & 0.002\\
\hspace{1em} & (0.163) & (0.211) & (0.049) & (0.049) & (0.047)\\
\hspace{1em}0.5 & 0.001 & 0.001 & 0.000 & 0.000 & 0.000\\
\hspace{1em} & (0.163) & (0.210) & (0.035) & (0.035) & (0.035)\\
\hspace{1em}0.9 & 0.000 & 0.001 & -0.002 & -0.002 & -0.002\\
\hspace{1em} & (0.163) & (0.211) & (0.049) & (0.048) & (0.046)\\
\addlinespace[.5em]
\hline
\multicolumn{6}{c}{(m, n) = (200, 200)}\\
\hspace{1em}0.1 & 0.000 & 0.000 & 0.002 & 0.002 & 0.002\\
\hspace{1em} & (0.058) & (0.073) & (0.017) & (0.017) & (0.016)\\
\hspace{1em}0.5 & 0.000 & 0.000 & 0.000 & 0.000 & 0.000\\
\hspace{1em} & (0.058) & (0.072) & (0.013) & (0.012) & (0.012)\\
\hspace{1em}0.9 & -0.001 & -0.001 & -0.002 & -0.002 & -0.002\\
\hspace{1em} & (0.058) & (0.073) & (0.017) & (0.017) & (0.017)\\
\bottomrule
\end{tabular}
\begin{tablenotes}
\item \textit{Note: } 
\item The table reports bias and standard deviation (in parentheses) of the simulations for $\beta(\tau)$ from 10,000 Monte Carlo simulations.
\end{tablenotes}
\end{threeparttable}
\end{table}

\begin{table}
\centering
\caption{\label{tab:panel_coverage}Coverage Probability of the 95\% Confidence Invervals}
\centering
\fontsize{9}{11}\selectfont
\begin{threeparttable}
\begin{tabular}[t]{llllll}
\toprule
Quantile & Pooled & BE & FE & RE-GMM & RE-OI\\
\midrule
\addlinespace[.5em]
\hline
\multicolumn{6}{c}{(m, n) = (25, 10)}\\
\hspace{1em}0.1 & 0.948 & 0.922 & 0.946 & 0.920 & 0.914\\
\hspace{1em}0.5 & 0.946 & 0.920 & 0.948 & 0.923 & 0.914\\
\hspace{1em}0.9 & 0.950 & 0.924 & 0.950 & 0.918 & 0.912\\
\addlinespace[.5em]
\hline
\multicolumn{6}{c}{(m, n) = (200, 10)}\\
\hspace{1em}0.1 & 0.942 & 0.941 & 0.927 & 0.930 & 0.874\\
\hspace{1em}0.5 & 0.947 & 0.943 & 0.952 & 0.948 & 0.943\\
\hspace{1em}0.9 & 0.946 & 0.942 & 0.932 & 0.934 & 0.877\\
\addlinespace[.5em]
\hline
\multicolumn{6}{c}{(m, n) = (25, 25)}\\
\hspace{1em}0.1 & 0.949 & 0.921 & 0.949 & 0.928 & 0.933\\
\hspace{1em}0.5 & 0.946 & 0.918 & 0.948 & 0.933 & 0.931\\
\hspace{1em}0.9 & 0.946 & 0.917 & 0.951 & 0.931 & 0.931\\
\addlinespace[.5em]
\hline
\multicolumn{6}{c}{(m, n) = (200, 25)}\\
\hspace{1em}0.1 & 0.947 & 0.945 & 0.942 & 0.939 & 0.928\\
\hspace{1em}0.5 & 0.950 & 0.945 & 0.954 & 0.948 & 0.948\\
\hspace{1em}0.9 & 0.948 & 0.946 & 0.938 & 0.938 & 0.927\\
\addlinespace[.5em]
\hline
\multicolumn{6}{c}{(m, n) = (25, 200)}\\
\hspace{1em}0.1 & 0.950 & 0.923 & 0.950 & 0.938 & 0.947\\
\hspace{1em}0.5 & 0.949 & 0.926 & 0.952 & 0.941 & 0.948\\
\hspace{1em}0.9 & 0.947 & 0.925 & 0.950 & 0.938 & 0.948\\
\addlinespace[.5em]
\hline
\multicolumn{6}{c}{(m, n) = (200, 200)}\\
\hspace{1em}0.1 & 0.948 & 0.943 & 0.947 & 0.946 & 0.950\\
\hspace{1em}0.5 & 0.947 & 0.943 & 0.951 & 0.951 & 0.950\\
\hspace{1em}0.9 & 0.948 & 0.944 & 0.949 & 0.947 & 0.949\\
\bottomrule
\end{tabular}
\begin{tablenotes}
\item \textit{Note: } 
\item Results based on 10,000 Monte Carlo simulations. The table reports the coverage probabilities of the confidence intervals of $\beta(\tau)$.
\end{tablenotes}
\end{threeparttable}
\end{table}

\begin{table}[ht]

\caption{\label{tab:hausmantest}Hausman Test}
\centering
\fontsize{9}{11}\selectfont
\begin{threeparttable}
\begin{tabular}[t]{llllll}
\toprule
\multicolumn{1}{c}{ } & \multicolumn{5}{c}{$\lambda$ } \\
\cmidrule(l{3pt}r{3pt}){2-6}
Quantile & 0.0 & 0.1 & 0.2 & 0.3 & 0.4\\
\midrule
\addlinespace[.5em]
\hline
\multicolumn{6}{c}{(m,n) = (25, 10)}\\
\hspace{1em}0.1 & 0.052 & 0.058 & 0.076 & 0.116 & 0.178\\
\hspace{1em}0.5 & 0.048 & 0.060 & 0.095 & 0.162 & 0.268\\
\hspace{1em}0.9 & 0.050 & 0.066 & 0.094 & 0.144 & 0.218\\
\addlinespace[.5em]
\hline
\multicolumn{6}{c}{(m,n) = (200, 10)}\\
\hspace{1em}0.1 & 0.062 & 0.085 & 0.277 & 0.579 & 0.844\\
\hspace{1em}0.5 & 0.050 & 0.177 & 0.532 & 0.871 & 0.987\\
\hspace{1em}0.9 & 0.058 & 0.194 & 0.483 & 0.782 & 0.949\\
\addlinespace[.5em]
\hline
\multicolumn{6}{c}{(m,n) = (25, 25)}\\
\hspace{1em}0.1 & 0.046 & 0.058 & 0.098 & 0.173 & 0.281\\
\hspace{1em}0.5 & 0.043 & 0.060 & 0.114 & 0.204 & 0.340\\
\hspace{1em}0.9 & 0.045 & 0.062 & 0.110 & 0.186 & 0.306\\
\addlinespace[.5em]
\hline
\multicolumn{6}{c}{(m,n) = (200, 25)}\\
\hspace{1em}0.1 & 0.051 & 0.166 & 0.554 & 0.897 & 0.994\\
\hspace{1em}0.5 & 0.050 & 0.232 & 0.689 & 0.963 & 0.999\\
\hspace{1em}0.9 & 0.049 & 0.230 & 0.645 & 0.938 & 0.997\\
\addlinespace[.5em]
\hline
\multicolumn{6}{c}{(m,n) = (25, 200)}\\
\hspace{1em}0.1 & 0.043 & 0.062 & 0.123 & 0.230 & 0.399\\
\hspace{1em}0.5 & 0.043 & 0.063 & 0.125 & 0.242 & 0.412\\
\hspace{1em}0.9 & 0.044 & 0.063 & 0.124 & 0.235 & 0.403\\
\addlinespace[.5em]
\hline
\multicolumn{6}{c}{(m,n) = (200, 200)}\\
\hspace{1em}0.1 & 0.054 & 0.259 & 0.771 & 0.985 & 1.000\\
\hspace{1em}0.5 & 0.054 & 0.272 & 0.788 & 0.988 & 1.000\\
\hspace{1em}0.9 & 0.052 & 0.270 & 0.785 & 0.987 & 1.000\\
\bottomrule
\end{tabular}
\begin{tablenotes}
\item \textit{Note: } 
\item The table reports rejection rates of the Hausman test. The results are based on 10,000 Monte Carlo simulations. The first column shows the empirical size, while the other columns show the power of the test.
\end{tablenotes}
\end{threeparttable}
\end{table}

This section presents simulation results for the different panel data estimators and the Hausman-type test presented in the previous subsections. These simulations focus on the estimation of $\beta(\tau)$. We consider the following data-generating process  
\begin{align}
    y_{ij} = x_{1ij} + \alpha_j + (1 + 0.1 x_{1ij})\nu_{ij}.
\end{align}
where all variables are scalars, $\nu_{ij} \sim \mathcal{N}(0,1)$, and $x_{1ij} = h_j + 0.5u_{ij}$, with $u_{ij} \sim \mathcal{N}(0,1)$ and 
\begin{equation*}
    \begin{pmatrix}
    h_j \\ \alpha_j 
    \end{pmatrix} \sim \mathcal{N} \begin{pmatrix}
    
    \begin{pmatrix}
    0 \\ 0
    \end{pmatrix}, \begin{pmatrix}
    1 & \lambda \\ \lambda & 1
    \end{pmatrix} \end{pmatrix}.
\end{equation*}
If $\lambda \neq 0$, $x_{1ij}$ is correlated with $\alpha_j$. For the simulation of the panel data estimators, we let $\lambda = 0$ so that all estimators are consistent. In contrast, in the Monte Carlo study of the Hausman test, we set $\lambda = \{0,0.1, 0.2,0.3,0.4 \}$.  The true coefficient takes the values $\beta(\tau) = 1 + 0.1 F^{-1}(\tau)$ where $F$ is the standard normal CDF. 
We consider the samples with $n = \{10,25, 200 \}$ and $m = \{ 25, 200\}$ and focus on the set of quantiles $\mathcal{T} = \{0.1, 0.5, 0.9\}$. All simulation results are based on 10,000 replications.

We compare the performance of four MD estimators, all based on the same first-stage regression but differing in their choice of instruments in the second stage. The `Pooled' estimator uses $x_{1ij}$ as the instrument,\footnote{We refer to this estimator as `Pooled' because the second stage involves a pooled OLS regression. It should not be confused with the one-step pooled quantile regression, which does not account for group structure.} the BE estimator uses $\bar x_{1j}$ as the instrument, the RE-GMM estimator combines $\dot x_{1ij}$ and $\bar x_{1j}$ using efficient GMM, and the RE-OI estimator combines the same exogenous variation using the single optimal instrument defined in equation (\ref{eq:optimalinstrument_re}). We use the estimator of  \cite{Powell1991} for $V_j(\tau)$ and the estimator of \cite{Nerlove1971} for $\sigma^2_\alpha(\tau)$. 

%Table \ref{tab:panel_bias} shows the bias and the standard deviations, and Table \ref{tab:panel_coverage} shows the coverage probability of the confidence intervals. Simulation results of the rejection probabilities of the Hausman test are in Table \ref{tab:hausmantest}.

Table \ref{tab:panel_bias} shows that the estimators perform well also when both $m$ and $n$ are small. The RE-GMM estimator performs similarly to the RE-OI estimator except with small $n$ when the RE-GMM estimator outperforms the RE-OI both in terms of bias and variance. As expected, the RE-GMM, the RE-OI, and the fixed effects (FE) estimators become indistinguishable as $n$ increases. Whereas with small $n$, there is an apparent gain in using a random effects estimator. From the standard deviations, it is possible to see the different rates of convergence of the estimators. The precision of the fixed effects and random effects estimators increases in similar magnitude when $m$ or $n$ increases. In contrast, the standard deviation of the pooled and between estimators decreases only when $m$ increases. The pooled and the between estimators have the smallest bias but, in most cases, also the largest variance.

%The standard errors in Table \ref{tab:panel_ese} are close to the standard deviations of the simulations, suggesting that our inference procedure performs well also in finite samples. With $T= 10$, the standard errors tend to be slightly undersized in the random effects estimators. The difference is small and decreases quickly as the sample size increase. 
The coverage probabilities of the 95\% confidence intervals are provided in Table \ref{tab:panel_coverage}. The confidence intervals of the pooled and the fixed effects estimator perform well in all sample sizes considered. On the other hand, the confidence bands of the random effects estimators slightly undercover the true parameter mostly when $n$ is small. In larger samples, all the coverage probabilities are close to the theoretical level.

Table \ref{tab:hausmantest} shows the rejection probabilities of the overidentification test for different values of $\lambda$. When $\lambda = 0$, the $\mathbb{H}_0$ is satisfied, so we should reject the null at a rate close to $5\%$. If $\lambda  \neq 0$, $ x_{ij}$ is correlated with $\alpha_j$, and some moment conditions used by the RE-GMM estimator are not valid. In this case, higher rejection probabilities suggest a more powerful test. 
The first column shows that the empirical sizes of the test are close to the theoretical levels in most sample sizes. Columns 2-5 show that, as expected, the power of the test is higher in large samples and increases with the correlation between $\bar x_{1j}$ and the unobserved heterogeneity $\alpha_j$. An increase in $m$ substantially improves the power of the test, while a larger number of time periods $n$ improves the results to a lesser extent. In general, the test performs better both in terms of size and power when $m$ is large, which is most often the case in empirical applications. Even if the random effects estimator converges to the fixed effects estimator as $n$ increases and the random effects estimator of $\beta$ will be consistent even if $\lambda \neq 0$, the size and power of the test do not deteriorate. This result is consistent with the findings in \cite{Ahn2014}.

%%%%%%%%%%%%%%%%%%%%%%%%%% GRIVQR %%%%%%%%%%%%%%%%%%%%%%%%%

\section{Empirical Application: The Effect of the Food Stamps Program on Birth Weight}\label{sec:food stamp}

In this section, we apply our minimum distance approach to estimate the impact of the food stamp program on the birth weight distribution using grouped data. We complement the analysis of \cite{Almond2011} by providing distributional effects. 
Food stamps constitute an important means-tested program that gives entitled households coupons they can redeem at approved retail food stores. The Food Stamp Act (FSA) was introduced in 1964 and enabled counties to start their own federally funded food stamp program (FSP). In the subsequent years, counties increasingly adopted such programs, and in 1973, an amendment to the FSA required all counties to establish a FSP by 1975. Thus, the share of counties with an FSP increased steadily from 1964 to 1974, and identification exploits the variation in the timing of the adoption across counties. \cite{Almond2011} use data from 1968 (when about 40\% of the counties had introduced the program) to 1977  (two years after the FSP was implemented everywhere) to analyze the effect of the program. 

Given the negative consequences of low birth weight, besides estimating the effect of the policy on average weight, \cite{Almond2011} estimate the effect on the probability that birth weight falls below a certain threshold. As discussed in \cite{Melly2015a}, this procedure leads to biased results unless there is no time effect or group effect or the outcome is uniformly distributed. 

 In this section, we use the subscripts $i$, $c$, and $t$ to denote the birth, the county, and the trimester of birth, respectively.\footnote{Using the same notation as in the paper, the $j$ units are county-trimester combinations, and the $i$ units index individual births within a county in a given trimester. In this section, we use three subscripts for clarity.} The variable of interest is a binary variable that is coded 1 if there was a food stamp program in place three months before birth. Hence, the treatment is assigned to county-month cells, and in around 1\% of cases, it also varies within groups. 

 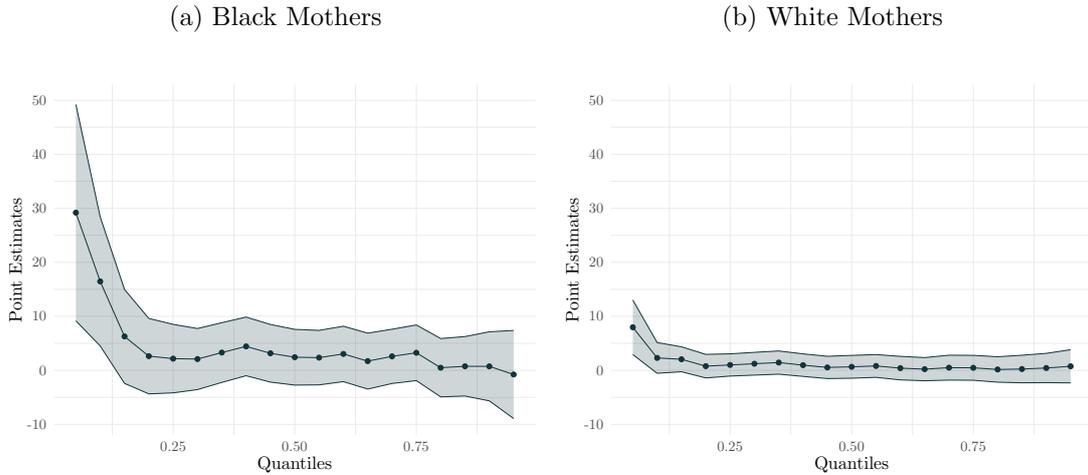
\begin{figure}
\centering
  \caption{Impact of Food Stamp Introduction on the Distribution of Birth Weight}  \label{fig:fsp results}  
\begin{subfigure}{0.45\textwidth}
 \caption{Black Mothers}
\resizebox{1.0\linewidth}{!}{%
  % Created by tikzDevice version 0.12.6 on 2024-08-01 16:27:09
% !TEX encoding = UTF-8 Unicode
\begin{tikzpicture}[x=1pt,y=1pt]
\definecolor{fillColor}{RGB}{255,255,255}
\path[use as bounding box,fill=fillColor,fill opacity=0.00] (0,0) rectangle (361.35,289.08);
\begin{scope}
\path[clip] ( 34.64, 30.69) rectangle (355.85,266.42);
\definecolor{drawColor}{gray}{0.92}

\path[draw=drawColor,line width= 0.3pt,line join=round] ( 34.64, 55.93) --
	(355.85, 55.93);

\path[draw=drawColor,line width= 0.3pt,line join=round] ( 34.64, 92.25) --
	(355.85, 92.25);

\path[draw=drawColor,line width= 0.3pt,line join=round] ( 34.64,128.58) --
	(355.85,128.58);

\path[draw=drawColor,line width= 0.3pt,line join=round] ( 34.64,164.90) --
	(355.85,164.90);

\path[draw=drawColor,line width= 0.3pt,line join=round] ( 34.64,201.22) --
	(355.85,201.22);

\path[draw=drawColor,line width= 0.3pt,line join=round] ( 34.64,237.55) --
	(355.85,237.55);

\path[draw=drawColor,line width= 0.3pt,line join=round] ( 73.58, 30.69) --
	( 73.58,266.42);

\path[draw=drawColor,line width= 0.3pt,line join=round] (154.69, 30.69) --
	(154.69,266.42);

\path[draw=drawColor,line width= 0.3pt,line join=round] (235.80, 30.69) --
	(235.80,266.42);

\path[draw=drawColor,line width= 0.3pt,line join=round] (316.92, 30.69) --
	(316.92,266.42);

\path[draw=drawColor,line width= 0.6pt,line join=round] ( 34.64, 37.77) --
	(355.85, 37.77);

\path[draw=drawColor,line width= 0.6pt,line join=round] ( 34.64, 74.09) --
	(355.85, 74.09);

\path[draw=drawColor,line width= 0.6pt,line join=round] ( 34.64,110.42) --
	(355.85,110.42);

\path[draw=drawColor,line width= 0.6pt,line join=round] ( 34.64,146.74) --
	(355.85,146.74);

\path[draw=drawColor,line width= 0.6pt,line join=round] ( 34.64,183.06) --
	(355.85,183.06);

\path[draw=drawColor,line width= 0.6pt,line join=round] ( 34.64,219.38) --
	(355.85,219.38);

\path[draw=drawColor,line width= 0.6pt,line join=round] ( 34.64,255.71) --
	(355.85,255.71);

\path[draw=drawColor,line width= 0.6pt,line join=round] (114.14, 30.69) --
	(114.14,266.42);

\path[draw=drawColor,line width= 0.6pt,line join=round] (195.25, 30.69) --
	(195.25,266.42);

\path[draw=drawColor,line width= 0.6pt,line join=round] (276.36, 30.69) --
	(276.36,266.42);
\definecolor{drawColor}{RGB}{17,51,58}

\path[draw=drawColor,line width= 0.6pt,line join=round] ( 49.25,180.11) --
	( 65.47,133.88) --
	( 81.69, 96.90) --
	( 97.91, 83.65) --
	(114.14, 81.98) --
	(130.36, 81.69) --
	(146.58, 86.06) --
	(162.80, 90.20) --
	(179.02, 85.57) --
	(195.25, 82.93) --
	(211.47, 82.65) --
	(227.69, 85.14) --
	(243.91, 80.31) --
	(260.14, 83.51) --
	(276.36, 85.92) --
	(292.58, 75.87) --
	(308.80, 76.83) --
	(325.03, 76.79) --
	(341.25, 71.33);
\definecolor{fillColor}{RGB}{17,51,58}

\path[draw=drawColor,line width= 0.4pt,line join=round,line cap=round,fill=fillColor] ( 49.25,180.11) circle (  1.67);

\path[draw=drawColor,line width= 0.4pt,line join=round,line cap=round,fill=fillColor] ( 65.47,133.88) circle (  1.67);

\path[draw=drawColor,line width= 0.4pt,line join=round,line cap=round,fill=fillColor] ( 81.69, 96.90) circle (  1.67);

\path[draw=drawColor,line width= 0.4pt,line join=round,line cap=round,fill=fillColor] ( 97.91, 83.65) circle (  1.67);

\path[draw=drawColor,line width= 0.4pt,line join=round,line cap=round,fill=fillColor] (114.14, 81.98) circle (  1.67);

\path[draw=drawColor,line width= 0.4pt,line join=round,line cap=round,fill=fillColor] (130.36, 81.69) circle (  1.67);

\path[draw=drawColor,line width= 0.4pt,line join=round,line cap=round,fill=fillColor] (146.58, 86.06) circle (  1.67);

\path[draw=drawColor,line width= 0.4pt,line join=round,line cap=round,fill=fillColor] (162.80, 90.20) circle (  1.67);

\path[draw=drawColor,line width= 0.4pt,line join=round,line cap=round,fill=fillColor] (179.02, 85.57) circle (  1.67);

\path[draw=drawColor,line width= 0.4pt,line join=round,line cap=round,fill=fillColor] (195.25, 82.93) circle (  1.67);

\path[draw=drawColor,line width= 0.4pt,line join=round,line cap=round,fill=fillColor] (211.47, 82.65) circle (  1.67);

\path[draw=drawColor,line width= 0.4pt,line join=round,line cap=round,fill=fillColor] (227.69, 85.14) circle (  1.67);

\path[draw=drawColor,line width= 0.4pt,line join=round,line cap=round,fill=fillColor] (243.91, 80.31) circle (  1.67);

\path[draw=drawColor,line width= 0.4pt,line join=round,line cap=round,fill=fillColor] (260.14, 83.51) circle (  1.67);

\path[draw=drawColor,line width= 0.4pt,line join=round,line cap=round,fill=fillColor] (276.36, 85.92) circle (  1.67);

\path[draw=drawColor,line width= 0.4pt,line join=round,line cap=round,fill=fillColor] (292.58, 75.87) circle (  1.67);

\path[draw=drawColor,line width= 0.4pt,line join=round,line cap=round,fill=fillColor] (308.80, 76.83) circle (  1.67);

\path[draw=drawColor,line width= 0.4pt,line join=round,line cap=round,fill=fillColor] (325.03, 76.79) circle (  1.67);

\path[draw=drawColor,line width= 0.4pt,line join=round,line cap=round,fill=fillColor] (341.25, 71.33) circle (  1.67);
\definecolor{fillColor}{RGB}{17,51,58}

\path[fill=fillColor,fill opacity=0.20] ( 49.25,252.82) --
	( 65.47,177.24) --
	( 81.69,128.47) --
	( 97.91,109.00) --
	(114.14,105.00) --
	(130.36,102.21) --
	(146.58,106.13) --
	(162.80,109.90) --
	(179.02,104.94) --
	(195.25,101.65) --
	(211.47,100.96) --
	(227.69,103.77) --
	(243.91, 99.10) --
	(260.14,101.74) --
	(276.36,104.61) --
	(292.58, 95.48) --
	(308.80, 96.81) --
	(325.03,100.01) --
	(341.25,100.93) --
	(341.25, 41.72) --
	(325.03, 53.58) --
	(308.80, 56.84) --
	(292.58, 56.27) --
	(276.36, 67.22) --
	(260.14, 65.27) --
	(243.91, 61.52) --
	(227.69, 66.52) --
	(211.47, 64.33) --
	(195.25, 64.20) --
	(179.02, 66.20) --
	(162.80, 70.50) --
	(146.58, 65.99) --
	(130.36, 61.17) --
	(114.14, 58.97) --
	( 97.91, 58.29) --
	( 81.69, 65.34) --
	( 65.47, 90.52) --
	( 49.25,107.41) --
	cycle;

\path[draw=drawColor,line width= 0.1pt,line join=round] ( 49.25,252.82) --
	( 65.47,177.24) --
	( 81.69,128.47) --
	( 97.91,109.00) --
	(114.14,105.00) --
	(130.36,102.21) --
	(146.58,106.13) --
	(162.80,109.90) --
	(179.02,104.94) --
	(195.25,101.65) --
	(211.47,100.96) --
	(227.69,103.77) --
	(243.91, 99.10) --
	(260.14,101.74) --
	(276.36,104.61) --
	(292.58, 95.48) --
	(308.80, 96.81) --
	(325.03,100.01) --
	(341.25,100.93);

\path[draw=drawColor,line width= 0.1pt,line join=round] (341.25, 41.72) --
	(325.03, 53.58) --
	(308.80, 56.84) --
	(292.58, 56.27) --
	(276.36, 67.22) --
	(260.14, 65.27) --
	(243.91, 61.52) --
	(227.69, 66.52) --
	(211.47, 64.33) --
	(195.25, 64.20) --
	(179.02, 66.20) --
	(162.80, 70.50) --
	(146.58, 65.99) --
	(130.36, 61.17) --
	(114.14, 58.97) --
	( 97.91, 58.29) --
	( 81.69, 65.34) --
	( 65.47, 90.52) --
	( 49.25,107.41);
\end{scope}
\begin{scope}
\path[clip] (  0.00,  0.00) rectangle (361.35,289.08);
\definecolor{drawColor}{gray}{0.30}

\node[text=drawColor,anchor=base east,inner sep=0pt, outer sep=0pt, scale=  0.88] at ( 29.69, 34.74) {-10};

\node[text=drawColor,anchor=base east,inner sep=0pt, outer sep=0pt, scale=  0.88] at ( 29.69, 71.06) {0};

\node[text=drawColor,anchor=base east,inner sep=0pt, outer sep=0pt, scale=  0.88] at ( 29.69,107.38) {10};

\node[text=drawColor,anchor=base east,inner sep=0pt, outer sep=0pt, scale=  0.88] at ( 29.69,143.71) {20};

\node[text=drawColor,anchor=base east,inner sep=0pt, outer sep=0pt, scale=  0.88] at ( 29.69,180.03) {30};

\node[text=drawColor,anchor=base east,inner sep=0pt, outer sep=0pt, scale=  0.88] at ( 29.69,216.35) {40};

\node[text=drawColor,anchor=base east,inner sep=0pt, outer sep=0pt, scale=  0.88] at ( 29.69,252.68) {50};
\end{scope}
\begin{scope}
\path[clip] (  0.00,  0.00) rectangle (361.35,289.08);
\definecolor{drawColor}{gray}{0.30}

\node[text=drawColor,anchor=base,inner sep=0pt, outer sep=0pt, scale=  0.88] at (114.14, 19.68) {0.25};

\node[text=drawColor,anchor=base,inner sep=0pt, outer sep=0pt, scale=  0.88] at (195.25, 19.68) {0.50};

\node[text=drawColor,anchor=base,inner sep=0pt, outer sep=0pt, scale=  0.88] at (276.36, 19.68) {0.75};
\end{scope}
\begin{scope}
\path[clip] (  0.00,  0.00) rectangle (361.35,289.08);
\definecolor{drawColor}{RGB}{0,0,0}

\node[text=drawColor,anchor=base,inner sep=0pt, outer sep=0pt, scale=  1.10] at (195.25,  7.64) {Quantiles};
\end{scope}
\begin{scope}
\path[clip] (  0.00,  0.00) rectangle (361.35,289.08);
\definecolor{drawColor}{RGB}{0,0,0}

\node[text=drawColor,rotate= 90.00,anchor=base,inner sep=0pt, outer sep=0pt, scale=  1.10] at ( 13.08,148.55) {Point Estimates};
\end{scope}
\begin{scope}
\path[clip] (  0.00,  0.00) rectangle (361.35,289.08);
\definecolor{drawColor}{RGB}{0,0,0}

\node[text=drawColor,anchor=base west,inner sep=0pt, outer sep=0pt, scale=  1.32] at ( 34.64,274.49) { };
\end{scope}
\end{tikzpicture}}
\end{subfigure}
\begin{subfigure}{0.45\textwidth}
    \caption{White Mothers}
\resizebox{1.0\linewidth}{!}{%
  % Created by tikzDevice version 0.12.6 on 2024-08-01 17:44:02
% !TEX encoding = UTF-8 Unicode
\begin{tikzpicture}[x=1pt,y=1pt]
\definecolor{fillColor}{RGB}{255,255,255}
\path[use as bounding box,fill=fillColor,fill opacity=0.00] (0,0) rectangle (361.35,289.08);
\begin{scope}
\path[clip] ( 34.64, 30.69) rectangle (355.85,266.42);
\definecolor{drawColor}{gray}{0.92}

\path[draw=drawColor,line width= 0.3pt,line join=round] ( 34.64, 55.93) --
	(355.85, 55.93);

\path[draw=drawColor,line width= 0.3pt,line join=round] ( 34.64, 92.25) --
	(355.85, 92.25);

\path[draw=drawColor,line width= 0.3pt,line join=round] ( 34.64,128.58) --
	(355.85,128.58);

\path[draw=drawColor,line width= 0.3pt,line join=round] ( 34.64,164.90) --
	(355.85,164.90);

\path[draw=drawColor,line width= 0.3pt,line join=round] ( 34.64,201.22) --
	(355.85,201.22);

\path[draw=drawColor,line width= 0.3pt,line join=round] ( 34.64,237.55) --
	(355.85,237.55);

\path[draw=drawColor,line width= 0.3pt,line join=round] ( 73.58, 30.69) --
	( 73.58,266.42);

\path[draw=drawColor,line width= 0.3pt,line join=round] (154.69, 30.69) --
	(154.69,266.42);

\path[draw=drawColor,line width= 0.3pt,line join=round] (235.80, 30.69) --
	(235.80,266.42);

\path[draw=drawColor,line width= 0.3pt,line join=round] (316.92, 30.69) --
	(316.92,266.42);

\path[draw=drawColor,line width= 0.6pt,line join=round] ( 34.64, 37.77) --
	(355.85, 37.77);

\path[draw=drawColor,line width= 0.6pt,line join=round] ( 34.64, 74.09) --
	(355.85, 74.09);

\path[draw=drawColor,line width= 0.6pt,line join=round] ( 34.64,110.42) --
	(355.85,110.42);

\path[draw=drawColor,line width= 0.6pt,line join=round] ( 34.64,146.74) --
	(355.85,146.74);

\path[draw=drawColor,line width= 0.6pt,line join=round] ( 34.64,183.06) --
	(355.85,183.06);

\path[draw=drawColor,line width= 0.6pt,line join=round] ( 34.64,219.38) --
	(355.85,219.38);

\path[draw=drawColor,line width= 0.6pt,line join=round] ( 34.64,255.71) --
	(355.85,255.71);

\path[draw=drawColor,line width= 0.6pt,line join=round] (114.14, 30.69) --
	(114.14,266.42);

\path[draw=drawColor,line width= 0.6pt,line join=round] (195.25, 30.69) --
	(195.25,266.42);

\path[draw=drawColor,line width= 0.6pt,line join=round] (276.36, 30.69) --
	(276.36,266.42);
\definecolor{drawColor}{RGB}{17,51,58}

\path[draw=drawColor,line width= 0.6pt,line join=round] ( 49.25,103.06) --
	( 65.47, 82.46) --
	( 81.69, 81.54) --
	( 97.91, 76.92) --
	(114.14, 77.73) --
	(130.36, 78.55) --
	(146.58, 79.34) --
	(162.80, 77.60) --
	(179.02, 76.06) --
	(195.25, 76.48) --
	(211.47, 77.05) --
	(227.69, 75.59) --
	(243.91, 74.89) --
	(260.14, 75.91) --
	(276.36, 75.81) --
	(292.58, 74.69) --
	(308.80, 74.99) --
	(325.03, 75.68) --
	(341.25, 76.82);
\definecolor{fillColor}{RGB}{17,51,58}

\path[draw=drawColor,line width= 0.4pt,line join=round,line cap=round,fill=fillColor] ( 49.25,103.06) circle (  1.67);

\path[draw=drawColor,line width= 0.4pt,line join=round,line cap=round,fill=fillColor] ( 65.47, 82.46) circle (  1.67);

\path[draw=drawColor,line width= 0.4pt,line join=round,line cap=round,fill=fillColor] ( 81.69, 81.54) circle (  1.67);

\path[draw=drawColor,line width= 0.4pt,line join=round,line cap=round,fill=fillColor] ( 97.91, 76.92) circle (  1.67);

\path[draw=drawColor,line width= 0.4pt,line join=round,line cap=round,fill=fillColor] (114.14, 77.73) circle (  1.67);

\path[draw=drawColor,line width= 0.4pt,line join=round,line cap=round,fill=fillColor] (130.36, 78.55) circle (  1.67);

\path[draw=drawColor,line width= 0.4pt,line join=round,line cap=round,fill=fillColor] (146.58, 79.34) circle (  1.67);

\path[draw=drawColor,line width= 0.4pt,line join=round,line cap=round,fill=fillColor] (162.80, 77.60) circle (  1.67);

\path[draw=drawColor,line width= 0.4pt,line join=round,line cap=round,fill=fillColor] (179.02, 76.06) circle (  1.67);

\path[draw=drawColor,line width= 0.4pt,line join=round,line cap=round,fill=fillColor] (195.25, 76.48) circle (  1.67);

\path[draw=drawColor,line width= 0.4pt,line join=round,line cap=round,fill=fillColor] (211.47, 77.05) circle (  1.67);

\path[draw=drawColor,line width= 0.4pt,line join=round,line cap=round,fill=fillColor] (227.69, 75.59) circle (  1.67);

\path[draw=drawColor,line width= 0.4pt,line join=round,line cap=round,fill=fillColor] (243.91, 74.89) circle (  1.67);

\path[draw=drawColor,line width= 0.4pt,line join=round,line cap=round,fill=fillColor] (260.14, 75.91) circle (  1.67);

\path[draw=drawColor,line width= 0.4pt,line join=round,line cap=round,fill=fillColor] (276.36, 75.81) circle (  1.67);

\path[draw=drawColor,line width= 0.4pt,line join=round,line cap=round,fill=fillColor] (292.58, 74.69) circle (  1.67);

\path[draw=drawColor,line width= 0.4pt,line join=round,line cap=round,fill=fillColor] (308.80, 74.99) circle (  1.67);

\path[draw=drawColor,line width= 0.4pt,line join=round,line cap=round,fill=fillColor] (325.03, 75.68) circle (  1.67);

\path[draw=drawColor,line width= 0.4pt,line join=round,line cap=round,fill=fillColor] (341.25, 76.82) circle (  1.67);
\definecolor{fillColor}{RGB}{17,51,58}

\path[fill=fillColor,fill opacity=0.20] ( 49.25,121.38) --
	( 65.47, 92.80) --
	( 81.69, 89.93) --
	( 97.91, 84.84) --
	(114.14, 85.20) --
	(130.36, 86.20) --
	(146.58, 87.14) --
	(162.80, 85.18) --
	(179.02, 83.55) --
	(195.25, 84.15) --
	(211.47, 84.71) --
	(227.69, 83.51) --
	(243.91, 82.68) --
	(260.14, 84.28) --
	(276.36, 84.18) --
	(292.58, 83.24) --
	(308.80, 84.25) --
	(325.03, 85.52) --
	(341.25, 87.97) --
	(341.25, 65.67) --
	(325.03, 65.84) --
	(308.80, 65.73) --
	(292.58, 66.14) --
	(276.36, 67.44) --
	(260.14, 67.55) --
	(243.91, 67.11) --
	(227.69, 67.68) --
	(211.47, 69.40) --
	(195.25, 68.81) --
	(179.02, 68.58) --
	(162.80, 70.03) --
	(146.58, 71.54) --
	(130.36, 70.90) --
	(114.14, 70.26) --
	( 97.91, 69.01) --
	( 81.69, 73.15) --
	( 65.47, 72.13) --
	( 49.25, 84.74) --
	cycle;

\path[draw=drawColor,line width= 0.1pt,line join=round] ( 49.25,121.38) --
	( 65.47, 92.80) --
	( 81.69, 89.93) --
	( 97.91, 84.84) --
	(114.14, 85.20) --
	(130.36, 86.20) --
	(146.58, 87.14) --
	(162.80, 85.18) --
	(179.02, 83.55) --
	(195.25, 84.15) --
	(211.47, 84.71) --
	(227.69, 83.51) --
	(243.91, 82.68) --
	(260.14, 84.28) --
	(276.36, 84.18) --
	(292.58, 83.24) --
	(308.80, 84.25) --
	(325.03, 85.52) --
	(341.25, 87.97);

\path[draw=drawColor,line width= 0.1pt,line join=round] (341.25, 65.67) --
	(325.03, 65.84) --
	(308.80, 65.73) --
	(292.58, 66.14) --
	(276.36, 67.44) --
	(260.14, 67.55) --
	(243.91, 67.11) --
	(227.69, 67.68) --
	(211.47, 69.40) --
	(195.25, 68.81) --
	(179.02, 68.58) --
	(162.80, 70.03) --
	(146.58, 71.54) --
	(130.36, 70.90) --
	(114.14, 70.26) --
	( 97.91, 69.01) --
	( 81.69, 73.15) --
	( 65.47, 72.13) --
	( 49.25, 84.74);
\end{scope}
\begin{scope}
\path[clip] (  0.00,  0.00) rectangle (361.35,289.08);
\definecolor{drawColor}{gray}{0.30}

\node[text=drawColor,anchor=base east,inner sep=0pt, outer sep=0pt, scale=  0.88] at ( 29.69, 34.74) {-10};

\node[text=drawColor,anchor=base east,inner sep=0pt, outer sep=0pt, scale=  0.88] at ( 29.69, 71.06) {0};

\node[text=drawColor,anchor=base east,inner sep=0pt, outer sep=0pt, scale=  0.88] at ( 29.69,107.38) {10};

\node[text=drawColor,anchor=base east,inner sep=0pt, outer sep=0pt, scale=  0.88] at ( 29.69,143.71) {20};

\node[text=drawColor,anchor=base east,inner sep=0pt, outer sep=0pt, scale=  0.88] at ( 29.69,180.03) {30};

\node[text=drawColor,anchor=base east,inner sep=0pt, outer sep=0pt, scale=  0.88] at ( 29.69,216.35) {40};

\node[text=drawColor,anchor=base east,inner sep=0pt, outer sep=0pt, scale=  0.88] at ( 29.69,252.68) {50};
\end{scope}
\begin{scope}
\path[clip] (  0.00,  0.00) rectangle (361.35,289.08);
\definecolor{drawColor}{gray}{0.30}

\node[text=drawColor,anchor=base,inner sep=0pt, outer sep=0pt, scale=  0.88] at (114.14, 19.68) {0.25};

\node[text=drawColor,anchor=base,inner sep=0pt, outer sep=0pt, scale=  0.88] at (195.25, 19.68) {0.50};

\node[text=drawColor,anchor=base,inner sep=0pt, outer sep=0pt, scale=  0.88] at (276.36, 19.68) {0.75};
\end{scope}
\begin{scope}
\path[clip] (  0.00,  0.00) rectangle (361.35,289.08);
\definecolor{drawColor}{RGB}{0,0,0}

\node[text=drawColor,anchor=base,inner sep=0pt, outer sep=0pt, scale=  1.10] at (195.25,  7.64) {Quantiles};
\end{scope}
\begin{scope}
\path[clip] (  0.00,  0.00) rectangle (361.35,289.08);
\definecolor{drawColor}{RGB}{0,0,0}

\node[text=drawColor,rotate= 90.00,anchor=base,inner sep=0pt, outer sep=0pt, scale=  1.10] at ( 13.08,148.55) {Point Estimates};
\end{scope}
\begin{scope}
\path[clip] (  0.00,  0.00) rectangle (361.35,289.08);
\definecolor{drawColor}{RGB}{0,0,0}

\node[text=drawColor,anchor=base west,inner sep=0pt, outer sep=0pt, scale=  1.32] at ( 34.64,274.49) { };
\end{scope}
\end{tikzpicture}}
\end{subfigure}
  \floatfoot{The figure shows the impact of the food stamp introduction on the conditional distribution of birth weight. The panels show point estimates and 95\% confidence bands (shaded area) computed using standard errors clustered at the county level. The panel on the left (right) shows the effects for blacks (whites). The regressions include county, time, and state-year fixed effects.}
\end{figure}

We consider the following model separately for black and white mothers:
\begin{equation}
Q(\tau , bw_{ict} | fsp_{ct}, x_{1ict}, x_{2ct} , v_{ct})  = fsp_{ct} \gamma_1(\tau) + x_{1ict} \beta(\tau) + x_{2ct} \gamma_2(\tau) + \alpha(\tau, v_{ct}),
\end{equation}
where $Q(\tau, bw_{ict} | fsp_{ct}, x_{1ict}, x_{2ct}, v_{ct}) $ is the $\tau$th conditional quantile function of the outcome given all the variables. The variable $fsp_{ct}$ indicates whether there is a food stamp program in place, $x_{1ict}$ are variables related to the individual births, such as gender, mother age, and its square as well as the legitimacy status of the birth. Group-level control variables $x_{2ct}$ include annual county-level controls (real per capita income, government transfers to individuals, medical spending, and retirement and disability payments) and 1960 county-level characteristics (county population and the shares of urban population, black population, and of farmland) interacted with a linear time trend. Further, $x_{2ct}$ also includes county, state-year, and time fixed effects. 

For the estimation, we drop groups that have less than 25 degrees of freedom.\footnote{If there are $K_1$ individual level variables we drop all groups with less than $K_1 + 1 + 25$ observations. Since some variables might vary at the individual level in some groups only, this threshold is group-specific.} The estimations are performed using a sample of 2,822,091 individual observations divided into 19,482 groups for blacks and 16,038,235 individual births divided into 80,289 groups for whites.\footnote{We have a different number of groups compared to \cite{Almond2011} due to multiple reasons. First, they give higher weights to births in groups where only 50\% of the births are coded in the natality data; thus, when they drop groups with less than 25 births, the number of births in these groups is inflated. Second, since they take the group average, they keep births with missing values for birth weight. We drop those births as we work with individual-level data.} 
Figure \ref{fig:fsp results} illustrates the results. The results for black mothers are in the left panel, while the results for white mothers are in the right panel. The effect is substantially larger among black mothers. The results suggest a positive effect of the food stamp program on the lower tail of the conditional distribution. For instance, the estimates suggest that the food stamp program is associated with an increase in birth weight by almost 30 grams for blacks at the 5th percentile of the conditional distribution. For whites, there seems to be an effect only in the left tail of the distribution, and the effects are small. For blacks, the coefficients are large in the left tail and remain positive, albeit of small magnitude, until the 75\% percentile. However, for higher quantiles, the effects are not statistically different from zero. 

To compare our MD estimator with the CLP estimator on observational data, we also perform the estimation using the CLP procedure. For this comparison, we focus on the sample of black mothers. Before implementing the CLP estimator on this data, we need to solve a few issues that affect the CLP estimator but not the MD estimator.
First, the treatment variable exhibits within-group variation in around 1\% of the groups. If there are variables that vary only within some groups, the intercept is not comparable over groups, thus creating a problem for the CLP estimator. Hence, for the comparison, we drop groups for which the variable $fsp$ is varying within the group.
A second but similar issue is due to the covariate controlling for the legitimacy status of the birth, as there are groups that do not contain all levels of the categorical variable. To solve this issue, we drop the variable from the conditioning set.
Third, for one observation (out of over 2 million), one of the group-level control variables is minimally different from the value assigned to other members of the group. This is most likely due to a coding error. In this case, we replace the values with the group average. While these features of the data do not pose any problem with our estimation approach, they can lead to substantially different results using the CLP estimation method.
\begin{figure}
    \centering
\resizebox{.8\columnwidth}{!}{%
\input{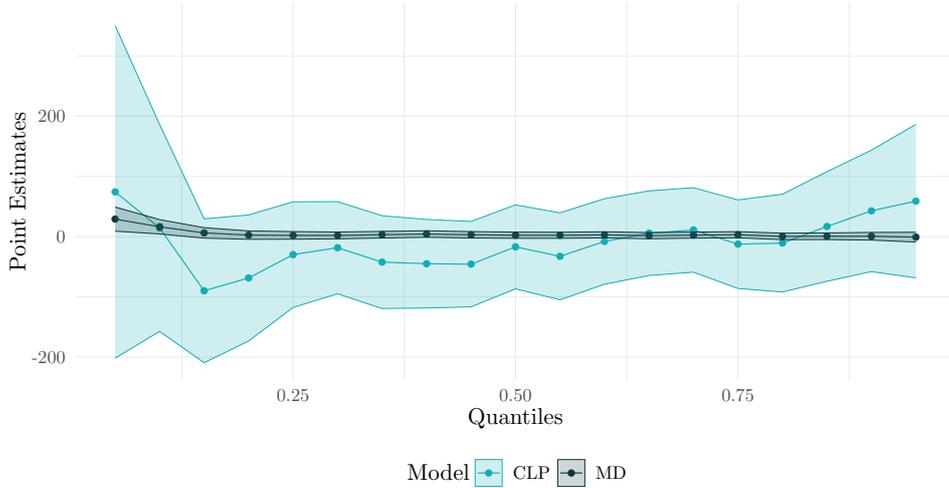}}
    \caption{ Impact of Food Stamp Introduction estimated with the MD and CLP estimators}
    \label{fig:empirical_application_comparison_clp_md}
      \floatfoot{The figure shows the impact of the food stamp introduction on the conditional distribution of birth weight in the sample of black mothers computed with the MD (in black) and the CLP (in blue) estimators. The sample consists of 2,787,509 divided into 17,141 groups. The panels show point estimates and 95\% confidence bands (shaded area) computed using standard errors clustered at the county level. For this estimation, we drop groups for which the treatment variables vary within groups and do not control for the legitimacy status of the birth.}
\end{figure}
Figure \ref{fig:empirical_application_comparison_clp_md} shows the estimation results performed with our approach compared to the CLP approach using the sample of black mothers. Given that we estimate a slightly different model on a marginally different sample, we re-estimate the effects using the MD estimator to ensure a meaningful comparison. 
The results obtained using the MD estimator remain almost identical to those in Figure \ref{fig:fsp results}. On the other hand, using the CLP estimator, we obtain substantially different results from those attained using the MD estimator, and the confidence intervals are, on average, 14 times larger, precluding any possibility of drawing informative conclusions. Further, the figure shows that the MD point estimates and confidence intervals are always inside the confidence intervals of the CLP estimator.

%We consider the results based on the third specification of . 

%\begin{equation}
%Q(\tau , bw_{jct} | fsp_{ct}, x_{1jct}, x_{2ct} , v_{ct})  = fsp_{ct} \gamma(\tau) + x_{1jct} \beta(\tau) + x_{2ct} \gamma_2(\tau) + \alpha_{c}(\tau, v_{ct}) + \alpha_t(\tau, v_{ct}) + \alpha_{sy}(\tau, v_{ct})
%\end{equation}

%real per capita income (ripc) (log of real annual county per capital income (I i look at the code they don't take the log!) 
%government transfers to individuals (tranpcafdc?) (real per capita transfers) 
%medical spending (tranpcmcare)
%cash retirement and disability payments (tranpcret)

%formula <- Formula::as.Formula(bweight ~ mom_age + I(mom_age^2)+ factor(legit) + female + fsp +tranpcret + tranpcmcare1+ tranpcafdc + ripc  + black60_time+urban60_time+farmlandpct60_time+lnpop60_time | 0 | 0 |  stcntyfips + time + fe_stateyear | group )

\section{Conclusion}\label{sec:conclusion}

To summarize, our paper makes the following key contributions:
First, we propose a new class of minimum distance estimators for quantile panel data models, applicable to both classical panel data settings, where units are observed over time, and grouped data settings, where individuals are divided into groups and treatment varies at the group level. This class of estimators provides quantile analogs of established panel data methods, including fixed effects, random effects, between, and Hausman-Taylor estimators. Additionally, it substantially improves upon the existing grouped instrumental variable estimator of \cite{Chetverikov2016} and remains computationally fast.
Second, we establish uniform asymptotic properties under a general framework accommodating arbitrary degrees of group-level heterogeneity. We also introduce adaptive inference procedures that remain valid regardless of whether group effects are zero, bounded, or vanish at arbitrary rates. The adaptive estimator of the variance of the moments leads to a GMM estimator that is uniformly efficient across unknown relative convergence rates. 
%This ensures that our estimation procedure is adaptable to various empirical applications. 
Third, we demonstrate the practical relevance of our approach in an empirical application studying the impact of the Food Stamp Program on birth weight. Our results indicate that the policy’s positive effects are concentrated almost entirely in the lower tail of the birth weight distribution, particularly among black mothers.

While these advancements mark significant progress, our approach has limitations that invite further exploration. First, although our Monte Carlo simulations show that our estimator and suggested standard errors perform well in finite samples, our asymptotic framework relies on both the number of groups and the number of observations per group diverging to infinity. This assumption is reasonable in settings with large groups, such as our empirical application, but it is often violated in classical panel data contexts. Future work should explore finite $n$ inference or consider stronger assumptions for identification in such cases. Second, our estimator accommodates parametric time trends but not unrestricted time effects. Estimating time effects requires working with all units simultaneously, making a group-by-group computation strategy infeasible. Finally, while we focus on within-group heterogeneity, many applications also require understanding between-group heterogeneity. \cite{Pons2024} develops a framework that explicitly captures both, providing a promising avenue for further research.

\begin{appendix}
\section{Proofs}\label{app: proofs asy. results}

\subsection{Proof of Lemma \ref{l:sampling error}: Sampling error}

\begin{proof}[Proof of lemma \ref{l:sampling error}]\label{p:sampling error}
Starting from the definition of the estimator we obtain 
\begin{align*}
\hat \delta (\tau) &= \left( X' Z \hat W(\tau) Z' X \right) ^{-1} X' Z \hat W(\tau) Z' \hat Y(\tau)\\
&=\left( S_{ZX}' \hat W(\tau) S_{ZX} \right) ^{-1} S_{ZX}' \hat W(\tau) \frac{1}{mn}\sum_{j = 1}^m\sum_{i = 1}^n z_{ij}\tilde x_{ij}' \hat \beta_j(\tau)\\
&=\left( S_{ZX}' \hat W(\tau) S_{ZX} \right) ^{-1} S_{ZX}' \hat W(\tau) \frac{1}{mn}\sum_{j = 1}^m\sum_{i = 1}^n z_{ij}\left(\tilde x_{ij}'\left(\hat \beta_j(\tau)-\beta_j(\tau)\right)+\tilde x_{ij}'\beta_j(\tau)\right)\\
&=\left( S_{ZX}' \hat W(\tau) S_{ZX} \right) ^{-1} S_{ZX}' \hat W(\tau) \frac{1}{mn}\sum_{j = 1}^m\sum_{i = 1}^n z_{ij}\left( \tilde x_{ij}'\left(\hat \beta_j(\tau)-\beta_j(\tau)\right)+ x_{ij}'\delta(\tau)+\alpha_j(\tau)\right)\\
&=\delta(\tau)+\left( S_{ZX}' \hat W(\tau) S_{ZX} \right) ^{-1} S_{ZX}' \hat W(\tau) \frac{1}{mn}\sum_{j = 1}^m\sum_{i = 1}^n z_{ij}\left( \tilde x_{ij}'\left(\hat \beta_j(\tau)-\beta_j(\tau)\right)+\alpha_j(\tau)\right).
\end{align*}
\end{proof}

\subsection{Proof of Theorem \ref{t:consistency}: Uniform consistency}

\subsubsection{Lemma \ref{l:uniform coefficients}}

As a preliminary step to prove the uniform consistency of our estimator, we show uniform (in $\tau$ and $j$) consistency of the group-level quantile regressions.
\begin{lemma}[\textbf{Uniform consistency of $\hat\beta_j(\tau)$}]
\label{l:uniform coefficients}
Under Assumptions \ref{a:sampling}-\ref{a:bounded density} and \ref{a:growth condition}(a), we have $$\underset{\tau\in\mathcal T} {\sup}\,\underset{1\leq j\leq m}{\max} \lVert \hat \beta_j(\tau)-\beta_j (\tau) \rVert = o_p(1).$$
\end{lemma}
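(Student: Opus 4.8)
\textbf{Proof plan for Lemma \ref{l:uniform coefficients}.}

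The plan is to obtain a uniform-in-$j$ convergence rate for $\hat\beta_j(\tau)$ by combining a pointwise (in $j$) rate with a union bound over the $m$ groups, exploiting the fact that the bad-event probabilities decay exponentially in $n$ while we only pay a $\log m$ price for the union, and Assumption \ref{a:growth condition}(a) guarantees $\log m / n \to 0$. Concretely, the model (\ref{eq:model}) and the i.i.d.-within-group sampling (Assumption \ref{a:sampling}) put each group's first-stage quantile regression squarely in the framework of \cite{Galvao2020}: for a single group with $n$ i.i.d. observations, their results (together with the Bahadur-type representation they establish, building on \cite{Volgushev2019}) give a bound of the form $\Pr\big(\sup_{\tau\in\mathcal T}\lVert \hat\beta_j(\tau)-\beta_j(\tau)\rVert > t\big) \le C_1 \exp(-C_2 n t^2)$ for $t$ in a suitable range, where the constants $C_1, C_2$ depend only on $f_{\min}$, $f_{\max}$, $\bar f'$, the covariate bound $C$, and the eigenvalue bounds in Assumption \ref{a:covariates} — hence are \emph{uniform across $j$} by Assumptions \ref{a:covariates}--\ref{a:bounded density}. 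The uniform boundedness of regressors and the uniform lower bound on the conditional density at the conditional quantile are exactly what makes these constants $j$-free.

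Given such an exponential tail, the main step is a straightforward union bound: for any $\eps>0$,
\begin{align*}
\Pr\Big(\max_{1\le j\le m}\sup_{\tau\in\mathcal T}\lVert\hat\beta_j(\tau)-\beta_j(\tau)\rVert > \eps\Big)
&\le \sum_{j=1}^m \Pr\Big(\sup_{\tau\in\mathcal T}\lVert\hat\beta_j(\tau)-\beta_j(\tau)\rVert > \eps\Big)\\
&\le m\, C_1 \exp(-C_2 n \eps^2)
= C_1\exp\big(-C_2\eps^2 n + \log m\big),
\end{align*}
which tends to $0$ because $\log m/n\to 0$ implies $-C_2\eps^2 n + \log m \to -\infty$. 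That gives $\max_j\sup_\tau\lVert\hat\beta_j(\tau)-\beta_j(\tau)\rVert = o_p(1)$, which is the claim. (In fact the same computation with $\eps$ replaced by a slowly shrinking sequence such as $\sqrt{(\log m + \log n)/n}$ would yield the sharper rate $O_p(\sqrt{\log m / n})$ that is needed later for the analysis of $\bar g^{(1)}_{mn}$, but for the present lemma the fixed-$\eps$ version suffices.)

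The part requiring the most care is verifying that the constants in the single-group deviation bound from \cite{Galvao2020} are genuinely uniform over $j$, i.e.\ that nothing in their argument depends on group-specific features beyond the quantities controlled by our Assumptions \ref{a:covariates}--\ref{a:bounded density}. This means checking: (i) the design matrix $\bE_{i|j}[\tilde x_{ij}\tilde x_{ij}']$ has eigenvalues bounded away from $0$ and $\infty$ uniformly in $j$ (Assumption \ref{a:covariates}(ii)); (ii) the Jacobian $\bE_{i|j}[f_{y|x}(Q(\tau,y_{ij}|\cdot)|\cdot)\tilde x_{ij}\tilde x_{ij}']$ is uniformly invertible, which follows from Assumptions \ref{a:covariates}(ii) and \ref{a:bounded density}; (iii) the Bahadur remainder and the covering-number / empirical-process bounds depend only on $C$, $f_{\max}$, $\bar f'$ and the dimension $K_1$, all of which are $j$-uniform; and (iv) the range of $t$ for which the exponential bound is valid does not shrink with $j$. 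Assumption \ref{a:growth condition}(a) is then exactly the rate condition that makes the union bound close. A minor additional remark: the true intercept $\beta_{0,j}(\tau)=\alpha_j(\tau)+x_{2j}'\gamma(\tau)$ varies with $j$, but since we center at $\beta_j(\tau)$ this causes no difficulty — the deviation bound is about $\hat\beta_j(\tau)-\beta_j(\tau)$ and is invariant to the location of the true parameter once the covariates are bounded.
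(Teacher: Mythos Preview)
Your high-level strategy—an exponential single-group tail bound with $j$-free constants, then a union bound over the $m$ groups, closed by $\log m/n \to 0$—is exactly the paper's approach. The one point to correct is the \emph{source} of the exponential tail: it does not come from the Bahadur representation in \cite{Galvao2020} (that representation delivers an asymptotic expansion with $O_p$ remainders and is used later, in Lemma \ref{l:asymptotic moments}, not a finite-sample exponential inequality). Instead, the paper derives the tail directly by extending the convexity argument of \cite{Galvao2015} (their Lemma 1) with the uniform-in-$\tau$ equicontinuity of \cite{Angrist2006a}: convexity of $\beta\mapsto \mathbb Q_{nj}(\beta,\tau)$ reduces $\{\lVert\hat\beta_j(\tau)-\beta_j(\tau)\rVert>\zeta\}$ to a uniform deviation of $\mathbb Q_{nj}-Q_j$ on the boundary of a $\zeta$-ball; Knight's identity together with Assumptions \ref{a:conditional distribution}--\ref{a:bounded density} yields a uniform lower bound $\epsilon_\zeta>0$ on the population gap; Lipschitz equicontinuity plus a finite cover of the ball reduces to finitely many pointwise deviations; and Hoeffding's inequality (each summand bounded by $2C\zeta$) gives the $\exp(-c n)$ tail with constants depending only on $C$, $f_{\min}$, $f_{\max}$, $\bar f'$. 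Your checklist (i)--(iv) of uniform-in-$j$ ingredients is precisely what makes this argument go through, so the structure of your plan is right—just replace the Bahadur citation with the direct convexity + Hoeffding route.
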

\begin{proof}[Proof of Lemma \ref{l:uniform coefficients}]\label{p:uniform coefficients}
\cite{Angrist2006a} show uniform consistency of the quantile regression estimator in $\tau$ but not in $j$ (see their Theorem 3) while \cite{Galvao2015} show uniform consistency in $j$ but not in $\tau$ (see their Lemma 1). We show uniformity in both dimensions by following the steps of the proof in \cite{Galvao2015} and extending it. 

We define $\mathbb Q_{nj}(\beta, \tau)=\frac{1}{n}\sum_{i = 1}^n \rho_\tau(y_{ij}-\tilde x_{ij}'\beta) - \rho_\tau(y_{ij}-\tilde x_{ij}'\beta_j(\tau))$ and $Q_{j}(\beta,\tau)=\bE_{i|j}[ \rho_\tau(y_{ij}-\tilde x_{ij}'\beta) - \rho_\tau(y_{ij}-\tilde x_{ij}'\beta_j(\tau))]$. As shown in \cite{Angrist2006a}, the empirical processes $(\beta, \tau)\mapsto \mathbb Q_{nj}(\beta,\tau)$ for all groups $j$ are stochastically equicontinuous because $\left|\mathbb Q_{nj}(\beta',\tau')-\mathbb Q_{nj}(\beta'',\tau'') \right|\leq C_{1}\cdot |\tau'-\tau''|+C_{2}\cdot \lVert\beta'-\beta''\rVert$ where $C_{1}=2\cdot C\cdot \underset{\beta \in \mathcal{B}}{\sup}\lVert \beta \rVert$  for any compact set $\mathcal{B}$ and $C_{2}=2\cdot C$. The constant $C$ is defined in Assumption \ref{a:covariates}. Note that $C_{1}$ and $C_{2}$ are neither functions of $j$ nor $\tau$. 

Fix any $\zeta>0$. Let $B_j(\zeta,\tau)=\{\beta:\lVert\beta-\beta_j(\tau)\rVert \leq \zeta\}$, the ball with center $\beta_j(\tau)$ and radius $\zeta$. For each $\beta\notin B_j(\zeta,\tau)$, define $\tilde \beta=r_j\beta+ (1-r_j)\beta_j(\tau)$ where $r_j=\frac{\zeta}{\lVert \beta-\beta_j(\tau)\rVert}$. So $\tilde \beta\in \partial B_j(\zeta,\tau)=\{\beta:\lVert\beta-\beta_j(\tau)\rVert=\zeta\}$, the boundary of $B_j(\zeta,\tau)$. Since $\mathbb Q_{nj}(\beta,\tau)$ is convex in $\beta$ for all $\tau$, and $\mathbb Q_{nj}(\beta_j(\tau),\tau)=0$, we have
\begin{align} \label{eq:ineq_convexity}
r_j \mathbb Q _{nj}(\beta,\tau) \geq\mathbb Q_{nj}(\tilde \beta,\tau)=Q_{j}(\tilde\beta,\tau)+\mathbb Q_{nj}(\tilde\beta,\tau)-Q_{j}(\tilde\beta,\tau)>\epsilon_\zeta+\mathbb Q_{nj}(\tilde\beta,\tau)-Q_j(\tilde\beta,\tau)
\end{align}
uniformly in $j$ and $\tau$, where 
\begin{equation*}
\epsilon_\zeta=\underset{\tau\in\mathcal T}{\inf}\;\underset{1\leq j\leq m}{\inf}\;\underset{\lVert\beta-\beta_j(\tau)\rVert=\zeta}{\inf} \bE_{i|j}\left[\int_0^{\tilde x_{ij}'(\beta-\beta_j(\tau))}\left(1(y_{ij}-\tilde x_{ij}'\beta_j(\tau)\leq s)-1(y_{ij}-\tilde x_{ij}'\beta_j(\tau)\leq 0)\right)ds\right]
\end{equation*}
by the identity of \cite{Knight1998} and $\epsilon_\zeta>0$ by Assumptions \ref{a:conditional distribution} and \ref{a:bounded density}.
Thus, we have that
\begin{align*}
\left\{\underset{\tau \in \mathcal{T}}{\sup}\,\underset{1\leq j\leq m}{\max} \lVert\hat\beta_j(\tau)-\beta_j(\tau)\rVert>\zeta\right\} &\overset{(a)}{\subseteq}\{\exists \tau_j\in\mathcal T,\exists\beta_j\notin B_j(\zeta,\tau_j) :\mathbb Q_{nj}(\beta_j,\tau_j)\leq 0\}\\
&\overset{(b)}{\subseteq} \cup_{j=1}^m \left\{\underset{\tau\in\mathcal T}{\sup}\,\underset{\beta_j\in B_j(\zeta,\tau_j)}{\sup} |\mathbb Q_{nj}(\beta_j,\tau_j)-Q_j(\beta_j,\tau_j)| \geq\epsilon_\zeta\right\}.
\end{align*}
Relation (a) holds because, by definition, $\hat\beta_j(\tau)$ minimizes $\mathbb Q_{nj}(\beta,\tau)$, and $\mathbb Q_{nj}(\beta_j(\tau),\tau)=0$. Relation (b) holds by the rightmost inequality of line (\ref{eq:ineq_convexity}). Then, it follows that
\begin{align*}
\Pr\left\{\underset{\tau\in\mathcal T}{\sup}\, \underset{1\leq j\leq m}{\max} \lVert \hat \beta _j(\tau) -\beta_j(\tau)\rVert > \zeta\right\}
&\leq \Pr \left\{\cup_{j=1}^m\left\{\underset{\tau\in\mathcal T}{\sup}\, \underset{\beta_j\in B_j(\zeta,\tau)}{\sup} |\mathbb Q_{nj}(\beta_j,\tau)-Q_j(\beta_j,\tau)| \geq \epsilon_\zeta \right\} \right\}\\
&\leq \sum_{j = 1}^m \Pr\left\{ \underset{\tau \in \mathcal T}{\sup} \,\underset{\beta_j\in B_j(\zeta,\tau)}{\sup} |\mathbb Q_{nj}(\beta_j,\tau)-Q_j(\beta_j,\tau)| \geq\epsilon_\zeta\right\}\\
&\leq m \underset{1\leq j\leq m}{\max} \Pr\left\{ \underset{\tau\in\mathcal T}{\sup}\,\underset{\beta_j\in B_j(\zeta,\tau)}{\sup} |\mathbb Q_{nj}(\beta_j,\tau)-Q_j(\beta_j,\tau) |\geq \epsilon_\zeta \right\}
\end{align*}
Therefore, if we can show that 
\begin{equation*}
\underset{1\leq j\leq m}{\max} \Pr\left\{ \underset{\tau\in\mathcal T}{\sup}\,\underset{\beta_j\in B_j(\zeta,\tau)}{\sup} |\mathbb Q_{nj}(\beta_j,\tau)-Q_j(\beta_j,\tau) |\geq \epsilon_\zeta \right\}= o\left(\frac{1}{m}\right)\end{equation*}
the proof of the lemma will be completed.

Without loss of generality, we assume $\beta_j(\tau)=0$ for all $j$ and $\tau\in\mathcal T$. Then the balls $B_j(\zeta,\tau)$ for all $j$ and $\tau\in\mathcal T$ are identical and we denote them by $B(\delta)$. Because the closed ball $B(\zeta)$ is compact, there exist $K$ balls with center $\beta^k$, $k=1,...,K$, and radius $\frac{\epsilon_\zeta}{3C_2}$ such that the collection of them covers $B(\zeta)$. Since $\epsilon_\zeta>0$, we can find a finite $K$ that satisfies this condition and is independent of $j$ and $\tau$. Therefore, for any $\beta\in B(\delta)$, there is some $k\in \{1,...,K\}$ such that
\begin{align*}
|\mathbb Q_{nj}(\beta,\tau)-Q_j(\beta,\tau)|-|\mathbb Q_{nj}(\beta^k,\tau)-Q_j(\beta^k,\tau)|&\leq |\mathbb Q_{nj}(\beta,\tau)-Q_j(\beta,\tau)-\mathbb Q_{nj}(\beta^k,\tau)+Q_j(\beta^k,\tau)|\\
&\leq|\mathbb Q_{nj}(\beta,\tau)-\mathbb Q_{nj}(\beta^k,\tau)|+|Q_j(\beta,\tau)-Q_j(\beta^k,\tau)|\\
&\leq C_2 \frac{\epsilon_\zeta}{3C_2}+C_2 \frac{\epsilon_\zeta}{3C_2}=\frac{2\epsilon_\zeta}{3}
\end{align*}
uniformly in $j$ and $\tau\in\mathcal T$. Note that the third line is justified by the stochastic equicontinuity of $\mathbb Q_{nj}(\beta,\tau)$.

It then follows that, $$\underset{\tau\in\mathcal T}{\sup}\, \underset{\beta\in B(\zeta)}{\sup} |\mathbb Q_{nj}(\beta,\tau)-Q_j(\beta,\tau)|\leq \underset{\tau\in\mathcal T}{\sup}\,\underset{1\leq k\leq K}{\max}|\mathbb Q_{nj}(\beta^k,\tau)-Q_j(\beta^k,\tau)| +\frac{2\epsilon_\zeta}{3},$$
and 
\begin{align*}
\Pr\left\{\underset{\tau\in\mathcal T}{\sup}\,\underset{\beta\in B(\delta)}{\sup} |\mathbb Q_{nj}(\beta,\tau)-Q_j(\beta,\tau)>\epsilon_\zeta\right\}&\leq\Pr\left\{\underset{\tau\in\mathcal T}{\sup} \,\underset{1 \leq k \leq K} {\max} |\mathbb Q_{nj}(\beta^k,\tau)-Q_j(\beta^k,\tau)|+\frac{2\epsilon_\zeta}{3} > \epsilon_\zeta \right\}\\
&=\Pr\left\{\underset{\tau\in\mathcal T}{\sup} \,\underset{1 \leq k \leq K} {\max} |\mathbb Q_{nj}(\beta^k,\tau)-Q_j(\beta^k,\tau)|>\frac{\epsilon_\zeta}{3}\right\}\\
&\leq \underset{\tau\in\mathcal T}{\sup} \sum_{k=1}^K \Pr \left\{|\mathbb Q_{nj}(\beta^k,\tau)-Q_j(\beta^k,\tau)|>\frac{\epsilon_\zeta}{3} \right\}.
\end{align*}
For each $\tau\in\mathcal T$, $\mathbb Q_{nj}(\beta^k,\tau)$ is the sample mean of $n$ i.i.d. terms bounded in absolute values by $2\cdot C\cdot \zeta$. By Hoeffding’s inequality, it follows that 
\begin{align*}\sum_{k=1}^K \Pr \left\{|\mathbb Q_{nj}(\beta^k,\tau)-Q_j(\beta^k,\tau)|>\frac{\epsilon_\zeta}{3}\right\} &\leq 2K\exp\left\{-\frac{2n\epsilon_\zeta^2}{3^22^2C^2\zeta^2} \right\}\\
& =2K\exp\left\{-\frac{n\epsilon_\zeta^2}{18C^2\zeta^2} \right\}\\
&=O(\exp(-n)).\end{align*}
Since $\frac{\log m}{n}\rightarrow0$ by Assumption \ref{a:growth condition}(a), it follows that $O_p(\exp(-n))=o_p(1/m)$.
Note that the bound $\epsilon_\zeta/3>0$ is uniform in $\tau$, and $K$ is finite. By stochastic equicontinuity of $\mathbb Q_{nj}(\beta,\tau)$, as $n\rightarrow\infty$, uniformly in $\tau\in\mathcal T$,
\begin{equation*}
    \mathbb Q_{nj}(\beta^k,\tau)=Q_j(\beta^k,\tau)+o_p(1).
\end{equation*}
It follows that 
\begin{equation*}
    \underset{\tau\in\mathcal T}{\sup} \sum_{k=1}^K \Pr \left\{|\mathbb Q_{nj}(\beta^k,\tau)-Q_j(\beta^k,\tau)|>\frac{\epsilon_\zeta}{3} \right\}=o_p\left(\frac{1}{m}\right).
\end{equation*}
\end{proof}

\subsubsection{Lemma \ref{l:consistency of G}}

\begin{lemma}[\textbf{Uniform consistency of $\hat G(\tau)$ with a full-rank weighting matrix}]\label{l:consistency of G}
Assumptions \ref{a:sampling}, \ref{a:covariates}, \ref{a:instruments}, and \ref{a:weighting} hold. Then,
\begin{equation}
\underset{\tau\in\mathcal{T}}{\sup}\left\lVert \left(S_{ZX}'\hat W(\tau) S_{ZX}\right)^{-1}S_{ZX}'\hat W(\tau)-\left(\Sigma_{ZX}'W(\tau) \Sigma_{ZX}'\right)^{-1}\Sigma_{ZX}'W(\tau)\right\rVert=o_p(1)
\end{equation}
and $\left(\Sigma_{ZX}'W(\tau) \Sigma_{ZX}'\right)^{-1}\Sigma_{ZX}'W(\tau)$ is uniformly bounded and continuous.
\end{lemma}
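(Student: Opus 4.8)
\textbf{Proof proposal for Lemma \ref{l:consistency of G}.}

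The plan is to reduce the statement to three ingredients: (i) a law of large numbers for $S_{ZX}$, (ii) the assumed convergence of the weighting matrix, and (iii) continuity of the matrix map $(A,W)\mapsto (A'WA)^{-1}A'W$ on a neighborhood of $(\Sigma_{ZX},W(\tau))$, made uniform in $\tau$. First I would show $S_{ZX}\xrightarrow{p}\Sigma_{ZX}$. Write $S_{ZX}=\frac{1}{m}\sum_{j=1}^m\left(\frac1n\sum_{i=1}^n z_{ij}x_{ij}'\right)$. By Assumption \ref{a:sampling}(i) the group-level averages are independent across $j$, and by Assumptions \ref{a:covariates}(i) and \ref{a:instruments}(i) each entry of $z_{ij}x_{ij}'$ is bounded by $C^2$ a.s., so the group averages have uniformly bounded variance; a Markov/Chebyshev argument together with Assumption \ref{a:instruments}(iv) (which gives $m^{-1}\sum_j \bE_{i|j}[z_{ij}x_{ij}']\to\Sigma_{ZX}$) yields $S_{ZX}=\Sigma_{ZX}+o_p(1)$. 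Note this step has no $\tau$-dependence, so it is trivially uniform in $\tau$.

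Next, Assumption \ref{a:weighting} gives $\sup_{\tau\in\mathcal T}\|\hat W(\tau)-W(\tau)\|=o_p(1)$ with $W(\tau)$ strictly positive definite and Lipschitz in $\tau$ on the compact set $\mathcal T$; hence the eigenvalues of $W(\tau)$ are bounded away from $0$ and $\infty$ uniformly in $\tau$. Combined with Assumption \ref{a:instruments}(iv), which bounds the singular values of $\Sigma_{ZX}$ from above and below, the matrix $\Sigma_{ZX}'W(\tau)\Sigma_{ZX}$ is invertible with operator norm of its inverse bounded uniformly in $\tau$; its Lipschitz dependence on $\tau$ (inherited from $W(\tau)$) then shows $G(\tau):=(\Sigma_{ZX}'W(\tau)\Sigma_{ZX})^{-1}\Sigma_{ZX}'W(\tau)$ is uniformly bounded and (uniformly) continuous, which is the last claim of the lemma.

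For the convergence claim, I would use the standard identity that matrix inversion is Lipschitz on any set where the smallest singular value is bounded below: if $\|A_{mn}-\Sigma_{ZX}\|=o_p(1)$ and $\sup_\tau\|\hat W(\tau)-W(\tau)\|=o_p(1)$, then $\sup_\tau\|A_{mn}'\hat W(\tau)A_{mn}-\Sigma_{ZX}'W(\tau)\Sigma_{ZX}\|=o_p(1)$ by the boundedness of all factors (triangle inequality plus submultiplicativity of the operator norm). Since $\Sigma_{ZX}'W(\tau)\Sigma_{ZX}$ has inverse uniformly bounded in $\tau$, on the event (of probability $\to 1$) where the perturbation is small enough, $A_{mn}'\hat W(\tau)A_{mn}$ is invertible for all $\tau$ and $\sup_\tau\|(A_{mn}'\hat W(\tau)A_{mn})^{-1}-(\Sigma_{ZX}'W(\tau)\Sigma_{ZX})^{-1}\|=o_p(1)$, using the resolvent identity $B^{-1}-D^{-1}=B^{-1}(D-B)D^{-1}$. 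Multiplying by $A_{mn}'\hat W(\tau)$ and again invoking uniform boundedness and the triangle inequality gives $\sup_\tau\|\hat G(\tau)-G(\tau)\|=o_p(1)$, with $A_{mn}=S_{ZX}$.

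I expect the only mild subtlety — not really an obstacle — to be bookkeeping the uniformity in $\tau$: every perturbation bound must be taken with the $\sup_\tau$ already in place, which is fine because the $S_{ZX}$ term carries no $\tau$ and the $\hat W$ term converges uniformly by hypothesis, while all the ``constants'' (norm bounds on $\Sigma_{ZX}$, on $(\Sigma_{ZX}'W(\tau)\Sigma_{ZX})^{-1}$, on $W(\tau)$) are uniform in $\tau$ by Assumptions \ref{a:instruments}(iv) and \ref{a:weighting}. No high-dimensional or empirical-process machinery is needed here; the genuine probabilistic content is confined to the elementary LLN for $S_{ZX}$.
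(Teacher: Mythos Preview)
Your proposal is correct and follows essentially the same approach as the paper: establish $S_{ZX}\to_p\Sigma_{ZX}$ via a Chebyshev argument using boundedness and independence, invoke Assumption~\ref{a:weighting} for the weighting matrix, and conclude by continuity of the map $(A,W)\mapsto(A'WA)^{-1}A'W$. Your version is in fact more explicit than the paper's about the uniformity-in-$\tau$ bookkeeping and the resolvent-identity perturbation step, whereas the paper simply notes that $\Sigma_{ZX}'W(\tau)\Sigma_{ZX}$ is uniformly continuous and invertible and states that the result follows.
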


\begin{proof}[Proof of Lemma \ref{l:consistency of G}]
First, it follows from Assumptions \ref{a:sampling}(ii), \ref{a:covariates}(i) and \ref{a:instruments}(i) that $\Var\left(\frac{1}{n}\sum_{i = 1}^n z_{ij}x_{ij}'\right)=o_p\left(\frac{1}{n}\right)$ and $\bE\left[\frac{1}{n}\sum_{i = 1}^n z_{ij}x_{ij}'\right]=\bE[z_{ij}x_{ij}']$. Hence, by Assumption \ref{a:sampling}(i), $\Var\left(\frac{1}{m}\sum_{j = 1}^m\frac{1}{n}\sum_{i = 1}^n x_{ij}z_{ij}'\right)=o_p\left(\frac{1}{mn}\right)$. By Chebyshev’s inequality,
$$\frac{1}{m}\sum_{j = 1}^m\left(\frac{1}{n}\sum_{i = 1}^n z_{ij}x_{ij}'-\bE[z_{ij}x_{ij}']\right)\underset{p}{\rightarrow}0.$$
In addition, by Assumption \ref{a:instruments}(iii), $m^{-1}\sum_{j = 1}^m\bE[z_{ij}x_{ij}']\rightarrow\Sigma_{ZX}$. It follows that
\begin{equation*}\label{eq:sigmaXZ}
   S_{ZX}\underset{p}{\rightarrow}\Sigma_{ZX}.
\end{equation*}

By assumption \ref{a:weighting}, uniformly in $\tau \in \mathcal{T}$, $\hat W(\tau)\underset{p}{\rightarrow}W(\tau)$ where $W(\tau)$ is uniformly continuous and strictly positive definite. Since $\Sigma_{ZX}$ is bounded and has full column rank, it implies that $\Sigma_{ZX}'W(\tau)\Sigma_{ZX}$ is also uniformly continuous and invertible. The result of the lemma follows.
\end{proof}

%%%%%%%%%%%%%%%%%%%%%%%%%%%%%%%%
\subsubsection{Theorem \ref{t:consistency}}

\begin{proof}[Proof of Theorem \ref{t:consistency}]\label{p:consistency}

By Lemma \ref{l:sampling error},
\begin{equation*}
\hat \delta (\tau) - \delta(\tau) = \left( S_{ZX}' \hat W(\tau) S_{ZX} \right) ^{-1} S_{ZX}' \hat W(\tau) \frac{1}{mn}\sum_{j = 1}^m\sum_{i = 1}^n z_{ij}\left( \tilde x_{ij}'\left(\hat \beta_j(\tau)-\beta_j(\tau)\right)+\alpha_j(\tau)\right).
\end{equation*}
By Lemma \ref{l:consistency of G}, the first factor converges uniformly to $G(\tau)$:
\begin{equation}\label{eq:consistency1}
\left(S_{ZX}'\hat W(\tau) S_{ZX}\right)^{-1}S_{ZX}'\hat W(\tau) \underset{p}{\rightarrow}\left(\Sigma_{ZX}'W(\tau) \Sigma_{ZX}'\right)^{-1}\Sigma_{ZX}'W(\tau)=G(\tau).
\end{equation}
By lemma \ref{l:uniform coefficients}, $\hat\beta_j(\tau)$ is consistent for $\beta_j(\tau)$ uniformly in $j$ and $\tau$. Together with the boundedness of $x_{ij}$ in Assumption \ref{a:covariates}(i) and of $z_{ij}$ in Assumption \ref{a:instruments}(i), it follows that
\begin{equation}\label{eq:consistency2}
\underset{\tau\in\mathcal T}{\sup} \ \frac{1}{mn} \sum_{j = 1}^m\sum_{i = 1}^n z_{ij} \tilde x_{ij}'(\hat \beta_j(\tau)-\beta_j(\tau))\underset{p}{\rightarrow}0 .
\end{equation}
By Assumption \ref{a:instruments}(ii), $\bE[z_{ij}\alpha_j(\tau)]=0$ uniformly in $\tau$. By Assumption \ref{a:instruments}, $\Var(z_{ij}\alpha_j(\tau))$ is uniformly bounded. In addition, $z_{ij}$ is bounded and $\alpha_j(\tau)$ is uniformly continuous in $\tau$. Hence,
\begin{equation}\label{eq:consistency3}
\underset{\tau\in\mathcal T}{\sup} \ \frac{1}{mn} \sum_{j = 1}^m\sum_{i = 1}^n z_{ij} \alpha_j(\tau)\underset{p}{\rightarrow}0 .
\end{equation}
The result of the Theorem follows from equations (\ref{eq:consistency1}), (\ref{eq:consistency2}), and (\ref{eq:consistency3}).
\end{proof}

\subsection{Proof of Lemma \ref{l:asymptotic moments} - Asymptotic distribution of sample moments}
\begin{proof}[Proof of Lemma \ref{l:asymptotic moments}]
\textbf{Part (i)} 

Lemma 3 in \cite{Galvao2020} provides the uniform Bahadur representation for the group-level quantile regression coefficient under our assumptions:
\begin{equation}
\hat\beta_j(\tau)-\beta_j(\tau)=\frac{1}{n}\sum_{i = 1}^n\phi_{j,\tau}(\tilde x_{ij},y_{ij})+R_{nj}^{(1)}(\tau)+R_{nj}^{(2)}(\tau),
\end{equation}
where
\begin{equation}
\phi_{j,\tau}(\tilde x_{ij},y_{ij})=-B_{j,\tau}^{-1} \tilde x_{ij}(1(y_{ij}\leq \tilde x_{ij}\beta_j(\tau))-\tau),
\end{equation}
with $B_{j,\tau}=\bE_{i|j}[f_{y|x}(Q_{y|x, \nu_j}(\tau|\tilde x_{ij} )|\tilde x_{ij})\tilde x_{ij} \tilde x_{ij}']$ and 
\begin{equation}\label{r2}\underset{j}{\sup}\;\underset{\tau\in \mathcal{T}}{\sup}\left\lVert R_{nj}^{(2)}(\tau)\right\lVert=O_p\left(\frac{\log n}{n}\right),\end{equation}
\begin{equation}\label{r1}\underset{j}{\sup}\;\underset{\tau\in \mathcal{T}}{\sup}\left\lVert \bE_{i|j}\left[R_{nj}^{(1)}(\tau)\right]\right\lVert=O\left(\frac{\log n}{n}\right),\end{equation}
\begin{equation} \label{var_r1} \underset{j}{\sup}\;\underset{\tau\in \mathcal{T}}{\sup}\left\lVert \bE_{i|j} \left[\left(R_{nj}^{(1)}(\tau)-\bE_{i|j}[R_{nj}^{(1)}(\tau)]\right)\left(R_{nj}^{(1)}(\tau)-\bE_{i|j}[R_{nj}^{(1)}(\tau)]\right)' \right]\right\lVert =O\left(\left(\frac{\log n}{n}\right)^{3/2}\right).\end{equation}
It follows that 
\begin{align}\label{score}
\frac{1}{mn} \sum_{j = 1}^m\sum_{i = 1}^n z_{ij}\tilde x_{ij}' \left(\hat \beta_j(\tau)-\beta_j(\tau)\right)=& \frac{1}{m}\sum_{j = 1}^m\left(\frac{1}{n}\sum_{i = 1}^n z_{ij}\tilde x_{ij}'\right)\left(\frac{1}{n}\sum_{i = 1}^n\phi_{j,\tau}(\tilde x_{ij},y_{ij})\right) \\
\label{score_r1}&+ \frac{1}{m}\sum_{j = 1}^m\left(\frac{1}{n}\sum_{i = 1}^n z_{ij}\tilde x_{ij}'\right)R_{nj}^{(1)}(\tau) \\
&\label{score_r2}+\frac{1}{m}\sum_{j = 1}^m\left(\frac{1}{n}\sum_{i = 1}^n z_{ij}\tilde x_{ij}'\right)R_{nj}^{(2)}(\tau).
\end{align}

Consider first the third term (\ref{score_r2}). By assumptions \ref{a:covariates}(i) and \ref{a:instruments}(i), $x_{ij}$ and $z_{ij}$ are bounded by $C$ such that the sample mean of their product is also bounded. Therefore, (\ref{r2}) implies that
\begin{equation}
\label{r2_sup}
\underset{\tau\in\mathcal T}{\sup} \ \frac{1}{m}\sum_{j = 1}^m\left(\frac{1}{n}\sum_{i = 1}^n z_{ij}\tilde x_{ij}'\right)R_{nj}^{(2)}(\tau)=O_p\left(\frac{\log n}{n} \right).
\end{equation} 
Consider now the second term (\ref{score_r1}). Since $\Var\left(R_{nj}^{(1)}(\tau)\right)=o\left(\frac{1}{n}\right)$ by (\ref{var_r1}),  $x_{ij}$ and $z_{ij}$ are bounded by assumptions \ref{a:covariates}(i) and \ref{a:instruments}(i), and observations are independent across groups, it follows that $\Var\left(\frac{1}{m}\sum_{j = 1}^m\left(\frac{1}{n}\sum_{i = 1}^n z_{ij}\tilde x_{ij}'\right)R_{nj}^{(1)}(\tau)\right)=o_p\left(\frac{1}{mn}\right)$. In addition, by (\ref{r1}), $\underset{j}{\sup}\ \underset{\tau\in \mathcal T} {\sup} \ \bE_{i|j} \left [ R_{nj}^{(1)}(\tau) \right ]=O\left(\frac{\log n}{n}\right)$ such that $\underset{\tau\in\mathcal T}{\sup}\  \bE_{i|j} \left[\frac{1}{m}\sum_{j = 1}^m\left(\frac{1}{n}\sum_{i=1}^n z_{ij}\tilde x_{ij}'\right)R_{nj}^{(1)}(\tau)\right]=O\left(\frac{\log n}{n}\right)$. Putting this together, by the Chebyshev inequality and under Assumption \ref{a:growth condition}(c),
\begin{equation}
\label{r1_sup}
\underset{\tau\in\mathcal T}{\sup} \frac{1}{m}\sum_{j = 1}^m\left(\frac{1}{n}\sum_{i = 1}^n z_{ij}\tilde x_{ij}'\right)R_{nj}^{(1)}(\tau) = o_p\left(\frac{1}{\sqrt{mn}}\right).
\end{equation} 
It follows that both remainder terms are $o_p\left(\frac{1}{\sqrt{mn}}\right)$ uniformly over $\tau$. 

Let $\Sigma_{ZXj}=\bE_{i|j}[z_{ij}\tilde x_{ij}']$ and consider now the term (\ref{score}):
\begin{align}
 \frac{1}{m}\sum_{j = 1}^m \left(\frac{1}{n}\sum_{i=1}^n z_{ij}\tilde x_{ij}'\right) &\left(\frac{1}{n}\sum_{i = 1}^n\phi_{j,\tau}(\tilde x_{ij},y_{ij})\right) \nonumber \\ =& \frac{1}{m}\sum_{j = 1}^m\left(\frac{1}{n}\sum_{i = 1}^n z_{ij}\tilde x_{ij}'-\Sigma_{ZXj}\right) \left(\frac{1}{n}\sum_{i = 1}^n\phi_{j,\tau}(\tilde x_{ij},y_{ij})\right) \nonumber \\ & +\frac{1}{m}\sum_{j = 1}^m\Sigma_{ZXj} \left(\frac{1}{n}\sum_{i = 1}^n\phi_{j,\tau}(\tilde x_{ij},y_{ij})\right).
\end{align}
By the boundedness of $z_{ij}$ and $x_{ij}$ and the independence of the observations over time, it follows that $\left\lVert \frac{1}{n}\sum_{i=1}^n z_{ij}\tilde x_{ij}'-\Sigma_{ZXj}\right\rVert = o(1)$ uniformly in $j$. In addition, $\Var \left(\frac{1}{n}\sum_{i = 1}^n\phi_{j,\tau}(\tilde x_{ij},y_{ij})\right)=O\left(\frac{1}{n}\right)$. Hence,
\begin{equation*}
\Var\left(\frac{1}{m}\sum_{j = 1}^m\left(\frac{1}{n}\sum_{i = 1}^n z_{ij}\tilde x_{ij}'-\Sigma_{ZXj}\right) \left(\frac{1}{n}\sum_{i = 1}^n\phi_{i,\tau}(\tilde x_{ij},y_{ij})\right)\right) = o\left(\frac{1}{mn}\right).
\end{equation*}
The model in equation (\ref{eq:model}) and Assumption \ref{a:instruments}(iii) imply that $\bE_{i|j} \left[1(y_{ij}\leq\tilde x_{ij}\beta_j(\tau))|\tilde x_{ij},z_{ij}\right]=\tau$, which implies that $$\bE \left[\frac{1}{m}\sum_{j = 1}^m\left(\frac{1}{n}\sum_{i = 1}^n z_{ij}\tilde x_{ij}'-\Sigma_{ZXj}\right) \left(\frac{1}{n}\sum_{i = 1}^n\phi_{j,\tau}(\tilde x_{ij},y_{ij})\right)\right]=0$$ uniformly in $\tau$. Therefore, by Chebyshev's inequality, 
\begin{equation}
\frac{1}{m}\sum_{j = 1}^m\left(\frac{1}{n}\sum_{i = 1}^n z_{ij}\tilde x_{ij}'-\Sigma_{ZXj}\right) \left(\frac{1}{n}\sum_{i = 1}^n\phi_{j,\tau}(\tilde x_{ij},y_{ij})\right) =o_p\left(\frac{1}{\sqrt{mn}}\right)
\end{equation}
uniformly in $\tau$.

Since all other terms are $o_p\left(\frac{1}{\sqrt{mn}}\right)$ uniformly over $\tau$, the limiting distribution of the process  $\frac{1}{mn} \sum_{j = 1}^m\sum_{i = 1}^n z_{ij}\tilde x_{ij}' \left(\hat \beta_j(\tau)-\beta_j(\tau)\right)$ is the same as the limiting distribution of
\begin{multline}\label{eq:s_j}
\frac{1}{m}\sum_{j = 1}^m \Sigma_{ZXj}\left(\frac{1}{n}\sum_{i = 1}^n\phi_{j,\tau}(\tilde x_{ij},y_{ij})\right)\\=\frac{1}{m}\sum_{j = 1}^m \Sigma_{ZXj}\left(\frac{-B_{j,\tau}^{-1}}{n}\sum_{i = 1}^n \tilde x_{ij}(1(y_{ij}\leq \tilde x_{ij}\beta_j(\tau))-\tau)\right)=\frac{1}{mn}\sum_{j = 1}^m \sum_{i = 1}^n s_{ij}(\tau).\end{multline}
This is a sample mean over $mn$ independent (but not necessarily identically distributed) observations denoted by $s_{ij}(\tau)$. The model in equation (\ref{eq:model}) and Assumption \ref{a:instruments}(iii) imply that $\bE \left[1(y_{ij}\leq\tilde x_{ij}\beta_j(\tau))|\tilde x_{ij},z_{ij},v_j\right]=\tau$, which implies that $\bE[s_{ij}(\tau)]=0$. In addition,
\begin{align}
\Var(s_{ij}(\tau))= \bE[\Sigma_{ZXj}\Var(\phi_{j,\tau})\Sigma_{ZXj}']= \bE [\Sigma_{ZXj}B_{j,\tau}^{-1}\tau(1-\tau)\bE_{i|j}[\tilde x_{ij} \tilde x_{ij}']B_{j,\tau}^{-1}\Sigma_{ZXj}'].
\end{align}   
Pointwise asymptotic normality follows from the Lindeberg CLT.

Next we note that $\left\{\Sigma_{ZXj}\left(\frac{-B_{j,\tau}^{-1}}{n}\sum_{i = 1}^n \tilde x_{ij}(1(y_{ij}\leq \tilde x_{ij}\beta)-\tau)\right), \tau\in\mathcal{T},\beta\in\mathcal B\right\}$ is a Donsker class for any compact set $\mathcal B$. This follows by noting that $\left\{1(y_{ij}\leq\tilde x_{ij}\beta_j(\tau)), \tau\in\mathcal T,\beta\in\mathcal B\right\}$ is a VC subgraph class and hence a bounded Donsker class. Hence, $$\left\{\frac{1}{n}\sum_{i = 1}^n\tilde x_{ij}(1(y_{ij}\leq\tilde x_{ij}\beta)-\tau), \tau\in\mathcal T,\beta\in\mathcal B\right\}$$is also bounded Donsker with a square-integrable envelope $2\cdot\max_{i\in 1,...,n}\lVert \tilde x_{ij}\rVert \leq 2\cdot C$. The whole function is then Donsker by the boundedness of $\Sigma_{ZXj}$ and $B_{j,\tau}^{-1}$. The weak convergence result follows by application of the functional central limit theorem for independent but not identically distributed random variables, see, for instance, Theorem 3 in \cite{Brown1971}.

\textbf{Part (ii)} Follows directly by Lemma 3 in \cite{Chetverikov2016}. 

\textbf{Part (iii)} The first moment is asymptotically equivalent to (up to a term, which is uniformly $o_p\left(\frac{1}{\sqrt{mn}}\right)$):  
$$\frac{1}{m}\sum_{j = 1}^m \Sigma_{ZXj}\left(\frac{-B_{j,\tau}^{-1}}{n}\sum_{i = 1}^n \tilde x_{ij}(1(y_{ij}\leq \tilde x_{ij}\beta_j(\tau))-\tau)\right).$$
We have already shown that both moments have a mean of zero. By Assumption \ref{a:sampling}, the observations are independent across $i$ and $j$ such that we only need to consider the correlation between both moments for the same individual and group: 
\begin{align*}
    \Cov&(\tilde x_{ij}(1(y_{ij}\leq \tilde x_{ij}\beta_j(\tau))-\tau),z_{ij}\alpha_j(\tau'))\\
    &=\bE[\tilde x_{ij}(1(y_{ij}\leq \tilde x_{ij}\beta_j(\tau))-\tau)z_{ij}'\alpha_j(\tau')]\\
%    &=\bE[\bE[-\Sigma_{ZXj}B_{j,\tau}^{-1}\tilde x_{ij}(1(y_{ij}\leq \tilde x_{ij}\beta_j(\tau))-\tau)z_{ij}'\alpha_j(\tau')|x_{ij},z_{ij},v_j]]\\
    &=\bE[\tilde x_{ij}\bE_{i|j}[(1(y_{ij}\leq \tilde x_{ij}\beta_j(\tau))-\tau)|x_{ij},z_{ij}]z_{ij}'\alpha_j(\tau)]
    =0.
\end{align*}
It then directly follows that
\begin{align*}
\underset{\tau,\tau'\in\mathcal T}{\sup} \left\lVert \Cov \left (\bar g^{(1)}_{mn}(\hat\delta, \tau), \bar g^{(2)}_{mn}(\hat\delta, \tau')\right )\right\rVert=o_p\left(\frac{1}{\sqrt{mn}}\right).
\end{align*}
\end{proof}

\subsection{Lemma \ref{l:consistency G mat uniform}: Uniform consistency of the G matrix with a heterogeneous weighting matrix}

\begin{lemma}\label{l:consistency G mat uniform}
%Assume that the conditions for Proposition \ref{p:omegamatrix} are satisfied. 
Assumptions \ref{a:sampling}, \ref{a:covariates}, \ref{a:instruments}, and \ref{a:weighting 2} hold.
Then, uniformly in $\tau$, we can partition $\hat G(\tau)$ into $\hat G_{11}(\tau)$ (with dimensions $M_1 \times L_1$), $\hat G_{12}(\tau)$ ($M_1 \times L_2$), $\hat G_{21}(\tau)$ ($M_2 \times L_1$), and $\hat G_{22}(\tau)$ ($M_2 \times L_2$) such that
\begin{equation*}
\hat G(\tau)=\begin{pmatrix}
        \hat G_{11}(\tau) & \hat G_{12}(\tau) \\ \hat G_{21}(\tau) & \hat G_{22}(\tau)
    \end{pmatrix}=\begin{pmatrix}
        G_{mn,11}(\tau) & G_{mn,12}(\tau) \\ G_{mn,21}(\tau) & G_{mn,22}(\tau)
    \end{pmatrix}+ \begin{pmatrix}
        o_p\left(1\right) & o_p \left (\sqrt{a_n(\tau)}\right ) \\ o_p \left (1/\sqrt{a_n(\tau)} \right) & o_p\left(1\right)
    \end{pmatrix},
\end{equation*}
where $G_{mn}(\tau)$ is defined in equation (\ref{eq:Gmn matrix}), $\underset{\tau\in \mathcal{T}}{\sup} \ G_{mn,12}(\tau)=O_p\left(a_n(\tau)\right)$ and the other elements of $G_{mn}(\tau)$ are $O_p\left(1\right)$ uniformly in $\tau$. 

\begin{comment}
    
Then for entries $k\in\{K_1+1,\dots,K\}$ and $l\in\{1,\dots,L_1\}$ we have uniformly in $\tau \in \mathcal{T}$

\begin{equation*}
    \hat G_{kl}(\tau)-G_{mn,kl}(\tau)=o_p \left (1/\sqrt{a_n(\tau)} \right ).
\end{equation*}
and entries for $k\notin\{K_1+1,\dots,K\}$ and $l\notin\{1,\dots,L_1\}$ we have uniformly in $\tau \in \mathcal{T}$
\begin{equation*}
    \hat G_{kl}(\tau) = G_{mn, kl}(\tau) + o_p\left ( \sqrt{G_{mn, kl}(\tau)} \right),
\end{equation*}

Further, 
for $k \in \{1, \dots, K_1\}$ and $l \in \{ L_1 +1 , \dots , L \}$ it holds uniformly over $\tau$ that $G_{mn,kl} = O_p(a_n)$ and for all other elements $G_{mn, kl} = O_p(1)$. 
\end{comment}

\end{lemma}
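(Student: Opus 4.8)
The plan is to mimic the structure of the proof of Lemma \ref{l:consistency of G}, but now exploiting the block-lower-triangular structure of $S_{ZX}$ and the scaled block structure of $\hat W(\tau)$ from Assumption \ref{a:weighting 2}. First I would establish, exactly as in the proof of Lemma \ref{l:consistency of G}, that $S_{ZX} \underset{p}{\rightarrow} \Sigma_{ZX}$ uniformly, and that since $\Sigma_{ZX}$ is block lower triangular its blocks $S_{11}, S_{21}, S_{22}$ converge to $\Sigma_{11}, 0, \Sigma_{22}$ respectively (more precisely $S_{12} = o_p(1)$ since $\Sigma_{12}=0$). Then I would write out $S_{ZX}' \hat W(\tau) S_{ZX}$ in $2\times 2$ block form. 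Using the decomposition $\hat W(\tau) = W_{mn}(\tau) + \text{(error blocks)}$ from Assumption \ref{a:weighting 2}, and the fact that $W_{mn}(\tau)$ has its $(1,1)$ block $O_p(1)$ and its $(1,2),(2,1),(2,2)$ blocks all of order $a_n(\tau)$, I would get
\begin{equation*}
S_{ZX}' \hat W(\tau) S_{ZX} = \begin{pmatrix} A_{11} + a_n(\tau) B_{11} & a_n(\tau) B_{12} \\ a_n(\tau) B_{21} & a_n(\tau) B_{22}\end{pmatrix} + \text{(lower-order)},
\end{equation*}
with $A_{11} = \Sigma_{11}' W_{11}(\tau)\Sigma_{11}$ invertible and $B_{22} = \Sigma_{22}' W_{22}(\tau) \Sigma_{22}$ invertible by Assumptions \ref{a:instruments} and \ref{a:weighting 2}.

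Next I would invert this matrix using the partitioned-inverse formula, tracking the powers of $a_n(\tau)$ carefully: the $(1,1)$ block of the inverse is $O_p(1)$, the $(1,2)$ and $(2,1)$ blocks are $O_p(1)$, and the $(2,2)$ block is $O_p(a_n(\tau)^{-1})$. Multiplying this inverse on the right by $S_{ZX}'\hat W(\tau)$ (whose block structure I also expand in powers of $a_n(\tau)$: the first block-column is $O_p(1)$, the second is $O_p(a_n(\tau))$), the $a_n(\tau)$ factors cancel in just the right way so that $\hat G(\tau) = (S_{ZX}'\hat W(\tau) S_{ZX})^{-1} S_{ZX}'\hat W(\tau)$ has blocks $\hat G_{11}(\tau), \hat G_{21}(\tau), \hat G_{22}(\tau)$ all $O_p(1)$ and $\hat G_{12}(\tau) = O_p(a_n(\tau))$. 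The leading terms define $G_{mn}(\tau)$; in particular $G_{mn,11}(\tau) \to (\Sigma_{11}' W_{11}(\tau)\Sigma_{11})^{-1}\Sigma_{11}'W_{11}(\tau)$ and $G_{mn,22}(\tau)\to (\Sigma_{22}'W_{22}(\tau)\Sigma_{22})^{-1}\Sigma_{22}'W_{22}(\tau)$, matching the $G_{11}, G_{22}$ appearing in Theorem \ref{t:first-order}.

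For the error rates, I would propagate the $o_p$ error blocks of $\hat W(\tau)$ (which are $o_p(1)$, $o_p(\sqrt{a_n(\tau)})$, $o_p(a_n(\tau))$ respectively) and the $o_p$ sampling errors $S_{ZX}-\Sigma_{ZX}$ through the same block arithmetic. The key bookkeeping is that each appearance of the $(2,2)$-inverse block contributes a factor $a_n(\tau)^{-1}$, so an $o_p(\sqrt{a_n(\tau)})$ perturbation entering the $(2,1)$ position of the product and hitting that inverse block yields an $o_p(a_n(\tau)^{-1}\sqrt{a_n(\tau)}) = o_p(1/\sqrt{a_n(\tau)})$ error — which is exactly the bound claimed for $\hat G_{21}(\tau)$. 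Perturbations in the $(1,1)$ and $(2,2)$ output blocks stay $o_p(1)$, and in the $(1,2)$ output block they stay $o_p(\sqrt{a_n(\tau)})$ because there the leading term is already $O_p(a_n(\tau))$. Uniformity in $\tau$ follows throughout from the boundedness of $x_{ij}, z_{ij}$ (Assumptions \ref{a:covariates}(i), \ref{a:instruments}(i)), the Lipschitz continuity of the $W_{l_1 l_2}(\tau)$ in $\tau$, and the uniform invertibility of $A_{11}$ and $B_{22}$.

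The main obstacle I expect is the careful accounting of $a_n(\tau)$ powers when $a_n(\tau)$ is itself a sequence that may go to zero at an arbitrary rate (or stay bounded away from zero): one must be sure that no step secretly divides by something that is only $o_p(1)$ rather than bounded below, and that the partitioned-inverse expansion remains valid — i.e.\ that $B_{22} - a_n(\tau) B_{21} C B_{12}$ is invertible for large $m,n$ — uniformly in $\tau$. Handling the degenerate-looking case where $L_1 = M_1 = 0$ (so the whole matrix is just the $(2,2)$ block) or $L_2 = M_2 = 0$ also needs a remark, though these reduce to Lemma \ref{l:consistency of G} directly.
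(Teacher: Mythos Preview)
Your proposal is correct and follows essentially the same approach as the paper: block-decompose $S_{ZX}'\hat W S_{ZX}$, apply the partitioned-inverse formula with the Schur complement, then multiply back by $S_{ZX}'\hat W$ and track the powers of $a_n(\tau)$ through each of the four resulting blocks. The only small deviation is that the paper treats $S_{12}$ as \emph{exactly} zero (since in their construction the $L_1$ instruments satisfy $\bar z_{1j}=0$ for all $j$, making the sample cross-moment with any slow regressor identically zero), whereas you write $S_{12}=o_p(1)$; the paper's version slightly simplifies the bookkeeping but both lead to the same conclusion.
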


\begin{proof}%[Proof of Lemma \ref{l:consistency G mat uniform}]
%The beginning of this proof is closely related to the proof of Lemma {\ref{l:consistency of G}$'$}. To get uniform results in the variance of $\alpha_j$, we now keep all terms that are converging to zero and consider different blocks of the matrix. 
To simplify the notation, we omit the dependency on $\tau$. We partition $S_{ZX}'$  into four submatrices that have the same dimensions as the submatrices of $\hat G$. As in proposition \ref{p:weight matrix adaptive}, we also partition $\hat W$ into four submatrices such that the diagonal submatrices are square matrices of dimensions $L_1$ and $L_2$. 
% In the proof of Lemma {\ref{l:consistency of G}$'$}, we split the $K \times L$ matrix $\hat G$ between $L_1$ and $L_2$ instruments and $M_1$ and $M_2$ coefficients. However, splitting between $M_1$ and $M_2$ coefficient is not uniform in $\Var(\alpha_j)$. Instead, here, we split the coefficients between $K_1$ and $K_2$ as follows
%     \begin{equation*}
%         \hat G = \begin{pmatrix}
%              \hat G_{11} & \hat G_{12} \\ \hat G_{21} & \hat G_{22}
%         \end{pmatrix}
%     \end{equation*}
%     where $\hat G_{11}$ is $K_1 \times L_1$, $\hat G_{21}$ is $K_2 \times L_1$, $G_{12}$ is $K_1 \times L_2$, and $\hat G_{22}$ is $K_2 \times L_2$.
Note that
\begin{align*}
    S'_{ZX} \hat W =\begin{pmatrix}S_{11}' & S_{21}' \\
0 & S_{22}'
\end{pmatrix}\begin{pmatrix}  \hat W_{11} &  \hat W_{12} \\
 \hat W_{21} &  \hat W_{22}
\end{pmatrix}= \begin{pmatrix}S_{11}' \hat W_{11}+ S_{21}'\hat W_{21} & S_{11}'\hat W_{12}+S_{21}'\hat W_{22} \\
 S_{22}'\hat  W_{21} & S_{22}'\hat  W_{22}
\end{pmatrix}.
\end{align*}
Furthermore,
\begin{align*}
    S_{ZX}'\hat WS_{ZX}&=
    \begin{pmatrix}S_{11}' & S_{21}' \\
0 & S_{22}'
\end{pmatrix}\begin{pmatrix}  \hat W_{11} &  \hat W_{12} \\
 \hat W_{21} &  \hat W_{22}
\end{pmatrix}\begin{pmatrix}S_{11} & 0 \\
S_{21} & S_{22}
\end{pmatrix}\\
&=\begin{pmatrix} \hat A_{11}+\hat B_{11} &  \hat B_{12} \\
 \hat B_{21} &  \hat B_{22}
\end{pmatrix},
\end{align*}
where $\hat A_{11}=S_{11}'\hat W_{11}S_{11}$, $\hat B_{11}=S_{11}'\hat W_{12}S_{21} + S_{21}' \hat W_{21} S_{11} + S_{21}'\hat W_{22}S_{21}$, $\hat B_{12} = S_{11}' \hat W_{12}
S_{22} + S_{21}' \hat W_{22} S_{22}$, $ \hat B_{21}= S_{22}'\hat W_{21} S_{11} +S_{22}'\hat W_{22}S_{21}$, and $\hat B_{22}=S_{22}'\hat W_{22}S_{22}$.

For the non-zero elements of the matrix $S_{ZX}$, we have
\begin{align*}
    S_{11} - \Sigma_{11} = o_p(1), \\
    S_{22} - \Sigma_{22} = o_p(1) ,\\
    S_{21} - \Sigma_{21} = o_p(1).
\end{align*}
Together with proposition \ref{p:weight matrix adaptive}, this implies that uniformly over $\tau$,
\begin{align}
    \hat A_{11} =& \Sigma_{11}'W_{11}\Sigma_{11} + o_p(1) ,
\\[1em]
   \hat B_{11}= & a_n\Sigma_{11}'W_{12}\Sigma_{21} + a_n \Sigma_{21}' W_{21} \Sigma_{11} + a_n \Sigma_{21}'W_{22}\Sigma_{21} + o_p(\sqrt{a_n}) =  a_n B_{11} + o_p(\sqrt{a_n}) ,    \\[1em]
    \hat B_{12} = & a_n \Sigma_{11}'W_{12}
\Sigma_{22} +  a_n \Sigma_{21}' W_{22} \Sigma_{22} + o_p(\sqrt{a_n})  =  a_n B_{12} + o_p(\sqrt{a_n}),     \\[1em] \label{eq:B21}
\hat B_{21} = & a_n \Sigma_{22}' W_{21} \Sigma_{11} +a_n \Sigma_{22}' W_{22}\Sigma_{21}  + o_p(\sqrt{a_n})=  a_n B_{21} + o_p(\sqrt{a_n}),\\[1em] 
\hat B_{22}=& a_n\Sigma_{22}'W_{22}\Sigma_{22} + o_p(a_n) = a_n B_{22} + o_p(a_n).
    \end{align}

 \begin{comment}
 \begin{align*}
     \hat B_{12} = & S_{11}' \hat W_{12}
S_{22} + S_{21}' W_{22} S_{22} =
\left ( \Sigma_{11}' + o_p(1) \right)
 \left (  a_n W_{12} + o_p(\sqrt{a_n}) \right) \left ( \Sigma_{22} + o_p(1) \right)  + S_{21}' W_{22} S_{22}  \\&  = a_n \Sigma_{11}'W_{12}
\Sigma_{22} +  a_n \Sigma_{21}' W_{22} \Sigma_{22} + o_p(\sqrt{a_n}) \\ = & a_n B_{12} + o_p(\sqrt{a_n})
 \end{align*}
\end{comment}

\noindent By the inverse of a partitioned matrix
\begin{equation*}
    \left(S_{ZX}' \hat WS_{ZX}\right)^{-1}=\\
\begin{pmatrix}\hat C+\hat  C \hat B_{12}\hat D \hat B_{21}\hat C & -\hat C \hat B_{12}\hat D \\
-\hat D \hat B_{21}\hat  C & \hat D
\end{pmatrix},
\end{equation*}
where $\hat C=\left(\hat  A_{11}+ \hat 
 B_{11}\right)^{-1}$ and $\hat  D=\left(\hat B_{22}-\hat  B_{21}\hat C\hat  B_{12}\right)^{-1}$. Hence uniformly over $\tau$,
\begin{align}
    \hat C = & C + o_p(1), \label{eq:Chat} \\
    \hat D = & a_n^{-1} D + o_p \left ( a_n^{-1} \right), \label{eq:Dhat}\\
    \hat D \hat B_{21} \hat C =& D B_{21} C + o_p(a_n^{-1/2}), \label{eq:DBC}
\end{align}
where $C$ and $D$ are strictly positive and bounded. Equation (\ref{eq:Dhat}) holds because 
$$a_n \hat D =  \left ( a_n^{-1} \hat B_{22}-a_n^{-1}\hat  B_{21}\hat C\hat  B_{12} \right)^{-1} =  \left( B_{22}- a_n B_{21} C  B_{12} \right)^{-1}  + o_p(1) =  D + o_p(1).$$ Combining equation (\ref{eq:B21}) with equations (\ref{eq:Chat})-(\ref{eq:Dhat}), we obtain similarly
\begin{equation}
    \hat C \hat B_{12} \hat D = C B_{12} D + o_p(a_n^{-1/2}). \label{eq:CBD}
\end{equation}
 
Now, we can consider each submatrix of $\hat G$ separately and derive their convergence rate to the corresponding element of the $G_{mn}$ matrix defined in equation (\ref{eq:Gmn matrix}): 
$$G_{mn}(\tau) = \left ( \Sigma_{ZX}' W_{mn} (\tau) \Sigma_{ZX}  \right) ^{-1} \Sigma_{ZX}' W_{mn}(\tau).$$
For the upper left term, we have
\begin{align*}
       \hat G_{ 11}& =
         ( \hat C+  \hat C  \hat B_{12}\hat D  \hat B_{21}\hat C  )(S_{11}'\hat W_{11}+ S_{21}'\hat W_{21} )    -  \hat C  \hat B_{12} \hat D S_{22}' \hat W_{21}\\
    &= \left( C+  a_n C B_{12} D B_{21} C + o_p(1) \right) \left (\Sigma_{11}'W_{11}+a_n \Sigma_{21}'W_{21} + o_p(1)  \right) \\ & \quad + \left ( C  B_{12} D + o_p(a_n^{-1/2}) \right) \left (a_n  \Sigma_{22}' W_{21} + o_p(\sqrt{a_n}) \right) \\&= ( C+  a_n C  B_{12} D  B_{21} C  )(\Sigma_{11}'W_{11}+a_n \Sigma_{21}'W_{21} )    -   a_n  C  B_{12} D\Sigma_{22}' W_{21} + o_p(1)\\
    & = G_{mn, 11} + o_p(1),
\end{align*}
uniformly over $\tau$.
For the upper right term, we have uniformly over $\tau$,
\begin{align*}
    \hat G_{12} =&  ( \hat C+ \hat  C \hat   
 B_{12} \hat  D \hat  B_{21}\hat  C ) \left(S_{11}'\hat 
 W_{12}+S_{21}' \hat  W_{22}\right) - \hat  C  \hat  B_{12}\hat  D  S_{22}' \hat  W_{22} \\ =& \left( C+  a_n C  B_{12} D  B_{21} C + o_p(1) \right) \left (a_n\left(\Sigma_{11}'W_{12}+\Sigma_{21}'W_{22}\right)  + o_p(\sqrt{a_n}) \right) \\ &  - \left ( C  B_{12} D + o_p(a_n^{-1/2}) \right) \left (a_n \Sigma_{22}' W_{22} + o_p(a_n) \right)\\ = &a_n( C+ a_n C  B_{12} D  B_{21} C ) \left(\Sigma_{11}'W_{12}+\Sigma_{21}'W_{22}\right) - a_n C  B_{12} D   \Sigma_{22}' W_{22} +  o_p(\sqrt{a_n}) \\
 =&  G_{mn,12} + o_p(\sqrt{a_n}).
\end{align*}
For the lower right term, 
\begin{align*}
    \hat G_{22} = &  -\hat  D  \hat B_{21} \hat  C \left(S_{11}'\hat W_{12}+S_{21}'\hat W_{22}\right) + \hat D S_{22}' \hat W_{22}  \\ =&-\left(  D  B_{21}  C + o_p({a_n}^{-1/2}) \right) \left(a_n\Sigma_{11}'W_{12}+a_n\Sigma_{21}'W_{22} + o_p(\sqrt{a_n} )\right) \\ = &- a_n D  B_{21}  C \left(\Sigma_{11}'W_{12}+\Sigma_{21}'W_{22}\right) + D \Sigma_{22}' W_{22} + o_p(1) \\
    = & G_{mn,22} + o_p(1)  ,
\end{align*}
uniformly over $\tau$.
%For the lower right term,
%first, by Woodbury matrix identity
%\begin{align}\label{eq:C}
 %   \hat C = \left( A_{11} + B_{11} \right) ^{-1} = \hat A_{11}^{-1} - \hat A_{11}^{-1}  \hat B_{11} ( \hat A_{11}+  \hat B_{11})^{-1}.
%\end{align}
\begin{comment}
\begin{align*}
        (A + B)^{-1} - A^{-1} =  & A A^{-1} (A + B)^{-1} -  (A+ B) (A + B)^{-1}A^{-1} \\ & =
        A A^{-1} (A + B)^{-1} -  (A+ B) (A + B)^{-1}A^{-1} +  (A+ B) A^{-1}(A + B)^{-1} - (A+ B) A^{-1}(A + B)^{-1}\\
         &= [A - (A+ B) ]A^{-1}(A + B)^{-1}
         -  (A+ B) [(A + B)^{-1}A^{-1} -  A^{-1}(A + B)^{-1}] \\
         &  - B A^{-1}(A + B)^{-1}
         -  (A+ B) [(A + B)^{-1}A^{-1} -  A^{-1}(A + B)^{-1}]
\end{align*}
where the second term has non-zero entries only in the off-diagonal terms
\end{comment}
Finally, for the lower left term, uniformly of $\tau$,
\begin{align*}
            \hat G_{21} = & -\hat D  \hat B_{21} \hat C (S_{11}'\hat W_{11}+ S_ {21}'\hat W_{21} ) +\hat  D  S_{22}' \hat W_{21} \nonumber \\  =&  -\left (D B_{21} C + o_p(a_n^{-1/2}) \right)   \left (\Sigma_{11}' W_{11} +  a_n \Sigma_{21}' W_{21}  + o_p(\sqrt{a_n}) \right) \nonumber \\ &+\left (a_n^{-1} D + o_p(a_n^{-1} ) \right)  \left( \Sigma_{22}'  a_n W_{21} + o_p(\sqrt{a_n}) \right) \nonumber \\
       =&  - D B_{21} C \left ( \Sigma_{11}' W_{11}  -  \Sigma_{21}'a_n W_{21} \right) +      D\Sigma_{22}'   W_{21} + o_p({a_n}^{-1/2}) \\
    = & G_{mn,21} + o_p \left ( a_n ^{-1/2}\right ).
\end{align*}

Hence,
\begin{equation*}
\sup_\tau \left \rVert \hat G(\tau)- G_{mn}(\tau) \right \rVert =  \begin{pmatrix}
        o_p(1) & o_p \left (\sqrt{a_n(\tau)}\right ) \\ o_p \left (1/\sqrt{a_n(\tau)} \right ) & o_p(1)
    \end{pmatrix}
\end{equation*}
where $G_{mn,12}(\tau)=O_p(a_n(\tau))$ and the other elements of $G_{mn}(\tau)$ are $O_p(1)$ uniformly over $\tau$.

\end{proof}

\subsection{Proof of Theorem \ref{t:first-order}: Degree of heterogeneity is known}
\begin{proof}[Proof of Theorem \ref{t:first-order}]\label{proof:first_order}
From the definition of the estimator,
\begin{align}\label{eq:error representation}
\hat\delta(\tau)-\delta(\tau)&=\left(S_{ZX}'\hat W(\tau)S_{ZX}\right)^{-1}S_{ZX}'\hat W(\tau) \bar g_{mn}(\hat \delta,\tau).\end{align}

In part (i) Lemma \ref{l:consistency G mat uniform} applies uniformly in $\tau$ with $a_n(\tau)=O_p(1/n)$ such that
\begin{align*}
\left(S_{ZX}'\hat W(\tau) S_{ZX}\right)^{-1}S_{ZX}'\hat W(\tau)&=\begin{pmatrix}
    \hat G_{11}(\tau) & \hat G_{12}(\tau)\\
    \hat G_{21}(\tau) & \hat G_{22}(\tau)
\end{pmatrix}\\
&=\begin{pmatrix}
        G_{11}(\tau)+o_p(1) & o_p(1/\sqrt n) \\ o_p(\sqrt n) & G_{22}(\tau)+o_p(1)
    \end{pmatrix}.
\end{align*}

By the definition of the fast instruments, the first $L_1$ elements of $\bar g^{(2)}_{mn}(\hat\delta,\tau)$ are equal to zero. Together with Lemma \ref{l:asymptotic moments}(i), this implies that
\begin{equation}\label{eq:Z11}
    \sqrt{mn}\bar g_{mn,1}(\hat \delta,\cdot)\rightsquigarrow \mathbb Z_{11}(\cdot),
\end{equation}
where $\bar g_{mn,1}(\hat \delta,\cdot)$ contains the first $L_1$ elements of $\bar g_{mn}(\hat \delta,\cdot)$. For the remaining $L_2$ elements, we have $\sqrt{m}\bar g_{mn,2}(\hat \delta,\cdot) = O_p(1)$. It follows that 
\begin{align*}
    \sqrt{mn}\left(\hat\delta_1(\cdot)-\delta_1(\cdot)\right)&=\hat G_{11}\sqrt{mn}\bar g_{mn,1}(\hat\delta,\cdot)+\hat G_{12}\bar g_{mn,2}(\hat\delta,\cdot)\\
    &=G_{11}\sqrt{mn}\bar g_{mn,1}(\hat\delta,\cdot)+o_p(1)\sqrt{mn}\bar g_{mn,1}(\hat\delta,\cdot)+o_p(1/\sqrt{n})\sqrt{mn}\bar g_{mn,2}(\hat\delta,\cdot)\\
    &=G_{11}\sqrt{mn}\bar g_{mn,1}(\hat\delta,\cdot)+o_p(1)\sqrt{mn}\bar g_{mn,1}(\hat\delta,\cdot)+o_p(1)\sqrt{m}\bar g_{mn,2}(\hat\delta,\cdot)\\
    &\rightsquigarrow G_{11}  \mathbb Z_{11}(\cdot)+o_p(1),
\end{align*}
which proves part (i)-(a) of the theorem.

For the remaining $M_2$ coefficients, applying the same results, we obtain
\begin{align*}
\sqrt{m}\left(\hat\delta_2(\cdot)-\delta_2(\cdot)\right)&=\hat G_{21}\sqrt{m}\bar g_{mn,1}(\hat\delta,\cdot)+\hat G_{22}\sqrt{m}\bar g_{mn,2}(\hat\delta,\cdot)\\
&=o_p(\sqrt{n})\sqrt{m}\bar g_{mn,1}(\hat\delta,\cdot)+G_{22}\sqrt{m}\bar g_{mn,2}(\hat\delta,\cdot)+o_p(1)\sqrt{m}\bar g_{mn,2}(\hat\delta,\cdot)\\
&\rightsquigarrow G_{22}(\cdot) \mathbb Z_{22}(\cdot)+o_p(1).
\end{align*}
The result of part (i)-(b) of the theorem follows.

In part (ii) and (iii), Lemma \ref{l:consistency of G} applies such that \begin{equation}\label{eq:G}
\left(S_{ZX}'\hat W(\tau) S_{ZX}\right)^{-1}S_{ZX}'\hat W(\tau)\underset{p}{\rightarrow}G(\tau) 
\end{equation}
uniformly in $\tau\in\mathcal{T}$. In part (ii), $\bar g_{mn}(\hat \delta,\cdot)=\bar g_{mn}^{(1)}(\hat \delta,\cdot)$. It follows from Lemma \ref{l:asymptotic moments}(i) that
\begin{equation}\label{eq:Z1}
    \sqrt{mn}\bar g_{mn}(\hat \delta,\cdot)\rightsquigarrow \mathbb Z_{1}(\cdot).
\end{equation}
The result of part (ii) of the theorem follows from equations (\ref{eq:error representation}), (\ref{eq:G}), and (\ref{eq:Z1}).

In part (iii), $\bar g_{mn}^{(1)}(\hat\delta,\tau)$ and $\bar g_{mn}^{(2)}(\hat\delta,\tau)$ converge at the same rate such that Lemma \ref{l:asymptotic moments} implies
\begin{equation}\label{eq:Z}
    \sqrt{mn}\bar g_{mn}(\hat \delta,\cdot)\rightsquigarrow \mathbb Z(\cdot),
\end{equation}
where $\mathbb Z(\cdot)$ is a mean-zero Gaussian process with uniformly continuous sample paths and covariance function $\Omega_1(\tau,\tau')+\bar\Omega_1(\tau,\tau')$. The result of part (iii) of the theorem follows from equations (\ref{eq:error representation}), (\ref{eq:G}), and (\ref{eq:Z}).

\end{proof}

\subsection{Proof of Theorem \ref{t:adaptive}: Adaptive Asymptotic Distribution}

\begin{proof}[Proof of Theorem \ref{t:adaptive}]
We prove the theorem when Assumption \ref{a:weighting 2} holds. The proof when instead Assumption \ref{a:weighting} holds is simpler because all coefficients converge at the same rate; it can actually be considered as a special case of the proof below when we set $a_n(\tau)=1$.

By Lemma \ref{l:consistency G mat uniform}, uniformly in $\tau\in\mathcal T$,
\begin{equation*}
\hat G(\tau)= G_{mn}(\tau)+ \begin{pmatrix}
        o_p(1) & o_p \left (\sqrt{a_n(\tau)}\right ) \\ o_p \left (1/\sqrt{a_n(\tau)} \right ) & o_p(1)
    \end{pmatrix}
\end{equation*}
where $G_{mn,12}(\tau)=O_p(a_n(\tau))$ and the other elements are $O_p(1)$.

Since the rate of convergence may differ across the quantile index and the regressors, we first consider the scalar $\hat\delta_k(\tau)$, which is the $k$-th element of $\hat\delta(\tau)$ for $k\in\{1,\dots,K\}$ and $\tau \in\mathcal{T}$. From Lemma \ref{l:sampling error},
\begin{equation*}
\hat\delta_k(\tau)-\delta_k(\tau)=\hat G_k(\tau)\bar g_{mn}(\hat\delta,\tau)=\hat G_k(\tau)\left(\bar g_{mn}^{(1)}(\hat\delta,\tau)+\bar g_{mn}^{(2)}(\hat\delta,\tau)\right)
\end{equation*}
where $\hat G_k(\tau)$ is the $k$-th row of $\hat G(\tau)$. From the proof of Lemma \ref{l:asymptotic moments}(i), we have 
\begin{equation*}
    \bar g_{mn}^{(1)}(\hat\delta,\tau)=\frac{1}{m}\sum_{j = 1}^m \Sigma_{ZXj}\frac{1}{n}\sum_{i = 1}^n\phi_{j,\tau}(\tilde x_{ij},y_{ij})+o_p\left(\frac{1}{\sqrt{mn}}\right),
\end{equation*}
uniformly in $\tau\in\mathcal T$.

We deal separately with the fast and slow coefficients and show that the result holds for both types of coefficients. For $k\in\{1,\dots,M_1\}$ (fast coefficients), $\hat G_k(\tau)$ is uniformly bounded in probability, which implies that
\begin{align*}
\hat G_k(\tau)\bar g_{mn}^{(1)}(\hat\delta,\tau)&=\hat G_k(\tau)\frac{1}{m}\sum_{j = 1}^m \Sigma_{ZXj}\frac{1}{n}\sum_{i = 1}^n\phi_{j,\tau}(\tilde x_{ij},y_{ij})+o_p\left(\frac{1}{\sqrt{mn}}\right),
\end{align*}
uniformly in $\tau\in\mathcal T$. In addition, \ref{l:asymptotic moments}(i) implies that $\frac{1}{m}\sum_{j = 1}^m \Sigma_{ZXj}\frac{1}{n}\sum_{i = 1}^n\phi_{j,\tau}(\tilde x_{ij},y_{ij})=O_p\left(\frac{1}{\sqrt{mn}}\right)$ uniformly in $\tau\in\mathcal T$. For these coefficients, $\underset{\tau\in\mathcal{T}}{\sup}\left\lVert \hat G_k(\tau)-G_{mn,k}(\tau)\right\rVert=o_p\left(1\right)$. This, in turn, implies that
\begin{equation*}
    \left(\hat G_k(\tau)-G_{mn,k}(\tau)\right)\frac{1}{m}\sum_{j = 1}^m \Sigma_{ZXj}\frac{1}{n}\sum_{i = 1}^n\phi_{j,\tau}(\tilde x_{ij},y_{ij})=o_p\left(\frac{1}{\sqrt{mn}}\right),
\end{equation*}
uniformly in $\tau\in\mathcal T$. Combining the last two displayed equations, we obtain
\begin{align}\label{eq:convergence g1}
\hat G_k(\tau)\bar g_{mn}^{(1)}(\hat\delta,\tau)=G_{mn,k}(\tau)\frac{1}{m}\sum_{j = 1}^m \Sigma_{ZXj}\left(\frac{1}{n}\sum_{i = 1}^n\phi_{j,\tau}(\tilde x_{ij},y_{ij})\right)+o_p\left(\frac{1}{\sqrt{mn}}\right),
\end{align}
uniformly in $\tau\in\mathcal T$.

For $k\in\{M_1+1,\dots,K\}$ (slow coefficients), by Lemma \ref{l:consistency G mat uniform}, uniformly in $\tau$, $\hat G_k(\tau)-G_{mn,k}(\tau)=o_p(1/\sqrt{a_n(\tau)})$ and $G_{mn,k}=O_p(1)$ such that $\hat G_{k}=O_p(1/\sqrt{a_{n}(\tau)})$. In addition, Lemma \ref{l:asymptotic moments}(i) implies that $$\frac{1}{m}\sum_{j = 1}^m \Sigma_{ZXj} \frac{1}{n} \sum_{i = 1}^n \phi_{j,\tau}(\tilde x_{ij},y_{ij})=O_p\left(\frac{1}{\sqrt{mn}}\right).$$ It follows that
\begin{align*}
\hat G_k(\tau)\bar g_{mn}^{(1)}(\hat\delta,\tau)&=\hat G_k(\tau)\frac{1}{m}\sum_{j = 1}^m \Sigma_{ZXj}\frac{1}{n}\sum_{i = 1}^n\phi_{j,\tau}(\tilde x_{ij},y_{ij})+o_p\left(\frac{1}{\sqrt{mna_n(\tau)}}\right),
\end{align*}
and 
\begin{equation*}
    \left(\hat G_k(\tau)-G_{mn,k}(\tau)\right)\frac{1}{m}\sum_{j = 1}^m \Sigma_{ZXj}\frac{1}{n}\sum_{i = 1}^n\phi_{j,\tau}(\tilde x_{ij},y_{ij})=o_p\left(\frac{1}{\sqrt{mna_n(\tau)}}\right),
\end{equation*}
uniformly in $\tau\in\mathcal T$. Combining the last two displayed equations, we obtain
\begin{align*}
\hat G_k(\tau)\bar g_{mn}^{(1)}(\hat\delta,\tau)=G_{mn,k}(\tau)\frac{1}{m}\sum_{j = 1}^m \Sigma_{ZXj}\left(\frac{1}{n}\sum_{i = 1}^n\phi_{j,\tau}(\tilde x_{ij},y_{ij})\right)+o_p\left(\frac{1}{\sqrt{mna_n(\tau)}}\right),
\end{align*}
uniformly in $\tau\in\mathcal T$. 

Lemma \ref{l:asymptotic moments}(ii) implies that 
\begin{equation}\bar g_{mn,l}^{(2)}(\delta,\tau)=\frac{1}{m}\sum_{j = 1}^m \bar z_{j,l}\alpha_j(\tau)=O_p\left(\frac{1}{\sqrt m}\right)\sqrt{\Omega_{2,ll}(\tau)}\end{equation}
uniformly in $\tau\in\mathcal T$, where $g_{mn,l}^{(2)}(\delta,\tau)$ is the $l$-th element of $g_{mn}^{(2)}(\delta,\tau)$, $z_{j,l}$ is the $l$-th element of $z_{j}$, and $\Omega_{2,ll}(\tau)$ it the element in the $l$-th row and $l$-th column of $\Omega_{2}(\tau)$. Let $\hat G_{kl}(\tau)$ and $G_{mn,kl}(\tau)$ be the element in the $k$-th row and $l$-th column of $\hat G(\tau)$ and $G_{mn}(\tau)$, respectively. Note that $\hat G_{kl}(\tau)-G_{mn,kl}(\tau)=o_p(\sqrt{G_{mn,kl}(\tau)})$ for $l\in\{L_1+1,\dots,L\}$ and $k\in\{1,\dots,K\}$. Thus, we have
\begin{align}\label{eq:convergence g2}
(\hat G_k(\tau)-G_{mn}(\tau)) \bar g_{mn}^{(2)}(\hat\delta,\tau)&=\sum_{l=L_1+1}^{L}(\hat G_{kl}(\tau)-G_{mn,kl}(\tau))\bar g_{mn,l}^{(2)}(\hat \delta,\tau)\nonumber\\
&=\sum_{l=L_1+1}^{L}o_p \left (\sqrt{G_{mn,kl}} \right )O_p\left(\frac{1}{\sqrt m}\right)\sqrt{\Omega_{2,ll}(\tau)}\nonumber\\
&=o_p\left(\frac{\sqrt{\Var(\alpha_j(\tau))}}{\sqrt{m}}\right)\sum_{l=L_1+1}^{L} \sqrt{G_{mn,kl}(\tau))},\nonumber
%&=o_p\left(\frac{1}{\sqrt{m}}\right)\sqrt{G_{mn,k}(\tau)\Omega_2(\tau)G_{mn,k}(\tau)'}
\end{align}
uniformly in $\tau\in\mathcal T$. 

If $k\in\{1,\dots,M_1\}$, define
\begin{equation}\label{eq:zetak1}
    \zeta(k,\tau)=\frac{1}{\sqrt{mn}}+\frac{\sqrt{\Var(\alpha_j(\tau))}}{\sqrt m}\sum_{l=L_1 + 1}^L\sqrt{G_{mn,kl}(\tau)}.
\end{equation}
If $k\in\{M_1+1,\dots,K\}$, define
\begin{equation}\label{eq:zetak2}
    \zeta(k,\tau)=\frac{1}{\sqrt{mna_n(\tau)}}+\frac{\sqrt{\Var(\alpha_j(\tau))}}{\sqrt m}\sum_{l=L_1+1}^L\sqrt{G_{mn,kl}(\tau)}.
\end{equation}
Thus, uniformly in $\tau\in\mathcal T$,
\begin{align*}
    \hat\delta_k(\tau)-\delta_k(\tau)=&\hat G_k(\tau)\bar g^{(1)}_{mn}(\hat\delta,\tau)+\hat G_k(\tau)\bar g^{(2)}_{mn}(\hat\delta,\tau)\\
    =&G_{mn,k}(\tau)\left(\frac{1}{m}\sum_{j = 1}^m \Sigma_{ZXj}\left(\frac{1}{n}\sum_{i = 1}^n\phi_{j,\tau}(\tilde x_{ij},y_{ij})\right)+\frac{1}{m}\sum_{j=1}^m \bar z_{j}\alpha_j(\tau)\right)+o_p\left(\zeta(k,\tau)\right)
\end{align*}
Hence, we have
\begin{equation*}
    \hat\delta_k(\tau)-\delta_k(\tau)=\sum_{j=1}^m d_j(k,\tau)+o_p\left(\zeta(k,\tau)\right),
\end{equation*}
where
\begin{equation*}
    d_j(k,\tau)=G_{mn,k}(\tau)\left(\frac{1}{m}\Sigma_{ZXj}\left(\frac{1}{n}\sum_{i = 1}^n\phi_{j,\tau}(\tilde x_{ij},y_{ij})\right)+\frac{1}{m}\bar z_{j}\alpha_j(\tau)\right).
\end{equation*}
Let $D_j$ be the $TK\times 1$ vector $(\diag(\Sigma_{mn}(\tau_1))^{-1/2} d_j(\tau_1),\dots, \diag(\Sigma_{mn}(\tau_T))^{-1/2} d_j(\tau_T))'$ where $d_j(\tau)=(d_j(1,\tau),d_j(2,\tau),\dots,d_j(K,\tau))'$. It follows that
\begin{equation*}
\Var\left(\sum_{j=1}^m D_j\right)=H_{mn}
\end{equation*}
Then, from the proof of Lemma \ref{l:asymptotic moments} and Assumption \ref{a:kernel},
\begin{equation*}
H_{mn}^{-1/2}\sum_{j=1}^mD_j\underset{d}{\rightarrow}N(0,I_{TK}).
\end{equation*}
By Slutsky's theorem,
\begin{equation*}
H^{-1/2}\sum_{j=1}^mD_j=H_{mn}^{-1/2}\sum_{j=1}^mD_j+o_p(1)\underset{d}{\rightarrow}N(0,I_{TK})
\end{equation*}
In the proof of Lemma \ref{l:asymptotic moments} we show that $\sum_j \diag(\Sigma_{mn}(\cdot))^{-1/2} d_j(\cdot))$ is asymptotically tight in $\ell^\infty(\mathcal{T})$. It follows that the process $\sum_j \diag(\Sigma_{mn}(\cdot))^{-1/2} d_j(\cdot))$ weakly converges to $\mathbb Z$, a centered Gaussian process with covariance kernel $H(\tau,\tau')$.

Finally, we show that $o_p(1)\underset{\tau\in\mathcal T, k\in\{1,\dots,K\}}{\sup}\zeta(k,\tau)\Sigma_{mn,k}(\tau)^{-1/2}=o_p(1)$, where $\Sigma_{mn,k}(\tau)$ is the $(k,k)$ element of $\Sigma_{mn}(\tau)$. We consider separately the fast and slow coefficients For $k\in\{1,\dots,M_1\}$ (fast coefficients), note that
$$\Sigma_{mn,k}(\tau)^{1/2}=O_p\left(\frac{1}{\sqrt{mn}} + \frac{\sqrt{\Var(\alpha_j(\tau))}}{n\sqrt{m}}\right)=O_p\left(\frac{1}{\sqrt{mn}}\right),$$
and 
\begin{align*}
\zeta(k,\tau)&=O_p\left(\frac{1}{\sqrt{mn}}+\frac{\sqrt{\Var(\alpha(\tau))}}{\sqrt m}\sum_{l=L_1}^L\sqrt{G_{mn,kl}(\tau)}\right)\\
&=O_p\left(\frac{1}{\sqrt{mn}}+\frac{\sqrt{\Var(\alpha_j(\tau))}}{\sqrt{m}}\frac{1}{\sqrt{1+\Var(\alpha_j(\tau))n}}\right)\\
&=O_p\left(\frac{1}{\sqrt{mn}}\right),
\end{align*}
both uniformly in $\tau$. It follows that $o_p(1) \ \underset{\tau\in\mathcal T}{\sup}\ \zeta(k,\tau)\Sigma_{mn,k}(\tau)^{-1/2}=o_p(1)$.

For $k\in\{M_1+1,\dots,K\}$ (slow coefficients), note that
$$\Sigma_{mn,k}(\tau)^{1/2}=O_p\left(\frac{1}{\sqrt{mn}} + \frac{\sqrt{\Var(\alpha_j(\tau))}}{\sqrt{m}}\right),$$
and 
$$\zeta(k,\tau)=\frac{1}{\sqrt{mna_n(\tau)}}+\frac{\sqrt{\Var(\alpha(\tau))}}{\sqrt m}\sum_{l=L_1}^L\sqrt{G_{mn,kl}(\tau)}.$$
For the first term of $\zeta(k,\tau)$, we obtain
\begin{align*}
    \frac{1}{\sqrt{mna_n(\tau)}}\frac{1}{\Sigma_{mn,k}(\tau)^{1/2}}&=O_p\left(\frac{1}{\sqrt{mna_n(\tau)}}\frac{1}{\frac{1}{\sqrt{mn}}+\frac{\sqrt{\Var(\alpha_j(\tau))}}{\sqrt{m}}}\right)\\
%    &=o_p\left(\frac{1}{\sqrt{mna_n(\tau)}}\frac{\sqrt{mn}}{1+\sqrt{\Var(\alpha_j(\tau))n}}\right)\\
    &=O_p\left(\frac{1}{\sqrt{a_n(\tau)}}\frac{1}{1+\sqrt{\Var(\alpha_j(\tau))n}}\right)\\
    %&=O_p\left(\frac{1}{\sqrt{\frac{1}{1+\Var(\alpha_j(\tau))n}}}\frac{1}{1+\sqrt{\Var(\alpha_j(\tau))n}}\right)\\  
    &=O_p\left(\frac{\sqrt{1+\Var(\alpha_j(\tau))n}}{1+\sqrt{\Var(\alpha_j(\tau))n}}\right)\\
    &=O_p(1).
\end{align*}
\begin{comment}
It follows that 
\begin{align*}
    \zeta(k,\tau)\Sigma_{mn,k}(\tau)^{-1/2}&=
    \left(\frac{1}{\sqrt{mna_n(\tau)}}+\frac{\sqrt{\Var(\alpha_j(\tau))}}{\sqrt m}\right)\frac{1}{\frac{1}{\sqrt{mn}}+\frac{\sqrt{\Var(\alpha_j(\tau))}}{\sqrt{m}}}\\
    &=\frac{1+\sqrt{\Var(\alpha_j(\tau)na_n(\tau)}}{\sqrt{mna_n(\tau)}}\frac{\sqrt{mn}}{1+\sqrt{\Var(\alpha_j(\tau))n}}\\
    &=\frac{1+\sqrt{\Var(\alpha_j(\tau)na_n(\tau)}}{\sqrt{a_n(\tau)}\left(1+\sqrt{\Var(\alpha_j(\tau))n}\right)}\\
    &=\frac{1+\sqrt{\Var(\alpha_j(\tau)n\frac{1}{1+\Var(\alpha_j(\tau))n}}}{\sqrt{\frac{1}{1+\Var(\alpha_j(\tau))n}}\left(1+\sqrt{\Var(\alpha_j(\tau))n}\right)}\\
    &=\frac{\sqrt{\frac{1}{1+\Var(\alpha_j(\tau)n}}\left(\sqrt{1+\Var(\alpha_j(\tau)n}+\sqrt{\Var(\alpha_j(\tau))n}\right)}{\sqrt{\frac{1}{1+\Var(\alpha_j(\tau))n}}\left(1+\sqrt{\Var(\alpha_j(\tau))n}\right)}\\
    &=\frac{\sqrt{1+\Var(\alpha_j(\tau))n}}{1+\sqrt{\Var(\alpha_j(\tau))n}}+\frac{\sqrt{\Var(\alpha_j(\tau))n}}{1+\sqrt{\Var(\alpha_j(\tau))n}}\\
    =O_p(1)
\end{align*}
\end{comment}
For the second term of $\zeta(k,\tau)$, we obtain 
\begin{align*}
    \frac{\sqrt{\Var(\alpha_j(\tau))}}{\sqrt m}\sum_{l=L_1}^L\sqrt{G_{mn,kl}(\tau)}\Sigma_{mn,k(\tau)}^{-1/2}&=O_p\left(\frac{\sqrt{\Var(\alpha_j(\tau))}}{\sqrt m}\frac{1}{\sqrt{mn}+\frac{\sqrt{\Var(\alpha_j(\tau))}}{\sqrt m}}\right)\\
   &=O_p(1).
\end{align*}
It follows that, also in this second case, $o_p(1)\ \underset{\tau\in\mathcal T}  {\sup}\ \zeta(k,\tau)\Sigma_{mn,k}(\tau)^{-1/2}=o_p(1)$.

Hence, uniformly in $\tau$,
\begin{align*}
    \diag(\Sigma_{mn}(\tau))^{-1/2}(\hat\delta(\tau)-\delta(\tau))&=\diag(\Sigma_{mn}(\tau))^{-1/2}\left(\sum_{j=1}^m d_j(\tau)+o_p\left(\zeta(\tau)\right)\right)\\
    &=\diag(\Sigma_{mn}(\tau))^{-1/2} \sum_{j=1}^m d_j(\tau)+o_p(1)\\ &\rightsquigarrow \mathbb G(\tau),
\end{align*}
where $\zeta(\tau)=(\zeta(1,\tau),\dots,\zeta(K,\tau))'$.

\end{proof}

\subsection{Proof of Propositions \ref{p:omegamatrix} and \ref{p:omegamatrix2}: Properties of $\hat\Omega(\tau,\tau')$}

\subsubsection{Proposition \ref{p:omegamatrix}}

\begin{proof}[Proof of Proposition \ref{p:omegamatrix}]
    
\allowdisplaybreaks
We use $\hat u_{ij}(\tau) = \tilde x_{ij}' \hat \beta_j(\tau) - x_{ij}' \hat \delta (\tau)=  \tilde x_{ij}' (\hat \beta_j(\tau) -  \beta_j(\tau)  )  + x_{ij}' (\delta(\tau) -  \hat \delta(\tau))+ \alpha_j(\tau)$ to obtain
\begin{align*}
   \hat \Omega(\tau, \tau')&=\frac{1}{m} \sum_{j = 1}^m \left\{\left(\frac{1}{n}\sum_{i=1}^n z_{ij} \hat u_{ij}(\tau)\right)  \left(\frac{1}{n}\sum_{i=1}^n z_{ij}\hat u_{ij}(\tau')\right)'\right\}\\
    &=\frac{1}{m} \sum_{j = 1}^m  \Bigg\{ \left(\frac{1}{n}\sum_{i=1}^n z_{ij} \left[\tilde x_{ij}' (\hat \beta_j(\tau) -  \beta_j(\tau)  )  + x_{ij}' (\delta(\tau) -  \hat \delta(\tau))+ \alpha_j(\tau)\right]\right) \\ 
    & \hphantom{{} + \frac{1}{m} \sum_{j = 1}^m \Bigg\{ }\left(\frac{1}{n}\sum_{i=1}^n z_{ij}\left[\tilde x_{ij}' (\hat \beta_j(\tau') -  \beta_j(\tau')  )  + x_{ij}' (\delta(\tau') -  \hat \delta(\tau'))+ \alpha_j(\tau')\right]\right)'  \Bigg\}\\
    &=\frac{1}{m} \sum_{j = 1}^m  \Bigg\{\left ( \frac{1}{n}\sum_{i = 1}^n z_{ij}\tilde x_{ij}'(\hat \beta_j(\tau) -\beta_j(\tau) )   \right)   \left ( \frac{1}{n}\sum_{i = 1}^n z_{ij}\tilde x_{ij}'(\hat \beta_j(\tau') -\beta_j(\tau') ) \right )'\\
    & \hphantom{{} + \frac{1}{m} \sum_{j = 1}^m \Bigg\{ } + \left(\frac{1}{n}\sum_{i = 1}^n z_{ij}\alpha_j(\tau)\right) \left(\frac{1}{n}\sum_{i = 1}^n z_{ij}\alpha_j(\tau')\right)' \\
    & \hphantom{{} + \frac{1}{m} \sum_{j = 1}^m \Bigg\{ } +\left ( \frac{1}{n}\sum_{i = 1}^n z_{ij} x_{ij}'(\hat \delta(\tau) -\delta(\tau) )   \right)   \left ( \frac{1}{n}\sum_{i = 1}^n z_{ij} x_{ij}'(\hat \delta(\tau') -\delta(\tau') ) \right )' \\
    & \hphantom{{} + \frac{1}{m} \sum_{j = 1}^m \Bigg\{ } -\left ( \frac{1}{n}\sum_{i = 1}^n z_{ij} \tilde x_{ij}'(\hat \beta_j(\tau) -\beta_j(\tau) )   \right)   \left ( \frac{1}{n}\sum_{i = 1}^n z_{ij} x_{ij}'(\hat \delta(\tau') -\delta(\tau') ) \right )'\\
    & \hphantom{{} + \frac{1}{m} \sum_{j = 1}^m \Bigg\{ } -\left ( \frac{1}{n}\sum_{i = 1}^n z_{ij} x_{ij}'(\hat \delta(\tau) -\delta(\tau) )   \right)   \left ( \frac{1}{n}\sum_{i = 1}^n z_{ij}\tilde x_{ij}'(\hat \beta_j(\tau') -\beta_j(\tau') ) \right )' \\
    & \hphantom{{} + \frac{1}{m} \sum_{j = 1}^m \Bigg\{ } +\left ( \frac{1}{n}\sum_{i = 1}^n z_{ij} \tilde x_{ij}'(\hat \beta_j(\tau) -\beta_j(\tau) )   \right)   \left ( \frac{1}{n}\sum_{i = 1}^n z_{ij}\alpha_j(\tau') \right )'\\
    & \hphantom{{} + \frac{1}{m} \sum_{j = 1}^m \Bigg\{ } +\left ( \frac{1}{n}\sum_{i = 1}^n z_{ij} \alpha_j(\tau)   \right)   \left ( \frac{1}{n}\sum_{i = 1}^n z_{ij} \tilde x_{ij}'(\hat \beta_j(\tau') -\beta_j(\tau') ) \right )'\\
    & \hphantom{{} + \frac{1}{m} \sum_{j = 1}^m \Bigg\{ } -\left ( \frac{1}{n}\sum_{i = 1}^n z_{ij} x_{ij}'(\hat \delta(\tau) -\delta(\tau) )   \right)   \left ( \frac{1}{n}\sum_{i = 1}^n z_{ij} \alpha_j(\tau') \right )' \\
    & \hphantom{{} + \frac{1}{m} \sum_{j = 1}^m \Bigg\{ } -\left ( \frac{1}{n}\sum_{i = 1}^n z_{ij} \alpha_j(\tau)   \right)   \left ( \frac{1}{n}\sum_{i = 1}^n z_{ij} x_{ij}'(\hat \delta(\tau') -\delta(\tau') ) \right )' \Bigg\}.
\end{align*}\allowdisplaybreaks[0]
We will show that the first term converges to $\Omega_1(\tau, \tau')/n$, the second term to $\Omega_2(\tau,\tau')$, and the remaining terms vanish at a rate faster than the leading term.

By the proof of Lemma \ref{l:asymptotic moments}(i), it follows for the first term that
\begin{align*}
\frac{1}{m} \sum_{j = 1}^m \Bigg( \frac{1}{n}\sum_{i = 1}^n &z_{ij}\tilde x_{ij}'(\hat \beta_j(\tau) -\beta_j(\tau) )   \Bigg)   \left ( \frac{1}{n}\sum_{i = 1}^n z_{ij}\tilde x_{ij}'(\hat \beta_j(\tau') -\beta_j(\tau') ) \right ) '  \\ 
= & \bE \left [ \left ( \Sigma_{ZXj} \sumtT \phi_{i,\tau}(\tilde x_{ij}, z_{ij})   \right)   \left ( \Sigma_{ZXj} \sumtT \phi_{i,\tau'}(\tilde x_{ij}, z_{ij}) \right )'  \right]  +  o_p \left ( \frac{1}{mn}  \right ) \\ 
=& \frac{\Omega_1(\tau, \tau')}{n} + O_p \left ( \frac{1}{mn}  \right ),
\end{align*}
uniformly in $\tau$.
For the second term, we consider each element of the $L\times L$ matrix separately. For $l,l'\in\{1,\dots,L\}$, we have
\begin{align*}
\frac{1}{m} \sum_{j = 1}^m  \left(\frac{1}{n}\sum_{i = 1}^n z_{ijl}\alpha_j(\tau)\right) \left(\frac{1}{n}\sum_{i = 1}^n z_{ijl'}\alpha_j(\tau')\right)'&=\frac{1}{m} \sum_{j = 1}^m \bar z_{jl}\bar z_{jl'}\alpha_j(\tau)\alpha_j(\tau').
\end{align*}
If either $l$ or $l'$ (or both) is in $ \{1,\dots,L_1\}$, then $\Omega_{2ll'}(\tau,\tau')=0$ and $\sum_{j = 1}^m \bar z_{jl}\bar z_{jl'}\alpha_j(\tau)\alpha_j(\tau')=0$. Thus, we need only to consider the case that both $\bar z_{jl}$ and $\bar z_{jl'}$ are not equal to $0$ uniformly across all groups. We apply Theorem 9.2 in \cite{hansen2022probability}. His condition (9.3) is satisfied by the boundedness of $z_{ij}$ in Assumption \ref{a:instruments}(i) and the uniform boundedness of the $4+\varepsilon_C$ moment of $\alpha_j(\tau)$ in Assumption \ref{a:group effects}(i). Condition (9.5) in \cite{hansen2022probability} is satisfied by the assumption in equation (\ref{a:variance}).
It follows that
\begin{equation*}
\sqrt m\left(\frac{1}{m} \sum_{j = 1}^m \bar z_{jl}\bar z_{jl'}\alpha_j(\tau)\alpha_j(\tau')-\Omega_{2ll'}(\tau,\tau')\right)\underset{d}{\rightarrow}N(0,C_{l,l'}(\tau,\tau')).
\end{equation*}

Then, since the condition for Theorem 18.3 in \cite{hansen2022probability} are satisfied, we have that $\sup_{\tau, \tau'} \left \rVert \frac{1}{m} \sum_{j = 1}^m \bar z_{jl}\bar z_{jl'}\alpha_j(\tau)\alpha_j(\tau') -\Omega_{2ll'}(\tau,\tau')\right \rVert = O_p\left(\frac{\sqrt{C_{l,l'}(\tau,\tau')}}{\sqrt m} \right)$.
Let $C_l$ denotes a uniform bound on $|z_{ijl}|$.
From the definition of $C_{l,l'}(\tau,\tau')$ we have 
\begin{equation*}
    C_{l,l'}(\tau,\tau')\leq C_l^2C_{l'}^2 \Var(\alpha_j(\tau)\alpha_j(\tau'))\leq C_l^2C_{l'}^2\bE[\alpha_j(\tau)^2\alpha_j(\tau')^2]\leq C_l^2C_{l'}^2\sqrt{\bE[\alpha_j(\tau)^4]\bE[\alpha_j(\tau')^4]} 
\end{equation*}
Finally, note that $\frac{\bE[\alpha_j(\tau)^4]}{Var(\alpha_j(\tau))^2}$ is bounded. It follows that $\frac{1}{m} \sum_{j = 1}^m \bar z_{jl}\bar z_{jl'}\alpha_j(\tau)\alpha_j(\tau')=\Omega_{2ll'}(\tau,\tau')+O_p\left(\frac{\sqrt{\Var(\alpha_j(\tau))\Var(\alpha_j(\tau'))}}{\sqrt m} \right)=\Omega_{2ll'}(\tau,\tau')+O_p\left(\frac{\sqrt{\Omega_{2,ll}(\tau,\tau)\Omega_{2,l'l'}(\tau'\tau')}}{\sqrt m} \right)$.

For the third term, we also consider each element $l,l'$ of the matrix separately:
\begin{multline*}
\frac{1}{m} \sum_{j = 1}^m  \left ( \frac{1}{n}\sum_{i = 1}^n z_{ijl} x_{ij}'(\hat \delta(\tau) -\delta(\tau) )   \right)   \left ( \frac{1}{n}\sum_{i = 1}^n z_{ijl'} x_{ij}'(\hat \delta(\tau') -\delta(\tau') ) \right )=\\
\frac{1}{m} \sum_{j = 1}^m  \left ( \frac{1}{n}\sum_{i = 1}^n \sum_{k=1}^K z_{ijl} x_{ijk}(\hat \delta_k(\tau) -\delta_k(\tau) )   \right)   \left ( \frac{1}{n}\sum_{i = 1}^n \sum_{k=1}^K z_{ijl'} x_{ijk}(\hat \delta_k(\tau') -\delta_k(\tau') ) \right ).
\end{multline*}
\begin{comment}
We first derive the results for the case when $\dot x_{1ij}$ are included as instruments and, in the case of overidentification, a weighting matrix that satisfies Assumption \ref{a:weighting 2} is used. Note that these instruments are always valid, and it is efficient to include them. In addition, our efficient weighting matrix satisfies Assumption \ref{a:weighting 2}, see Lemma \ref{p:omegamatrix}.  It follows that the estimation error of the coefficients on the individual-level variables $x_{1ij}$ are $O_p\left(\frac{1}{\sqrt{mn}}\right)$. In contrast, the estimation error of the coefficients on the group-level variables $x_{2j}$ are $O_p\left(\frac{1}{\sqrt{mn}}\right)+O_p\left(\frac{\sqrt{Var(\alpha_j)}}{\sqrt{m}}\right)$. 
\end{comment}
We split $\sum_{k=1}^K z_{ijl'} x_{ijk}(\hat \delta_k(\tau') -\delta_k(\tau') )$ into the the individual-level and group-level variables. Since the estimation error of the coefficients on the individual-level variables is $O_p(1/\sqrt{mn})$ and $z_{ijl}$ and $x_{ijk}$ are bounded, we obtain
\begin{align*}
    \sum_{k=1}^{K_1} \frac{1}{n}\sum_{i = 1}^n  z_{ijl} x_{1ijk}(\hat \delta_k(\tau) -\delta_k(\tau) )   & =  O_p \left (  \frac{1}{\sqrt {mn} }   \right ) ,
\end{align*}
uniformly in $\tau$.
For the group-level variables, we obtain
\begin{align*}
  \sum_{k=K_1 + 1}^{K}  \frac{1}{n}\sum_{i = 1}^n  z_{ijl} x_{2jk}(\hat \delta_k(\tau) -\delta_k(\tau) ) & =    \sum_{k=K_1+1}^K \bar z_{jl} x_{2jk}(\hat \delta_k(\tau) -\delta_k(\tau) ).
\end{align*}
It follows that $\sum_{k=K_1 + 1}^{K}  \frac{1}{n}\sum_{i = 1}^n  z_{ijl} x_{jk}(\hat \delta_k(\tau) -\delta_k(\tau) )=0$ if $\bar z_{jl}=0$ for all $j=1,\dots,m$, i.e. if $l\in \{1,\dots,L_1\}$. If $l\in\{L_1+1,\dots,L\}$, uniformly in $\tau$, $\sum_{k=K_1 + 1}^{K}  \frac{1}{n}\sum_{i = 1}^n  z_{ijl} x_{jk}(\hat \delta_k(\tau) -\delta_k(\tau) )=O_p\left(\frac{1}{\sqrt{mn}}+\frac{\sqrt{Var(\alpha_j(\tau))}}{\sqrt m}\right)$. Since $\Omega_{2,ll}(\tau)=Var(\bar z_{jl}\alpha_j(\tau))$, in both cases we can write $$\sum_{k=1}^{K}  \frac{1}{n}\sum_{i = 1}^n  z_{ijl} x_{jk}(\hat \delta_k(\tau) -\delta_k(\tau) )=O_p\left(\frac{1}{\sqrt{mn}}+\frac{\sqrt{\Omega_{2,ll} (\tau)}}{\sqrt m}\right)$$ uniformly in $\tau$.
Combining these results, we get uniformly in $\tau, \tau'$
\begin{align}
    \frac{1}{m} \sum_{j = 1}^m & \left ( \frac{1}{n}\sum_{i = 1}^n z_{ijl} x_{ij}'(\hat \delta(\tau) -\delta(\tau) )   \right)   \left ( \frac{1}{n}\sum_{i = 1}^n z_{ijl'} x_{ij}'(\hat \delta(\tau') -\delta(\tau') ) \right )\\
     & = O_p \left (\frac{\sqrt{ \Omega_{mn,ll}(\tau) \Omega_{mn,l'l'}(\tau')}}{ m}  \right ).
\end{align}

\begin{comment}
Keep for later if we want to allow for uniformity in $\bar z_j$
It follows that, for each $K_1+1<k<K$ 
\begin{align*} & =  O_p(\bar z_{jl} ) O_p \left (  \frac{\sqrt{Var(\alpha_j(\tau))}}{\sqrt m}  + \frac{1}{\sqrt{mn}} \right ) \\
     &= O_p \left ( \frac{\sqrt{ \bE[\bar z_{jl}^2] \bE[\alpha_j^2(\tau)]}}{\sqrt m} + \frac{\sqrt{\bE[\bar z_{jl}^2]}}{\sqrt{mn}}  \right ) \\  & = O_p \left ( \frac{\sqrt{ \bE[\bar z_{jl}^2 \alpha_j^2(\tau)]}}{\sqrt m}  + \frac{\sqrt{\bE[\bar z_{jl}^2]}}{\sqrt{mn}} \right ) \\ &
     = O_p \left ( \frac{\sqrt{ \Omega_{2ll  }(\tau)}}{\sqrt m} + \frac{\sqrt{\bE[\bar z_{jl}^2]}}{\sqrt{mn}} \right )
\end{align*}
where the fourth equality uses $Cov(\bar z_{jl}^2 , \alpha_j^2) = 0$.

Let's try the generalization.
\begin{align*}
    \hat\delta_k(\tau)&=\delta_k(\tau)+O_p(\lVert G_{mn,k}\bar g_{mn}(\delta(\tau),\tau)\rVert)\\
    &=\delta_k(\tau)+\sum_{l=1}^L O_p(G_{mn,kl}\bar g_{mn,l}(\delta(\tau),\tau))\\
    &=\delta_k(\tau)+\sum_{l=1}^L O_p(G_{mn,kl})O_p(\bar g_{mn,l}(\delta(\tau),\tau))\\
    &=\delta_k(\tau)+\sum_{l=1}^L O_p(G_{mn,kl})O_p(\bar g_{mn,l}(\delta(\tau),\tau))\\
    &=\delta_k(\tau)+O_p\left(\frac{1}{\sqrt{mn}}\right)
+\sum_{l=L_1+1}^L O_p(G_{mn,kl})O_p\left(\frac{\sqrt{Var(\alpha)}}{\sqrt m}\right)
\end{align*}

\end{comment}

For the fourth term, similar arguments and the fact that  $\sup_\tau \left \rVert \beta_j(\tau)-\beta_j(\tau)\right \rVert=O_p\left(\frac{1}{\sqrt n}\right)$ imply that uniformly in $\tau, \tau'$
\begin{align}
     \frac{1}{m} \sum_{j = 1}^m &   \left ( \frac{1}{n}\sum_{i = 1}^n z_{ijl} \tilde x_{ij}'(\hat \beta_j(\tau) -\beta_j(\tau) )   \right)   \left ( \frac{1}{n}\sum_{i = 1}^n z_{ijl'} x_{ij}'(\hat \delta(\tau') -\delta(\tau') ) \right)  \\ & =  O_p \left (\frac{1}{\sqrt{m}n} + \frac{ \sqrt{\Omega_{2,l'l'}(\tau')}} {\sqrt{mn}} \right ).
\end{align}
Similarly, for the fifth term
\begin{align}
     \frac{1}{m} \sum_{j = 1}^m &   \left ( \frac{1}{n}\sum_{i = 1}^n z_{ijl'} x_{ij}'(\hat \delta(\tau') -\delta(\tau') ) \right)\left ( \frac{1}{n}\sum_{i = 1}^n z_{ijl} \tilde x_{ij}'(\hat \beta_j(\tau) -\beta_j(\tau) )   \right)  \\ & =  O_p \left (\frac{1}{\sqrt{m}n} + \frac{ \sqrt{\Omega_{2,ll}(\tau)}} {\sqrt{mn}} \right ),
\end{align}
uniformly in $\tau, \tau'$.
Thus, the sum of the fourth and fifth terms is
    \begin{align*}
    O_p \left ( \frac{1}{\sqrt{m}n}+\frac{\sqrt{ \Omega_{2,ll}(\tau)}}{ \sqrt{m}\sqrt{n}} + \frac{\sqrt{  \Omega_{2,l'l'}(\tau')}}{ \sqrt{m} \sqrt{n}} \right )=O_p \left (\frac{ \sqrt{\Omega_{mn,ll}(\tau)\Omega_{mn,l'l'}(\tau') }} {\sqrt{m}}\right ),\end{align*}
uniformly in $\tau$.

For the sixth term, we obtain, uniformly in $\tau, \tau'$ 
\begin{align*}
     \frac{1}{m} \sum_{j = 1}^m    \Biggl( \frac{1}{n}\sum_{i = 1}^n z_{ijl} & \tilde x_{ij}'(\hat \beta_j(\tau) -\beta_j(\tau) )   \Biggr)   \Biggl( \frac{1}{n}\sum_{i = 1}^n z_{ijl'}\alpha_j(\tau') \Biggr) \\ 
     & =     \frac{1}{m} \sum_{j = 1}^m   \left ( \frac{1}{n}\sum_{i = 1}^n z_{ijl}  \tilde x_{ij}'(\hat \beta_j(\tau) -\beta_j(\tau) )   \right)   \bar z_{jl'}\alpha_j(\tau') \\ 
     & = O_p \left ( \frac{ \sqrt{\Omega_{2,l'l'}(\tau') }} {\sqrt{mn}} \right ),
\end{align*}
and similarly, we can show that the seventh term is $O_p \left ( \frac{ \sqrt{\Omega_{2,ll}(\tau) }} {\sqrt{mn}} \right )$.

\begin{comment}
    This is wrong. 
This is not fast enough. Try a different approach (exploit that the two terms are asymptotically independent). 
We apply again Theorem 9.2 in \cite{hansen2022probability}. [CHECK CONDTIONS]By similar step as in the third term, we find 
\begin{align}
     \frac{1}{m} \sum_{j = 1}^m &   \left ( \frac{1}{n}\sum_{i = 1}^n z_{ijl} \tilde x_{ij}'(\hat \beta_j(\tau) -\beta_j(\tau) )   \right)   \left ( \frac{1}{n}\sum_{i = 1}^n z_{ijl'}'\alpha_j(\tau') \right )  \\ & =     \frac{1}{m} \sum_{j = 1}^m   \left ( \frac{1}{n}\sum_{i = 1}^n z_{ijl}  \tilde x_{ij}'(\hat \beta_j(\tau) -\beta_j(\tau) )   \right)   \left ( \bar z_{jl'}'\alpha_j(\tau') \right )  = O_p \left ( \frac{ \sqrt{\Omega_{2,l'l'} }} {m\sqrt{n}} \right )  
\end{align}
\end{comment}

For the eighth term, we obtain
\begin{align*}
\frac{1}{m} \sum_{j = 1}^m \Biggl( \frac{1}{n}\sum_{i = 1}^n z_{ijl} x_{ij}'&(\hat \delta(\tau) -\delta(\tau) )   \Biggr)   \left ( \frac{1}{n}\sum_{i = 1}^n z_{ijl'}' \alpha_j(\tau') \right )  \\ 
& =  \frac{1}{m} \sum_{j = 1}^m \left ( \frac{1}{n}\sum_{i = 1}^n z_{ijl} x_{ij}'(\hat \delta(\tau) -\delta(\tau) )   \right) \bar z_{jl'} \alpha_j(\tau') \\
&= O_p\left(\frac{\sqrt{\Omega_{2,ll}(\tau)\Omega_{2,l'l'}(\tau')}}{m} +\frac{\sqrt{\Omega_{2,l'l'}(\tau')}}{m\sqrt n}\right ) ,
\end{align*} 
and the ninth term is $O_p\left(\frac{\sqrt{\Omega_{2,ll}(\tau)\Omega_{2,l'l'}(\tau')}}{m} +\frac{\sqrt{\Omega_{mn,l'l'}(\tau')}}{m\sqrt n}\right )$ such that the sum of the eight and ninth term is $O_p\left(\frac{\sqrt{\Omega_{mn,ll}(\tau)\Omega_{mn,l'l'}(\tau')}}{m}\right)$, where both results are uniformly in $\tau, \tau'$.

Combining all the terms, we find that for each $l$, $l'$ entry
    \begin{align*}
        \hat \Omega(\tau, \tau') =& \frac{\Omega_{1,ll'}(\tau, \tau')}{n} + \Omega_{2,ll'}(\tau,\tau') + O_p \left (  \frac{\sqrt{\Omega_{mn,ll}(\tau) \Omega_{mn,l'l' }(\tau)}}{\sqrt{m}} \right)\nonumber \\
        = &\Omega_{mn,ll'}(\tau, \tau') + o_p \left (  {\sqrt{\Omega_{mn,ll}(\tau) \Omega_{mn,l'l' }(\tau')}} \right),
    \end{align*}
    uniformly in $\tau, \tau'$.

\begin{comment}
\begin{align*}
      \frac{1}{m} \sum_{j = 1}^m \left ( \frac{1}{n} \sum_{i = 1}^n z_{ijl} \hat u_j(\tau) \right )    \left ( \frac{1}{n} \sum_{i = 1}^n z_{ijl'} \hat u_j(\tau) \right )'  = & \frac{\Omega_{1ll'}(\tau, \tau')}{n} + \Omega_{2ll'}(\tau,\tau')    + O_p \left ( \frac{\Omega_{2,ll} ^{1/4}(\tau) \Omega_{2,l'l'} ^{1/4}(\tau')}{\sqrt{m}}  \right ) \\ & 
      + O_p \left (  \frac{\sqrt{ \Omega_{2,ll}(\tau,\tau')}}{m\sqrt{n}} +  \frac{ \sqrt{\Omega_{2,l'l'}(\tau,\tau') }} {m\sqrt{n}}+  \frac{1}{\sqrt{m}n} \right ) %\\ & = \frac{\Omega_{1ll'}(\tau, \tau')}{n} + \Omega_{2ll'}(\tau,\tau')  + o_p( \Omega_{ll}(\tau, \tau') + \Omega_{l'l'}(\tau, \tau'))
\end{align*}
\end{comment}

\begin{comment}    
for $l = l'$ we have 
\begin{align*}
      \frac{1}{m} \sum_{j = 1}^m \left ( \frac{1}{n} \sum_{i = 1}^n z_{ijl} \hat u_j(\tau) \right )    \left ( \frac{1}{n} \sum_{i = 1}^n z_{ijl} \hat u_j(\tau) \right )'  = & \frac{\Omega_{1ll}(\tau, \tau')}{n} + \Omega_{2,ll}(\tau,\tau')    + O_p \left ( \frac{\Omega_{2,ll} ^{1/4}(\tau) \Omega_{2,ll} ^{1/4}(\tau')}{\sqrt{m}}  \right ) \\ & 
      + O_p \left (   \frac{1}{\sqrt{m}n} \right )\\ & =  \frac{\Omega_{1ll}(\tau, \tau')}{n} + \Omega_{2,ll}(\tau,\tau') + o_p(\Omega_{ll}(\tau, \tau'))
\end{align*}
%For the off-diagonal elements, we need that $\sqrt {n} m > n$. So that $n$ cannot grow too quickly. 
\end{comment}
\end{proof}

\subsubsection{Proposition \ref{p:omegamatrix}$'$}

When the coefficients of some individual-level variables converge at a slow rate, while those of others converge at a fast rate, the third term in the proof of Proposition \ref{p:omegamatrix} may not be $O_p\left(\frac{\sqrt{\Omega_{mn,ll}(\tau)\Omega_{mn,l'l'}(\tau')}}{m}\right)$. The slowly converging coefficients can introduce an estimation error in the variance of the faster moments that diminishes more slowly than the true value. Proposition \ref{p:omegamatrix}$'$, stated below, provides a more general result that does not assume the coefficients of individual-level variables converge at the $O_p(1/\sqrt{mn})$ rate. Consequently, $\hat\Omega(\tau,\tau')$ is consistent as long as all individual-level variable coefficients converge at the same rate. One example of this is the between estimator for individual-level variables. Another example is the 2SLS estimator applied to the random effects model. In this case, the weighting matrix is full rank, and all coefficients converge at the rate of  $1/\sqrt{mn}+\sqrt{\Var(\alpha_j(\tau))}/\sqrt{m}$. 

\begin{propositionp}{\ref{p:omegamatrix}$'$}\label{p:omegamatrix2}
Let assumptions \ref{a:sampling}-\ref{a:group effects} and \ref{a:growth condition}(c) hold. As $m\rightarrow\infty$, for each $l,l'\in \{1,\dots,L\}$ and uniformly in $\tau,\tau'\in\mathcal{T}^2$,
\begin{equation*}
    m^{-1}\sum_{j=1}^m \bE \left[\left(\bar z_{jl} \bar z_{jl'}\alpha_j(\tau)\alpha_j(\tau')-\Omega_{2ll'}(\tau,\tau')\right)^2\right]\rightarrow C_{l,l'}(\tau,\tau')<\infty
\end{equation*}
The estimator used to compute $\hat u_{ij}(\tau)$ satisfies 
\begin{equation*}\hat\delta(\tau)-\delta(\tau)= O_p\left(\frac{1}{\sqrt{mn}} + \frac{\sqrt{\Var(\alpha_j(\tau))}}{\sqrt m} \right)\end{equation*} 
uniformly in $\tau$. Then, for any $ll'$ entry of the $\hat \Omega(\tau, \tau')$ matrix with $l,l' \in \{1, \dots , L \}$ we have 
\begin{align*}
   \hat \Omega (\tau, \tau')=\Omega_{mn,ll'}(\tau, \tau') + o_p \left (\frac{1}{n}+\frac{\sqrt{\Var(\alpha_j(\tau))}}{\sqrt n}+\frac{\sqrt{\Var(\alpha_j(\tau'))}}{\sqrt n}+\sqrt{\Var(\alpha_j(\tau))\Var(\alpha_j(\tau'))} \right) 
\end{align*}
\end{propositionp}

\begin{proof}
The proof follows the same steps as the proof of Proposition \ref{p:omegamatrix}, but requires some modifications each time $\hat\delta(\tau)-\delta(\tau)$ is involved. This term is now $O_p\left(\frac{1}{\sqrt{mn}}+\frac{\Var(\alpha_j(\tau))}{\sqrt m} \right)$ instead of $O_p\left(\frac{1}{\sqrt{mn}}+\frac{\Omega_{2}(\tau)}{\sqrt m} \right)$. For the third term, we obtain
\begin{align*}
    \frac{1}{m} \sum_{j = 1}^m & \left ( \frac{1}{n}\sum_{i = 1}^n z_{ijl} x_{ij}'(\hat \delta(\tau) -\delta(\tau) )   \right)   \left ( \frac{1}{n}\sum_{i = 1}^n z_{ijl'}' x_{ij}'(\hat \delta(\tau') -\delta(\tau') ) \right )\\
     & = O_p\left(\frac{1}{mn} + \frac{\sqrt{\Var(\alpha_j(\tau))\Var(\alpha_j(\tau'))}}{m} + \frac{\sqrt{\Var(\alpha_j(\tau))}}{\sqrt{n }m} + \frac{\sqrt{\Var(\alpha_j(\tau'))}}{\sqrt{n }m}\right).
\end{align*}
The sum of the fourth and fifth terms is now
\begin{align*}
O_p\left(\frac{1}{n\sqrt m}+\frac{\sqrt{\Var(\alpha_j(\tau))}}{\sqrt{n m}}+\frac{\sqrt{\Var(\alpha_j(\tau'))}}{\sqrt{n m}}\right).
\end{align*}
The sum of the eight and ninth terms is now $$O_p\left(\frac{\sqrt{\Omega_{2,ll}(\tau)}}{m\sqrt n}+\frac{\sqrt{\Omega_{2,l'l'}(\tau')}}{m\sqrt n}+\frac{\sqrt{\Var(\alpha(\tau'))\Omega_{2,ll}(\tau)}}{m}+\frac{\sqrt{\Var(\alpha(\tau))\Omega_{2,l'l'}(\tau')}}{m}\right).$$
The result of the lemma follows.
\end{proof}

\subsection{Proof of Proposition \ref{p:cov matrix}: Adaptive Inference}

\begin{proof}[Proof of Proposition \ref{p:cov matrix}]
    
We prove the results for $T = 2$ as the proof trivially extends to $T > 2$. We consider the $2K\times 1$ coefficient vector $\hat \delta(\tau, \tau')=(\hat\delta(\tau)',\hat\delta(\tau')')'$ 
% \begin{align*}
%     \begin{pmatrix}
%         \hat \delta(\tau) \\ \hat \delta(\tau') 
%     \end{pmatrix} -\begin{pmatrix}
%         \delta(\tau) \\ \delta(\tau') 
%     \end{pmatrix} = \begin{pmatrix}
%         \hat G(\tau) & 0\\ 0 & \hat G(\tau')
%     \end{pmatrix} \begin{pmatrix}
%         \bar g_{mn}(\tau) \\ \bar g_{mn}(\tau')
%     \end{pmatrix}
% \end{align*}
whose asymptotic covariance matrix is
\begin{align*}
    \Sigma_{mn} = \begin{pmatrix}
        \Sigma_{mn}(\tau) & \Sigma_{mn}(\tau, \tau') \\ \Sigma_{mn}(\tau, \tau') & \Sigma_{mn}( \tau')
    \end{pmatrix}.
\end{align*}
This covariance matrix is estimated by
\begin{align*} \hat  \Sigma = & \frac{1}{m}\begin{pmatrix}
        \hat G(\tau) & 0\\ 0 & \hat G(\tau')
    \end{pmatrix} 
        \begin{pmatrix}
       \hat \Omega(\tau, \tau) & \hat \Omega(\tau,\tau') \\
        \hat \Omega(\tau', \tau) & \hat \Omega(\tau')
    \end{pmatrix} \begin{pmatrix}
        \hat G(\tau) & 0\\ 0 & \hat G(\tau')
    \end{pmatrix} ' \\= &\frac{1}{m} \begin{pmatrix}
        \hat G(\tau) \hat \Omega(\tau) \hat G(\tau) &  \hat G(\tau) \hat \Omega(\tau,\tau') \hat G(\tau') \\ \hat G(\tau') \hat \Omega(\tau',\tau) \hat G(\tau) &  \hat G(\tau') \hat \Omega(\tau') \hat G(\tau')
    \end{pmatrix}.
\end{align*}
For each $k,k'\in\{1,\dots,2K\}$, we will show that \begin{equation*}\hat\Sigma_{kk'}=\Sigma_{mn,kk'}+o_p\left(\sqrt{\Sigma_{mn,kk}\Sigma_{mn,k'k'}}\right).\end{equation*}
To simplify the notation, we show this result for the $K\times K$ top-left submatrix of $\Sigma_{mn}$ such that we can drop the dependence on $\tau$. The proof for the other parts is similar. Our analysis relies on two key ingredients: 

First, by Proposition \ref{p:omegamatrix}, for any $ll'$ entry of the $\hat \Omega$ matrix with $l,l' \in \{1, \dots , L \}$, we have 
\begin{align*}
        \hat \Omega_{ll'}= \Omega_{mn,ll'} +  o_p \left ( 
 \sqrt{ \Omega_{mn,ll} \Omega_{mn,l'l'} }   \right )
\end{align*}
where $\Omega$ can be split into four submatrices where the dimensions of the top-left part are $L_1\times L_1$ and the bottom-right are $L_2\times L_2$:
\begin{equation*}
    \Omega=\begin{pmatrix}
        \Omega_{mn,11} & \Omega_{mn,12} \\
        \Omega_{mn,21} & \Omega_{mn,22}
    \end{pmatrix}
\end{equation*}
such that $\Omega_{mn,11}$, $\Omega_{mn,12}$, and $\Omega_{mn,21}$ are $O_p\left(\frac{1}{n}\right)$ while $\Omega_{mn,22}=O_p\left(\frac{1}{na_n}\right)$.

Second, by Lemma \ref{l:consistency G mat uniform}, \begin{equation*}
\hat G= \begin{pmatrix}
        G_{mn,11} & G_{mn,12} \\ G_{mn21} & G_{mn,22}
    \end{pmatrix}+ \begin{pmatrix}
        o_p\left(1\right) & o_p \left (\sqrt{a_n}\right ) \\ o_p \left (1/\sqrt{a_n} \right) & o_p\left(1\right)
    \end{pmatrix},
\end{equation*}
where $G_{mn,12}=O_p\left(a_n\right)$ and the other elements of $G_{mn}$ are $O_p\left(1\right)$.

With this notation, 
\begin{align*}
\Sigma_{mn}&=\frac{1}{m}
    \begin{pmatrix}
        G_{mn,11} & G_{mn,12}\\
        G_{mn,21} & G_{mn,22}
    \end{pmatrix}
    \begin{pmatrix}
        \Omega_{mn,11} & \Omega_{mn,12}\\
        \Omega_{mn,21} & \Omega_{mn,22}
    \end{pmatrix}
    \begin{pmatrix}
        G_{mn,11}' & G_{mn,21}'\\
        G_{mn,12}' & G_{mn,22}'
    \end{pmatrix}\\
    &=\begin{pmatrix}
        \Sigma_{mn,11} & \Sigma_{mn,12}\\
        \Sigma_{mn,21} & \Sigma_{mn,22}
    \end{pmatrix},
\end{align*}
where \begin{align*}
\Sigma_{mn,11}&= \frac{1}{m} \left (G_{mn,11}\Omega_{mn,11}G_{mn,11}'+G_{mn,11}\Omega_{mn,12}G_{mn,12}'+G_{mn,12}\Omega_{mn,21}G_{mn,11}'+G_{mn,12}\Omega_{mn,22}G_{mn,12}'\right),\\
\Sigma_{mn,12}&=\frac{1}{m} \left (G_{mn,11}\Omega_{mn,11}G_{mn,21}'+G_{mn,11}\Omega_{mn,12}G_{mn,22}'+G_{mn,12}\Omega_{mn,21}G_{mn,21}'+G_{mn,12}\Omega_{mn,22}G_{mn,22}'\right),\\
\Sigma_{mn,21}&=\frac{1}{m} \left (G_{mn,21}\Omega_{mn,11}G_{mn,11}'+G_{mn,21}\Omega_{mn,12}G_{mn,12}'+G_{mn,22}\Omega_{mn,21}G_{mn,11}'+G_{mn,22}\Omega_{mn,22}G_{mn,12}' \right),\\
\Sigma_{mn,22}&=\frac{1}{m} \left (G_{mn,21}\Omega_{mn,11}G_{mn,21}'+G_{mn,21}\Omega_{mn,12}G_{mn,22}'+G_{mn,22}\Omega_{mn,21}G_{mn,21}'+G_{mn,22}\Omega_{mn,22}G_{mn,22}'\right).
\end{align*}
It follows that $\Sigma_{mn,11}=O_p\left(\frac{1}{mn}\right)$, $\Sigma_{mn,12}=O_p\left(\frac{1}{mn}\right)$, $\Sigma_{mn,21}=O_p\left(\frac{1}{mn}\right)$, and $\Sigma_{mn,22}=O_p\left(\frac{1}{mna_n}\right)$.

We consider the estimation error of these four terms separately, each of them being composed of four parts. For the first term,
\begin{align*}
    \hat G_{11}\hat \Omega_{11}\hat G_{11}'
    &=(G_{mn,11}+o_p\left(1\right))(\Omega_{mn,11}+o_p\left(\Omega_{mn,11}\right))(G_{mn,11}'+o_p\left(1\right))\\
    &=G_{mn,11}\Omega_{mn,11}G_{mn,11}'+o_p(\Omega_{mn,11})
    \\
    &=G_{mn,11}\Omega_{mn,11}G_{mn,11}'+o_p\left(\frac{1}{n}\right),\\
    \hat G_{11}\hat \Omega_{12}\hat G_{12}'
    &=(G_{mn,11}+o_p\left(1\right))\left (\Omega_{mn,12}+o_p\left(\sqrt{\Omega_{mn,11}\Omega_{mn,22}}\right)\right)\left (G_{mn,12}'+o_p\left(\sqrt{a_n}\right) \right )\\   &=G_{mn,11}\Omega_{mn,12}G_{mn12}'+o_p\left(\Omega_{mn,12}\right)+o_p\left(\sqrt{\Omega_{mn,11}\Omega_{mn,22}a_n}\right)\\
    &=G_{mn,11}\Omega_{mn,12}G_{mn,12}'+o_p\left(\frac{1}{n}\right),\\
    \hat G_{12}\hat\Omega_{21}\hat G_{11}'&=(\hat G_{11}\hat \Omega_{12}\hat G_{12}')'\\
    &=G_{mn,12}\Omega_{mn,21}G_{mn,11}'+o_p\left(\frac{1}{n}\right),\\
    \hat G_{12}\hat \Omega_{22}\hat G_{12}'
    &=(G_{mn,12}+o_p\left(\sqrt{a_n}\right))(\Omega_{mn,22}+o_p\left(\Omega_{mn,22}\right))(G_{mn,12}'+o_p\left(\sqrt{a_n}\right))\\   &=G_{mn,12}\Omega_{mn,22}G_{mn,12}'+o_p\left(a_n\Omega_{mn,22}\right)\\
    &=G_{mn,12}\Omega_{mn,22}G_{mn,12}'+o_p\left(\frac{1}{n}\right).
\end{align*}
It follows that $\hat\Sigma_{11}=\Sigma_{mn,11}+o_p\left(\frac{1}{mn}\right)=\Sigma_{mn,11}+o_p\left(\Sigma_{mn,11}\right)$.

For the second part,
\begin{align*}
    \hat G_{11}\hat \Omega_{11}\hat G_{21}'
    &=(G_{mn,11}+o_p\left(1\right))\left (\Omega_{mn,11}+o_p\left(\Omega_{mn,11}\right)\right )(G_{mn,21}'+o_p\left(1/\sqrt{a_n}\right))\\    &=G_{mn,11}\Omega_{mn,11}G_{mn,21}'+o_p\left(\Omega_{mn,11}/\sqrt{a_n}\right)\\    &=G_{mn,11}\Omega_{mn,11}G_{mn,21}'+o_p\left(\sqrt{\Omega_{mn,11}\Omega_{mn,22}}\right),\\
    \hat G_{11}\hat \Omega_{12}\hat G_{22}'
    &=(G_{mn,11}+o_p\left(1\right))\left(\Omega_{mn,12}+o_p\left(\sqrt{\Omega_{mn,11}\Omega_{mn,22}}\right) \right)(G_{mn,22}'+o_p\left(1\right) )\\   &=G_{mn,11}\Omega_{mn,12}G_{mn,22}'+o_p\left(\sqrt{\Omega_{mn,11}\Omega_{mn,22}}\right),\\
    \hat G_{12}\hat\Omega_{21}\hat G_{21}'&=(G_{mn,12}+o_p\left(\sqrt{a_n}\right))(\Omega_{mn,21}+o_p\left(\sqrt{\Omega_{mn,11}\Omega_{mn,22}}\right)(G_{mn,21}'+o_p\left(1/\sqrt{a_n}\right))'\\    &=G_{mn,12}\Omega_{mn,21}G_{mn,21}'+o_p\left(\sqrt{\Omega_{mn,11}\Omega_{mn,22}}\right),\\
    \hat G_{12}\hat \Omega_{22}\hat G_{22}'
    &=(G_{mn,12}+o_p\left(\sqrt{a_n}\right))(\Omega_{mn,22}+o_p(\Omega_{mn,22}))(G_{mn,22}'+o_p\left(1\right))\\   &=G_{mn,12}\Omega_{mn,22}G_{mn,22}'+o_p\left(\sqrt{a_n}\Omega_{mn,22}\right)\\    &=G_{mn,12}\Omega_{mn,22}G_{mn,22}'+o_p\left(\sqrt{\Omega_{mn,11}\Omega_{mn,22}}\right).
\end{align*}
It follows that $\hat\Sigma_{12}=\Sigma_{mn,12}+o_p(m^{-1}\sqrt{\Omega_{mn,11}\Omega_{mn,22}})=\Sigma_{mn,12}+o_p(\sqrt{\Sigma_{mn,11}\Sigma_{mn,22}})$. The third part, $\hat\Sigma_{21}$, is the transpose of $\hat\Sigma_{12}$.

For the fourth part,
\begin{align*}
    \hat G_{21}\hat \Omega_{11}\hat G_{21}'
    &=(G_{mn,21}+o_p\left(1/\sqrt{a_n}\right))(\Omega_{mn,11}+o_p\left(\Omega_{mn,11}\right))\left(G_{mn,21}'+o_p\left(1/\sqrt{a_n}\right) \right)\\    &=G_{mn,21}\Omega_{mn,11}G_{mn,21}'+o_p\left(\Omega_{mn,11}/a_n\right)\\    &=G_{mn,21}\Omega_{mn,11}G_{mn,21}'+o_p\left(\Omega_{mn,22}\right),\\
    \hat G_{21}\hat \Omega_{12}\hat G_{22}'
    &=(G_{mn,21}+o_p\left(1/\sqrt{a_n}\right))\left(\Omega_{mn,12}+o_p\left(\sqrt{\Omega_{mn,11}\Omega_{mn,22}}\right) \right) \left (G_{mn,22}'+o_p\left(1\right) \right )\\   &=G_{mn,21}\Omega_{mn,12}G_{mn,22}'+o_p\left(\sqrt{\Omega_{mn,11}\Omega_{mn,22}/a_n}\right)\\    &=G_{mn,21}\Omega_{mn,12}G_{mn,22}'+o_p\left(\Omega_{mn,22}\right),\\
    \hat G_{22}\hat\Omega_{21}\hat G_{21}'&=(G_{mn,22}+o_p\left(1\right))\Omega_{mn,21}+o_p\left(\sqrt{\Omega_{mn,11}\Omega_{mn,22}}\right)(G_{mn,21}'+o_p\left(1/\sqrt{a_n}\right))'\\ &=G_{mn,22}\Omega_{mn,21}G_{mn,21}'+o_p\left(\sqrt{\Omega_{mn,11}\Omega_{mn,22}/a_n}\right)\\    &=G_{mn,22}\Omega_{mn,21}G_{mn,21}'+o_p\left(\Omega_{mn,22}\right),\\
    \hat G_{22}\hat \Omega_{22}\hat G_{22}'
    &=(G_{mn,22}+o_p\left(1\right))(\Omega_{mn,22}+o_p\left(\Omega_{mn,22}\right))(G_{mn,22}'+o_p\left(1\right))\\   &=G_{mn,22}\Omega_{mn,22}G_{mn,22}'+o_p\left(\Omega_{mn,22}\right).
\end{align*}
It follows that $\hat\Sigma_{22}=\Sigma_{mn,22}+o_p\left(m^{-1}\Omega_{mn,22}\right)=\Sigma_{mn,22}+o_p\left(\Sigma_{mn,22}\right)$. The results for these four submatrices imply that
\begin{equation*}\hat\Sigma_{kk'}=\Sigma_{mn,kk'}+o_p(\sqrt{\Sigma_{mn,kk}\Sigma_{mn,k'k'}}),\end{equation*}
which implies that
\begin{equation*}
\eta'\hat\Sigma\eta=\eta'\Sigma_{mn}\eta+o_p\left(\eta'\Sigma_{mn}\eta\right).
\end{equation*}
The proposition follows from Theorem \ref{t:adaptive}.

\end{proof}

\subsection{Proof of Proposition \ref{p:weight matrix adaptive}: Properties of the weighting matrix}
% WEIGHTING MATRIX 
% \begin{proposition}[\textbf{Adaptive Efficiency of the GMM Estimator}]    
% \label{p:weight matrix adaptive}
% %Let assumptions \ref{a:sampling}-\ref{a:coefficients} and \ref{a:growth condition}(c) hold.  Further, assume that the estimator used to estimate the variance satisfies (i) $\hat\delta_1(\tau)-\delta_1(\tau)=O_p(1/\sqrt{mn})$, (ii) $\hat\delta_2(\tau)-\delta_2(\tau) = O_p(1/mn) +O_p(\sqrt{\Var(\alpha_j(\tau))}/\sqrt m)$ uniformly in $\tau$.
% Assume that the conditions for Proposition \ref{p:omegamatrix} hold. Then, the estimator of the efficient weighting matrix satisfies uniformly in $\tau$,
%     \begin{equation*}
%     \hat W(\tau) = \begin{pmatrix}
%         W_{11}(\tau) & a_n(\tau) W_{12}(\tau) \\ a_n(\tau) W_{21}(\tau) & a_n(\tau) W_{22}(\tau)
%     \end{pmatrix} + \begin{pmatrix}
%         o_p(1) & o_p \left (\sqrt{a_n(\tau)}\right ) \\ o_p \left (\sqrt{a_n(\tau)} \right ) & o_p({a_n}(\tau))
%     \end{pmatrix}.
% \end{equation*}
% \end{proposition}
 \begin{proof}
 %[Proof of Proposition \ref{p:weight matrix adaptive}]

Since this proof focuses on a case where $\tau = \tau'$, we suppress the dependency on $\tau$ for simplicity. 
 We partition the $\Omega $ matrix as follows: 
\begin{equation*}
       \Omega_{mn} = \begin{pmatrix}
         \Omega_{mn,11} &  \Omega_{mn,12} \\  \Omega_{mn,21} &  \Omega_{mn,22}
    \end{pmatrix}
\end{equation*}
where $\Omega_{mn,11}$ is $L_1 \times L_1, \Omega_{mn,12} $ is $L_1 \times L_2$, $\Omega_{mn,21}$ is $L_2 \times L_1$ and $\Omega_{mn,22}$ is $L_2\times L_2$.

From Proposition \ref{p:omegamatrix} we have, uniformly over $\tau$

\begin{align*}
   \hat \Omega   =      \begin{pmatrix}
       \frac{ \Omega_{1,11}}{n}  &  \frac{ \Omega_{1,12}}{n}   \\
        \frac{ \Omega_{1,21}}{n}    &   \Omega_{mn,22}   
    \end{pmatrix} + 
     \begin{pmatrix}
         o_p \left (   \parallel\Omega_{mn,11}\parallel  \right )  &  + o_p \left (  \frac{\sqrt{ \parallel\Omega_{mn,22}\parallel  }}{\sqrt{n}}\right )   \\
         o_p \left (  \frac{\sqrt{ \parallel\Omega_{mn,22}\parallel  }}{\sqrt{n}}\right )    &    o_p \left (   \parallel\Omega_{mn,22}\parallel  \right )  
    \end{pmatrix}.
\end{align*}
By the inverse of a partitioned matrix
\begin{equation}\label{eq:Wmat}
 \hat W = \frac{1}{n} \cdot   \begin{pmatrix}
        \hat \Omega_{11} & \hat \Omega_{12} \\ \hat \Omega_{21} & \hat \Omega_{22}
    \end{pmatrix}^{-1} =  \begin{pmatrix}
       \hat  \Psi & - \hat \Psi \hat \Omega_{12} \hat \Omega_{22}^{-1} \\ -  \hat \Omega_{22}^{-1} \hat \Omega_{21} \hat \Psi & n^{-1} \cdot \hat \Omega_{22}^{-1} + \hat \Omega_{22}^{-1} \hat \Omega_{21} \Psi \hat \Omega_{12} \hat \Omega_{22}^{-1}
    \end{pmatrix}
\end{equation}
where $\hat \Psi = ( n \cdot \hat \Omega_{11} - n \cdot \hat \Omega_{12} \hat \Omega_{22}^{-1}\hat \Omega_{21} ) ^{-1}$.
To address the possibility that $\Omega_{mn,22}$ could converge to zero, we rescale both elements in $\hat \Omega_{12} \hat \Omega_{22}^{-1}$, which ensures that both elements are well behaved, and show that the expression is self-normalizing. Hence, uniformly in $\tau$,
\begin{align*}
  \hat \Omega_{12} \hat \Omega_{22}^{-1}  =  & \left (  \parallel \Omega_{mn,22} \parallel^{-1}\hat \Omega_{12} \right) \left ( \parallel \Omega_{mn,22} \parallel^{-1}   \hat \Omega_{mn,22} \right) ^{-1} \\ =&  \Omega_{mn,12}  \Omega_{mn,22}^{-1} +   o_p \left (\sqrt{ n^{-1} \parallel \Omega_{mn,22} \parallel^{-1}} \right)\\  =& n^{-1}   \Omega_{1,12}  \Omega_{mn,22}^{-1} + o_p(\sqrt{a_n}).
\end{align*}
Note that the second line follows as
\begin{align*}
    \parallel \Omega_{mn,22} \parallel^{-1} \hat \Omega_{12} = &   { n^{-1}\parallel \Omega_{mn,22} \parallel^{-1} \Omega_{1,12}} + o_p \left (\sqrt{ n^{-1} \parallel \Omega_{mn,22} \parallel^{-1}} \right),\\
     \parallel \Omega_{mn,22} \parallel^{-1} \hat \Omega_{22} =&  \parallel \Omega_{mn,22} \parallel^{-1} \Omega_{mn,22} + o_p(1),
\end{align*}   
 and $ \parallel \Omega_{mn,22} \parallel^{-1}   \hat \Omega_{22}  $ is invertible.
It then follows that, uniformly in $\tau$,
\begin{align*} 
 \hat \Omega_{11} -  \hat \Omega_{12} \hat \Omega_{22}^{-1}\hat \Omega_{21}
 = & 
 \frac{\Omega_{1,11}}{n} + o_p \left( \frac{1}{n}\right) - \left  (n^{-1}   \Omega_{mn,12}  \Omega_{mn,22}^{-1} +   o_p \left (\sqrt{ n^{-1} \parallel \Omega_{mn,22} \parallel^{-1}} \right) \right )\\ & \left(    \frac{ \Omega_{1,21}}{n} + o_p \left (  \frac{\sqrt{\parallel \Omega_{mn,22}\parallel}}{\sqrt{n}}  \right )\right  ) \\ &
= \frac{\Omega_{1,11}}{n}  - \frac{ \Omega_{1,12}}{n}\Omega_{mn,22}^{-1}\frac{ \Omega_{1,21}}{n} + o_p \left (\frac{1}{n} \right)  .
\end{align*}
Hence, uniformly in $\tau$, 
\begin{align*}
\hat \Psi^{-1} = n \cdot  \hat \Omega_{11} - n \cdot  \hat \Omega_{12} \hat \Omega_{22}^{-1}\hat \Omega_{21} = {\Omega_{1,11}}  - \frac{1}{n} \Omega_{1,12}\Omega_{mn,22}^{-1}{ \Omega_{1,21}} + o_p \left (1 \right) ,
 \end{align*}
 so that for the submatrix in the upper left of equation (\ref{eq:Wmat}), we have 
 $$ \sup_{\tau } \left \rVert \hat \Psi - \Psi \right \rVert= o_p(1) ,$$
 where $\Psi$ is strictly positive definite.\\
For the upper right term of $\hat W$, we have uniformly in $\tau$
\begin{align*}
    -  \hat \Psi \hat \Omega_{12} \hat \Omega_{22}^{-1} = -  \Psi  \frac{\Omega_{1,12}}{n}  \Omega_{mn,22}^{-1} + o_p(\sqrt{a_n}),
\end{align*}
where $ \Psi  \frac{\Omega_{1,12}}{n}  \Omega_{mn,22}^{-1} = O_p(a_n)$.\\
Similarly, 
$$ \hat \Omega_{22}^{-1} \hat \Omega_{21} \hat \Psi =   \Omega_{mn,22}^{-1}  \frac{\Omega_{1,21}}{n}  \Psi + o_p(\sqrt{a_n}).$$
Finally, for the lower right term, using an appropriate normalization, we find that uniformly in $\tau$
\begin{align*}
    n^{-1 }{\hat \Omega_{22}}^{-1} = n^{-1} \Omega_{mn,22}^{-1} + o_p \left ( \frac{\parallel \Omega_{mn,22} \parallel^{-1}}{n} \right) = n^{-1}\Omega_{mn,22}^{-1}  + o_p\left (a_n  \right),
\end{align*} 
and combining this with the results for the other components for the lower right submatrix, we obtain
$$ \sup_\tau\left \rVert n^{-1} \cdot \hat \Omega_{22}^{-1} + \hat \Omega_{22}^{-1} \hat \Omega_{21} \Psi \hat \Omega_{12} \hat \Omega_{22}^{-1} - n^{-1} \cdot  \Omega_{mn,22}^{-1} +  \Omega_{mn,22}^{-1}  \frac{\Omega_{1,21}}{n} \Psi  \frac{\Omega_{1,12}}{n}  \Omega_{mn,22}^{-1} \right \rVert =  o_p({a_n}) . $$ 
Hence, uniformly in $\tau\in\mathcal T$,
\begin{equation*}
    \hat W(\tau) = \begin{pmatrix}
        W_{11}(\tau) & a_n(\tau) W_{12}(\tau) \\ a_n(\tau) W_{21}(\tau) & a_n(\tau) W_{22}(\tau)
    \end{pmatrix} + \begin{pmatrix}
        o_p(1) & o_p \left (\sqrt{a_n(\tau)}\right ) \\ o_p \left (\sqrt{a_n(\tau)} \right ) & o_p({a_n}(\tau))
    \end{pmatrix}.
\end{equation*}

 \end{proof}

\subsection{Proof of Proposition \ref{prop:hausmant test}: Overidentification Test}
\begin{proof}[Proof of Proposition \ref{prop:hausmant test}]
We prove this result for $T =1$ as the proof trivially extends to $T > 1$. We suppress the dependency on $\tau$ for simplicity as this proof focuses on a case where $\tau = \tau'$. 
First, we want to rewrite the J-statistics in a way that accounts for the different convergence rates of the moment conditions. Let $d_{\Omega, mn} = I_L \cdot  \diag (\Omega_{mn})$. Then we can write
\begin{align*}
    J(\hat \delta) &=  m \bar g_{mn}( \hat \delta)' \hat \Omega^{-1} \bar g_{mn}( \hat \delta)
    \\ &=   \left (  d_{\Omega,mn}^{-1/2} \sqrt{m}\bar g_{mn}( \hat \delta) \right )'  \left [ d_{\Omega,mn}^{-1/2} \hat \Omega d_{\Omega,mn}^{-1/2} \right ]^{-1}  \left ( d_{\Omega,mn}^{-1/2} \sqrt{m} \bar g_{mn}( \hat \delta).  \right)
\end{align*}

Second, we want to show that for some matrix $\hat B$, $\bar g_{mn}(\hat \delta) = \hat B  \bar g_{mn}(\delta)$. Recall that $\hat Y_j = X_j \delta + \alpha_j + \tilde X_j ( \hat \beta_j - \beta_j)$. Hence, we can write
\begin{align*}
   Z_j' \hat Y_j &=Z_j' X_j \delta + Z_j'\alpha_j +  Z_j'\tilde X_j ( \hat \beta_j - \beta) \\ 
   S_{Z\hat Y} &= S_{ZX} \delta + \sumiN \sumtT z_{ij} \alpha_j + \sumiN \sumtT z_{ij} \tilde x_{ij}' ( \hat \beta_j - \beta_j)
   \\ 
    &= S_{ZX} \delta + \bar g_{mn}(\delta) .
\end{align*}
\\
Then, note that
\begin{align*}
    \bar g_{mn}( \hat \delta ) &= \sumiN \sumtT z_{ij} (\hat y_{ij} - x_{ij}'\hat \delta) \\
    &= S_{Z \hat Y} - S_{ZX} \hat \delta \\
    &= S_{Z \hat Y} - S_{ZX} \left ( S_{ZX} \hat \Omega^{-1} S_{ZX}\right )^{-1}S_{ZX} \hat \Omega^{-1}S_{Z \hat Y} = \hat B S_{Z \hat Y},
\end{align*}
where $\hat B =\left (I_L - S_{ZX} \left ( S_{ZX}' \hat \Omega^{-1} S_{ZX} \right )^{-1} S_{ZX}' \hat \Omega^{-1} \right )$.\\
Thus,
\begin{align*}
    \bar g_{mn} ( \hat \delta) &= \hat B S_{Z\hat Y} \\ 
    &= \left (I_L - S_{ZX} \left ( S_{ZX}' \hat \Omega^{-1} S_{ZX} \right )^{-1} S_{ZX}' \hat \Omega^{-1} \right ) \left ( S_{ZX} \delta + \bar g_{mn}(\delta) \right ) \\ &= \hat B  \bar g_{mn}(\delta)  .
\end{align*}
 Since $\left [ d_{\Omega,mn}^{-1/2}  \hat\Omega d_{\Omega,mn}^{-1/2} \right ] $ is positive definite there exist a matrix $\hat Q$ such that $$\left [ d_{\Omega,mn}^{-1/2} \hat \Omega d_{\Omega,mn}^{-1/2} \right ]^{-1} =  \hat Q'\hat Q.$$
We define $\hat A = \hat Q d_{\Omega,mn}^{-1/2} S_{ZX}'$ and $ \hat M = I_L - \hat A(\hat A'\hat A)^{-1}\hat  A' $.

In this third part, we show that 
\begin{align*}
    \hat B' d_{\Omega,mn}^{-1/2} \left [ d_{\Omega,mn}^{-1/2} \hat \Omega d_{\Omega,mn}^{-1/2} \right ]^{-1} d_{\Omega,mn}^{-1/2} \hat  B = d_{\Omega,mn}^{-1/2} \hat Q ' \hat M \hat Q d_{\Omega,mn}^{-1/2} ,
\end{align*}
where $ \hat B' d_{\Omega,mn}^{-1/2} \left [ d_{\Omega,mn}^{-1/2} \hat \Omega d_{\Omega,mn}^{-1/2} \right ]^{-1} d_{\Omega,mn}^{-1/2} \hat  B =  \hat B' d_{\Omega,mn}^{-1/2} \hat Q' \hat Q  d_{\Omega,mn}^{-1/2} \hat  B  $.
Note that
\begin{align*}
   \hat Q  d_{\Omega,mn}^{-1/2} \hat B & = \hat Q d_{\Omega,mn}^{-1/2}  \left (I_L - S_{ZX} \left ( S_{ZX}' \hat \Omega^{-1} S_{ZX} \right )^{-1} S_{ZX}' \hat \Omega^{-1} \right )
   \\ &=    \left (\hat Q d_{\Omega,mn}^{-1/2} - \hat Q d_{\Omega,mn}^{-1/2} S_{ZX} \left ( S_{ZX}' \hat \Omega^{-1} S_{ZX} \right )^{-1} S_{ZX}' \hat \Omega^{-1} \right )
   \\ &=  \left (\hat Q  d_{\Omega,mn}^{-1/2} - \hat Q d_{\Omega,mn}^{-1/2} S_{ZX} \left ( S_{ZX}' d_{\Omega,mn}^{-1/2} \hat Q'\hat Q d_{\Omega,mn}^{-1/2} S_{ZX} \right )^{-1} S_{ZX}' d_{\Omega,mn}^{-1/2} \hat Q' \hat Q d_{\Omega,mn}^{-1/2} \right )
    %  \\ &=  \left (\hat Q d_{\Omega,mn}^{-1/2} - C \Lambda_{n} S_{ZX} \left ( S_{ZX}' \Lambda_n C'C \Lambda_n S_{ZX} \right )^{-1} S_{ZX}' \Lambda_n C'C d_{\Omega,mn}^{-1/2} \right )
          \\ &=  \left (I_L - \hat A \left (\hat A'\hat A \right )^{-1} \hat A' \right )\hat Q d_{\Omega,mn}^{-1/2}
           \\ &= \hat M \hat Q d_{\Omega,mn}^{-1/2}.
\end{align*}
Where the third line uses $ d_{\Omega,mn}^{1/2} \hat \Omega^{-1}  d_{\Omega,mn}^{1/2} = \hat Q' \hat Q$ and in the last two lines, we use the definitions of $\hat A$ and $\hat M$. \\
$\hat M$ is symmetric and idempotent. Thus
\begin{align*}
    \hat B' d_{\Omega,mn}^{-1/2} \hat \Omega^{-1} d_{\Omega,mn}^{-1/2} \hat  B &= \hat B' d_{\Omega,mn}^{-1/2} \hat Q'\hat Q d_{\Omega,mn}^{-1/2} \hat  B \\ &= \left ( \hat Q d_{\Omega,mn}^{-1/2} \hat  B \right)'\hat Q d_{\Omega,mn}^{-1/2} \hat  B \\ & = \left (  \hat M \hat Q d_{\Omega,mn}^{-1/2} \right)' \hat M \hat Q d_{\Omega,mn}^{-1/2} \\ &= d_{\Omega,mn}^{-1/2} \hat Q' \hat M \hat Q d_{\Omega,mn}^{-1/2} .
\end{align*}
The rank of $\hat M$ is the trace of $\hat M$, which is $L - K$.
\\
Since $ d_{\Omega,mn}^{-1/2} \Omega_{mn} d_{\Omega,mn}^{-1/2}$ is positive definite, there exist a matrix $Q_{mn}$ such that
$$ Q_{mn}'Q_{mn} = \left [ d_{\Omega,mn}^{-1/2} \Omega_{mn} d_{\Omega,mn}^{-1/2}\right]^{-1}.$$ It is easy to show that

\begin{align*}
    d_{\Omega}^{1/2} \hat \Omega^{-1}  d_{\Omega}^{1/2} = d_{\Omega}^{1/2}  \Omega_{mn}^{-1}  d_{\Omega}^{1/2} + o_p(1),
\end{align*}
where
\begin{align*}
    \left (d_{\Omega}^{-1/2}  \Omega_{mn}  d_{\Omega}^{-1/2} \right) ^{-1}=  
      & \begin{pmatrix}
        1 
        & \frac{\Omega_{1,12}}{\sqrt{n} \sqrt{\Omega_{1,11} \Omega_{22}}}  
        \\ 
      \frac{\Omega_{1,21}}{\sqrt{n} \sqrt{ \Omega_{22} \Omega_{1,11}}} 
        & 1
    \end{pmatrix} .
\end{align*}
It then follows directly that 
$\hat Q = Q_{mn} + o_p(1)$.

%By equation (\ref{eq:moment dist}) in the proof of Lemmma \ref{l:asymptotic moments},
Further, we have that
\begin{equation*}
d_{\Omega,mn}^{-1/2} \sqrt{m}  \bar g_{mn}(\delta)  \xrightarrow{d} N \left ( 0,  d_{\Omega,mn}^{-1/2} \Omega_{mn}  d_{\Omega,mn}^{-1/2}  \right) .
\end{equation*}   
\begin{commentP}
\begin{align*}
    \left( \hat Q - Q_{mn} \right) d_{\Omega,mn}^{-1/2}  \sqrt{m} \bar g_{mn}(\delta)  
\end{align*}
\begin{small}
\begin{align*}
d_{\Omega}^{1/2} \hat \Omega^{-1}  d_{\Omega}^{1/2} =  & \begin{pmatrix}
        \left ( \frac{\Omega_{1,11}}{n} \right )^{-1/2}  \hat \Omega_{1,11}  \left ( \frac{\Omega_{1,11}}{n} \right )^{-1/2} 
        & \left ( \frac{\Omega_{1,11}}{n} \right )^{-1/2}  { \hat \Omega_{1,12} }  \Omega_{22}^{-1/2} \\ 
        \Omega_{22}^{-1/2} \hat \Omega_{12}  \left ( \frac{\Omega_{1,11}}{n} \right )^{-1/2}    
        & \Omega_{22}^{-1/2} \hat \Omega_{22} \Omega_{22}^{-1/2}  \end{pmatrix} \\
         & = \begin{pmatrix}
        \left ( \frac{\Omega_{1,11}}{n} \right )^{-1/2}  \left ( \frac{ \Omega_{1,11}}{n} + o_p(1/n) \right)   \left ( \frac{\Omega_{1,11}}{n} \right )^{-1/2} 
        & \left ( \frac{\Omega_{1,11}}{n} \right )^{-1/2}  \left ( \frac{ \Omega_{1,21}}{n}+ o_p \left (  \frac{\sqrt{ \parallel\Omega_{22}\parallel  }}{\sqrt{n}}\right ) \right)  \Omega_{22}^{-1/2} 
        \\ 
       \Omega_{22}^{-1/2} \left ( \frac{ \Omega_{1,21}}{n}+ o_p \left (  \frac{\sqrt{ \parallel\Omega_{22}\parallel  }}{\sqrt{n}}\right ) \right)   \left ( \frac{\Omega_{1,11}}{n} \right )^{-1/2} 
        & \Omega_{22}^{-1/2} \left (  \Omega_{22}   + o_p \left (   \parallel\Omega_{22}\parallel  \right ) \right)\Omega_{22}^{-1/2} 
    \end{pmatrix} \\
      & = \begin{pmatrix}
        1 + o_p(1)
        & \frac{\Omega_{1,21}}{\sqrt{n} \sqrt{\Omega_{1,11} \Omega_{22}}} + o_p(1) 
        \\ 
      \frac{\Omega_{1,21}}{\sqrt{n} \sqrt{\Omega_{1,11} \Omega_{22}}} + o_p(1)
        & 1 + o_p(1)
    \end{pmatrix}
\end{align*}
\end{small}

\end{commentP}
Define $\hat v_{mn} = \sqrt{m} \hat Q d_{\Omega,mn}^{-1/2}  \bar g_{mn}(\delta)  $ and note that 
$$\hat v_{mn} \xrightarrow{ d} N(0, Q_{mn}  d_{\Omega,mn}^{-1/2} \Omega_{mn}  d_{\Omega,mn}^{-1/2}  Q_{mn}') = N(0, Q_{mn} (Q_{mn}'Q_{mn})^{-1} Q_{mn}') = N(0, I_L). $$
Now we can come back to our test statistic:
\begin{align*}
     J(\hat \delta) &=  m \bar g_{mn}( \hat \delta)' \hat \Omega^{-1} \bar g_{mn}( \hat \delta)
    \\ &=   \left (\sqrt{m} d_{\Omega,mn}^{-1/2} \bar g_{mn}( \hat \delta) \right )'  \left [ d_{\Omega,mn}^{-1/2} \hat \Omega d_{\Omega,mn}^{-1/2} \right ]^{-1}  \left ( \sqrt{m} d_{\Omega,mn}^{-1/2} \bar g_{mn}( \hat \delta)  \right)     
    \\ &=   \left (\sqrt{m} d_{\Omega,mn}^{-1/2} \hat B \bar g_{mn}(  \delta) \right )'  \left [ d_{\Omega,mn}^{-1/2} \hat \Omega d_{\Omega,mn}^{-1/2} \right ]^{-1}  \left (\sqrt{m} d_{\Omega,mn}^{-1/2}  \hat B \bar g_{mn}(  \delta)  \right) 
            \\ &=   \left (\sqrt{m}   \bar g_{mn}(  \delta) \right )'  \hat B'  \hat \Omega ^{-1} \hat B \left (\sqrt{m}   \bar g_{mn}(  \delta)  \right) 
        \\ &=   \left (\sqrt{m}   \bar g_{mn}(  \delta) \right )' d_{\Omega,mn}^{-1/2} \hat Q ' \hat M \hat Q d_{\Omega,mn}^{-1/2} \left (\sqrt{m}   \bar g_{mn}(  \delta)  \right) 
\\ &=   \left (\sqrt{m}  \hat Q d_{\Omega,mn}^{-1/2}  \bar g_{mn}(  \delta) \right )'  \hat M  \left (\sqrt{m} \hat Q d_{\Omega,mn}^{-1/2}  \bar g_{mn}(  \delta)  \right)\\
& = \hat v_{mn}' \hat M \hat v_{mn}.
\end{align*}

Since $\hat M$ is idempotent with rank $L-K$, it immediately follows that
$$  J(\hat \delta) \xrightarrow{d} \chi^2_{L-K}. $$

% \xrightarrow{p} & 
\end{proof} 

%%%%%%%%%%%%%%%%%%%% LINEAR MODELS %%%%%%%%%%%%%%%%%%%%%%%%
\section{Least Squares Panel Data Models}\label{app:linear models}
\subsection{Formal results}\label{sec:formal ls}
This section complements subsection \ref{subsec:least squares estimators} by discussing more in detail the relationship between the least squares estimator and the minimum distance approach. Throughout the section, we define the $n \times  (K_1 + 1 ) $ matrix of first-stage regressors $\tilde X_{1j} = (  \tilde x_{1j} ,  \tilde x_{2j}, \dots , \tilde x_{nj}  )'$, and the $mn \times K_1$ matrix of individual-level regressors $X_1 = (X_{1j}', \dots, X_{1m}')'$. Further, we use the matrices $P_j = l(l'l)^{-1} l'$ and $Q_j = I_j- P_j$, where $l$ is a $n \times  1$ vector of ones. Thus, $P_j X_j = \bar X_j $ and $Q_j X_{1j} =\dot X_{1j}  $. We consider a linear version of our estimator, where OLS instead of quantile regression is used in the first stage and we focus on model (\ref{FELS2}).
In this section, we show that mean models can be estimated using a two-step procedure. Notation is the same as in the paper, except that the fitted values are computed using an OLS regression. More precisely, the vector of fitted values of group $j$ is
\begin{equation*}
    \hat Y_j = \tilde X_j \hat \beta_j = \tilde X_j \left ( \tilde X_j' \tilde X_j \right )^{-1} \tilde X_j' Y_j.
\end{equation*}
%Let $\hat \delta$ be the estimator obtained from regressing $Y$ on $X$, and denote $\hat \delta^{MD}$ the estimator obtained from a regression where the fitted values from the first stage are used as dependent variable. 
The following Proposition states the equivalence of the two-step procedure using the fitted values and the conventional one-step estimator in mean models.
\begin{proposition}
\label{prop:md = gmm} Denote $\hat \delta_{GMM}^{MD}$ the coefficient vector of a linear GMM regression of $\hat Y$ on $X$ with instrument $Z$. Let $\hat \delta_{GMM}$ be the coefficient vector of the same GMM regression but with regressand $Y$. Assume that for each $j$, $Z_j $ lies in the column space of $ \tilde X_j$, then $\hat \delta^{MD}_{GMM} = \hat \delta_{GMM}$.
\end{proposition}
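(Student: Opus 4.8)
The plan is to exploit the fact that the closed-form GMM estimator in equation (\ref{eq:second stage estimator}) depends on the first-stage output only through the $L\times 1$ moment vector $Z'\hat Y$, and then to show that the column-span hypothesis forces $Z'\hat Y = Z'Y$, so that the two GMM regressions are literally fed the same inputs.

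First I would write both estimators in closed form,
\begin{equation*}
\hat\delta^{MD}_{GMM} = \left(X'Z\hat W Z'X\right)^{-1}X'Z\hat W Z'\hat Y,\qquad \hat\delta_{GMM} = \left(X'Z\hat W Z'X\right)^{-1}X'Z\hat W Z'Y,
\end{equation*}
which makes clear that it suffices to prove $Z'\hat Y = Z'Y$. Next I would pass to the group-by-group level. By definition of the first-stage OLS fit, $\hat Y_j = \tilde X_j(\tilde X_j'\tilde X_j)^{-1}\tilde X_j' Y_j = P_{\tilde X_j}Y_j$, where $P_{\tilde X_j}$ denotes the orthogonal projector onto the column space of $\tilde X_j$ (if $\tilde X_j'\tilde X_j$ is singular one replaces the inverse by the Moore--Penrose inverse; $P_{\tilde X_j}$ is still the orthogonal projector and the argument is unaffected). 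Since $P_{\tilde X_j}$ is symmetric and idempotent and, by hypothesis, every column of $Z_j$ lies in the column space of $\tilde X_j$, we have $P_{\tilde X_j}Z_j = Z_j$, hence
\begin{equation*}
Z_j'\hat Y_j = Z_j'P_{\tilde X_j}Y_j = \left(P_{\tilde X_j}Z_j\right)'Y_j = Z_j'Y_j .
\end{equation*}
Using the stacked representation $Z'\hat Y = \sum_{j=1}^m Z_j'\hat Y_j$, and the analogous identity for $Z'Y$, yields $Z'\hat Y = Z'Y$; substituting into the two closed forms above gives $\hat\delta^{MD}_{GMM} = \hat\delta_{GMM}$.

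There is not much of an obstacle here: the only points requiring care are (i) confirming that the GMM objective depends on the regressand solely through $Z'\hat Y$, which is immediate from the closed form, and (ii) dealing with a possibly rank-deficient $\tilde X_j$, which is precisely why I phrase the key step in terms of the orthogonal projector $P_{\tilde X_j}$ rather than the explicit inverse. The same computation is the engine behind Corollary \ref{cor:FE} and the other least-squares equivalences discussed in Section \ref{subsec:least squares estimators}: one simply checks that $\dot x_{1ij}$, $\bar x_{1j}$, and $x_{2j}$ each lie columnwise in the span of $\tilde X_j = (1,\, x_{1ij}')'$, after which the numerical equivalence with the one-step GMM—and hence with the textbook fixed-effects, between, random-effects, or Hausman--Taylor estimators under the appropriate instrument choice—follows from Proposition \ref{prop:md = gmm}.
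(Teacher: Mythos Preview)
Your proof is correct and follows essentially the same route as the paper: reduce to showing $Z'\hat Y = Z'Y$, then use the group-level projection identity $P_{\tilde X_j}Z_j = Z_j$ (from the column-space hypothesis) together with symmetry of the projector to conclude $Z_j'\hat Y_j = Z_j'Y_j$ for each $j$. Your remark on handling rank-deficient $\tilde X_j$ via the Moore--Penrose inverse is a harmless refinement the paper does not spell out.
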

The proof of this Proposition and all subsequent proofs are in Appendix \ref{app:proofs linear models}. 
Proposition \ref{prop:md = gmm} implies that any linear model can be computed by a two-step estimator as long as the matrix of instruments of each group $j$, $Z_j$ lies in the column space of the matrix of first-stage regressors of group $j$, $\tilde X_j$.\footnote{Since $\tilde X_j$ includes a constant, the presence of group-level variables in $Z_j$ will not affect its column space.} This result applies to a wide range of estimators. Since OLS is a special case of GMM, the result for pooled OLS follows directly, while the result for the within estimator is summarized in the following Corollary.
\begin{corollary}\label{cor:FE}
Denote $\hat \delta_{FE}^{MD}$ the coefficient vector of an IV regression of $\hat Y$ on $X_1$ with instruments $\dot X_1$. Let $\hat \delta_{FE}$ be the coefficient vector of the within estimator, that is, of a regression of $\dot Y$ on $\dot X_1$. Then $\hat \delta_{FE}^{MD} = \hat \delta_{FE}$.
\end{corollary}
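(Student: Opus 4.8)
\textbf{Proof plan for Corollary \ref{cor:FE}.}

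The plan is to obtain Corollary \ref{cor:FE} as a direct specialization of Proposition \ref{prop:md = gmm}. First I would note that the within (fixed effects) estimator $\hat\delta_{FE}$, defined as the OLS coefficient from regressing $\dot Y$ on $\dot X_1$, can equivalently be written as an IV/GMM estimator of $Y$ on $X_1$ using $\dot X_1 = Q X_1$ as instruments, where $Q = \diag(Q_1,\dots,Q_m)$ is block-diagonal with $Q_j = I_n - P_j$. This is the standard algebraic identity: $(\dot X_1' \dot X_1)^{-1}\dot X_1'\dot Y = (X_1'Q'QX_1)^{-1}X_1'Q'QY$, and since $Q_j$ is symmetric and idempotent, $Q'Q = Q$, so this equals $(X_1' Q X_1)^{-1} X_1' Q Y$, which is precisely the just-identified IV estimator of $Y$ on $X_1$ with instrument matrix $\dot X_1$. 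I would state this as a one-line algebraic observation.

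Next I would verify the hypothesis of Proposition \ref{prop:md = gmm}: that for each group $j$, the instrument matrix $Z_j = \dot X_{1j} = Q_j X_{1j}$ lies in the column space of the first-stage regressor matrix $\tilde X_j = (l, X_{1j})$. The key point is that $\dot X_{1j} = X_{1j} - \bar X_{1j} = X_{1j} - l \bar x_{1j}'$, which is a linear combination of the columns of $X_{1j}$ and the column $l$; hence each column of $\dot X_{1j}$ is in $\col(\tilde X_j)$. Equivalently, $\dot X_{1j} = \tilde X_j \begin{pmatrix} -\bar x_{1j}' \\ I_{K_1}\end{pmatrix}$, so $\tilde X_j (\tilde X_j'\tilde X_j)^{-1}\tilde X_j' \dot X_{1j} = \dot X_{1j}$, i.e. $P_{\tilde X_j} \dot X_{1j} = \dot X_{1j}$.

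Then Proposition \ref{prop:md = gmm} applies directly with $X$ replaced by $X_1$ and $Z$ replaced by $\dot X_1$ (the just-identified case, where GMM reduces to IV regardless of the weighting matrix): the IV regression of the first-stage fitted values $\hat Y$ on $X_1$ with instruments $\dot X_1$ coincides with the IV regression of $Y$ on $X_1$ with instruments $\dot X_1$, which by the first step equals $\hat\delta_{FE}$. Hence $\hat\delta_{FE}^{MD} = \hat\delta_{FE}$.

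I do not anticipate a serious obstacle here — the corollary is essentially a bookkeeping exercise on top of Proposition \ref{prop:md = gmm}. The only point requiring a little care is making sure the reduction is clean: one must state explicitly that Proposition \ref{prop:md = gmm} is phrased for a general regressor matrix $X$ and instrument matrix $Z$ with $L = \dim(Z_j) = K$ allowed (here $L = K = K_1$, the just-identified case), so that invoking it with $(X_1, \dot X_1)$ in place of $(X, Z)$ is legitimate; and that the intercept/group effect is implicitly handled because $\dot X_{1j}$ has been demeaned, so no constant is needed among the $X_1$ regressors in the second stage. Everything else is the standard Frisch–Waugh–Lovell style manipulation, which I would not grind through in detail.
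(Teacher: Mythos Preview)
Your proposal is correct and follows essentially the same route as the paper: verify that $\dot X_{1j}$ lies in the column space of the first-stage regressor matrix $\tilde X_j$, invoke Proposition~\ref{prop:md = gmm} to replace $\hat Y$ by $Y$, and then use the idempotence of $Q_j$ to identify the resulting IV estimator with the within estimator. Your explicit decomposition $\dot X_{1j} = \tilde X_j\begin{pmatrix}-\bar x_{1j}'\\ I_{K_1}\end{pmatrix}$ is in fact a cleaner justification of the column-space condition than the paper's own one-line claim.
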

The between estimator is usually computed by regressing $\bar Y$ on $\bar X$. Alternatively, it can be estimated by an IV regression of $Y$ (or $\hat Y$) on $X$ using $\bar X$ as an instrument, where it exploits only the variation between individuals. 
\begin{corollary}\label{cor:BE}
Denote $\hat \delta_{BE}^{MD}$ the coefficient vector of an IV regression of $\hat Y$ on $X$ with instruments $\bar X$. Let $\hat \delta_{BE}$ be the coefficient vector of the between estimator, that is, of a regression of  $\bar Y$ on $\bar X$. Then $\hat \delta_{BE}^{MD} = \hat \delta_{BE}$.
\end{corollary}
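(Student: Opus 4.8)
The plan is to prove Corollary \ref{cor:BE} in two steps: first reduce the minimum distance estimator built from the fitted values $\hat Y$ to the instrumental variables estimator built from the raw outcome $Y$, and then show algebraically that this IV estimator coincides with the between estimator. Since the model is exactly identified here ($L=K$), the ``GMM regression'' in the statement is simply the IV estimator $(\bar X'X)^{-1}\bar X'Y$ (respectively with $\hat Y$ in place of $Y$), so the weighting matrix plays no role.

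For the first step I would invoke Proposition \ref{prop:md = gmm} with $Z=\bar X$. The hypothesis to check is that, for each group $j$, the instrument block $\bar X_j$ lies in the column space of the first-stage regressor matrix $\tilde X_j$. This is immediate: every column of $\bar X_j$ is constant across $i$ within group $j$ (row $i$ of $\bar X_j$ equals $(\bar x_{1j}', x_{2j}')$, which does not depend on $i$), so each column is a scalar multiple of the vector of ones $l$, and $l$ is the first column of $\tilde X_j$. Equivalently, writing $\bar X_j = P_j X_j$ and noting that $\hat Y_j=\tilde X_j\hat\beta_j$ is the least squares projection $P_{\tilde X_j}Y_j$, one has $P_{\tilde X_j}\bar X_j=\bar X_j$ and hence $\bar X_j'\hat Y_j=\bar X_j'P_{\tilde X_j}Y_j=\bar X_j'Y_j$; summing over $j$ gives $\bar X'\hat Y=\bar X'Y$, so $\hat\delta_{BE}^{MD}=(\bar X'X)^{-1}\bar X'\hat Y=(\bar X'X)^{-1}\bar X'Y$.

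For the second step I would use that $P_j=l(l'l)^{-1}l'$ is symmetric and idempotent: from $\bar X_j=P_jX_j$ it follows that $\bar X_j'X_j=X_j'P_jX_j=X_j'P_jP_jX_j=\bar X_j'\bar X_j$, and likewise $\bar X_j'Y_j=X_j'P_jY_j=\bar X_j'\bar Y_j$. Summing over $j$ yields $\bar X'X=\bar X'\bar X$ and $\bar X'Y=\bar X'\bar Y$, whence
\begin{equation*}
\hat\delta_{BE}^{MD}=(\bar X'X)^{-1}\bar X'Y=(\bar X'\bar X)^{-1}\bar X'\bar Y=\hat\delta_{BE}.
\end{equation*}

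This argument is just a short chain of projection-matrix identities, so I do not expect a genuine obstacle; the only point needing care is the standing rank condition that $\bar X'\bar X$ (equivalently $\bar X'X$) be invertible — i.e. that there is enough between-group variation in $\bar x_{1j}$ and $x_{2j}$ for the between estimator to be well defined — which is implicit in the statement. If one prefers not to cite Proposition \ref{prop:md = gmm}, the displayed identity $\bar X'\hat Y=\bar X'Y$ derived in the second paragraph already supplies a self-contained replacement for that step.
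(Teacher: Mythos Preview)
Your proof is correct and follows essentially the same approach as the paper: first invoke Proposition \ref{prop:md = gmm} (checking that $\bar X_j$ lies in the column space of $\tilde X_j$ because its columns are constant and $\tilde X_j$ contains the intercept), then use the symmetry and idempotence of $P_j$ to rewrite $\bar X'X=\bar X'\bar X$ and $\bar X'Y=\bar X'\bar Y$. Your additional remark that $\bar X'\hat Y=\bar X'Y$ can be verified directly (bypassing the citation of Proposition \ref{prop:md = gmm}) is a nice self-contained touch, but otherwise the argument is the same as the paper's.
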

It is worth noting that the IV approach to these panel data estimators also works in one stage with $Y$ as the dependent variable. Further, it is clearly possible to estimate between (within) models using average (demeaned) fitted values and regressors. 

% Random Effects
The pooled OLS and the between estimators can estimate both $\beta $ and $\gamma$ but are not efficient. 
The random effects estimator optimally combines between and the within variation to find a more efficient estimator. While FGLS is the most common estimator for the random effects model, \cite{Im1999} show that the overidentified 3SLS estimator, with instruments $Z_j = (\dot X_{1j},\bar X_j)$, is identical to the random effects estimator. 
The 3SLS estimator is a special case of GMM with weighting matrix $ W = \mathbb \bE[ Z_j'\tilde \Omega Z_j]$ where $\tilde \Omega$ follows the usual random effects covariance structure. 
% = \mathbb \bE [Z_j' U_j' U_j Z_j]$  where $\Omega = \sigma_\eps^2 \mathbf I_n + \sigma_\alpha^2 \mathbf l_n \mathbf l_n'$. 
Thus, by Proposition \ref{prop:md = gmm}, the random effects estimator can also be computed in two steps using the fitted values in the second stage.
\begin{corollary}\label{cor:RE}
Denote $\hat \delta_{RE}^{MD}$ the coefficient vector of a 3SLS regression of $\hat Y$ on $X$ with instruments $(\dot X_{1j},\bar X_j)$. Let $\hat \delta_{RE}$ be the coefficient vector of a random effects regression of $Y$ on $X$. Then $\hat \delta_{RE}^{MD} = \hat \delta_{RE}$.
\end{corollary}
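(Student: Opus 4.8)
The plan is to reduce Corollary \ref{cor:RE} to Proposition \ref{prop:md = gmm} together with the numerical equivalence, due to \cite{Im1999}, between the random effects estimator and a particular 3SLS (hence GMM) estimator. First I would recall that, by \cite{Im1999}, the random effects estimator $\hat\delta_{RE}$ of $Y$ on $X=[X_1\ X_2]$ is numerically identical to the overidentified 3SLS estimator that uses $Z_j=(\dot X_{1j},\bar X_j)$ as instruments, with weighting matrix of the usual random effects form $W=\big(\mathbb{E}[Z_j'\tilde\Omega Z_j]\big)^{-1}$ (or its feasible analog built from the Nerlove-type variance-component estimators). Since 3SLS is a linear GMM estimator, this says $\hat\delta_{RE}=\hat\delta_{GMM}$ for this particular choice of $Z$ and $W$, where $\hat\delta_{GMM}$ is exactly the one-step GMM regression of $Y$ on $X$ with instrument $Z$ from the statement of Proposition \ref{prop:md = gmm}. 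The same reasoning applies on the fitted-value side: $\hat\delta_{RE}^{MD}$ is the 3SLS, i.e.\ GMM, regression of $\hat Y$ on $X$ with the same $Z$ and $W$, that is, $\hat\delta_{GMM}^{MD}$.

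Next I would verify the hypothesis of Proposition \ref{prop:md = gmm}, namely that for each group $j$ the instrument block $Z_j=(\dot X_{1j},\bar X_j)$ lies in the column space of the first-stage regressor matrix $\tilde X_j=[\,l\ X_{1j}\,]$. This is immediate from the definitions in Appendix \ref{sec:formal ls}: $\dot X_{1j}=Q_jX_{1j}=X_{1j}-l\,\bar x_{1j}'$ is a linear combination of the columns of $X_{1j}$ and of $l$, all of which are columns of $\tilde X_j$; and $\bar X_j=P_jX_j=[\,l\,\bar x_{1j}'\ \ X_{2j}\,]$, whose first block is a multiple of $l$ and whose second block consists of variables that are constant within $j$, hence also multiples of $l$. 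Therefore $\mathrm{col}(Z_j)\subseteq\mathrm{col}(\tilde X_j)$, so the orthogonal projector onto $\mathrm{col}(\tilde X_j)$ fixes $Z_j$, and since $\hat Y_j=\tilde X_j(\tilde X_j'\tilde X_j)^{-1}\tilde X_j'Y_j$ is that projection applied to $Y_j$, we get $Z_j'\hat Y_j=Z_j'Y_j$ for every $j$, hence $Z'\hat Y=Z'Y$.

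Finally, applying Proposition \ref{prop:md = gmm} with this $Z$ and $W$ gives $\hat\delta_{GMM}^{MD}=\hat\delta_{GMM}$, and combining with the identifications of the first paragraph yields $\hat\delta_{RE}^{MD}=\hat\delta_{RE}$, which is the claim. The one point I would treat most carefully — and the main (mild) obstacle — is ensuring that the \emph{same} weighting matrix $W$ is used in the $\hat Y$ and the $Y$ regressions, since Proposition \ref{prop:md = gmm} is an algebraic identity that presumes a common $W$. With $W$ taken as known up to the scalar variance components (the exact Im--Ahn--Schmidt--Wooldridge identity) this is automatic; in the feasible case the variance components are estimated from within- and between-group residual sums of squares, which are invariant under replacing $Y_j$ by $\hat Y_j$ group by group, so the same $W$ is obtained either way and no difficulty arises. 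This would complete the proof.
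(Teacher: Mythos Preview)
Your proposal is correct and follows exactly the route the paper intends: the paper does not give a separate proof of Corollary~\ref{cor:RE} but simply notes that 3SLS is a GMM estimator and that, by \cite{Im1999}, random effects equals 3SLS with $Z_j=(\dot X_{1j},\bar X_j)$, so the result is an immediate application of Proposition~\ref{prop:md = gmm}. Your explicit check that $\dot X_{1j}$ and $\bar X_j$ lie in the column space of $\tilde X_j=[\,l\ X_{1j}\,]$ is the right verification and mirrors what the paper does for Corollaries~\ref{cor:FE} and~\ref{cor:BE}.

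One small caveat on your closing remark about the feasible weighting matrix: while the between residuals are indeed invariant (because $\overline{\hat Y}_j=P_j\tilde P_jY_j=P_jY_j=\bar Y_j$), the within residual sum of squares is \emph{not} invariant under replacing $Y_j$ by $\hat Y_j$. Indeed $Q_j\hat Y_j=\dot X_{1j}\hat\beta_{1,j}$, so the demeaned fitted values lie entirely in the column space of $\dot X_{1j}$ and the idiosyncratic component of $\dot Y_j$ is lost; the resulting $\hat\sigma_\varepsilon^2$ would differ. This does not affect the corollary, which (like the paper) treats the 3SLS weighting as given and applies Proposition~\ref{prop:md = gmm} for a common $W$; but your stronger parenthetical claim about feasible invariance should be dropped.
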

Alternatively, the random effects estimator can be implemented using the theory of optimal instruments and a just identified 2SLS regression. Starting from a conditional moment restriction, the idea of optimal instruments is to select an instrument and weights that minimize the asymptotic variance (see, e.g. \citealp{Newey1993}). Relevant to our two-step procedure, under homoskedasticity of the errors, the conditional moments $\bE[Y_j - X_j \delta |X_j] = 0$ and $\bE[ \hat Y_j - X_j \delta |X_j] = 0$ imply the same optimal instrument and the problem simplifies to the usual random effects estimator. 

\begin{proposition}\label{prop: optimal instruments}
Assume $\bE[\varepsilon_{ij}^2 |X_j] = \sigma^2_\varepsilon$ and $\bE[\alpha_j^2 |X_j] = \sigma^2_\alpha$. The conditional moments $\bE[\hat Y_j - X_j \delta |X_j] = 0$ and $\bE[  Y_j - X_j \delta |X_j] = 0$ imply the same optimal instrument. 
\end{proposition}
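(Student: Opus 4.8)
The plan is to reduce both conditional moment restrictions to a common linear form and then show that the two implied optimal instruments literally coincide. Write the model compactly as $Y_j = X_j\delta + \alpha_j\mathbf{l}_n + \varepsilon_j$, with $\mathbf{l}_n$ the $n$-vector of ones, and let $P_{\tilde X_j} = \tilde X_j(\tilde X_j'\tilde X_j)^{-1}\tilde X_j'$ be the first-stage least squares projection matrix, so that $\hat Y_j = P_{\tilde X_j}Y_j$. First I would record two elementary facts: $P_{\tilde X_j}X_j = X_j$ (since $X_j = \tilde X_j R_j$ by equation (\ref{eq:restriction})) and $P_{\tilde X_j}\mathbf{l}_n = \mathbf{l}_n$ (since $\mathbf{l}_n$ is the first column of $\tilde X_j$). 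Hence, at the true $\delta$, $\hat Y_j - X_j\delta = \alpha_j\mathbf{l}_n + P_{\tilde X_j}\varepsilon_j$ whereas $Y_j - X_j\delta = \alpha_j\mathbf{l}_n + \varepsilon_j$; in both cases the Jacobian with respect to $\delta$ is $-X_j$, so the two optimal instruments differ only through the conditional second-moment matrices.

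Next I would compute those matrices. Under the maintained random effects error structure ($\bE[\alpha_j|X_j]=0$, $\bE[\varepsilon_j|X_j]=0$, $\bE[\alpha_j\varepsilon_j'|X_j]=0$, $\bE[\varepsilon_j\varepsilon_j'|X_j]=\sigma_\varepsilon^2 I_n$) together with the stated homoskedasticity assumptions, one obtains $\bE[(Y_j-X_j\delta)(Y_j-X_j\delta)'|X_j] = \sigma_\alpha^2\mathbf{l}_n\mathbf{l}_n' + \sigma_\varepsilon^2 I_n =: V_j$ and, using $P_{\tilde X_j}$ symmetric idempotent and $P_{\tilde X_j}\mathbf{l}_n=\mathbf{l}_n$, $\bE[(\hat Y_j-X_j\delta)(\hat Y_j-X_j\delta)'|X_j] = \sigma_\alpha^2\mathbf{l}_n\mathbf{l}_n' + \sigma_\varepsilon^2 P_{\tilde X_j} =: \hat V_j$. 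Therefore the optimal instruments are $Z_j^* = V_j^{-1}X_j$ and $\hat Z_j^* = \hat V_j^+ X_j$ (up to the common, harmless sign), the Moore--Penrose inverse being forced in the second case because $\hat V_j$ is singular, exactly as in equation (\ref{eq:optimalinstrument_re}).

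The crux, and the only nontrivial step, is to prove $\hat V_j^+ X_j = V_j^{-1}X_j$, i.e. that passing to fitted values leaves the optimal instrument unchanged. The argument I would give has three ingredients. (i) Both $X_j$ and $\mathbf{l}_n$ lie in $\mathcal{R}(\tilde X_j)=\mathcal{R}(P_{\tilde X_j})$, and $P_{\tilde X_j}$ is the orthogonal projection onto this subspace. (ii) On $\mathcal{R}(P_{\tilde X_j})$ the two matrices agree and map the subspace onto itself: for $u\in\mathcal{R}(P_{\tilde X_j})$, $\hat V_j u = \sigma_\alpha^2\mathbf{l}_n(\mathbf{l}_n'u)+\sigma_\varepsilon^2 u = V_j u$, and this vector again lies in $\mathcal{R}(P_{\tilde X_j})$ because $\mathbf{l}_n$ does; since $V_j$ is globally invertible it restricts to a bijection of $\mathcal{R}(P_{\tilde X_j})$, which yields both $\mathcal{R}(\hat V_j)=\mathcal{R}(P_{\tilde X_j})$ and that $V_j^{-1}$ also maps this subspace onto itself. (iii) For the symmetric matrix $\hat V_j$, $\hat V_j^+ b$ is the unique vector $u\in\mathcal{R}(\hat V_j)$ with $\hat V_j u = P_{\tilde X_j} b$. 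Taking $b=X_j$ and $u_\star := V_j^{-1}X_j$: by (i)--(ii), $u_\star\in\mathcal{R}(P_{\tilde X_j})=\mathcal{R}(\hat V_j)$ and $\hat V_j u_\star = V_j u_\star = X_j = P_{\tilde X_j}X_j$, so $u_\star$ satisfies the characterization, giving $\hat V_j^+ X_j = V_j^{-1}X_j$. Thus $\hat Z_j^* = Z_j^*$, and this common instrument is the (infeasible) random effects / GLS transformation, consistent with Corollary \ref{cor:RE} and with the $\sigma_\alpha(\tau)\to\sigma_\alpha$ specialization of equation (\ref{eq:optimalinstrument_re}).

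The main obstacle is purely the bookkeeping around the singularity of $\hat V_j$: one must resist inverting it directly and instead verify the defining properties of the Moore--Penrose inverse on its range, and one must be explicit that the clean forms of $V_j$ and $\hat V_j$ rely on the standard random effects orthogonality and no-within-group-correlation conditions maintained throughout the section, not merely on the two moment equalities displayed in the statement.
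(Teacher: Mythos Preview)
Your proof is correct and follows essentially the same strategy as the paper: compute the two conditional second-moment matrices and then show that the resulting optimal instruments $V_j^{-1}X_j$ and $\hat V_j^{+}X_j$ coincide. The paper expresses $\hat V_j$ as $\tilde X_j\frac{V_j}{n}\tilde X_j' + \mathbf{l}_n\mathbf{l}_n'\sigma_\alpha^2$ (with $V_j$ the first-stage OLS variance, which under homoskedasticity reduces to your $\sigma_\varepsilon^2 P_{\tilde X_j}$) and then asserts the equality of instruments via a direct algebraic manipulation; your range-space characterization of the Moore--Penrose inverse is a cleaner and more rigorous way to handle exactly this step.
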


% Hausman Taylor
The Hausman-Taylor model \citep{Hausman1981} is a middle ground between the fixed effects and the random effects models where some regressors are assumed to be uncorrelated with $\alpha_j$. The matrix of regressors $X$ is partitioned as $X = [X_1^{ex} \ X_1^{en} \ X_2^{ex} \ X_2^{en} ]$ where $X_1^{ex}$ and $X_2^{ex}$ are orthogonal to $\alpha_j$. No assumption is placed on the relationship between $\alpha_j$ and $X_1^{en}$ and $X_2^{en}$. The model can be estimated by IV using instruments $Z = (\dot X_1^{ex}, \dot X_1^{en},$ $ \bar X_1^{ex}, X_2^{ex})$ (see, e.g., \citealp{Hansen2021}). Thus, it follows by Proposition \ref{prop:md = gmm} that the Hausman-Taylor model can be estimated in two stages. 
\begin{corollary}\label{cor:HT}
Denote $\hat \delta_{HT}^{MD}$ the coefficient vector of a 2SLS regression of $\hat Y$ on $X$ with instruments $(\dot X_1^{ex}, \dot X_1^{en},$ $ \bar X_1^{ex}, X_2^{ex})$. Let $\hat \delta_{HT}$ be the coefficient vector of the Hausman-Taylor Estimator based on a regression $Y$ on $X$. Then $\hat \delta_{HT}^{MD} = \hat \delta_{HT}$.
\end{corollary}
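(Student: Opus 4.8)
The plan is to reduce Corollary \ref{cor:HT} to Proposition \ref{prop:md = gmm}, exactly as Corollaries \ref{cor:FE}--\ref{cor:RE} were obtained. The starting point is the standard characterization of the Hausman--Taylor estimator as an instrumental variables estimator — or, in the overidentified case $\dim(X_1^{ex})>\dim(X_2^{en})$, a 2SLS/GMM estimator — of the one-step regression of $Y$ on $X$ using $Z=(\dot X_1^{ex},\dot X_1^{en},\bar X_1^{ex},X_2^{ex})$ as instruments (see \citealp{Hausman1981,Hansen2021}). Granting this, it suffices to show that replacing the regressand $Y$ by the first-stage fitted values $\hat Y$ leaves the estimator numerically unchanged, and Proposition \ref{prop:md = gmm} does precisely this, provided the instrument matrix of each group lies in the column space of that group's first-stage design matrix $\tilde X_j$.

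The one substantive step is therefore to verify the column-space condition $\operatorname{col}(Z_j)\subseteq\operatorname{col}(\tilde X_j)$ for every $j$, where $\tilde X_j=(l,X_{1j}^{ex},X_{1j}^{en})$ contains a constant and all individual-level regressors. I would check the four blocks of $Z_j$ in turn: (i) $\dot X_{1j}^{ex}=Q_j X_{1j}^{ex}$ and (ii) $\dot X_{1j}^{en}=Q_j X_{1j}^{en}$ are within-group demeaned versions of columns already present in $\tilde X_j$, hence lie in $\operatorname{span}\{l,X_{1j}^{ex},X_{1j}^{en}\}$; (iii) $\bar X_{1j}^{ex}=P_j X_{1j}^{ex}$ has columns that are constant within the group and thus are multiples of $l\in\operatorname{col}(\tilde X_j)$; and (iv) $X_{2j}^{ex}$ is a group-level variable, so it too is constant within group and hence a multiple of $l$. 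Consequently every column of $Z_j$ is a linear combination of the columns of $\tilde X_j$, so the hypothesis of Proposition \ref{prop:md = gmm} is satisfied.

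With the column-space condition in hand, the conclusion is immediate: 2SLS is the GMM estimator with weighting matrix $(Z'Z)^{-1}$ (collapsing to plain IV when $Z$ is square), so Proposition \ref{prop:md = gmm}, applied with this weighting matrix and the instrument set above, yields $\hat\delta_{HT}^{MD}=\hat\delta_{HT}$. I do not expect a genuine obstacle beyond bookkeeping; the only point deserving care is to ensure that the weighting matrix used in the two-step 2SLS regression of $\hat Y$ on $X$ is literally the same function of $(Z,X)$ as in the one-step Hausman--Taylor implementation, since Proposition \ref{prop:md = gmm} equates the two estimators only when the weighting matrix is held fixed. Because both implementations use $(Z'Z)^{-1}$, this is automatic, and the identity follows.
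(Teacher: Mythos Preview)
Your proposal is correct and follows exactly the approach the paper intends: the paper does not write out a separate proof of Corollary~\ref{cor:HT} but simply remarks (just before the corollary statement) that the Hausman--Taylor estimator can be implemented by IV with $Z=(\dot X_1^{ex},\dot X_1^{en},\bar X_1^{ex},X_2^{ex})$ and that the result ``follows by Proposition~\ref{prop:md = gmm}.'' Your verification of the column-space condition block by block and your observation about the common weighting matrix $(Z'Z)^{-1}$ make explicit precisely the details the paper leaves implicit.
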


Finally, we show that not only the point estimates but also the standard errors can be obtained using the two-stage minimum distance approach. This requires clustering the standard errors in the second stage at a level weakly higher than the group $j$. Let $g = 1, \dots , G$ index the clusters and assume that each of the clusters has $N_g$ observations.
This nests the case where one wishes to cluster at the individual level or at a higher level. For example, if $j$ are county-year combinations, one might cluster at the county level. For an estimator $\hat \delta$, the clustered covariance matrix is estimated by
\begin{align}\label{eq:cov_estimator}
    \hat V_\delta = & \left ( \sumgG X_g' Z_g \hat W \sumgG Z_g' X_g  \right ) ^{-1} \sumgG X_g' Z_g \hat W \left ( \sumgG Z_g' \tilde u_g \tilde u_g' Z_g\right ) \nonumber \\ & \cdot \hat W \sumgG Z_g' X_g  \left ( \sumgG X_g' Z_g \hat W \sumgG Z_g' X_g  \right ) ^{-1},
\end{align}
where $\tilde u_g$ is a $N_g$-dimensional vector of estimated errors for the observations in cluster $g$, $X_g$ is the $K \times N_g$ matrix of regressors of cluster $g$, and $Z_g$ is the $L \times N_g$ matrix of instruments.

\begin{proposition}\label{prop:clustered se OLS}
Denote $\hat V_\delta$ the clustered covariance matrix of $\hat \delta$ estimated by a GMM regression of $Y$ on $X$ with instrument $Z$. Let $\hat V_{ \delta^{MD}}$ be the clustered covariance matrix of $\hat \delta^{MD}$ estimated by GMM regression of $\hat Y$ on $X$ with instrument $Z$, where $\hat Y$ are estimated by an OLS first-stage. Let the clusters be at weakly higher level than $j$. Then, $\hat V_{ \delta^{MD}} = \hat V_{ \delta}$.
\end{proposition}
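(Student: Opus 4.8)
The goal is to show that the clustered covariance estimator is invariant to whether we regress $Y$ or the first-stage fitted values $\hat Y$ on $X$ using instruments $Z$, provided the clustering level is weakly coarser than the group index $j$. The plan is to compare the two formulas for $\hat V_\delta$ in equation (\ref{eq:cov_estimator}) term by term. The ``bread'' factors $\left(\sum_g X_g'Z_g\hat W \sum_g Z_g'X_g\right)^{-1}$ and the flanking $\sum_g X_g'Z_g\hat W$ terms depend only on $X$, $Z$, and $\hat W$, none of which change between the two regressions, so they are trivially identical. Therefore the entire content of the proposition reduces to showing that the ``meat'' matrix $\sum_{g=1}^G Z_g' \tilde u_g \tilde u_g' Z_g$ is the same whether $\tilde u_g$ is the residual vector from the $Y$-regression or from the $\hat Y$-regression.

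First I would write out the two residual vectors explicitly. For the original regression, $\tilde u_g = Y_g - X_g\hat\delta$ where $\hat\delta$ is the GMM estimator; for the MD version, $\tilde u_g^{MD} = \hat Y_g - X_g\hat\delta^{MD}$. By Proposition \ref{prop:md = gmm} (and its corollaries), since $Z_j$ lies in the column space of $\tilde X_j$ for every $j$, we have $\hat\delta^{MD} = \hat\delta$; so the two residual vectors differ only in that one uses $Y_g$ and the other $\hat Y_g$. Next I would exploit the key structural fact: within each group $j$, $\hat Y_j = P_{\tilde X_j} Y_j$ where $P_{\tilde X_j} = \tilde X_j(\tilde X_j'\tilde X_j)^{-1}\tilde X_j'$ is the first-stage projection matrix, and crucially $P_{\tilde X_j} Z_j = Z_j$ because $Z_j$ is in the column space of $\tilde X_j$ (equivalently $P_{\tilde X_j} X_j = X_j$ since $X_j$ is also spanned, as $\tilde X_j R_j = X_j$). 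Hence $Z_j'\hat Y_j = Z_j' P_{\tilde X_j} Y_j = (P_{\tilde X_j}Z_j)'Y_j = Z_j'Y_j$, and likewise $Z_j'X_j\hat\delta$ is unchanged. So for each group, $Z_j'\tilde u_j^{MD} = Z_j'\hat Y_j - Z_j'X_j\hat\delta = Z_j'Y_j - Z_j'X_j\hat\delta = Z_j'\tilde u_j$.

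The role of the clustering assumption enters here: each cluster $g$ is a union of whole groups $j$, so $Z_g'\tilde u_g = \sum_{j \in g} Z_j'\tilde u_j$ and, by the per-group identity just established, this equals $\sum_{j\in g} Z_j'\tilde u_j^{MD} = Z_g'\tilde u_g^{MD}$. Consequently $Z_g'\tilde u_g \tilde u_g' Z_g = Z_g'\tilde u_g^{MD}(\tilde u_g^{MD})'Z_g$ for every cluster $g$, and summing over $g$ gives equality of the meat matrices. Combining with the identical bread factors yields $\hat V_{\delta^{MD}} = \hat V_\delta$.

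I expect the main (and only mildly delicate) obstacle to be making the projection argument airtight: one must verify that $P_{\tilde X_j}$ annihilates nothing relevant, i.e. that both $Z_j$ and $X_j$ genuinely lie in $\mathrm{col}(\tilde X_j)$ — for $X_j$ this is the identity $\tilde X_j R_j = X_j$ from equation (\ref{eq:restriction}), and for $Z_j$ it is the maintained hypothesis of the proposition. A secondary point worth a sentence is that the quantity $Z_g'\tilde u_g$ only involves $Z_j'\hat Y_j$ and $Z_j'X_j$ group-by-group — it never requires $\hat Y_j$ and $Y_j$ to agree observation-by-observation (they do not), which is exactly why the block-diagonal structure of the first-stage projection across groups, together with clustering at a level no finer than $j$, is precisely what is needed. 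Everything else is bookkeeping.
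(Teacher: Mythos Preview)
Your proposal is correct and follows essentially the same approach as the paper: both reduce the claim to showing $Z_g'\tilde u_g^{MD}=Z_g'\tilde u_g$ for every cluster, invoke Proposition~\ref{prop:md = gmm} for $\hat\delta^{MD}=\hat\delta$, and use the projection identity $P_{\tilde X_j}Z_j=Z_j$ to obtain $Z_j'\hat Y_j=Z_j'Y_j$. The only cosmetic difference is that the paper packages the per-group projections into a single block-diagonal cluster-level projector $\breve P=\operatorname{diag}(P_{\tilde X_j}:j\in g)$ and argues directly that $\breve P Z_g=Z_g$, whereas you establish the identity group-by-group and then aggregate via $Z_g'\tilde u_g=\sum_{j\in g}Z_j'\tilde u_j$; these are the same argument in different notation.
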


\subsection{Proofs of the least squares results}\label{app:proofs linear models}

\begin{proof}[Proof of Proposition \ref{prop:md = gmm}]
Define the  projection matrix $\tilde P_j = \tilde X_j ( \tilde X_j' \tilde X_j)^{-1} \tilde X_j'$. Since $Z_j$ is in the column space of $\tilde X_j$, 
\begin{equation}\label{eq:projection}
\tilde P_j Z_j = Z_j
\end{equation}
The MD estimator with a GMM second stage is: 
\begin{align*}
    \hat \delta^{MD}_{GMM} &= \left (  X'  Z  \hat W  Z'  X \right )^{-1 }  X'  Z \hat  W  Z'  \hat { Y }.
    \end{align*}
For $\hat \delta^{MD}_{GMM}$ to be equal to $\hat \delta_{GMM}$, it suffices that $  {Z' \hat Y} = {Z'Y}$. Note that
    \begin{align*}
             Z'  \hat { Y }  & = \sum_{j = 1}^m Z_j' \hat Y_j \\
       & =   \sum_{j = 1}^m Z_j' \tilde X_j \hat \beta_j \\ 
    & =  \sum_{j = 1}^m Z_j' \tilde X_j  (\tilde X_j' \tilde X_j)^{-1} \tilde X_j' Y_j \\ 
   & =  \sum_{j = 1}^m (\tilde P_j Z_j)'  Y_j \\ 
         & =   \sum_{j = 1}^m Z_j'  Y_j =  {Z'Y},\\ 
\end{align*}
where the third line uses $\hat Y_j = \tilde X_j \hat \beta_j$, the fourth line uses the definition of the first-step OLS estimator, and the last line uses equation (\ref{eq:projection}). It then follows directly that $\hat \delta_{MD}$ equals $\hat \delta_{GMM}$.
\end{proof}

%\begin{proof}[Proof of Proposition \ref{cor:OLS}]
%The results follow directly from Lemma \ref{prop:md = gmm} by setting $Z = X $ and since $X_j$ lies in the column space of $X_j$.
%\end{proof}

% -----------------------------------------------
\begin{proof}[Proof of Corollary \ref{cor:FE}]
First, note that since $Q_j  X_{1j} = \dot X_{1j}$, $\dot X_{1j} $ lies in the column space of $ X_{1j}$. Then, we apply Proposition \ref{prop:md = gmm}. It follows that a IV regression of $\hat Y$ on $X_{1j}$ with instrument $Z_j$ is algebraically identical to a IV regression with $Y_j$ as dependent variable.
Then,
\begin{align*}
    \hat \delta_{FE}^{MD} =&  \left ( \sum_{j = 1}^m Z_j ' {  X_{1j}} \right)^{-1}  \sum_{j = 1}^m Z_j' Y_j \\
     =&  \left ( \sum_{j = 1}^m \dot X_{1j} ' {  X_{1j}} \right)^{-1}  \sum_{j = 1}^m \dot X_{1j}' Y_j \\
          =&  \left ( \sum_{j = 1}^m {  X_{1j}'} Q_j {  X_{1j}} \right)^{-1}  \sum_{j = 1}^m  X_{1j}' Q_j Y_j \\
     =&  \left ( \sum_{j = 1}^m \dot {  X_{1j}'}  \dot {  X_{1j}} \right)^{-1}  \sum_{j = 1}^m \dot {  X_{1j}'} \dot Y_j = \hat \delta_{FE},\\
\end{align*}
where the second line follows since $Z_j = \dot X_{1j}$, the third and last line by $Q_j X_{1j} = \dot X_{1j}$, $Q_j Y_j = \dot Y_j $ and since $Q_j$ is idempotent.
\end{proof}

\begin{proof}[Proof of Corollary \ref{cor:BE}]
First, note that since $P_j \tilde X_j = \bar X_j$, $\bar X_j $ lies in the column space of $\tilde X_j$. Then, we apply Proposition \ref{prop:md = gmm}.
It follows that an IV regression of $\hat Y_j$ on $X_j$ with instrument $Z_j$ is algebraically identical to an IV regression with $Y_j$ as dependent variable. 
Then,
\begin{align*}
    \hat \delta_{BE}^{MD} =&  \left ( \sum_{j = 1}^m Z_j ' X_j \right)^{-1}  \sum_{j = 1}^m Z_j' Y_j \\
     =&  \left ( \sum_{j = 1}^m \bar X_j ' X_j \right)^{-1}  \sum_{j = 1}^m \bar X_j' Y_j \\
          =&  \left ( \sum_{j = 1}^m X_j' P_j  X_j \right)^{-1}  \sum_{j = 1}^m  X_j' P_j Y_j \\
     =&  \left ( \sum_{j = 1}^m \bar X_j ' \bar X_j \right)^{-1}  \sum_{j = 1}^m \bar X_j' \bar Y_j = \hat \delta_{BE}\\
\end{align*}
where the second line follows since $Z_j = \bar X_j$, the third and last line by $P_j X_j = \bar X_j$, $P_j Y_j = \bar Y_j $ and, since $P_j$ is idempotent.
\end{proof}

\begin{proof}[Proof of Proposition \ref{prop: optimal instruments}]
The optimal instrument takes the form $Z^*_j = \bE [g_j(\delta)g_j(\delta)' | Z_j] ^{-1} R_j(\delta, \tau)$, where $R_j(\delta, \tau) = \bE [ \frac{\partial}{\partial \delta} g_j(\delta, \tau) | Z_j]$.
For both moment conditions, $R_j(\delta, \tau) $ is identical. 
Then for the first moment restriction, we have:
\begin{align}\label{eq:optimal instrument}
 \bE [( \hat Y_j - X_j \delta)( \hat Y_j - X_j \delta)' | X_j]  &= \bE [( \tilde X_j (\hat \beta_j  -  \beta)   + \tilde X_j\beta  - X_j \delta) ( \tilde X_j (\hat \beta_j  -  \beta)   + \tilde X_j\beta  - X_j \delta)' | X_j] \\ \nonumber &= \bE [( \tilde X_j (\hat \beta_j  -  \beta)   + \alpha_j)( \tilde X_j (\hat \beta_j  -  \beta)   + \alpha_j)' | X_j] \\ \nonumber  &= \tilde X_j \frac{V_j}{n} \tilde X_j' + \mathbf l_n \mathbf l_n'\sigma^2_\alpha.
\end{align}
The matrix $\tilde X_j  \frac{V_j}{n} \tilde X_j' + \mathbf l_n \mathbf l_n'\sigma^2_\alpha$ is singular, so that we suggest using the Moore-Penrose inverse to construct the optimal instrument. \\
For the second moment restriction, we have:
\begin{align*}
     \bE [( Y_j - X_j \delta)(Y_j - X_j \delta)'| X_j] &= \bE [ (\alpha_j + \eps_{ij})(\alpha_j + \eps_{ij})' | X_j] \\ & = \left( \mathbf I_n \sigma_\eps^2 + \mathbf l_n \mathbf l_n' \sigma_\alpha^2 \right).
\end{align*}
Then note that $\left (\mathbf I_n \sigma_\eps^2 + \mathbf l_n \mathbf l_n' \sigma_\alpha^2 \right )^{-1} = ( \tilde X_j \tilde X_j^+  \sigma_\eps^2 + \mathbf l_n'\mathbf l_n \sigma_\alpha^2 )^{+} =
( \tilde X_j  (\tilde X_j' \tilde X_j)^{-1} \tilde X_j' \sigma_\eps^2 + 
\mathbf l_n'\mathbf l_n \sigma_\alpha^2 )^{+} =  ( \tilde X_j  \frac{V_j}{n} \tilde X_j' + \mathbf l_n'\mathbf l_n \sigma_\alpha^2 )^+  X_j$, where $V_j = ( \frac{1}{n} \tilde X_j '\tilde X_j )^{-1} \sigma_\eps^2$ and since for a full column rank matrix $\tilde X_j$, $\tilde X_j \tilde X_j^+ = I_n $ and $\tilde X_j^+ = (\tilde X_j' \tilde X_j )^{-1} \tilde X_j'$.
\end{proof}
% -------------------------------------------------------------

\begin{proof}[Proof of Proposition \ref{prop:clustered se OLS}]
Define $Z_g = (z_{1g}, \dots , z_{n_g g})'$, $X_g = (x_{1g}, \dots , x_{n_g g})'$, $Y_g = (y_{1g}, \dots, y_{n_g g})'$ and  $\hat Y_g = (\hat y_{1g}, \dots, \hat y_{n_g g})'$. The first and third terms of expression (\ref{eq:cov_estimator}) are identical for both estimators. Hence, we focus on the middle term. Let $\hat u_g = Y_g  - X_g \hat \delta$ be the vector of residuals from the regression using $Y$ as dependent variable, and let $\hat u_g^{MD} = \hat Y_g - X_g \hat \delta^{MD}$ be the vector of residuals of the estimator using the fitted values as regressand. We show that $Z_g' \hat u_g = Z_g' \hat u_g^{MD}$ for all $g$. 
By Proposition \ref{prop:md = gmm}, $\hat \delta^{MD} = \hat \delta$. Thus, the fitted values of both estimators are identical. 
Next, define  $X_g = \operatorname{diag} ( X_j : j \in g)$ and recall that regressing $Y_g$ on $\breve X_g$ is the same as performing separate regression for each $j \in g$. Let $\breve{\beta}_g $ be the coefficient vector of an OLS regression of $Y_g$ on $\breve X_g$.
Note that $ Z_g$ is in the column space of $\tilde X_g$. Define the projection matrix $\breve P = \breve X_g ( \breve X_g' \breve X_g)^{-1} \breve X_g'$. Since $ Z_g$ is in the column space of $\breve X_g$, 

\begin{equation}\label{eq:col space}
\breve P Z_g = Z_g.
\end{equation}

Then,
\begin{align*}
Z_g' \hat u_g^{MD} & = Z_g' \left ( \hat Y_g - X_g \hat \delta^{MD} \right ) \\
&= Z_g ' \tilde X_g \breve \beta_g -   Z_g  X_g \hat \delta\\
&= Z_g'  \tilde X_g (\breve X_g' \breve X_g)^{-1}  \breve X_g' Y_g -   Z_g  X_g \hat \delta \\
& = Z_g' ( Y_g -X_g \hat \delta)  = Z_g' \hat u_g,
\end{align*}
where the fourth line follows by (\ref{eq:col space}).
Since this holds for all $g$, the desired result follows directly. 
\end{proof}

\section{Optimal Instruments and Minimum Distance}\label{app:opt inst MD}
In this section, we show that if $\alpha_j(\tau) = 0$ for all $j$ and $\tau$, efficient minimum distance can be implemented by optimal instruments. 
From equation (\ref{eq:optimal instrument}) we have that if $\alpha_j(\tau) = 0 $ for all $j$ and all $\tau$, $\bE[(\tilde X_j \hat \beta_j(\tau) - X_j \delta(\tau)) (\tilde X_j \hat \beta_j(\tau) - X_j \delta(\tau))'| X_j] = \tilde X_j \frac{V_j(\tau)}{n} \tilde X_j'$. This implies the optimal instrument $Z_j^* = (\tilde X_j \frac{V_j(\tau)}{n} \tilde X_j')^+ X_j $. Since $n$ is a scalar, using  $Z_j^*(\tau) = (\tilde X_j V_j(\tau) \tilde X_j')^+ X_j$ leads to the same results.

\begin{proposition} \label{prop:emd = giv}
The IV regression with instrument $Z_j^*(\tau) =  (\tilde X_j V_j(\tau) \tilde X_j')^+X_j$ is the efficient MD estimator.
\end{proposition}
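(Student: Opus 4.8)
The plan is to put both estimators in closed form and show the two coefficient vectors coincide; the only non-routine ingredient is a Moore--Penrose identity for $\tilde X_j$ and $V_j(\tau)$.

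First I would record the two closed forms. Since $Z_j^*(\tau)=(\tilde X_j V_j(\tau)\tilde X_j')^+X_j$ has the same number of columns, $K$, as $X_j$, the second-stage estimator in equation (\ref{eq:second stage estimator}) is exactly identified and reduces to
\[
\hat\delta^{IV}(\tau)=\Big(\sum_{j=1}^m Z_j^*(\tau)'X_j\Big)^{-1}\sum_{j=1}^m Z_j^*(\tau)'\hat Y_j(\tau),\qquad \hat Y_j(\tau)=\tilde X_j\hat\beta_j(\tau).
\]
On the other side, by Remark \ref{remark:mdestimator} the efficient MD estimator imposes $\beta_j(\tau)=R_j\delta(\tau)$ with weight matrix equal to the inverse of the asymptotic variance of the first stage, i.e. $(V_j(\tau)/n)^{-1}=nV_j(\tau)^{-1}$; the scalar $n$ factors out of both the normalizing matrix and the score, so
\[
\hat\delta^{EMD}(\tau)=\Big(\sum_{j=1}^m R_j'V_j(\tau)^{-1}R_j\Big)^{-1}\sum_{j=1}^m R_j'V_j(\tau)^{-1}\hat\beta_j(\tau).
\]
Both are well defined under the same condition, that $\sum_j R_j'V_j(\tau)^{-1}R_j$ be nonsingular (the identification requirement that the $R_j$ jointly have full column rank), which I would take as maintained.

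Next, using $X_j=\tilde X_j R_j$ (the defining property of $R_j$) and the symmetry of $(\tilde X_j V_j(\tau)\tilde X_j')^+$, I would rewrite
\[
Z_j^*(\tau)'X_j=R_j'\,\Pi_j(\tau)\,R_j,\qquad Z_j^*(\tau)'\hat Y_j(\tau)=R_j'\,\Pi_j(\tau)\,\hat\beta_j(\tau),\qquad \Pi_j(\tau):=\tilde X_j'(\tilde X_j V_j(\tau)\tilde X_j')^+\tilde X_j .
\]
The crux is therefore the identity $\Pi_j(\tau)=V_j(\tau)^{-1}$, which follows from the general fact: if $A$ has full column rank and $M$ is symmetric positive definite, then $A'(AMA')^+A=M^{-1}$. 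I would prove this by setting $B=AM^{1/2}$ (still full column rank), so that $AMA'=BB'$ and, since $B$ has full column rank while $B'$ has full row rank, the reverse-order law gives $(BB')^+=(B')^+B^+$ with $B^+=(B'B)^{-1}B'$ and $(B')^+=B(B'B)^{-1}$; substituting $B'B=M^{1/2}A'AM^{1/2}$ and $B'A=M^{1/2}A'A$ and cancelling $A'A$ leaves $A'(AMA')^+A=M^{-1/2}M^{-1/2}=M^{-1}$. Here $A=\tilde X_j$ has full column rank by Assumption \ref{a:covariates}(ii), and $M=V_j(\tau)$ is positive definite as a genuine first-stage asymptotic variance under Assumptions \ref{a:covariates}(ii) and \ref{a:bounded density}, so $V_j(\tau)^{1/2}$ exists and is invertible.

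Substituting $\Pi_j(\tau)=V_j(\tau)^{-1}$ back in gives $Z_j^*(\tau)'X_j=R_j'V_j(\tau)^{-1}R_j$ and $Z_j^*(\tau)'\hat Y_j(\tau)=R_j'V_j(\tau)^{-1}\hat\beta_j(\tau)$; summing over $j$ and inverting then yields $\hat\delta^{IV}(\tau)=\hat\delta^{EMD}(\tau)$, which is the claim (and it also shows the two invertibility conditions are literally the same matrix). The main obstacle is the Moore--Penrose identity: one must check that the reverse-order law $(CD)^+=D^+C^+$ is legitimate here (it is, because $B$ has full column rank and $B'$ has full row rank) and handle $V_j(\tau)^{1/2}$ carefully; once that identity is in hand, the rest is bookkeeping with $X_j=\tilde X_j R_j$.
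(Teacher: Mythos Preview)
Your proof is correct and follows essentially the same route as the paper: both reduce the equivalence to the Moore--Penrose identity $\tilde X_j'(\tilde X_j V_j(\tau)\tilde X_j')^+\tilde X_j=V_j(\tau)^{-1}$ together with $X_j=\tilde X_j R_j$. The only cosmetic difference is that the paper starts from the EMD side, inserts $(\tilde X_j'\tilde X_j)^{-1}\tilde X_j'\tilde X_j$ around $V_j^{-1}$, and then recognizes $\tilde X_j(\tilde X_j'\tilde X_j V_j \tilde X_j'\tilde X_j)^{-1}\tilde X_j'=(\tilde X_j V_j \tilde X_j')^+$ via $\tilde X_j^+=(\tilde X_j'\tilde X_j)^{-1}\tilde X_j'$, whereas you start from the IV side and prove the identity through the factorization $B=\tilde X_j V_j^{1/2}$ and the reverse-order law.
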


\begin{proof}
%Proof proposition \ref{prop:emd = giv}
\begin{align*}
    \hat \delta_{EMD}(\tau) =& \left ( \sum_{j = 1}^m R_j' \hat V_j^{-1}(\tau) R_j \right )^{-1} \left (  \sum_{j = 1}^m R_j' \hat V_j^{-1}(\tau) \hat \beta_j(\tau) \right )\\
    % second line
    %=& \left ( \sum_{j = 1}^m R_j' \tilde X_j ' \tilde X_j \left(\tilde X_j ' \tilde X_j \right)^{-1} \hat V_j^{-1} \left(\tilde X_j ' \tilde X_j \right)^{-1} \tilde X_j ' \tilde X_j  R_j \right )^{-1} \left (  R_j' \tilde X_j ' \tilde X_j \left(\tilde X_j ' \tilde X_j \right)^{-1} \hat V_j^{-1} \left(\tilde X_j ' \tilde X_j \right)^{-1} \tilde X_j ' \tilde X_j \hat \beta_j \right )\\
    % third line
       =& \left ( \sum_{j = 1}^m  X_j ' \tilde X_j \left(\tilde X_j ' \tilde X_j  \hat V_j(\tau) \tilde X_j ' \tilde X_j \right)^{-1} \tilde X_j '  X_j   \right )^{-1} \left (   X_j ' \tilde X_j \left(\tilde X_j ' \tilde X_j  \hat V_j(\tau)\tilde X_j ' \tilde X_j \right)^{-1} \tilde X_j ' \hat Y_j(\tau) \right )\\
       % fourth line 
       =& \left ( \sum_{j = 1}^m  X_j '  \left( \tilde X_j  \hat V_j(\tau) \tilde X_j '  \right)^{+}   X_j   \right )^{-1} \left (    X_j' \left(  \tilde X_j  \hat V_j(\tau)\tilde X_j ' \right)^{+}  \hat Y_j (\tau)\right ) = \hat \delta_{OI}(\tau).
\end{align*}
The second line follows as $\tilde X_j R_j = X_j$ and the third line follows since for a full column rank matrix $\tilde X_j$, $\tilde X_j^+ = (\tilde X_j' \tilde X_j)^{-1} \tilde X_j'$.
\end{proof}
 
%\section{Detailed results: the effect of the food stamp program on birth weight}

% Do not change table table's title and foodnote in .tex file. Use .R instead! 
%\input{Tables/fsp_results}

%\subsection{Additional Results}
\begin{comment}
    
\begin{figure}
   \centering
    \includegraphics[scale = .7]{Figures/clp_vs_md_black.pdf}
    \caption{CLP estimator vs. MD estimator}
    \label{fig:fsp clp vs. md}
      \floatfoot{The figure compares the point estimate of the impact of the food stamp program introduction on the conditional distribution of birth weight computed using the MD estimator and the CLP estimator. The panels show point estimates and 95\% confidence bands (shaded area) computed using standard errors clustered at the county level. The panel shows the effects for blacks. The regressions include county, time, and state-year fixed effects.}
\end{figure}
\end{comment}

\end{appendix}

%%%%%%%%%%%%%%%%%%%%%%%%%%%%%%%%%%%%%%%%%%%%%%
%% Bibliography:                            %%
%%%%%%%%%%%%%%%%%%%%%%%%%%%%%%%%%%%%%%%%%%%%%%
%% IMPORTANT: References in the bibliography should be complete, 
%% including the first and last names, and date of publication.

%% If your bibliography is in bibtex format, uncomment commands:
\bibliographystyle{ecta-fullname} % Style BST file
%\bibliography{bibliography}  % Bibliography file (usually '*.bib')

\bibliography{references.bib}

%% Or include bibliography directly:

% \begin{thebibliography}{}
% \bibitem{b1}
% \end{thebibliography}

\end{document}